\PassOptionsToPackage{unicode}{hyperref}
\PassOptionsToPackage{hyphens}{url}
\PassOptionsToPackage{dvipsnames,svgnames,x11names}{xcolor}

\documentclass[12pt]{article}
\usepackage{setspace}

\usepackage{siunitx}  
\usepackage{amsmath,amssymb}
\usepackage{mathtools}
\usepackage{enumitem}
\usepackage{soul}
\usepackage{xcolor}
\setulcolor{red}    
\setstcolor{red}    
\usepackage{amsthm}
\usepackage{cancel}
\usepackage{stackrel}
\usepackage{multirow}
\usepackage{iftex}
\usepackage{tikz}
\usepackage{titlesec}
\usetikzlibrary{shapes.geometric, arrows, positioning, calc, shadows.blur}\usetikzlibrary{positioning}
\allowdisplaybreaks[4]
\ifPDFTeX
  \usepackage[T1]{fontenc}
  \usepackage[utf8]{inputenc}
  \usepackage{textcomp} % provide euro and other symbols
\else % if luatex or xetex
  \usepackage{unicode-math}
  \defaultfontfeatures{Scale=MatchLowercase}
  \defaultfontfeatures[\rmfamily]{Ligatures=TeX,Scale=1}
\fi

\PassOptionsToPackage{normalem}{ulem}
\usepackage{ulem}

\providecolor{lyxadded}{rgb}{0,0,1}
\providecolor{lyxdeleted}{rgb}{1,0,0}

\providecommand{\tabularnewline}{\\}

\DeclareRobustCommand{\mklyxadded}[1]{\textcolor{lyxadded}\bgroup#1\egroup}
\DeclareRobustCommand{\mklyxdeleted}[1]{\textcolor{lyxdeleted}\bgroup\mklyxsout{#1}\egroup}
\DeclareRobustCommand{\mklyxsout}[1]{\ifx\\#1\else\sout{#1}\fi}

\usepackage{lmodern}
\ifPDFTeX\else  
\fi

\IfFileExists{upquote.sty}{\usepackage{upquote}}{}
\IfFileExists{microtype.sty}{% use microtype if available
    \usepackage[]{microtype}
    \UseMicrotypeSet[protrusion]{basicmath} % disable protrusion for tt fonts
}{}

\makeatletter
\@ifundefined{KOMAClassName}{% if non-KOMA class
    \IfFileExists{parskip.sty}{%
        \usepackage{parskip}
    }{% else
        \setlength{\parindent}{0pt}
        \setlength{\parskip}{6pt plus 2pt minus 1pt}}
}{% if KOMA class
    \KOMAoptions{parskip=half}}
\makeatother

\makeatletter
\ifx\paragraph\undefined\else
\let\oldparagraph\paragraph
\renewcommand{\paragraph}{
    \@ifstar
    \xxxParagraphStar
    \xxxParagraphNoStar
}
\newcommand{\xxxParagraphStar}[1]{\oldparagraph*{#1}\mbox{}}
\newcommand{\xxxParagraphNoStar}[1]{\oldparagraph{#1}\mbox{}}
\fi
\ifx\subparagraph\undefined\else
\let\oldsubparagraph\subparagraph
\renewcommand{\subparagraph}{
    \@ifstar
    \xxxSubParagraphStar
    \xxxSubParagraphNoStar
}
\newcommand{\xxxSubParagraphStar}[1]{\oldsubparagraph*{#1}\mbox{}}
\newcommand{\xxxSubParagraphNoStar}[1]{\oldsubparagraph{#1}\mbox{}}
\fi
\makeatother

\usepackage{longtable,booktabs,array}
\usepackage{calc} % for calculating minipage widths

\usepackage{etoolbox}
\makeatletter
\patchcmd\longtable{\par}{\if@noskipsec\mbox{}\fi\par}{}{}
\makeatother

\IfFileExists{footnotehyper.sty}{\usepackage{footnotehyper}}{\usepackage{footnote}}
\makesavenoteenv{longtable}
\usepackage{graphicx}
\makeatletter
\def\maxwidth{\ifdim\Gin@nat@width>\linewidth\linewidth\else\Gin@nat@width\fi}
\def\maxheight{\ifdim\Gin@nat@height>\textheight\textheight\else\Gin@nat@height\fi}
\makeatother

\setkeys{Gin}{width=\maxwidth,height=\maxheight,keepaspectratio}

\makeatletter
\def\fps@figure{htbp}
\makeatother

\makeatletter
\@ifpackageloaded{caption}{}{\usepackage{caption}}
\AtBeginDocument{%
    \ifdefined\contentsname
    \renewcommand*\contentsname{Table of contents}
    \else
    \newcommand\contentsname{Table of contents}
    \fi
    \ifdefined\listfigurename
    \renewcommand*\listfigurename{List of Figures}
    \else
    \newcommand\listfigurename{List of Figures}
    \fi
    \ifdefined\listtablename
    \renewcommand*\listtablename{List of Tables}
    \else
    \newcommand\listtablename{List of Tables}
    \fi
    \ifdefined\figurename
    \renewcommand*\figurename{Figure}
    \else
    \newcommand\figurename{Figure}
    \fi
    \ifdefined\tablename
    \renewcommand*\tablename{Table}
    \else
    \newcommand\tablename{Table}
    \fi
}
\@ifpackageloaded{float}{}{\usepackage{float}}
\floatstyle{ruled}
\@ifundefined{c@chapter}{\newfloat{codelisting}{h}{lop}}{\newfloat{codelisting}{h}{lop}[chapter]}
\floatname{codelisting}{Listing}

\makeatother
\makeatletter
\@ifpackageloaded{caption}{}{\usepackage{caption}}
\@ifpackageloaded{subcaption}{}{\usepackage{subcaption}}
\makeatother

\ifLuaTeX
\usepackage{selnolig}  % disable illegal ligatures
\fi

\usepackage[sort]{natbib}
\usepackage{bookmark}
\usepackage{dsfont}
\IfFileExists{xurl.sty}{\usepackage{xurl}}{} % add URL line breaks if available
\hypersetup{
    colorlinks=true,
    linkcolor={blue},
    filecolor={Maroon},
    citecolor={Blue},
    urlcolor={Blue},
    pdfcreator={LaTeX via pandoc}}

\usepackage{threeparttable}
\usepackage{tabularray}
\usepackage{verbatim}

\global\long\def\tr{\mathrm{tr}}%
\global\long\def\R{\mathbb{R}}%
\global\long\def\1{\mathds{1}}%
\global\long\def\tp{\overset{p}{\to}}%
\global\long\def\tod{\overset{d}{\to}}%
\global\long\def\e{\mathbb{E}}%
\global\long\def\ep{\epsilon}%
\global\long\def\diag{\text{diag}}%
\global\long\def\var{\mathrm{Var}}%
\global\long\def\bs#1{\boldsymbol{#1}}%
\global\long\def\trans{\top}%
\global\long\def\vec{\mathrm{vec}}%

\numberwithin{equation}{section}
\theoremstyle{plain}

\theoremstyle{definition}

\theoremstyle{plain}
\newtheorem{thm}{\protect\theoremname}
\theoremstyle{remark}

\theoremstyle{plain}
\newtheorem{lem}{\protect\lemmaname}
\theoremstyle{plain}

\theoremstyle{plain}
\newtheorem{assumption}{\protect\assumptionname}

\theoremstyle{definition}

\providecommand{\definitionname}{Definition}

\providecommand{\lemmaname}{Lemma}

\providecommand{\propositionname}{Proposition}
\providecommand{\remarkname}{Remark}

\providecommand{\theoremname}{Theorem}

\renewcommand\qedsymbol{\ensuremath{\blacksquare}}

\makeatother
\providecommand{\algorithmname}{Algorithm}
\providecommand{\assumptionname}{Assumption}
\providecommand{\definitionname}{Definition}
\providecommand{\lemmaname}{Lemma}
\providecommand{\remarkname}{Remark}
\providecommand{\theoremname}{Theorem}

\usepackage[sectionbib]{bibunits}
\defaultbibliographystyle{agsm} 
\defaultbibliography{r.bib} % name of the .bib file 

\newcommand{\anon}{1}

\begin{document}

\def\spacingset#1{\renewcommand{\baselinestretch}%
{#1}\small\normalsize} \spacingset{1}

%%%%%%%%%%%%%%%%%%%%%%%%%%%%%%%%%%%%%%%%%%%%%%%%%%%%%%%%%%%%%%%%%%%%%%%%%%%%%%

\if1\anon
	{
		\title{\bf Integrating Heterogeneous Information in Randomized Experiments: A Unified Calibration Framework\thanks{The authors are listed in alphabetical order. The complete source code for the simulations and empirical applications is hosted on GitHub at \url{https://github.com/zeqiwu1202/Replication-for-CAR-calibration}.}}
		\author{Wei Ma\thanks{Email address: \protect\href{mailto:mawei@ruc.edu.cn}{mawei@ruc.edu.cn}.} \hspace{1em}  Zeqi Wu\thanks{Corresponding author. Email address: \protect\href{mailto:wuzeqi@ruc.edu.cn}{wuzeqi@ruc.edu.cn}.}\hspace{1em}  Zheng Zhang\thanks{Email address: \protect\href{mailto:zhengzhang@ruc.edu.cn}{zhengzhang@ruc.edu.cn}.}\\
			Institute of Statistics and Big Data, Renmin
			University of China
		}
		\date{}
		\maketitle
	} \fi

\if0\anon
{
	\bigskip
	\bigskip
	\bigskip
	\begin{spacing}{1.5}
	
	\begin{center}
		{\LARGE\bf Calibration for Treatment Effect Estimation in Randomized Controlled Trials: a Unified Approach for Covariates Adjustment and Information Borrowing}
	\end{center}

\end{spacing}
	\medskip
} \fi

\begin{bibunit}

\vspace{-1em}  
%\bigskip     For JASA, Uncomment
\begin{abstract}
   In modern randomized experiments, large-scale data collection increasingly yields rich baseline covariates and auxiliary information from multiple sources. Such information offers opportunities for more precise treatment effect estimation, but it also raises the challenge of integrating heterogeneous information coherently without compromising validity. Covariate-adaptive randomization (CAR) is widely used to improve covariate balance at the design stage, but it typically balances only a small set of covariates used to form strata, making covariate adjustment at the analysis stage essential for more efficient estimation of treatment effects. Beyond standard covariate adjustment, it is often desirable to incorporate auxiliary information, including cross-stratum information, predictions from various machine learning models, and external data from historical trials or real-world sources. While this auxiliary information is widely available, existing covariate adjustment methods under CAR primarily exploit within-stratum covariates and do not provide a coherent mechanism for integrating it. We propose a unified calibration framework that integrates such information through an information proxy vector and calibration weights defined by a convex optimization problem. The resulting estimator recovers many recent covariate adjustment procedures as special cases while providing a systematic mechanism for both internal and external information borrowing within a single framework. We establish large-sample validity and a no-harm efficiency guarantee, showing that incorporating additional information sources cannot increase asymptotic variance, and we extend the theory to settings in which both the number of strata and the number of information sources grow with the sample size. Simulation studies and an empirical analysis of a field experiment on savings behavior in Uganda and Malawi demonstrate the strong finite-sample performance and practical utility of our method.
\end{abstract}
\noindent%
{\it Keywords:} Calibration weights; Covariate-adaptive randomization; Covariate adjustment; Information integration; Machine learning
\vfill

\newpage
%\spacingset{1.8} % DON'T change the spacing! for JASA
\doublespacing

\section{Introduction}

In modern randomized experiments, ensuring balance in baseline covariates
across treatment groups is critical for reducing bias and improving
the credibility of the trial results. Covariate-adaptive randomization
(CAR) methods, such as stratified biased coin randomization (\citealp{efron1971Forcing}),
stratified block randomization (\citealp{zelen1974Randomization}),
and minimization (\citealp{pocock1975Sequential,taves1974Minimization}),
are widely used to achieve this balance during the design stage. In practice, however, CAR is typically implemented using only a small set of covariates to form strata, so balance is not directly enforced for many other baseline covariates. Moreover, some important pre-treatment covariates may only be observed after randomization, for instance when they are collected together with the outcome variable (\citealp{bai2024Covariate}).
Consequently, covariate adjustment at the statistical analysis stage
plays a crucial complementary role. By incorporating baseline covariates
into the statistical analysis, covariate adjustment methods correct
for residual imbalances, thereby improving the precision and efficiency
of treatment effect estimates and strengthening the validity of the
trial.

The adjustment of baseline covariates in the statistical analysis
stage has been studied for a long time (see, e.g., \citealp{tsiatis2008Covariate,zhang2008Improving,lin2013Agnostic}
and the references therein). Under the CAR design, \citet{ma2022Regression,ye2022Inference,gu2023RegressionBased}
adjusted for additional covariates using linear regression, while
\citet{liu2023Lassoadjusted} applied Lasso (\citealp{tibshirani1996Regression}).
These approaches resulted in average treatment effect (ATE) estimators
that are more efficient than the naive difference-in-means estimator
and the ordinary least squares estimator (regressing outcomes on strata
indicators) proposed by \citet{bugni2018Inference} and \citet{bugni2019Inference}.
This holds true even in the absence of strong evidence for a linear
relationship between covariates and outcomes. To further enhance efficiency,
recent work by \citet{rafi2023Efficient}, \citet{tu2024Unified}
and \citet{bannick2025General} introduced unified frameworks for
nonlinear covariate adjustment. Relying on the augmented inverse probability
weighting (AIPW, \citealp{robins1994Estimation}), these frameworks
incorporate machine learning techniques, such as random forests and
deep neural networks, to adjust for covariates. These machine learning
methods can be effective in capturing complex nonlinear relationships
between covariates and outcomes.

However, much of the existing covariate adjustment literature can
be interpreted as focusing on a particular form of internal information
borrowing, namely using baseline covariates from the current trial,
typically within each stratum, to improve efficiency. This focus leaves
relatively little room for other practically important forms of information
borrowing, and it limits the extent to which standard adjustment methods
can integrate heterogeneous information in a systematic way. Internally,
efficiency can often be improved by borrowing information across strata
when the outcome--covariate relationship is stable, and by combining
predictions from multiple machine learning methods when no single
learner is uniformly reliable. Externally, there is increasing interest
in leveraging historical trials and real-world data to support analyses
of concurrent trials, especially when sample sizes are constrained
by cost, ethical considerations, or recruitment challenges \citep{FDA2019RareDiseases,gu2024incorporatingexternaldataanalyzing}.
Existing AIPW-based nonlinear adjustment frameworks (\citealp{tu2024Unified,rafi2023Efficient,bannick2025General}),
however, are not designed to integrate these heterogeneous information
sources, as they typically rely on a single nuisance estimate and
lack a systematic mechanism for combining multiple internal predictors
or external information sources. To fill this gap, we propose a unified
calibration framework for integrating heterogeneous information that
accommodates both internal and external borrowing, thereby enabling
estimators that jointly exploit covariates, cross-stratum information,
and auxiliary data sources.

A central methodological feature of our approach is the use of calibration
weights under CAR designs. Although calibration has been studied in
survey sampling (\citealp{deville1992Calibration,kwon2025Debiased}),
missing data problems (\citealp{qin2007EmpiricalLikelihoodBased,tan2014Secondorder}),
and observational studies (\citealp{chan2016Globally}), its analysis under CAR designs raises distinct theoretical issues. Unlike the aforementioned
settings where samples are typically independent and identically distributed
(i.i.d.), CAR designs induce complex dependence structures among treatment
assignments within strata. We address this dependence through a conditional
asymptotic argument. Specifically, we condition on the realized stratum
indicators and treatment assignments, treat them as fixed, and establish
large-sample results using conditional laws of large numbers and conditional
central limit theorems. This approach provides a tractable framework
for inference under CAR and yields proof techniques that extend to
regimes with a growing number of strata and an increasing number of
information sources, which may be of independent interest. We summarize
the main contributions as follows.
\begin{itemize}
\item \textbf{A unified calibration framework.} We introduce a calibration-based
framework for estimation and inference under CAR. This framework is
unified in three aspects. First, it provides a common formulation
that recovers many recent covariate adjustment procedures
as special cases (e.g., \citealp{bugni2019Inference,cohen2024Noharm,ma2022Regression,tu2024Unified,bannick2025General,ye2022Inference,ye2023Better,liu2023Lassoadjusted}).
Second, it places internal and external information borrowing within
a single architecture, yielding a systematic approach to integrating
heterogeneous information sources. Third, it applies broadly across
CAR schemes satisfying Assumption~\ref{assu:treatment assignment},
so that the resulting inference procedure is not tied to a particular
randomization method.
\item \textbf{Flexible and robust information borrowing.} We develop practical
constructions of the information proxy vector that accommodate a wide
range of internal and external information sources. Internally, the
framework can borrow across strata and aggregate heterogeneous machine
learning predictions. Externally, it can incorporate information from
historical trials and real-world data. Importantly, our framework
is model-agnostic regarding information sources. We refer to this
property as \textit{robustness}, meaning that the validity of our
statistical inference holds even if the utilized information is biased
or generated by inaccurate models.
\item \textbf{General inference theory under CAR.} We provide a rigorous
theoretical foundation, proving that our estimator is asymptotically
normal with a consistently estimable variance. We establish a \textit{no-harm}
efficiency guarantee, ensuring that utilizing additional information
sources improves, or at worst maintains, estimation efficiency. Furthermore,
we develop proof techniques tailored to CAR-induced dependence, distinct
from existing arguments (e.g., \citealp{bugni2018Inference,bugni2019Inference,ma2022Regression,liu2023Lassoadjusted}).
These techniques accommodate a growing number of strata and an increasing
number of information sources, offering tools that may be of independent
interest in related problems.
\end{itemize}
The paper is organized as follows. Figure~\ref{fig:Roadmap} provides
an overview. Section~\ref{sec:Basic-framework-and} introduces the
setting and our unified calibration framework. Section~\ref{sec:Discussions-of-the}
discusses practical strategies for constructing the information proxy
vector $\bs{\xi}_{n}$, including approaches that borrow auxiliary
information from both internal and external sources. Section~\ref{sec:Asymptotic-properties}
establishes the large-sample properties of the proposed estimator,
including asymptotic normality, consistent variance estimation, and
efficiency comparisons. Section~\ref{sec:Extensions} presents theoretical
extensions, including settings with diverging numbers of strata and
a growing dimension of $\bs{\xi}_{n}$, as well as general discrepancy
measures. Finally, Sections~\ref{sec:Simulation-studies} and \ref{sec:Empirical-application}
assess finite-sample performance via simulation and illustrate the
method using experimental data from \citet{dupas2018Bankinga}. All
proofs are provided in the Appendix (Supplementary Material, \citealp{ma2026Integrating}).

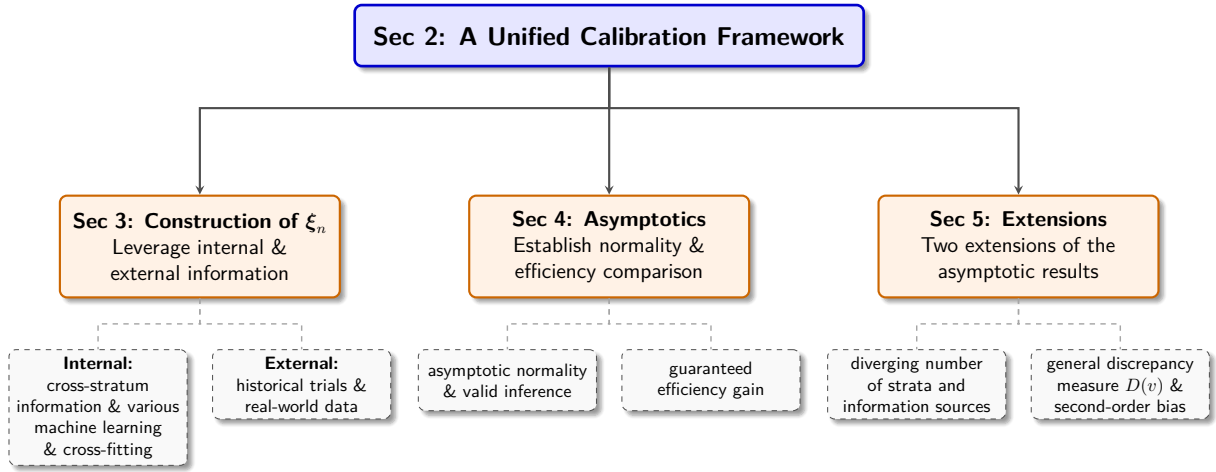
\begin{figure}[!tbh]
\resizebox{\textwidth}{!}{

\tikzset{
    % --- Define basic styles ---
    base/.style = {
        rectangle,
        rounded corners,
        draw=black,
        text centered,
        font=\sffamily, % Use sans-serif font, remove if standard serif is preferred
        blur shadow={shadow blur steps=5}
    },
    % Top level node style (Section 2)
    topnode/.style = {
        base,
        fill=blue!10,
        draw=blue!80!black,
        line width=1.5pt,
        minimum width=10cm,
        minimum height=1.2cm, % Adjusted height after removing subtitle
        font=\sffamily\large\bfseries
    },
    % Middle level node style (Sections 3, 4, 5)
    midnode/.style = {
        base,
        fill=orange!10,
        draw=orange!80!black,
        line width=1.2pt,
        minimum width=5.5cm,
        minimum height=2cm,
        text width=5.2cm % Force text wrap for longer titles
    },
    % Bottom level detail node style
    subnode/.style = {
        base,
        fill=gray!5,
        draw=gray!60!black,
        dashed, % Dashed border indicates extension/detail status
        minimum width=3.5cm,
        minimum height=1.2cm,
        text width=3.2cm,
        font=\sffamily\footnotesize
    },
    % Connector line style (Main arrows)
    connector/.style = {
        ->,
        >=stealth,
        line width=1.2pt,
        color=black!70
    },
    % Sub-connector line style (Dashed lines to details)
    subconnector/.style = {
        -,
        dashed,
        line width=0.8pt,
        color=gray!70
    }
}

\begin{tikzpicture}[node distance=2cm and 2.5cm]

    % --- Place Nodes ---

    % Level 1: Top Node (Section 2)
    \node (sec2) [topnode] {Sec 2: A Unified Calibration Framework};

    % Level 2: Middle Nodes (Sections 3, 4, 5)
    % Strategy: Place the middle node (Sec 4) first, then position 3 and 5 relative to it.
    \node (sec4) [midnode, below=2.5cm of sec2] {\textbf{Sec 4: Asymptotics}\\ Establish normality \& efficiency comparison};
    \node (sec3) [midnode, left=of sec4] {\textbf{Sec 3: Construction of $\bs{\xi}_{n}$}\\ Leverage internal \& external information};
    \node (sec5) [midnode, right=of sec4] {\textbf{Sec 5: Extensions}\\ Two extensions of the asymptotic results};

    % Level 3: Sub Nodes (Details)
    % Sub-nodes for Section 3
    % Using xshift to manually space them out horizontally below the parent node
    \node (sub3a) [subnode, below=1cm of sec3, xshift=-2cm] {\textbf{Internal:} \\ cross-stratum information \& various machine learning \& cross-fitting};
    \node (sub3b) [subnode, below=1cm of sec3, xshift=2cm] {\textbf{External:} \\ historical trials \& real-world data};

    % Sub-nodes for Section 4
    \node (sub4a) [subnode, below=1cm of sec4, xshift=-2cm] {asymptotic normality \& valid inference};
    \node (sub4b) [subnode, below=1cm of sec4, xshift=2cm] {guaranteed \\ efficiency gain};

    % Sub-nodes for Section 5
    \node (sub5a) [subnode, below=1cm of sec5, xshift=-2cm] {diverging number of strata and information sources};
    \node (sub5b) [subnode, below=1cm of sec5, xshift=2cm] {general discrepancy measure $D(v)$ \& second-order bias};

    % --- Draw Connectors ---

    % Main connectors (from Sec 2 to 3, 4, 5)
    % Using path construction (-|) to create right-angle arrows
    \draw [connector] (sec2.south) -- +(0,-0.8) -| (sec3.north);
    \draw [connector] (sec2.south) -- (sec4.north);
    \draw [connector] (sec2.south) -- +(0,-0.8) -| (sec5.north);

    % Sub-connectors (detail connections)
    % Using temporary coordinates (++(0,-0.5)) to create a clean branching look
    \draw [subconnector] (sec3.south) -- ++(0,-0.5) -| (sub3a.north);
    \draw [subconnector] (sec3.south) -- ++(0,-0.5) -| (sub3b.north);

    \draw [subconnector] (sec4.south) -- ++(0,-0.5) -| (sub4a.north);
    \draw [subconnector] (sec4.south) -- ++(0,-0.5) -| (sub4b.north);

    \draw [subconnector] (sec5.south) -- ++(0,-0.5) -| (sub5a.north);
    \draw [subconnector] (sec5.south) -- ++(0,-0.5) -| (sub5b.north);

\end{tikzpicture}

}

\caption{Roadmap of the proposed unified calibration framework and theoretical
analysis. \label{fig:Roadmap}}
\end{figure}

\textit{Notation.} We maintain the following notation conventions
throughout the paper. For any column vector $\bs x=\left(x_{1},x_{2},\ldots,x_{d}\right)^{\top}\in\mathbb{R}^{d}$,
where $\mathbb{R}^{d}$ is the $d$-dimensional Euclidean space, $\left\Vert \bs x\right\Vert =\left(\bs x^{\top}\bs x\right)^{1/2}$
denotes its Euclidean norm. For any matrix $A=\left(a_{ij}\right)_{n\times m}$,
$\left\Vert A\right\Vert $ denotes its maximum singular value, i.e.,
the operator norm, $\left\Vert A\right\Vert _{F}=\sqrt{\tr\left(AA^{\trans}\right)}$
denotes it Frobenius norm, and $A^{+}$ denotes its Moore-Penrose
inverse. For two positive non-random sequence $a_{n},b_{n}$ and random
vector sequence $X_{n}$, $X_{n}=o_{P}(a_{n})$ means $P\left(\left\Vert X_{n}\right\Vert >a_{n}\epsilon\right)\to0$
as $n\to\infty$ for any $\epsilon>0$ and $X_{n}=O_{P}(a_{n})$ means
for any $\epsilon>0$, there exists a constant $M>0$ such that $\limsup_{n\to\infty}P\left(\left\Vert X_{n}\right\Vert \geq a_{n}M\right)<\epsilon$.
The notation $\1(\cdot)$ denotes the indicator function, which takes
the value 1 if the condition inside the parentheses is true and 0
otherwise.

\section{A unified calibration framework\label{sec:Basic-framework-and}}

\subsection{Preliminaries}

Let $A_{i}\in\{0,1\}$ ($i=1,\ldots,n$) denote the treatment assignment
indicator, where $A_{i}=1$ indicates that the $i$-th unit is assigned
to the treatment group and $A_{i}=0$ otherwise. We assume the assignments
$\{A_{i}\}_{i=1}^{n}$ are generated via a CAR design satisfying Assumption~\ref{assu:treatment assignment}.
Consequently, the variables $A_{i}$ are typically not i.i.d. Adopting
the potential outcomes framework (\citealp{imbens2015Causal}), we
define $Y_{i}(1)$ and $Y_{i}(0)$ as the potential outcomes under
treatment and control, respectively. The observed outcome $Y_{i}$
is determined by $Y_{i}=A_{i}Y_{i}(1)+(1-A_{i})Y_{i}(0)$.

The experimental design stratifies units into $K$ strata, with $B_{i}\in\{1,\ldots,K\}$
indicating the stratum of unit $i$. For notational convenience, let
$[k]=\{i:B_{i}=k\}$ denote the set of indices for units belonging
to stratum $k$. To ensure non-empty strata, we assume positive assignment
probabilities: $p_{[k]}=P(B_{i}=k)>0$ for all $k\in\{1,\ldots,K\}$
and $i\in\{1,\ldots,n\}$. The target treatment allocation proportion
stratum $k$ is $\pi_{[k]}=P(A_{i}=1\mid B_{i}=k)\in(0,1)$. Each
unit has a $p$-dimensional baseline covariate vector $\bs X_{i}=(X_{i1},\ldots,X_{ip})^{\top}\in\mathcal{X}\subset\R^{p}$,
which may be either low- or high-dimensional. We assume that the covariate
vector $\bs X_{i}$ contains the stratum indicator $B_{i}$, but to
highlight the stratum indicator $B_{i}$, we sometimes use the notation
$(\bs X_{i},B_{i})$.

Let subscripts 1 and 0 denote treatment and control groups, respectively.
The treatment group contains $n_{1}=\sum_{i=1}^{n}A_{i}$ units and
the control group contains $n_{0}=\sum_{i=1}^{n}(1-A_{i})$ units.
For stratum-specific quantities, we use subscript $[k]$: let $n_{[k]}=\sum_{i\in[k]}1$
denote the stratum size, with $n_{1[k]}=\sum_{i\in[k]}A_{i}$ and
$n_{0[k]}=\sum_{i\in[k]}(1-A_{i})$ representing treated and control
units in stratum $k$, respectively. The stratum proportion and treatment
allocation proportion are defined as $p_{n[k]}=n_{[k]}/n$ and $\pi_{n[k]}=n_{1[k]}/n_{[k]}$,
correspondingly.

Our parameter of interest is the population average treatment effect
(ATE): 
\[
\tau=\e[Y_{i}(1)-Y_{i}(0)].
\]
Under CAR, the ATE parameter $\tau$ can be consistently estimated
by aggregating the treatment effect estimates from each stratum. A
simple estimator for this is the stratified difference-in-means estimator
(\citealp{bugni2019Inference,ma2022Regression}):
\[
\widehat{\tau}_{\mathrm{sdim}}:=\sum_{k=1}^{K}p_{n[k]}\left(\overline{Y}_{1[k]}-\overline{Y}_{0[k]}\right),
\]
where $\overline{Y}_{a[k]}:=\frac{1}{n_{a[k]}}\sum_{i\in[k]}\1(A_{i}=a)Y_{i}$
for $a\in\{0,1\}$ and $k=1,\ldots,K$. To improve estimation efficiency,
recent literature has proposed nonlinear covariate adjustment methods
(\citealp{tu2024Unified,rafi2023Efficient,bannick2025General}). These
methods typically use the augmented inverse probability weighting
(AIPW, \citealp{robins1994Estimation}) estimator and rely on estimating
the conditional mean function $h_{a[k]}^{*}(\bs X_{i})=\e\left[Y_{i}(a)\mid\bs X_{i},B_{i}=k\right]$.
However, AIPW-based adjustment methods are limited in that they cannot
combine different estimates of $h_{a[k]}^{*}(\bs X_{i})$. For instance,
these estimates may leverage internal information via cross-stratum
borrowing and heterogeneous machine learning predictions, or incorporate
external information from historical trials, real-world data, and
functional forms suggested by domain experts (see Section~\ref{sec:Discussions-of-the}
for a detailed discussion). In the following, we will introduce a
unified adjustment framework that can combine these diverse estimates
of $h_{a[k]}^{*}(\bs X_{i})$, leading to more efficient and robust
ATE estimators.

\subsection{Construction of the calibration estimator}

Let $D(v):\mathbb{R}\to\R$ be a twice continuously differentiable
and strictly convex function that measures the discrepancy from $v$
to 1, e.g., $D(v)=(v-1)^{2}/2$ and $D(v)=v-\log v$. Suppose $\bs{\xi}_{n}:\mathcal{X}\to\R^{d}$
($n\geq1$) is a sequence of $\R^{d}$-valued (possibly) random functions
of $\bs X_{i}$, which we refer to as the \textit{information proxy
vector}. Typically, the elements of $\bs{\xi}_{n}$ may be different
estimates of the conditional mean function $h_{a[k]}^{*}(\bs X_{i})$.
Based on $\bs{\xi}_{n}$, we propose the calibration estimator: 
\begin{equation}
\widehat{\tau}_{\mathrm{cal}}:=\widehat{\tau}_{\mathrm{sdim}}+\frac{1}{n}\sum_{i=1}^{n}\widehat{w}_{i}r_{i},\label{eq:cal estimator}
\end{equation}
where $r_{i}:=\sum_{k=1}^{K}\bigl\{\frac{A_{i}}{\pi_{n[k]}}(Y_{i}-\overline{Y}_{1[k]})-\frac{1-A_{i}}{1-\pi_{n[k]}}(Y_{i}-\overline{Y}_{0[k]})\bigr\}\1(B_{i}=k)$
and the calibration weights $\widehat{w}_{i}$'s ($i=1,\ldots,n$)
solve the calibration problem:
\begin{equation}
\begin{cases}
(\widehat{w}_{1},\ldots,\widehat{w}_{n})=\min_{w_{i},1\leq i\leq n}\sum_{i=1}^{n}D(w_{i})\ \text{subject to }\\
\frac{1}{n}\sum_{i=1}^{n}w_{i}\left\{ A_{i}-\pi_{n[k]}\right\} \1(B_{i}=k)\left\{ \bs{\xi}_{n}(\bs X_{i})-\overline{\bs{\xi}}_{n[k]}\right\} =0, & \forall k=1,\ldots,K.
\end{cases}\label{eq:cal opt problem}
\end{equation}
In (\ref{eq:cal opt problem}), $\overline{\bs{\xi}}_{n[k]}:=\frac{1}{n_{[k]}}\sum_{i\in[k]}\bs{\xi}_{n}(\bs X_{i})$
is defined as the stratum-specific sample mean for $\bs{\xi}_{n}(\bs X_{i})$.
The calibration problem (\ref{eq:cal opt problem}) is a convex optimization
problem involving $dK$ linear constraints. This can be efficiently
solved using standard convex optimization software via its dual formulation
(see \citealp{boyd2004Convex}). The calibration estimator (\ref{eq:cal estimator})
consists of two components: (i) the stratified difference-in-means
estimator $\widehat{\tau}_{\mathrm{sdim}}$ and (ii) a correction
term constructed from weighted residuals, where the weights are determined
by the calibration problem (\ref{eq:cal opt problem}). Analogous
to linear regression analysis, the residuals $r_{i}=\sum_{k=1}^{K}\bigl\{\frac{A_{i}}{\pi_{n[k]}}(Y_{i}-\overline{Y}_{1[k]})-\frac{1-A_{i}}{1-\pi_{n[k]}}(Y_{i}-\overline{Y}_{0[k]})\bigr\}\1(B_{i}=k)$,
$i=1,\ldots,n$, represent the part of $\sum_{k=1}^{K}\bigl\{\frac{A_{i}}{\pi_{n[k]}}Y_{i}-\frac{1-A_{i}}{1-\pi_{n[k]}}Y_{i}\bigr\}\1(B_{i}=k)$
that remains unexplained by the stratum mean. Geometrically, the residuals
$r_{i}$ arise from projecting $(\sum_{k=1}^{K}\bigl\{\frac{A_{i}}{\pi_{n[k]}}Y_{i}-\frac{1-A_{i}}{1-\pi_{n[k]}}Y_{i}\bigr\}\1(B_{i}=k):1\leq i\leq n)$
onto the orthogonal complement of the space spanned by $(1,...,1)$,
which inherently implies $\sum_{i=1}^{n}r_{i}=0$. The calibration
problem (\ref{eq:cal opt problem}) enforces the balance of $\bs{\xi}_{n}(\bs X_{i})-\overline{\bs{\xi}}_{n[k]}$
across different treatment groups within strata when the observed
samples are weighted by $\widehat{w}_{i}$, $i=1,\ldots,n$, thereby
naturally incorporating the information contained in $\bs{\xi}_{n}$
into the calibration weights $\widehat{w}_{i}$'s. This incorporation
of information allows the calibration weights $\widehat{w}_{i}$ to
further explain the variability in $\sum_{k=1}^{K}\bigl\{\frac{A_{i}}{\pi_{n[k]}}Y_{i}-\frac{1-A_{i}}{1-\pi_{n[k]}}Y_{i}\bigr\}\1(B_{i}=k)$.
Consequently, the weighted average of the residuals, $\frac{1}{n}\sum_{i=1}^{n}\widehat{w}_{i}r_{i}$,
acts as a correction term representing the part explained by the calibration
weights. If $\widehat{w}_{i}=1$ for all $i=1,\ldots,n$, which occurs,
for instance, when $\bs{\xi}_{n}(\bs X)-\overline{\bs{\xi}}_{n[k]}$
is already balanced by the randomization procedure, then $\widehat{\tau}_{\mathrm{cal}}$
reduces to $\widehat{\tau}_{\mathrm{sdim}}$.

Our calibration estimator (\ref{eq:cal estimator}) provides a unified
information integration framework. With an appropriate choice of the
information proxy vector $\boldsymbol{\xi}_{n}$, it recovers many
existing covariate adjustment methods, which can be viewed as a special
case of internal information borrowing (e.g., \citealp{liu2023Lassoadjusted,cohen2024Noharm,tu2024Unified,bannick2025General,ye2022Inference,ye2023Better,gu2024incorporatingexternaldataanalyzing,bugni2019Inference,ma2022Regression}).
For instance, when $\boldsymbol{\xi}_{n}$ is specified as a Lasso-based
estimate, (\ref{eq:cal estimator}) becomes analogous to the Lasso-adjusted
estimator of \citet{liu2023Lassoadjusted}. Beyond internal adjustment,
(\ref{eq:cal estimator}) also accommodates external information proxies,
in parallel to external-borrowing approaches such as \citet{gu2024incorporatingexternaldataanalyzing},
and thus enables new estimators that jointly leverage internal and
external information.

A key property of our calibration estimator (\ref{eq:cal estimator})
is its invariance under affine transformations of $\bs{\xi}_{n}(\bs X)$.
Specifically, replacing $\bs{\xi}_{n}(\bs X)$ with $\mathbf{Q}\bs{\xi}_{n}(\bs X)+\bs q$
in the optimization problem (\ref{eq:cal opt problem}) results in
the same calibration estimator $\widehat{\tau}_{\mathrm{cal}}$, where
$\mathbf{Q}\in\mathbb{R}^{d\times d}$ is any invertible $d\times d$
matrix and $\bs q\in\mathbb{R}^{d}$ is any $d$-dimensional vector.
If $\bs{\xi}_{n}(\bs X)$ serves as an estimate of the conditional
mean function $h_{a[k]}^{*}(\bs X_{i})$, this property allows for
misspecification up to an affine transformation, offering a flexibility
that is absent in standard AIPW-based covariate adjustment methods.

\section{Strategies for constructing the information proxy $\protect\bs{\xi}_{n}$\label{sec:Discussions-of-the}}

The efficacy and flexibility of the proposed unified framework relies
on the construction of $\bs{\xi}_{n}$, which serves as a proxy for
auxiliary information. Accordingly, we categorize the strategies for
determining $\bs{\xi}_{n}$ into internal and external information
borrowing. Sections~\ref{subsec:Stratum-information-borrowing}--\ref{subsec:cross-fitting}
focus on internal strategies, encompassing cross-stratum information
borrowing, the aggregation of different machine learning predictions,
and cross-fitting techniques, whereas Section~\ref{subsec:-from-external}
discusses the incorporation of external information such as historical
trials and real-world data.

\subsection{Cross-stratum information borrowing\label{subsec:Stratum-information-borrowing}}

Suppose $\widehat{h}_{a[k]}(\cdot)$, where $a\in\{0,1\}$ and $1\leq k\leq K$,
are the estimators for $h_{a[k]}^{*}(\cdot)$. According to Theorem~\ref{thm:asymptotic properties of tau_cal}
(to be established in Section~\ref{sec:Asymptotic-properties}),
if $\widehat{h}_{a[k]}(\cdot)$ is consistent to $h_{a[k]}^{*}(\cdot)$,
then taking $\bs{\xi}_{n}(\bs X)=\left(\sum_{k=1}^{K}\widehat{h}_{1[k]}(\bs X)\1(B=k),\sum_{k=1}^{K}\widehat{h}_{0[k]}(\bs X)\1(B=k)\right)^{\trans}$
in (\ref{eq:cal estimator}) will lead to a semiparametrically efficient
estimator for $\tau$. However, it is important to note that this
approach ensures each stratum performs covariate adjustment using
only the information available within that stratum itself.\textcolor{red}{{}
}However, borrowing information across strata is often advantageous,
particularly when the relationship between covariates $\bs X$ and
potential outcomes $(Y(1),Y(0))$ is stable across different strata.
Our framework accommodates this naturally by taking $\bs{\xi}_{n}(\bs X)=\left(\widehat{h}_{1[k]}(\bs X),\widehat{h}_{0[k]}(\bs X):1\leq k\le K\right)^{\trans}$
in (\ref{eq:cal estimator}). This specification allows each stratum
to use information from all strata, leading to a more efficient calibration
estimator.

\subsection{Integration of heterogeneous machine learning predictions}

In practice, a variety of machine learning methods can be used to
estimate the conditional mean function $h_{a[k]}^{*}(\bs X)$, such
as deep neural networks (\citealp{lecun2015deep,jiao2023Deep,farrell2021Deepa}),
random forests (\citealp{breiman2001random,wager2018estimation}),
Lasso (\citealp{tibshirani1996Regression}), and others. Our framework
provides a natural approach to combine these machine learning estimates.
Suppose $\widehat{h}_{a[k]}^{\text{rf}}(\bs X)$ and $\widehat{h}_{a[k]}^{\text{nn}}(\bs X)$
are the estimates of $h_{a[k]}^{*}(\bs X)$ based on random forests
and deep neural networks, respectively. We can then let $\bs{\xi}_{n}(\bs X)=\left(\widehat{h}_{a[k]}^{\text{rf}}(\bs X),\widehat{h}_{a[k]}^{\text{nn}}(\bs X):a\in\{0,1\},1\leq k\le K\right)^{\trans}$
in (\ref{eq:cal estimator}) to derive a calibration estimator for
$\tau$. According to Theorem \ref{thm:efficiency comparison thm}
(to be established in Section~\ref{sec:Asymptotic-properties}),
this estimator will be more efficient than those based solely on random
forests or neural networks when a single machine learning method fails
to fully capture the conditional mean function $h_{a[k]}^{*}(\bs X)$.
This improvement enhances the efficiency of covariate adjustment methods.

\subsection{Implementation via cross-fitting and sample splitting\label{subsec:cross-fitting}}

When estimating $\bs{\xi}_{n}$ via machine learning methods, it is
often desirable to rely on an independent sample in order to mitigate
overfitting and to make Assumption \ref{assu:conditions on =00005Cxi}
more plausible. Such independence can be achieved through sample-splitting.
However, sample-splitting typically reduces efficiency. To recover
efficiency, we adopt the cross-fitting technique (see \citealp{chernozhukov2018Double,tu2024Unified,bannick2025General,rafi2023Efficient}).
Specifically, we follow the sample-splitting procedure outlined in
\citet[Section 4]{rafi2023Efficient}, which ensures that each fold
contains data from every stratum and treatment arm. Suppose the units
$\{1,\ldots,n\}$ are partitioned into two equal folds, $I_{0}$ and
$I_{1}$. The cross-fitted calibration estimator is then defined as
$\widehat{\tau}_{\mathrm{cal}}^{\mathrm{CF}}:=\frac{1}{2}\widehat{\tau}_{\mathrm{cal}}^{(0)}+\frac{1}{2}\widehat{\tau}_{\mathrm{cal}}^{(1)}$,
where 
\begin{align*}
\widehat{\tau}_{\mathrm{cal}}^{(\iota)} & =\sum_{k=1}^{K}p_{n[k]}^{(\iota)}\left(\overline{Y}_{1[k]}^{(\iota)}-\overline{Y}_{0[k]}^{(\iota)}\right)\\
 & \quad+\frac{1}{\left|I_{\iota}\right|}\sum_{i\in I_{\iota}}\widehat{w}_{i}\sum_{k=1}^{K}\left\{ \frac{A_{i}}{\pi_{n[k]}^{(\iota)}}\left(Y_{i}-\overline{Y}_{1[k]}^{(\iota)}\right)-\frac{1-A_{i}}{1-\pi_{n[k]}^{(\iota)}}\left(Y_{i}-\overline{Y}_{0[k]}^{(\iota)}\right)\right\} \1(B_{i}=k)
\end{align*}
with $\left|I_{\iota}\right|=\sum_{i\in I_{\iota}}1$, $p_{n[k]}^{(\iota)}=\sum_{i\in I_{\iota}}\1(B_{i}=k)/\left|I_{\iota}\right|$,
$\overline{Y}_{a[k]}^{(\iota)}=\sum_{i\in I_{\iota}}\1(A_{i}=a,B_{i}=k)Y_{i}/\sum_{i\in I_{\iota}}\1(A_{i}=a,B_{i}=k)$,
$\pi_{n[k]}^{(\iota)}=\sum_{i\in I_{\iota}}\1(A_{i}=a,B_{i}=k)/\sum_{i\in I_{\iota}}\1(B_{i}=k)$
for $\iota\in\{0,1\}$. The weights $\widehat{w}_{i}$'s ($i\in I_{\iota}$)
are obtained by solving the following calibration problem:
\[
\begin{cases}
(\widehat{w}_{i})_{i\in I_{\iota}}=\min_{w_{i}:i\in I_{\iota}}\sum_{i\in I_{\iota}}D(w_{i})\ \text{subject to }\\
\sum_{i\in I_{\iota}}w_{i}\left\{ A_{i}-\pi_{n[k]}^{(\iota)}\right\} \1(B_{i}=k)\left\{ \bs{\xi}_{n}^{(1-\iota)}(\bs X_{i})-\overline{\bs{\xi}}_{n[k]}^{(1-\iota)}\right\} =0, & \forall k=1,\ldots,K,
\end{cases}
\]
where $\bs{\xi}_{n}^{(1-\iota)}$ is measurable with respect to $\{(Y_{i},\bs X_{i},A_{i}):i\in I_{\iota}\}$
and $\overline{\bs{\xi}}_{n[k]}^{(1-\iota)}:=\sum_{i\in I_{\iota}}\bs{\xi}_{n}(\bs X_{i})\1(B_{i}=k)/\sum_{i\in I_{\iota}}\1(B_{i}=k)$
denotes the stratum-specific sample mean for $\bs{\xi}_{n}^{(1-\iota)}(\bs X)$.
The asymptotic properties of $\widehat{\tau}_{\mathrm{cal}}^{\mathrm{CF}}$
can be established by combining the proof of Theorem~\ref{thm:asymptotic properties of tau_cal}
with the arguments from the proof of \citet[Theorem 4.1]{rafi2023Efficient}
and we omit the details here. We will evaluate the finite-sample performance
of this cross-fitted estimator $\widehat{\tau}_{\mathrm{cal}}^{\mathrm{CF}}$
in our simulation studies.

\subsection{Leveraging historical and real-world data\label{subsec:-from-external}}

Beyond internal strategies, constructing $\bs{\xi}_{n}$ using external
information offers a powerful way to enhance efficiency. In this subsection,
we focus on two primary sources: historical clinical trials and real-world
data. While these sources provide valuable information, they often
differ distributionally from the current trial. A key distinction
of our framework is its ability to leverage such heterogeneous external
data without requiring restrictive similarity assumptions.

First, historical clinical trials often provide readily available
data from settings similar to the current study. Since treatments
often vary between trials, existing literature has predominantly focused
on leveraging historical control data to enhance the precision of
estimates in the current trial (\citealp{pocock1975Sequential,callegaro2023Historical}).
These methods usually rely on the assumption that the historical and
current control arms are comparable. In contrast, a distinct advantage
of our unified framework is its assumption-lean nature: we impose
no assumptions on the validity or direct transferability of the borrowed
information. Consequently, we are not restricted to borrowing solely
from historical control arms; we can flexibly incorporate information
from historical studies involving treatments that may differ from
those in the current trial.

Second, beyond historical clinical trials, real-world data represents
another vast information source. Real-world data encompasses data
generated from routine healthcare delivery and patient monitoring,
including electronic health records (EHRs), disease registries, and
increasingly, data from wearable devices. Unlike tightly controlled
clinical trials, observational studies based on real-world data typically
involve much larger and more diverse patient populations, which can
significantly strengthen the statistical power of the analysis.

Crucially, both historical trial data and real-world data may exhibit
a ``covariate shift'', i.e., the distribution of covariates $\bs X$
in these external sources often differs from that of the current trial.
However, a common phenomenon is that while the marginal distribution
of $\bs X$ changes, the conditional distribution of the potential
outcomes $(Y(1),Y(0))$ given the covariates $\bs X$ remains comparable.
This stability is important because the optimal choice of $\bs{\xi}_{n}(\bs X)$,
which ensures that the resulting estimator $\widehat{\tau}_{\mathrm{cal}}$
is semiparametrically efficient, depends solely on the conditional
distribution of $(Y(1),Y(0))$ given $\bs X$ (see Theorem~\ref{thm:asymptotic properties of tau_cal}).
In practice, we can leverage this by estimating the conditional mean
functions $h_{1[k]}^{*}(\bs X)$ and $h_{0[k]}^{*}(\bs X)$ ($1\leq k\leq K$)
from these external datasets and setting $\bs{\xi}_{n}(\bs X)$ as
the vector of these estimates. Our Theorem~\ref{thm:efficiency comparison thm}
guarantees that incorporating these estimates from the external datasets
will result in a calibration estimator with no greater asymptotic
variance. Consequently, our framework provides a robust strategy for
improving estimation efficiency without the risk of ``negative transfer''.
Unlike existing methods that rely on Bayesian frameworks (\citealp{hobbs2011Hierarchical,ibrahim2015Power})
or Trans-Lasso (\citealp{gu2024incorporatingexternaldataanalyzing}),
where the transfer of information depends on the prior distributions
and/or model assumptions, our approach is entirely model-free. Notably,
it imposes no similarity constraints between the external data and
the target data-generating process, thereby enhancing flexibility
and robustness against real-world data complexities.

\section{Asymptotic properties\label{sec:Asymptotic-properties}}

We make the following assumptions.
\begin{assumption}
\label{assu:independent sampling}$\bs W_{i}=(Y_{i}(1),Y_{i}(0),\bs X_{i}^{\trans},B_{i})^{\trans},$
$i=1,\ldots,n$, are i.i.d. samples from the population distribution
of $\bs W=(Y(1),Y(0),\bs X^{\trans},B)^{\trans}$, and we denote $\bs W^{(n)}=\{\bs W_{1},\ldots,\bs W_{n}\}$.
Besides, $\sup_{1\leq k\leq K}\left[\left|Y_{i}(a)\right|^{2+\ep}\mid B_{i}=k\right]<\infty$
for both $a=0,1$, where $0<\ep\leq1$ is a constant.
\end{assumption}
\begin{assumption}
[Treatment assignment]\label{assu:treatment assignment}Let $A^{(n)}=\{A_{1},\ldots,A_{n}\}$
and $B^{(n)}=\{B_{1},\ldots,B_{n}\}$. The treatment assignment mechanism
satisfies the following conditions:
\begin{enumerate}[label=(\arabic*)]
\item conditional on the stratum indicators $B^{(n)}$, the treatment assignments
$A^{(n)}$ are independent of $\bs W^{(n)}$, i.e., $\bs W^{(n)}\perp A^{(n)}\mid B^{(n)}$;
\item $\sup_{1\leq k\leq K}\left|\pi_{n[k]}-\pi_{[k]}\right|\tp0$ as $n\to\infty$,
where $\pi_{[k]}$, $k=1,\ldots,K$, satisfy 
\[
0<\inf_{n\geq1}\inf_{1\leq k\leq K}\pi_{[k]}\leq\sup_{n\geq1}\sup_{1\leq k\leq K}\pi_{[k]}<1.
\]
Moreover, $\lim_{n\to\infty}P(\inf_{1\leq k\leq K}n_{[k]}\geq d+2)=1$,
where $d$ denotes the dimension of $\bs{\xi}_{n}$.
\end{enumerate}
\end{assumption}
\begin{assumption}
\label{assu:conditions on =00005Cxi}$d$ and $K$ are fixed numbers.
The sequence of (possibly) random functions $\bs{\xi}_{n}$ satisfies
the following conditions:
\begin{enumerate}[label=(\arabic*)]
\item There exists a sequence of non-stochastic functions $\bs{\xi}_{n}^{*}:\mathcal{X}\to\R^{d}$
($n\geq1$) such that
\begin{enumerate}
\item $\sup_{n\geq1}\e\left[\left\Vert \bs{\xi}_{n}^{*}(\bs X_{i})\right\Vert ^{2+\epsilon}\right]<\infty$,
where $\epsilon$ is defined in Assumption \ref{assu:independent sampling};
\item the minimal non-zero singular value of $\e\left[\widetilde{\bs{\xi}}_{n}^{*}(\bs X_{i})\widetilde{\bs{\xi}}_{n}^{*}(\bs X_{i})^{\trans}\mid B_{i}=k\right]$
is larger than some constant $c>0$ uniformly over $n\geq1$ and $k=1,\ldots,K$;
\item for every $k=1,\ldots,K$,
\[
\frac{1}{n_{1[k]}}\sum_{i\in[k]}A_{i}\left\{ \bs{\xi}_{n}(\bs X_{i})-\bs{\xi}_{n}^{*}(\bs X_{i})\right\} -\frac{1}{n_{0[k]}}\sum_{i\in[k]}\left(1-A_{i}\right)\left\{ \bs{\xi}_{n}(\bs X_{i})-\bs{\xi}_{n}^{*}(\bs X_{i})\right\} =o_{P}(n^{-1/2})
\]
and $\frac{1}{n_{[k]}}\sum_{i\in[k]}\left\Vert \bs{\xi}_{n}(\bs X_{i})-\bs{\xi}_{n}^{*}(\bs X_{i})\right\Vert ^{2}=o_{P}(1)$\textup{;}
\item $\liminf_{n\to\infty}(\varsigma_{\widetilde{Y}}^{2}-\varsigma_{\widetilde{Y}\mid\widetilde{\bs{\xi}}_{n}^{*}}^{2})>0$,
where $\varsigma_{\widetilde{Y}}^{2}$ and $\varsigma_{\widetilde{Y}\mid\widetilde{\bs{\xi}}_{n}^{*}}^{2}$
are defined in Appendix~\ref{sec:Proofs-for-the};
\end{enumerate}
\item For every $k=1,\ldots,K$, 
\[
\left\Vert \left\{ \frac{1}{n}\sum_{i\in[k]}(A_{i}-\pi_{n[k]})^{2}\left\{ \bs{\xi}_{n}(\bs X_{i})-\overline{\bs{\xi}}_{n[k]}\right\} \left\{ \bs{\xi}_{n}(\bs X_{i})-\overline{\bs{\xi}}_{n[k]}\right\} ^{\trans}\right\} ^{+}\right\Vert =O_{P}(1).
\]
\end{enumerate}
\end{assumption}
In this section, we focus on the case where the number of strata,
$K$, is fixed, and leave the case where $K$ is diverging to Section~\ref{subsec:d diverges}.
Assumptions~\ref{assu:independent sampling} and \ref{assu:treatment assignment}
are standard in the literature on statistical inference under CAR
(see, e.g., \citealp{bugni2018Inference,bannick2025General,bugni2019Inference,ma2022Regression,tu2024Unified,jiang2023Regressionadjusted}
and references therein). Assumption \ref{assu:treatment assignment}
is satisfied for many randomization methods, such as stratified block
randomization \citep{zelen1974Randomization}, and Pocock and Simon’s
minimization \citep{pocock1975Sequential}. Assumption \ref{assu:conditions on =00005Cxi}(1)(a)
is a mild moment assumption on $\bs{\xi}_{n}^{*}$, the probability
limit of $\bs{\xi}_{n}$. Assumption \ref{assu:conditions on =00005Cxi}(1)(b)
restricts that the minimal non-zero singular value of $\e\left[\widetilde{\bs{\xi}}_{n}^{*}(\bs X_{i})\widetilde{\bs{\xi}}_{n}^{*}(\bs X_{i})^{\trans}\mid B_{i}=k\right]$
is bounded away from zero. Importantly, this assumption allows the
matrix $\e\left[\widetilde{\bs{\xi}}_{n}^{*}(\bs X_{i})\widetilde{\bs{\xi}}_{n}^{*}(\bs X_{i})^{\trans}\mid B_{i}=k\right]$
to be singular. This is reasonable, as in many applications, the components
of $\bs{\xi}_{n}$ may share common information, leading to cases
where the matrix $\e\left[\widetilde{\bs{\xi}}_{n}^{*}(\bs X_{i})\widetilde{\bs{\xi}}_{n}^{*}(\bs X_{i})^{\trans}\mid B_{i}=k\right]$
is indeed singular. Assumption \ref{assu:conditions on =00005Cxi}(1)(c)
requires that $\bs{\xi}_{n}$ converges to its probability limit $\bs{\xi}_{n}^{*}$.
This assumption is similar to \citet[Assumption 4]{tu2024Unified},
\citet[Assumption 2]{bannick2025General} and \citet[Assumption 3]{jiang2023Regressionadjusted}.
When $\bs{\xi}_{n}$ is derived from linear regression, it simplifies
to \citet[Assumption 3]{liu2023Lassoadjusted}. When $\bs{\xi}_{n}$
is derived from local linear kernel regression, this assumption can
be verified by \citet[Theorem 3]{tu2024Unified}. If $\bs{\xi}_{n}$
is a general machine learning estimator that depends on the entire
observed dataset, then verifying Assumption \ref{assu:conditions on =00005Cxi}(1)(c)
usually requires that the function class containing $\bs{\xi}_{n}$
satisfies the Donsker's condition (see \citealp[Assumption 3]{bannick2025General}),
which may be violated if $\bs{\xi}_{n}$ is derived from complex machine
learning algorithms. To bypass the Donsker's condition, one can use
sample-splitting and/or cross-fitting (\citealp{rafi2023Efficient,tu2024Unified,chernozhukov2018Double})
techniques as discussed in Section~\ref{subsec:cross-fitting}. By
using cross-fitting, Assumption~\ref{assu:conditions on =00005Cxi}(1)(c)
can be replaced by a mean square error convergence condition (see
\citealp[Assumption 4.1]{rafi2023Efficient} and \citealp[Assumption 6]{tu2024Unified}),
which is satisfied for many machine learning methods (see, e.g., \citealp{farrell2021Deepa,chi2022Asymptotic,wager2018estimation,jiao2023Deep}).
Assumption \ref{assu:conditions on =00005Cxi}(1)(d) ensures that
the outcome $Y$ cannot be fully explained by $\bs{\xi}(\bs X)$.
Assumption~\ref{assu:conditions on =00005Cxi}(2) is a technical
requirement that ensures the estimated Moore--Penrose inverse remains
well-behaved and does not diverge. This condition is necessary to
address the discontinuity of the Moore--Penrose inverse operation
(see e.g., \citealp[Theorem 3.1]{stewart1977Perturbation}), as Assumptions~\ref{assu:independent sampling}--\ref{assu:treatment assignment}
and \ref{assu:conditions on =00005Cxi}(1)(a)--(c) alone do not guarantee
the convergence of the inverse. However, in the standard case where
the limiting matrix $\e\left[\widetilde{\bs{\xi}}_{n}^{*}(\bs X_{i})\widetilde{\bs{\xi}}_{n}^{*}(\bs X_{i})^{\trans}\mid B_{i}=k\right]$
is non-singular, the Moore--Penrose inverse coincides with the standard
matrix inverse. Since standard inversion is a continuous operation,
Assumption~\ref{assu:conditions on =00005Cxi}(2) is then automatically
implied by the preceding assumptions.

Recall that $h_{a[k]}^{*}(\bs X)=\e\left[Y(a)\mid\bs X,B=k\right]$
and let $\widetilde{h}_{a[k]}^{*}(\bs X):=h_{a[k]}^{*}(\bs X)-\e\left[h_{a[k]}^{*}(\bs X)\mid B=k\right]$
and $\widetilde{\bs{\xi}}_{n}^{*}(\bs X):=\bs{\xi}_{n}^{*}(\bs X)-\e\left[\bs{\xi}_{n}^{*}(\bs X)\mid B=k\right]$.
We have the following theorem.
\begin{thm}
\label{thm:asymptotic properties of tau_cal}Suppose that Assumptions
\ref{assu:independent sampling}--\ref{assu:conditions on =00005Cxi}
hold and $D(v)=(v-1)^{2}/2$. Then 
\[
\frac{\sqrt{n}\left(\widehat{\tau}_{\mathrm{cal}}-\tau\right)}{\sqrt{\varsigma_{H}^{2}+\varsigma_{\widetilde{Y}}^{2}-\varsigma_{\widetilde{Y}\mid\widetilde{\bs{\xi}}_{n}^{*}}^{2}}}\tod N(0,1)\text{ and }\widehat{\varsigma}_{H}^{2}+\widehat{\varsigma}_{\widetilde{Y}}^{2}-\widehat{\varsigma}_{\widetilde{Y}\mid\widetilde{\bs{\xi}}_{n}^{*}}^{2}=\varsigma_{H}^{2}+\varsigma_{\widetilde{Y}}^{2}-\varsigma_{\widetilde{Y}\mid\widetilde{\bs{\xi}}_{n}^{*}}^{2}+o_{P}(1),
\]
where the definitions of the asymptotic variances $\varsigma_{H}^{2}$,
$\varsigma_{\widetilde{Y}}^{2}$, $\varsigma_{\widetilde{Y}\mid\widetilde{\bs{\xi}}_{n}^{*}}^{2}$
and their estimates $\widehat{\varsigma}_{H}^{2},\widehat{\varsigma}_{\widetilde{Y}}^{2}$
and $\widehat{\varsigma}_{\widetilde{Y}\mid\widetilde{\bs{\xi}}_{n}^{*}}^{2}$
can be found in Appendix~\ref{sec:Proofs-for-the}. Moreover, if
for every $k=1,\ldots,K$, there exists a non-stochastic vector $\bs{\alpha}_{k}\in\R^{d}$
such that 
\[
\left\{ \sqrt{\frac{1-\pi_{[k]}}{\pi_{[k]}}}\widetilde{h}_{1[k]}^{*}(\bs X)+\sqrt{\frac{\pi_{[k]}}{1-\pi_{[k]}}}\widetilde{h}_{0[k]}^{*}(\bs X)\right\} \1(B=k)=\bs{\alpha}_{k}^{\trans}\widetilde{\bs{\xi}}_{n}^{*}(\bs X)\1(B=k),
\]
then 
\[
\frac{\sqrt{n}\left(\widehat{\tau}_{\mathrm{cal}}-\tau\right)}{\sqrt{\varsigma_{H}^{2}+\varsigma_{\widetilde{Y}}^{2}-\varsigma_{\widetilde{Y}\mid\widetilde{h}^{*}}^{2}}}\tod N(0,1),
\]
and the asymptotic variance $\varsigma_{H}^{2}+\varsigma_{\widetilde{Y}}^{2}-\varsigma_{\widetilde{Y}\mid\widetilde{h}^{*}}^{2}$
matches the semiparametric efficiency bound developed in \citet[Theorem 3.1]{rafi2023Efficient}.
\end{thm}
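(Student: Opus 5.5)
With the quadratic discrepancy $D(v)=(v-1)^{2}/2$, the calibration problem (\ref{eq:cal opt problem}) has an explicit solution. This reduces $\widehat{\tau}_{\mathrm{cal}}$ to a regression-adjusted estimator, and the rest of the proof is about the linear expansion and CLT under CAR.

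\textbf{Step 1: explicit weights.} Write $\bs u_{i}=(A_{i}-\pi_{n[k]})\{\bs{\xi}_{n}(\bs X_{i})-\overline{\bs{\xi}}_{n[k]}\}$ for $i\in[k]$. The Lagrangian conditions give $\widehat{w}_{i}=1+\bs{\lambda}_{k}^{\trans}\bs u_{i}$ for $i\in[k]$. The constraint $\sum_{i\in[k]}\widehat{w}_{i}\bs u_{i}=0$ then yields
\[
\bs{\lambda}_{k}=-\Bigl(\sum_{i\in[k]}\bs u_{i}\bs u_{i}^{\trans}\Bigr)^{+}\sum_{i\in[k]}\bs u_{i},
\]
taking the minimum-norm solution. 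The Moore--Penrose inverse is the right choice here because $\sum_{i\in[k]}\bs u_{i}$ lies in the column space of the Gram matrix, and $\widehat{w}$ does not depend on which solution is chosen. Since $\sum_{i}r_{i}=0$, the correction term equals
\[
\frac{1}{n}\sum_{i=1}^{n}\widehat{w}_{i}r_{i}=\sum_{k}\bs{\lambda}_{k}^{\trans}\frac{1}{n}\sum_{i\in[k]}\bs u_{i}r_{i}.
\]
Now $\frac{1}{n}\sum_{i\in[k]}\bs u_{i}=p_{n[k]}\pi_{n[k]}(1-\pi_{n[k]})\{\overline{\bs{\xi}}_{1[k]}-\overline{\bs{\xi}}_{0[k]}\}$, which is the within-stratum imbalance of $\bs{\xi}_{n}$. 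Also, $\frac{1}{n}\sum_{i\in[k]}\bs u_{i}r_{i}$ is a within-stratum covariance between $\bs{\xi}_{n}$ and the residual outcomes, suitably weighted by $(1-\pi_{n[k]})$ and $\pi_{n[k]}$. The estimator therefore has the form
\[
\widehat{\tau}_{\mathrm{cal}}=\widehat{\tau}_{\mathrm{sdim}}-\sum_{k}p_{n[k]}\widehat{\bs{\beta}}_{k}^{\trans}\bigl(\overline{\bs{\xi}}_{1[k]}-\overline{\bs{\xi}}_{0[k]}\bigr),
\]
where $\widehat{\bs{\beta}}_{k}$ is a least-squares projection coefficient. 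This also explains the affine invariance noted earlier.

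\textbf{Step 2: replace estimated quantities by their limits.} Conditional on $(A^{(n)},B^{(n)})$, which Assumption~\ref{assu:treatment assignment}(1) allows, I would use conditional LLNs within each treatment-by-stratum cell. Combined with Assumption~\ref{assu:conditions on =00005Cxi}(1)(c), this lets me replace $\bs{\xi}_{n}$ by $\bs{\xi}_{n}^{*}$ in the imbalance term with $o_{P}(n^{-1/2})$ error. I would then show $\widehat{\bs{\beta}}_{k}\to\bs{\beta}_{k}^{*}$ in the sense that $\widehat{\bs{\beta}}_{k}^{\trans}(\cdot)$ converges to $\bs{\beta}_{k}^{*\trans}(\cdot)$ on the relevant column space. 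Here $\bs{\beta}_{k}^{*}$ is the population projection coefficient of $\sqrt{(1-\pi)/\pi}\,\widetilde Y(1)+\sqrt{\pi/(1-\pi)}\,\widetilde Y(0)$, properly scaled, onto $\widetilde{\bs{\xi}}_{n}^{*}$.

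\textbf{Main obstacle.} The hardest part is Step 2, because the Moore--Penrose inverse is discontinuous when the limit matrix is singular. My first attempt would avoid inverse-continuity arguments altogether. Since the imbalance vector $\overline{\bs{\xi}}_{1[k]}-\overline{\bs{\xi}}_{0[k]}$ lies in the range of the sample Gram matrix, I can write the adjustment as a projection, $\widehat{\bs{\beta}}_{k}^{\trans}\bs v=\min$-norm least squares fitted values. I would then bound
\[
\bigl\|\widehat{G}^{+}-G^{*+}\bigr\|\text{ restricted to }\mathrm{range}(G^{*})
\]
using Assumption~\ref{assu:conditions on =00005Cxi}(2) and (1)(b). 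The $O_{P}(1)$ control of $\widehat G^{+}$, together with the eigen-gap $c$, makes the Wedin/Stewart perturbation bound applicable. Components of $\widehat G$ outside the range of $G^{*}$ are $o_{P}(1)$, and they multiply imbalance components that are themselves $o_{P}(n^{-1/2})$.

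\textbf{Step 3: asymptotic normality.} After the replacement, $\sqrt n(\widehat{\tau}_{\mathrm{cal}}-\tau)$ equals $\sqrt n$ times a sum of two independent pieces plus $o_{P}(1)$. The first is the between-strata term $\sum_{k}(p_{n[k]}-p_{[k]})\{\e[Y(1)-Y(0)\mid B=k]-\tau\}$, which gives $\varsigma_{H}^{2}$. The second is a conditional sum of the within-cell centred outcomes minus $\bs{\beta}_{k}^{*\trans}\widetilde{\bs{\xi}}^{*}_{n}$. I would apply a Lindeberg CLT conditional on $(A^{(n)},B^{(n)})$ using the $2+\epsilon$ moments from Assumptions~\ref{assu:independent sampling} and \ref{assu:conditions on =00005Cxi}(1)(a). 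Its conditional variance converges to $\varsigma_{\widetilde Y}^{2}-\varsigma_{\widetilde Y\mid\widetilde{\bs{\xi}}^{*}_{n}}^{2}$, where the subtracted term is the variance explained by the projection. Assumption~\ref{assu:conditions on =00005Cxi}(1)(d) keeps this variance away from zero, so the standardized statistic is valid even though $\bs{\xi}^{*}_{n}$ varies with $n$. Combining the two pieces via characteristic functions gives the first claim. Consistency of the plug-in variance estimators follows from the same LLNs and the projection convergence of Step 2.

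\textbf{Efficiency.} If the displayed linear-span condition holds, the projection of the weighted outcome combination onto $\widetilde{\bs{\xi}}_{n}^{*}$ captures all of its $\bs X$-measurable part. Hence $\varsigma_{\widetilde Y\mid\widetilde{\bs{\xi}}^{*}_{n}}^{2}=\varsigma_{\widetilde Y\mid\widetilde h^{*}}^{2}$, and direct comparison with the efficiency bound in \citet[Theorem 3.1]{rafi2023Efficient} completes the proof.
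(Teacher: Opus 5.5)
Your proposal is correct and follows essentially the same route as the paper: the Moore--Penrose solution of the quadratic dual reduces $\widehat{\tau}_{\mathrm{cal}}$ to a regression-type adjustment, and Stewart's perturbation bound combined with Assumption~\ref{assu:conditions on =00005Cxi}(2) gives $\widehat{\bs{\beta}}_{[k]}\to\bs{\beta}_{[k]}^{*}$ (in fact in full operator norm, so your range restriction is not needed), after which a Lindeberg CLT conditional on $\mathcal{C}_{n}=\sigma(A^{(n)},B^{(n)})$ is combined with the between-strata term via characteristic functions. The only imprecision is calling the two pieces independent: they are only asymptotically independent, because the between-strata term is $\mathcal{C}_{n}$-measurable while the conditional characteristic function of the standardized within-cell sum converges in $L^{1}$ to $e^{-t^{2}/2}$, and this is exactly how the paper obtains the joint limit.
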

In Theorem \ref{thm:asymptotic properties of tau_cal}, we focus on
the case that $D(v)=(v-1)^{2}/2$; the general $D(v)$ will be handled
in Section \ref{sec:General-discrepancy-measure}. Theorem \ref{thm:asymptotic properties of tau_cal}
shows that our calibration estimator is asymptotically normally distributed
and its asymptotic variance can be consistently estimated. Moreover,
as long as there exists a linear combination of $\bs{\xi}_{n}^{*}(\bs X)\1(B=k)$
that equals $\bigl\{\sqrt{\frac{1-\pi_{[k]}}{\pi_{[k]}}}h_{1[k]}^{*}(\bs X)+\sqrt{\frac{\pi_{[k]}}{1-\pi_{[k]}}}h_{0[k]}^{*}(\bs X)\bigr\}\1(B=k)$,
our calibration estimator is semiparametric efficient. To the best
of our knowledge, this is a new condition for achieving the efficiency
bound in the literature. Existing work (e.g., \citealp{tu2024Unified})
requires that both $h_{1[k]}^{*}(\bs X)$ and $h_{0[k]}^{*}(\bs X)$
be consistently estimated. In contrast, we only require that the linear
combination of these functions be consistently estimated, which is
weaker. This condition aligns with \citet[Remark 2]{bai2022Optimality},
which states that optimal stratification can be achieved by knowing
the function $\left\{ \sqrt{\frac{1-\pi_{[k]}}{\pi_{[k]}}}h_{1[k]}^{*}(\bs X)+\sqrt{\frac{\pi_{[k]}}{1-\pi_{[k]}}}h_{0[k]}^{*}(\bs X)\right\} $.
In this context, the optimal stratification refers to the stratification
method under which the difference-in-means estimator has the smallest
mean squared error. Since the components of $\bs{\xi}_{n}(\bs X)$
can be chosen as different estimates of $\bigl\{\sqrt{\frac{1-\pi_{[k]}}{\pi_{[k]}}}h_{1[k]}^{*}(\bs X)+\sqrt{\frac{\pi_{[k]}}{1-\pi_{[k]}}}h_{0[k]}^{*}(\bs X)\bigr\}$
from various pre-specified models, the calibration estimator is \textit{multiply
efficient}: it remains semiparametric efficient as long as at least
one of these models is correctly specified. In contrast, AIPW-based
covariate adjustment methods lack this kind of multiple efficiency.

Theorem \ref{thm:asymptotic properties of tau_cal} implies that the
calibration estimator (\ref{eq:cal estimator}) possesses two theoretical
advantages. First, Assumption \ref{assu:treatment assignment} imposes
no restriction on the choice of randomization scheme, which may include
simple randomization, stratified block randomization, or minimization.
Since the asymptotic distribution of the calibration estimator is
invariant to the randomization method, it satisfies the property referred
to in the literature as ``universal applicability'' (\citealp{bannick2025General,ye2023Better}).
Second, because $\varsigma_{\widetilde{Y}\mid\widetilde{\bs{\xi}}_{n}^{*}}^{2}\geq0$,
Theorem~\ref{thm:asymptotic properties of tau_cal}, together with
\citet[Proposition 3]{ma2022Regression}, implies that $\widehat{\tau}_{\mathrm{cal}}$
is always at least as efficient as the stratified difference-in-means
estimator $\widehat{\tau}_{\mathrm{sdim}}$, which corresponds to
the special case where $\widehat{w}_{i}=1$ for all $i=1,\ldots,n$
in (\ref{eq:cal estimator}). Hence, the calibration estimator guarantees
efficiency gains. More generally, we have the following efficiency
comparison result.
\begin{thm}
\label{thm:efficiency comparison thm}Rewrite the estimator $\widehat{\tau}_{\mathrm{cal}}$
in (\ref{eq:cal estimator}) as $\widehat{\tau}_{\mathrm{cal}}(\bs{\xi}_{n})$.
Let $\mathbf{\Lambda}$ be a $\widetilde{d}\times d$ non-stochastic
matrix, where $\widetilde{d}\geq1$ is fixed. Suppose that Assumptions
\ref{assu:independent sampling}--\ref{assu:conditions on =00005Cxi}
hold for both $\bs{\xi}_{n}$ and $\mathbf{\Lambda}\bs{\xi}_{n}$.
Then $\widehat{\tau}_{\mathrm{cal}}(\bs{\xi}_{n})$ and $\widehat{\tau}_{\mathrm{cal}}(\mathbf{\Lambda}\bs{\xi}_{n})$
satisfy
\[
\frac{\sqrt{n}\left(\widehat{\tau}_{\mathrm{cal}}(\bs{\xi}_{n})-\tau\right)}{\sqrt{\varsigma_{H}^{2}+\varsigma_{\widetilde{Y}}^{2}-\varsigma_{\widetilde{Y}\mid\widetilde{\bs{\xi}}_{n}^{*}}^{2}}}\tod N(0,1)\text{ and  }\frac{\sqrt{n}\left(\widehat{\tau}_{\mathrm{cal}}(\mathbf{\Lambda}\bs{\xi}_{n})-\tau\right)}{\sqrt{\varsigma_{H}^{2}+\varsigma_{\widetilde{Y}}^{2}-\varsigma_{\widetilde{Y}\mid\mathbf{\Lambda}\widetilde{\bs{\xi}}_{n}^{*}}^{2}}}\tod N(0,1),
\]
respectively. Furthermore, the asymptotic variances satisfy $\varsigma_{H}^{2}+\varsigma_{\widetilde{Y}}^{2}-\varsigma_{\widetilde{Y}\mid\widetilde{\bs{\xi}}_{n}^{*}}^{2}\leq\varsigma_{H}^{2}+\varsigma_{\widetilde{Y}}^{2}-\varsigma_{\widetilde{Y}\mid\mathbf{\Lambda}\widetilde{\bs{\xi}}_{n}^{*}}^{2}.$
\end{thm}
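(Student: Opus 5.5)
The two asymptotic normality statements come straight from Theorem~\ref{thm:asymptotic properties of tau_cal}. Its assumptions hold for $\bs{\xi}_{n}$ and, by hypothesis, for $\mathbf{\Lambda}\bs{\xi}_{n}$. In the second case the probability limit is the non-stochastic function $\mathbf{\Lambda}\bs{\xi}_{n}^{*}$. Its stratum-centred version is $\mathbf{\Lambda}\widetilde{\bs{\xi}}_{n}^{*}$, because $\mathbf{\Lambda}$ is linear and non-stochastic and so commutes with conditional expectation given $B=k$. Applying the theorem twice gives both displayed limits. The real content is therefore the variance inequality, which reduces to
\[
\varsigma_{\widetilde{Y}\mid\mathbf{\Lambda}\widetilde{\bs{\xi}}_{n}^{*}}^{2}\leq\varsigma_{\widetilde{Y}\mid\widetilde{\bs{\xi}}_{n}^{*}}^{2},
\]
since $\varsigma_{H}^{2}$ and $\varsigma_{\widetilde{Y}}^{2}$ do not depend on the information proxy.

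For this I would use the structure of $\varsigma_{\widetilde{Y}\mid\widetilde{\bs{\xi}}_{n}^{*}}^{2}$ from the appendix. From the form of the efficiency condition in Theorem~\ref{thm:asymptotic properties of tau_cal}, I expect it to be a sum over strata, weighted by $p_{[k]}$, of the explained variance in an $L^{2}$ projection. Concretely, within stratum $k$ one projects the transformed residual outcome $\widetilde{Y}_{k}$ onto the linear span of the components of $\widetilde{\bs{\xi}}_{n}^{*}(\bs X)$ under $P(\cdot\mid B=k)$. Here $\widetilde{Y}_{k}$ is essentially $\sqrt{(1-\pi_{[k]})/\pi_{[k]}}\,\widetilde{h}_{1[k]}^{*}+\sqrt{\pi_{[k]}/(1-\pi_{[k]})}\,\widetilde{h}_{0[k]}^{*}$, up to a constant factor involving $\pi_{[k]}(1-\pi_{[k]})$. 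The stratum-$k$ term then has the form
\[
\e\bigl[\widetilde{Y}_{k}\widetilde{\bs{\xi}}_{n}^{*\trans}\mid B=k\bigr]\,\bigl(\e[\widetilde{\bs{\xi}}_{n}^{*}\widetilde{\bs{\xi}}_{n}^{*\trans}\mid B=k]\bigr)^{+}\,\e\bigl[\widetilde{\bs{\xi}}_{n}^{*}\widetilde{Y}_{k}\mid B=k\bigr],
\]
multiplied by a stratum weight that does not depend on $\bs{\xi}$. The Moore--Penrose inverse is used because, as the paper stresses, the Gram matrix may be singular; even so, the quadratic form is exactly $\|\Pi_{\mathcal{S}}\widetilde{Y}_{k}\|^{2}$, where $\mathcal{S}=\mathrm{span}\{\widetilde{\xi}_{n,1}^{*},\ldots,\widetilde{\xi}_{n,d}^{*}\}\subset L^{2}(P(\cdot\mid B=k))$.

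The key step is then that every component of $\mathbf{\Lambda}\widetilde{\bs{\xi}}_{n}^{*}$ is a linear combination of the components of $\widetilde{\bs{\xi}}_{n}^{*}$. Hence $\mathcal{S}_{\Lambda}\subseteq\mathcal{S}$, and orthogonal projection onto a subspace gives $\|\Pi_{\mathcal{S}_{\Lambda}}\widetilde{Y}_{k}\|^{2}\le\|\Pi_{\mathcal{S}}\widetilde{Y}_{k}\|^{2}$ stratum by stratum. Summing with nonnegative weights yields the inequality.

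The main obstacle I expect is verifying rigorously that the pseudo-inverse quadratic form equals the squared projection norm even when the Gram matrix is singular. I would do this with a finite-dimensional argument. Let $G=\e[\widetilde{\bs{\xi}}\widetilde{\bs{\xi}}^{\trans}\mid B=k]$ and $c=\e[\widetilde{\bs{\xi}}\widetilde{Y}_{k}\mid B=k]$. First, $c$ lies in the range of $G$, since any $v\in\ker G$ gives $v^{\trans}\widetilde{\bs{\xi}}=0$ almost surely. Second, $\beta=G^{+}c$ solves the normal equations. Third, $c^{\trans}G^{+}c=\e[(\beta^{\trans}\widetilde{\bs{\xi}})^{2}\mid B=k]=\|\Pi_{\mathcal{S}}\widetilde{Y}_{k}\|^{2}$. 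If the appendix instead defines $\varsigma_{\widetilde{Y}\mid\cdot}^{2}$ through separate treatment and control projections, I would apply the same subspace-monotonicity argument to the resulting combined quadratic form.
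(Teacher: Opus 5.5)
Your proposal is correct. The asymptotic-normality part is exactly the paper's: two applications of Theorem~\ref{thm:asymptotic properties of tau_cal}, with $\mathbf{\Lambda}\widetilde{\bs{\xi}}_{n}^{*}$ as the centred limit.

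Your guess of the variance term matches the appendix. Put $G_{k}:=\mathbf{\Sigma}_{[k]\widetilde{\bs{\xi}}_{n}^{*}\widetilde{\bs{\xi}}_{n}^{*}}$, $T_{k}:=(1-\pi_{[k]})\widetilde{Y}(1)+\pi_{[k]}\widetilde{Y}(0)$ and $c_{k}:=\e[\widetilde{\bs{\xi}}_{n}^{*}T_{k}\mid B=k]$. Then $\varsigma_{\widetilde{Y}\mid\widetilde{\bs{\xi}}_{n}^{*}}^{2}=\sum_{k}\frac{p_{[k]}}{\pi_{[k]}(1-\pi_{[k]})}c_{k}^{\trans}G_{k}^{+}c_{k}$. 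Projecting $T_{k}$ or your $\widetilde{h}$-combination gives the same result, because $\widetilde{\bs{\xi}}_{n}^{*}$ is a function of $\bs X$.

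The difference is in how the inequality is proved.
\begin{itemize}
\item The paper stays in $\R^{d}$. After showing $c_{k}\in\mathrm{range}(G_{k})$ by the same kernel argument you use, it writes any $\bs x\in\mathrm{range}(G_{k})$ as $G_{k}^{1/2}\bs u$ with $\bs u\in\mathrm{range}(G_{k}^{1/2})$. It then gets $\bs x^{\trans}G_{k}^{+}\bs x=\|\bs u\|^{2}$ and $\bs x^{\trans}\mathbf{\Lambda}^{\trans}(\mathbf{\Lambda}G_{k}\mathbf{\Lambda}^{\trans})^{+}\mathbf{\Lambda}\bs x=\bs u^{\trans}\mathbf{Q}^{\trans}(\mathbf{Q}\mathbf{Q}^{\trans})^{+}\mathbf{Q}\bs u$ with $\mathbf{Q}=\mathbf{\Lambda}G_{k}^{1/2}$. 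It concludes because $\mathbf{Q}^{\trans}(\mathbf{Q}\mathbf{Q}^{\trans})^{+}\mathbf{Q}$ is an orthogonal projection.
\item You work in $L^{2}(P(\cdot\mid B=k))$. You identify each quadratic form with $\|\Pi_{\mathcal{S}}T_{k}\|^{2}$ via the normal equations and use $\mathcal{S}_{\Lambda}\subseteq\mathcal{S}$.
\end{itemize}

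These are the same geometry seen through the isometry $\bs a\mapsto\bs a^{\trans}\widetilde{\bs{\xi}}_{n}^{*}$, which $G_{k}^{1/2}$ encodes in coordinates. Your version avoids matrix square roots and makes the no-harm property transparent: a larger span explains at least as much variance. The paper's version yields a stand-alone matrix fact, namely $G_{k}^{+}-\mathbf{\Lambda}^{\trans}(\mathbf{\Lambda}G_{k}\mathbf{\Lambda}^{\trans})^{+}\mathbf{\Lambda}\succeq0$ on $\mathrm{range}(G_{k})$, without having to identify a projection target.
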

Theorem \ref{thm:efficiency comparison thm} implies that incorporating
additional elements into $\bs{\xi}_{n}$ satisfies a no-harm property;
i.e., it is guaranteed to improve or at least maintain efficiency.
Consequently, from a theoretical perspective, it appears optimal to
include as many elements as possible in $\bs{\xi}_{n}$. However,
as the dimension of $\bs{\xi}_{n}$ becomes large, it is no longer
appropriate to treat the dimension $d$ as fixed. We will address
the scenario where the dimension of $\bs{\xi}_{n}$ diverges in Section~\ref{subsec:d diverges}.

\section{Extensions\label{sec:Extensions}}

\subsection{The dimension of $\protect\bs{\xi}_{n}(\cdot)$ and the number of
strata are diverging\label{subsec:d diverges}}

In this section, we allow the dimension of $\bs{\xi}_{n}(\cdot)$
to grow with $n$, that is, $d=d_{n}\to\infty$ as $n\to\infty$.
Additionally, we also allow the number of strata to increase with
$n$, i.e., $K=K_{n}\to\infty$ as $n\to\infty$, which is commonly
encountered in many applications. We begin by stating an assumption.
\begin{assumption}
\label{assu:conditions on =00005Cxi-diverging xi}The sequence of
random functions $\bs{\xi}_{n}(\cdot)$ satisfies the following conditions.
\begin{enumerate}[label=(\arabic*)]
\item There exists a sequence of non-stochastic functions $\bs{\xi}_{n}^{*}(\cdot):\mathcal{X}\to\R^{d}$
($n\geq1$) such that
\begin{enumerate}
\item $K^{2}r_{n}^{2}/n\to0$, $K^{2}\zeta_{n}^{2}r_{n}/n\to0$ and $\max\left\{ \zeta_{n}^{2}\log(2Kd),r_{n}\log^{2}(2Kd)\right\} /(n\inf_{1\leq k\leq K}p_{[k]})\to0$
as $n\to\infty$, where $r_{n}:=\max_{1\leq k\leq K}\mathrm{rank}\left\{ \e\left[\bs{\xi}_{n}^{*}(\bs X_{i})\bs{\xi}_{n}^{*}(\bs X_{i})^{\trans}\mid B_{i}=k\right]\right\} $
and $\zeta_{n}:=\sup_{\bs x\in\mathcal{X}}\left\Vert \bs{\xi}_{n}^{*}(\bs x)\right\Vert $;
\item the maximal singular value of $\e\left[\bs{\xi}_{n}^{*}(\bs X_{i})\bs{\xi}_{n}^{*}(\bs X_{i})^{\trans}\mid B_{i}=k\right]$
is smaller than some finite constant $C>0$ uniformly over $n\geq1$
and $k=1,\ldots,K$; the minimal non-zero singular value of $\e\left[\widetilde{\bs{\xi}}_{n}^{*}(\bs X_{i})\widetilde{\bs{\xi}}_{n}^{*}(\bs X_{i})^{\trans}\mid B_{i}=k\right]$
is larger than some constant $c>0$ uniformly over $n\geq1$ and $k=1,\ldots,K$;
\item it holds that 
\begin{align*}
 & \sup_{1\leq k\leq K}\left\Vert \frac{1}{n_{1[k]}}\sum_{i\in[k]}A_{i}\left\{ \bs{\xi}_{n}(\bs X_{i})-\bs{\xi}_{n}^{*}(\bs X_{i})\right\} -\frac{1}{n_{0[k]}}\sum_{i\in[k]}\left(1-A_{i}\right)\left\{ \bs{\xi}_{n}(\bs X_{i})-\bs{\xi}_{n}^{*}(\bs X_{i})\right\} \right\Vert \\
= & o_{P}(n^{-1/2})
\end{align*}
and $\sup_{1\leq k\leq K}\frac{1}{n_{[k]}}\sum_{i\in[k]}\left\Vert \bs{\xi}_{n}(\bs X_{i})-\bs{\xi}_{n}^{*}(\bs X_{i})\right\Vert ^{2}=O_{P}(n^{-1/2})$\textup{;}
\item $\liminf_{n\to\infty}\left(\varsigma_{\widetilde{Y}}^{2}-\varsigma_{\widetilde{Y}\mid\widetilde{\bs{\xi}}_{n}^{*}}^{2}\right)>0$.
\end{enumerate}
\item $\sup_{1\leq k\leq K}\left\Vert \left\{ \frac{1}{n_{[k]}}\sum_{i\in[k]}(A_{i}-\pi_{n[k]})^{2}\left\{ \bs{\xi}_{n}(\bs X_{i})-\overline{\bs{\xi}}_{n[k]}\right\} \left\{ \bs{\xi}_{n}(\bs X_{i})-\overline{\bs{\xi}}_{n[k]}\right\} ^{\trans}\right\} ^{+}\right\Vert =O_{P}(1)$.
\end{enumerate}
\end{assumption}
Assumption \ref{assu:conditions on =00005Cxi-diverging xi} is very
similar to Assumption \ref{assu:conditions on =00005Cxi}, except
that we allow both the dimension of $\bs{\xi}_{n}$ and the number
of strata to diverge. Assumption \ref{assu:conditions on =00005Cxi-diverging xi}(1)(a)
restricts the growth rate of $d$ and $K$. Suppose that the matrix
$\e\left[\bs{\xi}_{n}^{*}(\bs X_{i})\bs{\xi}_{n}^{*}(\bs X_{i})^{\trans}\mid B_{i}=k\right]$
is non-singular and the elements in $\bs{\xi}_{n}^{*}(\bs x)$ are
uniformly bounded, we have $r_{n}=d$ and $\zeta_{n}=\sqrt{d}$. Consider
the case that the strata have equal size, i.e., $p_{[k]}=1/K$. Then
a sufficient condition for Assumption \ref{assu:conditions on =00005Cxi-diverging xi}(1)(a)
is $Kd=o(\sqrt{n})$, meaning that we allow the product of the dimension
of $\bs{\xi}_{n}$ and the number of strata to grow, but at a rate
slower than $\sqrt{n}$. When $K$ is fixed, this condition is consistent
with a similar result for the regression-adjusted ATE estimator presented
by \citet{lei2021Regression} under complete randomization and finite-population
asymptotics. To further relax the growth rate constraint on $d$, \citet{lei2021Regression}, \citet{lu2025Debiased}, and \citet{gu2025assumptionleancovariateadjustmentcovariate} proposed debiased regression-adjusted estimators that allow $d$ to grow faster than $\sqrt{n}$. In contrast, \citet{jiang2025Adjustments} achieved the same goal by assuming a correctly specified linear relationship between the potential outcomes and covariates. It is also possible
to debias the calibration estimator (\ref{eq:cal estimator}) in a
way that would relax Assumption~\ref{assu:conditions on =00005Cxi-diverging xi}(1)(a).
However, addressing this is beyond the scope of the present paper
and will be left for future research. Assumption \ref{assu:conditions on =00005Cxi-diverging xi}(1)(b)
requires that largest and smallest non-zero singular values of $\e\left[\widetilde{\bs{\xi}}_{n}^{*}(\bs X_{i})\widetilde{\bs{\xi}}_{n}^{*}(\bs X_{i})^{\trans}\mid B_{i}=k\right]$
are bounded away from zero and infinity, while still allowing this
matrix to be singular. Assumption \ref{assu:conditions on =00005Cxi-diverging xi}(1)(c)
strengthens Assumption \ref{assu:conditions on =00005Cxi}(1)(c).
Under cross-fitting, Assumption \ref{assu:conditions on =00005Cxi-diverging xi}(1)(c)
requires that $\bs{\xi}_{n}$ converges to its probability limit $\bs{\xi}_{n}^{*}$
at rate $n^{-1/4}$ uniformly over $k=1,\ldots,K$. Such a rate is
achievable for many modern machine learning methods (see, e.g., \citealp{farrell2021Deepa,jiao2023Deep}).
Assumption \ref{assu:conditions on =00005Cxi-diverging xi}(2) parallels
Assumption \ref{assu:conditions on =00005Cxi}(2), except that it
is required to hold uniformly over $k=1,\ldots,K$. We are now ready
to state the following theorem.
\begin{thm}
\label{thm:asymptotic properties of tau_cal-diverging xi}Suppose
that Assumptions \ref{assu:independent sampling}--\ref{assu:treatment assignment}
and \ref{assu:conditions on =00005Cxi-diverging xi} hold and $D(v)=(v-1)^{2}/2$.
Then 
\[
\frac{\sqrt{n}\left(\widehat{\tau}_{\mathrm{cal}}-\tau\right)}{\sqrt{\varsigma_{H}^{2}+\varsigma_{\widetilde{Y}}^{2}-\varsigma_{\widetilde{Y}\mid\widetilde{\bs{\xi}}_{n}^{*}}^{2}}}\tod N(0,1)\text{ and }\widehat{\varsigma}_{H}^{2}+\widehat{\varsigma}_{\widetilde{Y}}^{2}-\widehat{\varsigma}_{\widetilde{Y}\mid\widetilde{\bs{\xi}}_{n}^{*}}^{2}=\varsigma_{H}^{2}+\varsigma_{\widetilde{Y}}^{2}-\varsigma_{\widetilde{Y}\mid\widetilde{\bs{\xi}}_{n}^{*}}^{2}+o_{P}(1).
\]
Moreover, when $K$ is a fixed number, if for every $k=1,\ldots,K$,
there exists a non-stochastic vector $\bs{\alpha}_{k}\in\R^{d}$ such
that 
\[
\left\{ \sqrt{\frac{1-\pi_{[k]}}{\pi_{[k]}}}\widetilde{h}_{1[k]}^{*}(\bs X)+\sqrt{\frac{\pi_{[k]}}{1-\pi_{[k]}}}\widetilde{h}_{0[k]}^{*}(\bs X)\right\} \1(B=k)=\bs{\alpha}_{k}^{\trans}\widetilde{\bs{\xi}}_{n}^{*}(\bs X)\1(B=k),
\]
then 
\[
\frac{\sqrt{n}\left(\widehat{\tau}_{\mathrm{cal}}-\tau\right)}{\sqrt{\varsigma_{H}^{2}+\varsigma_{\widetilde{Y}}^{2}-\varsigma_{\widetilde{Y}\mid\widetilde{h}^{*}}^{2}}}\tod N(0,1),
\]
and the asymptotic variance $\varsigma_{H}^{2}+\varsigma_{\widetilde{Y}}^{2}-\varsigma_{\widetilde{Y}\mid\widetilde{h}^{*}}^{2}$
matches the semiparametric efficiency bound developed in \citet[Theorem 3.1]{rafi2023Efficient}.
\end{thm}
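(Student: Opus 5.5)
With $D(v)=(v-1)^{2}/2$ the calibration problem (\ref{eq:cal opt problem}) has a closed-form solution, and the plan is to exploit it. Write $\bs Z_{i}:=(A_{i}-\pi_{n[k]})\{\bs{\xi}_{n}(\bs X_{i})-\overline{\bs{\xi}}_{n[k]}\}$ for $i\in[k]$. The Lagrangian conditions give $\widehat w_{i}=1-\bs{\lambda}_{k}^{\trans}\bs Z_{i}$ for $i\in[k]$, with
\[
\bs{\lambda}_{k}=\Bigl(\sum_{i\in[k]}\bs Z_{i}\bs Z_{i}^{\trans}\Bigr)^{+}\sum_{i\in[k]}\bs Z_{i}.
\]
The Moore--Penrose inverse is legitimate because the constraints are consistent: $\sum_{i\in[k]}\bs Z_i$ lies in the column space of $\sum_{i\in[k]}\bs Z_i\bs Z_i^{\trans}$. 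Substituting into (\ref{eq:cal estimator}) and using $\sum_{i\in[k]}r_i=0$ yields the regression-type representation
\[
\widehat{\tau}_{\mathrm{cal}}=\widehat{\tau}_{\mathrm{sdim}}-\sum_{k=1}^{K}p_{n[k]}\,\widehat{\bs\beta}_{k}^{\trans}\Bigl(\frac{1}{n_{[k]}}\sum_{i\in[k]}\bs Z_i\Bigr),
\qquad
\widehat{\bs\beta}_{k}=\Bigl(\frac{1}{n_{[k]}}\sum_{i\in[k]}\bs Z_i\bs Z_i^{\trans}\Bigr)^{+}\frac{1}{n_{[k]}}\sum_{i\in[k]}\bs Z_i r_i .
\]
Since $\frac{1}{n_{[k]}}\sum_{i\in[k]}\bs Z_i=\pi_{n[k]}(1-\pi_{n[k]})\bigl(\overline{\bs\xi}_{1[k]}-\overline{\bs\xi}_{0[k]}\bigr)$, this is a stratified difference-in-means estimator minus an estimated coefficient times the within-stratum imbalance of $\bs\xi_n$.

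The main work is to show that $\widehat{\bs\beta}_k$ converges to a population projection coefficient $\bs\beta_k^*$, uniformly in $k$, while $K$ and $d$ grow. All limits are computed conditionally on $(A^{(n)},B^{(n)})$, which is valid because $\bs W^{(n)}\perp A^{(n)}\mid B^{(n)}$ (Assumption~\ref{assu:treatment assignment}). Conditionally, the $\bs X_i$ in each stratum are i.i.d., so I would apply a matrix Bernstein inequality stratum by stratum. Its ingredients are the bound $\zeta_n$ on $\bs\xi_n^*$, the rank $r_n$, and a union bound over $K$ strata; these give
\[
\sup_k\Bigl\|\frac{1}{n_{[k]}}\sum_{i\in[k]}\widetilde{\bs\xi}_n^*\widetilde{\bs\xi}_n^{*\trans}-\e[\cdot\mid B=k]\Bigr\|=O_P\Bigl(\sqrt{\zeta_n^2\log(2Kd)/(n\inf_k p_{[k]})}\Bigr),
\]
which is $o_P(1)$ by Assumption~\ref{assu:conditions on =00005Cxi-diverging xi}(1)(a). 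Replacing $\bs\xi_n$ by $\bs\xi_n^*$ inside the Gram matrix and the cross-moment costs an amount controlled by the $O_P(n^{-1/2})$ mean-square condition in (1)(c). The difference of Moore--Penrose inverses is then handled with the uniform bound in Assumption~\ref{assu:conditions on =00005Cxi-diverging xi}(2) together with the lower singular-value bound $c$ in (1)(b). Concretely, I would use Wedin/Stewart perturbation, $\|A^+-B^+\|\lesssim\|A^+\|\|B^+\|\|A-B\|$ when ranks agree; the rank agreement with probability tending to one follows from the eigenvalue gap $c$.

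The step I expect to be the main obstacle is controlling the aggregated error term
\[
\sum_k p_{n[k]}(\widehat{\bs\beta}_k-\bs\beta_k^*)^{\trans}\bigl(\overline{\bs\xi}_{1[k]}-\overline{\bs\xi}_{0[k]}\bigr)=o_P(n^{-1/2})
\]
when $K$ diverges, because a crude bound loses a factor of $K$. I would first try the following decomposition. The imbalance $\overline{\bs\xi}_{1[k]}-\overline{\bs\xi}_{0[k]}$, projected onto the at most $r_n$-dimensional range, has conditional second moment of order $r_n/n_{[k]}$. Separately, $\widehat{\bs\beta}_k-\bs\beta_k^*$ is $O_P(\sqrt{r_n/n_{[k]}})$ plus terms coming from $\bs\xi_n-\bs\xi_n^*$. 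Then Cauchy--Schwarz across strata gives a bound of order $\sum_k p_{[k]}\,r_n/n_{[k]}\approx Kr_n/n$. Multiplying by $\sqrt n$ gives $Kr_n/\sqrt n$, which vanishes because $K^2r_n^2/n\to0$, and the cross terms with $\zeta_n$ are handled by $K^2\zeta_n^2r_n/n\to0$. The $\bs\xi_n$-versus-$\bs\xi_n^*$ imbalance is removed via the first part of (1)(c), uniformly in $k$.

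Once these approximation errors are removed, $\sqrt n(\widehat\tau_{\mathrm{cal}}-\tau)$ equals a linear statistic in the projected residuals $\widetilde Y-\bs\beta_k^{*\trans}\widetilde{\bs\xi}_n^*$ plus the between-stratum term $H$. Asymptotic normality follows from a conditional Lindeberg CLT: the Lindeberg condition uses the $2+\epsilon$ moments, and (1)(d) ensures the variance is bounded away from zero. The between-stratum component combines with it as in \citet{bugni2019Inference,ma2022Regression}. The variance estimators are plug-ins and are consistent by the same uniform laws of large numbers. For fixed $K$, the efficiency claim reduces to showing $\varsigma^2_{\widetilde Y\mid\widetilde{\bs\xi}_n^*}=\varsigma^2_{\widetilde Y\mid\widetilde h^*}$. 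This holds because the projection of the relevant combination of $\widetilde Y(1),\widetilde Y(0)$ onto $\widetilde{\bs\xi}_n^*$ is the projection onto the $\bs X$-measurable function $\bs\alpha_k^{\trans}\widetilde{\bs\xi}_n^*$, which equals the conditional-mean combination. The efficiency comparison with \citet[Theorem 3.1]{rafi2023Efficient} then proceeds exactly as in Theorem~\ref{thm:asymptotic properties of tau_cal}.
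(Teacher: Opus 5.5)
Your architecture matches the paper's proof. It has the same closed-form Moore--Penrose weights and the same representation $\widehat{\tau}_{\mathrm{cal}}=\widehat{\tau}_{\mathrm{sdim}}-\sum_k\widehat{\bs\beta}_{[k]}^{\trans}\frac1n\sum_i\bs\Xi_{i,[k]}$. It also uses stratum-wise matrix Bernstein with a union bound over $K$, Stewart's bound together with Assumption~\ref{assu:conditions on =00005Cxi-diverging xi}(2), a product bound on the remainder, and a conditional Lindeberg CLT.

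The step that fails as written is the choice of centering coefficient. You call $\bs\beta_k^*$ a \emph{population} projection coefficient. If that means $\bs\beta^*_{[k]}$, built with $\pi_{[k]}$, then $\widehat{\bs\beta}_k-\bs\beta^*_{[k]}$ contains the $\mathcal{C}_n$-measurable piece $\bs\beta_{[k],\mathcal{C}_n}-\bs\beta^*_{[k]}$. This piece equals $\mathbf{\Sigma}^{+}_{[k]\widetilde{\bs\xi}^*_n\widetilde{\bs\xi}^*_n}$ applied to a combination of the $\mathbf{\Sigma}_{[k]\widetilde{\bs\xi}^*_n\widetilde Y(a)}$ with coefficients $\pi_{n[k]}^{-1}-\pi_{[k]}^{-1}$ and $(1-\pi_{n[k]})^{-1}-(1-\pi_{[k]})^{-1}$. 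Assumption~\ref{assu:treatment assignment} makes it only $o_P(1)$, with no rate. So it is not $O_P(\sqrt{r_n/n_{[k]}})$. Your Cauchy--Schwarz across strata then gives only $\sqrt n\cdot o_P(1)\cdot O_P(\sqrt{Kr_n/n})=o_P(\sqrt{Kr_n})$, which need not vanish when $Kr_n\to\infty$.

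The paper avoids this by centering at the design-conditional coefficient $\bs\beta_{[k],\mathcal{C}_n}$, which is the same formula with $\pi_{n[k]}$ in place of $\pi_{[k]}$. For it, $\frac1n\sum_i\e_{\mathcal{C}_n}[\bs\Xi^*_{i,[k]}\epsilon^*_{i,[k]}]=0$ holds exactly; this uses $\mathbf{\Sigma}_{[k]\widetilde{\bs\xi}^*_n\widetilde Y(a)}\in\mathrm{range}(\mathbf{\Sigma}_{[k]\widetilde{\bs\xi}^*_n\widetilde{\bs\xi}^*_n})$. The paper then bounds $Q_1=-\sum_k(\frac1n\sum_i\bs\Xi_{i,[k]}\epsilon_{i,[k]})^{\trans}\{\frac1n\sum_i\bs\Xi_{i,[k]}\bs\Xi_{i,[k]}^{\trans}\}^{+}\frac1n\sum_i\bs\Xi_{i,[k]}$ and runs the CLT for $\sum_i\sum_k\epsilon^*_{i,[k]}$. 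It replaces $\bs\beta_{[k],\mathcal{C}_n}$ by $\bs\beta^*_{[k]}$ only inside the conditional variance, where a uniform $o_P(1)$ error is harmless.

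You could instead keep $\bs\beta^*_{[k]}$ and bound the $\mathcal{C}_n$-measurable piece using conditional independence across strata rather than Cauchy--Schwarz. Its contribution has conditional variance $O_P(n^{-1}\sup_k\|\bs\beta_{[k],\mathcal{C}_n}-\bs\beta^*_{[k]}\|^2)$.

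Two smaller corrections:
\begin{itemize}
\item The stochastic part of $\widehat{\bs\beta}_k-\bs\beta_{[k],\mathcal{C}_n}$ is of order $\sqrt{\zeta_n^2/n_{[k]}}$, not $\sqrt{r_n/n_{[k]}}$, because only $\e_{\mathcal{C}_n}\|\bs\Xi^*_{i,[k]}\epsilon^*_{i,[k]}\|^2\lesssim\zeta_n^2$ is available. On top of this come an $n^{-1/4}$ term from the mean-square condition and an $r_n\log(2Kd)/n_{[k]}$ term from sample-mean centering. These three pieces are what the rate conditions in Assumption~\ref{assu:conditions on =00005Cxi-diverging xi}(1)(a) absorb.
\item Rank agreement does not follow from the gap $c$ alone. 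Two components of $\bs\xi_n$ with collinear limits give a full-rank sample Gram matrix with a vanishing eigenvalue. It is Assumption~\ref{assu:conditions on =00005Cxi-diverging xi}(2) that rules this out. The paper in fact uses Stewart's bound $\|A^{+}-B^{+}\|\le3\max\{\|A^{+}\|^{2},\|B^{+}\|^{2}\}\|A-B\|$, which needs no rank condition.
\end{itemize}
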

Theorem \ref{thm:asymptotic properties of tau_cal-diverging xi} shows
that, even when both the dimension of $\bs{\xi}_{n}$ and the number
of strata diverge, the calibration estimator (\ref{eq:cal estimator})
remains asymptotically normal. It is important to note that, because
we allow the number of strata to increase with the sample size, Assumption
\ref{assu:treatment assignment} becomes non-trivial. \citet{xin2024inferencecovariateadaptiverandomizationstrata}
studied inference under CAR with a growing number of strata, and our
Assumption \ref{assu:treatment assignment} corresponds directly to
their Assumptions (B1)--(B3). Moreover, \citet{xin2024inferencecovariateadaptiverandomizationstrata}
derived more primitive conditions under which Assumption \ref{assu:treatment assignment}
holds in the cases of simple randomization, stratified adaptive biased-coin
randomization and stratified block randomization. For further discussion
of Assumption \ref{assu:treatment assignment}, we refer readers to
\citet[Section 3]{xin2024inferencecovariateadaptiverandomizationstrata}.

\subsection{General discrepancy measure $D(v)$\label{sec:General-discrepancy-measure}}

In Theorem \ref{thm:asymptotic properties of tau_cal}, our analysis
was limited to the quadratic discrepancy measure $D(v)=(v-1)^{2}/2$.
In this section, we extend the analysis to a general class of discrepancy
measures $D(v)$, which can be theoretically shown to exhibit smaller
second-order bias. To proceed, we impose the following mild regularity
condition on $D(v)$.

\begin{assumption}
\label{assu:conditions on D(v)}Let $D^{\prime}$ be the derivative
of $D$ and $(D^{\prime})^{-1}$ is the inverse function of $D^{\prime}$.
Define $\rho(v):=D\left\{ (D^{\prime})^{-1}(-v)\right\} +v\cdot(D^{\prime})^{-1}(-v)$.
We assume $\rho(v)$ is concave and three times continuously differentiable,
$\rho^{\prime}(0)=1$, $-\infty<\rho^{\prime\prime}(0)<0$ and there
exist constants $\delta>0$ and $0\leq C_{\rho}<\infty$ such that
$\left|\rho^{\prime\prime\prime}(v)-\rho^{\prime\prime\prime}(0)\right|\leq C_{\rho}\left|v\right|$
for all $\left|v\right|\leq\delta$.
\end{assumption}
Assumption \ref{assu:conditions on D(v)} holds for a wide range of
commonly used discrepancy measures, and Table~\ref{tab:Different-choices-of D(v)}
provides several popular examples. The case that $D(v)=v\log v-v$
is related to the exponential tilting estimator studied by \citet{kitamura1997Informationtheoretic},
and $D(v)=v-\log v$ is related to empirical likelihood estimator
studied by \citet{qin1994Empirical}.

\begin{table}[!tbh]
\caption{\label{tab:Different-choices-of D(v)}Different choices of $D(v)$
and their corresponding $\rho(v)$.}

\centering{}%
\begin{tabular}{ccccc}
\toprule 
$D(v)$ & $\rho(v)$ & $\rho^{\prime}(v)$ & $\rho^{\prime\prime}(0)$ & $\rho^{\prime\prime\prime}(0)$\tabularnewline
\midrule
\midrule 
$(v-1)^{2}/2$ & $-v^{2}/2+v$ & $-v+1$ & $-1$ & $0$\tabularnewline
\midrule 
$v\log v-v$ & $-e^{-v}$ & $e^{-v}$ & $-1$ & $1$\tabularnewline
\midrule 
$v-\log v$ & $1+\log(1+v)$ & $\frac{1}{1+v}$ & $-1$ & $2$\tabularnewline
\bottomrule
\end{tabular}
\end{table}

Since our goal is to characterize the second-order bias of the calibration
estimator, we require a strengthened version of Assumption \ref{assu:conditions on =00005Cxi}.
\begin{assumption}
\label{assu:conditions on =00005Cxi-general D(v)}
\begin{enumerate}[label=(\arabic*)]
\item The sequence of random functions $\bs{\xi}_{n}(\cdot)$ satisfies
$\frac{1}{n}\sum_{i=1}^{n}\left\Vert \bs{\xi}_{n}(\bs X_{i})\right\Vert ^{4}=O_{P}(1)$;
\item The non-stochastic functions $\bs{\xi}_{n}^{*}(\cdot):\mathcal{X}\to\R^{d}$
($n\geq1$) in Assumption~\ref{assu:conditions on =00005Cxi} satisfy
the following conditions: $\sup_{n\geq1}\e\left[\left\Vert \bs{\xi}_{n}^{*}(\bs X_{i})\right\Vert ^{4}\right]<\infty$,
and for each $k=1,\ldots,K$, 
\[
\left\Vert \frac{1}{n_{1[k]}}\sum_{i\in[k]}A_{i}\left\{ \bs{\xi}_{n}(\bs X_{i})-\bs{\xi}_{n}^{*}(\bs X_{i})\right\} -\frac{1}{n_{0[k]}}\sum_{i\in[k]}\left(1-A_{i}\right)\left\{ \bs{\xi}_{n}(\bs X_{i})-\bs{\xi}_{n}^{*}(\bs X_{i})\right\} \right\Vert =o_{P}(n^{-1}),
\]
and $\frac{1}{n_{[k]}}\sum_{i\in[k]}\left\Vert \bs{\xi}_{n}(\bs X_{i})-\bs{\xi}_{n}^{*}(\bs X_{i})\right\Vert ^{2}=O_{P}(\Delta_{n}^{2})$\textup{,}
where $\Delta_{n}\to0$ as $n\to\infty$ is a sequence of real numbers.
\end{enumerate}
\end{assumption}
We have the following theorem.
\begin{thm}
\label{thm:asymptotic properties of tau_cal-general D(v)}Suppose
that Assumptions \ref{assu:independent sampling}--\ref{assu:conditions on =00005Cxi}
and \ref{assu:conditions on D(v)} hold. If $\sup_{n\geq1}\sup_{1\leq i\leq n}\e\left[\left\Vert \bs{\xi}_{n}(\bs X_{i})\right\Vert ^{2+\epsilon}\right]<\infty$
for some $\epsilon>0$ and $\e\left[\widetilde{\bs{\xi}}_{n}^{*}(\bs X_{i})\widetilde{\bs{\xi}}_{n}^{*}(\bs X_{i})^{\trans}\mid B_{i}=k\right]$
is non-singular for every $k=1,\ldots,K$, then the conclusion of
Theorem \ref{thm:asymptotic properties of tau_cal} holds.

If, in addition, Assumption \ref{assu:conditions on =00005Cxi-general D(v)}
holds and $\e\left[Y_{i}(a)^{4}\right]<\infty$ for all $a=0,1$,
then 
\[
\widehat{\tau}_{\mathrm{cal}}-\tau=\frac{1}{n}\sum_{i=1}^{n}\psi_{1,i}+\frac{1}{n}\psi_{2}+o_{P}(n^{-1})+O_{P}(\Delta_{n}n^{-1/2}),
\]
where $\psi_{1,i}=\sum_{k=1}^{K}\left\{ \left(\frac{A_{i}}{\pi_{n[k]}}-\frac{1-A_{i}}{1-\pi_{n[k]}}\right)\1(B_{i}=k)\cdot Y_{i}-\tau-\bs{\beta}_{[k],\mathcal{C}_{n}}^{\trans}\bs{\Xi}_{i,[k]}^{*}\right\} $
with $\e\left[\frac{1}{n}\sum_{i=1}^{n}\psi_{1,i}\right]=0$ and $\e\left[\psi_{2}\right]=\left\{ \frac{\rho^{\prime\prime\prime}(0)}{2\rho^{\prime\prime}(0)^{2}}-1\right\} \sum_{k=1}^{K}\tr\left\{ \e\left[\left(\mathbf{\Sigma}_{[k]}^{\mathcal{C}_{n}}\right)^{-1}\mathbf{\Sigma}_{\bs{\Xi}\bs{\Xi}\epsilon[k]}^{\mathcal{C}_{n}}\right]\right\} $\textup{.}
The definitions of $\bs{\beta}_{[k],\mathcal{C}_{n}}^{\trans}\bs{\Xi}_{i,[k]}^{*}$
and $\left(\mathbf{\Sigma}_{[k]}^{\mathcal{C}_{n}}\right)^{-1}\mathbf{\Sigma}_{\bs{\Xi}\bs{\Xi}\epsilon[k]}^{\mathcal{C}_{n}}$
are provided in  Appendix \ref{sec:Proofs-for-the}.
\end{thm}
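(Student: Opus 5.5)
The plan is to pass to the dual of the calibration problem and reduce the general-$D$ estimator to the quadratic case of Theorem~\ref{thm:asymptotic properties of tau_cal} plus a controlled remainder.

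\textbf{Dual representation.} Write $\bs{\Xi}_{i,[k]}:=(A_{i}-\pi_{n[k]})\1(B_{i}=k)\{\bs{\xi}_{n}(\bs X_{i})-\overline{\bs{\xi}}_{n[k]}\}$. By standard convex duality, $\widehat{w}_{i}=\rho^{\prime}(\widehat{\bs\lambda}^{\trans}\bs\Xi_{i})$ with $\bs\Xi_{i}=(\bs\Xi_{i,[1]}^{\trans},\ldots,\bs\Xi_{i,[K]}^{\trans})^{\trans}$. Here $\widehat{\bs\lambda}=(\widehat{\bs\lambda}_{1}^{\trans},\ldots,\widehat{\bs\lambda}_{K}^{\trans})^{\trans}$ maximizes the concave dual $\frac1n\sum_{i}\rho(\bs\lambda^{\trans}\bs\Xi_{i})$. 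Because the $\bs\Xi_{i,[k]}$ have disjoint supports, this problem decouples across strata, and the first-order condition reads $\frac1n\sum_{i}\rho^{\prime}(\widehat{\bs\lambda}_{k}^{\trans}\bs\Xi_{i,[k]})\bs\Xi_{i,[k]}=0$.

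\textbf{Consistency and first-order behaviour of $\widehat{\bs\lambda}$.} At $\bs\lambda=0$ the gradient equals $\frac1n\sum_{i}\bs\Xi_{i,[k]}$. This is $O_{P}(n^{-1/2})$ by the conditional CLT argument used for Theorem~\ref{thm:asymptotic properties of tau_cal}, which conditions on $(A^{(n)},B^{(n)})$ and uses Assumption~\ref{assu:conditions on =00005Cxi}(1)(c) to replace $\bs\xi_{n}$ by $\bs\xi_{n}^{*}$. The Hessian is $\rho^{\prime\prime}(0)\frac1n\sum_{i}\bs\Xi_{i,[k]}\bs\Xi_{i,[k]}^{\trans}$, which converges to a negative definite matrix under the non-singularity hypothesis. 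The $2+\epsilon$ moment condition gives $\max_{i}\|\bs\Xi_{i}\|=o_{P}(n^{1/2})$, so $\max_{i}|\bs\lambda^{\trans}\bs\Xi_{i}|\to0$ on a ball of radius $Cn^{-1/2}$. A concavity argument in the style of empirical-likelihood proofs then gives $\widehat{\bs\lambda}=O_{P}(n^{-1/2})$ and
\[
\widehat{\bs\lambda}_{k}=-\rho^{\prime\prime}(0)^{-1}\Bigl(\tfrac1n\textstyle\sum_{i}\bs\Xi_{i,[k]}\bs\Xi_{i,[k]}^{\trans}\Bigr)^{-1}\tfrac1n\sum_{i}\bs\Xi_{i,[k]}+O_{P}(n^{-1}).
\]

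\textbf{First part of the theorem.} Expanding $\widehat w_{i}=1+\rho^{\prime\prime}(0)\widehat{\bs\lambda}^{\trans}\bs\Xi_{i}+O(|\widehat{\bs\lambda}^{\trans}\bs\Xi_{i}|^{2})$ and using $\sum_{i} r_{i}=0$ gives
\[
\tfrac1n\textstyle\sum_{i}\widehat w_{i}r_{i}=\rho^{\prime\prime}(0)\widehat{\bs\lambda}^{\trans}\tfrac1n\sum_{i}\bs\Xi_{i}r_{i}+O_{P}(n^{-1}).
\]
The leading term is exactly the quadratic-$D$ correction term, since $\rho^{\prime\prime}(0)$ cancels. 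Hence $\sqrt n(\widehat\tau_{\mathrm{cal}}-\widehat\tau_{\mathrm{cal}}^{\mathrm{quad}})=o_{P}(1)$, and asymptotic normality and variance consistency follow from Theorem~\ref{thm:asymptotic properties of tau_cal}.

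\textbf{Second-order expansion.} I would carry both expansions one order further. In the dual equation, keep the $\tfrac12\rho^{\prime\prime\prime}(0)(\bs\lambda^{\trans}\bs\Xi_{i})^{2}\bs\Xi_{i}$ term; the local Lipschitz condition on $\rho^{\prime\prime\prime}$ in Assumption~\ref{assu:conditions on D(v)} bounds the cubic remainder by $o_{P}(n^{-1})$, using the fourth-moment conditions. Then substitute into $\frac1n\sum_{i}\widehat w_{i}r_{i}$. I would decompose $r_{i}=\bs\beta_{[k],\mathcal C_{n}}^{\trans}\bs\Xi^{*}_{i,[k]}+\epsilon_{i}$ within stratum $k$, where $\bs\beta_{[k],\mathcal C_{n}}$ is the conditional projection coefficient given $\mathcal C_{n}=(A^{(n)},B^{(n)})$.

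The terms then sort as follows. The linear term produces $\frac1n\sum_{i}\psi_{1,i}$. The quadratic terms in $\widehat{\bs\lambda}$ produce $n^{-1}\psi_{2}$, which collects three pieces:
\begin{itemize}
\item the $\rho^{\prime\prime\prime}(0)/(2\rho^{\prime\prime}(0)^{2})$ contribution from the $w$-expansion and the $\bs\lambda$-expansion;
\item the "$-1$" contribution from $\frac1n\sum_{i}\bs\Xi_{i}\bs\Xi_{i}^{\trans}\epsilon_{i}$, interacting with the inverse Gram matrix and $\frac1n\sum_{i}\bs\Xi_{i}$;
\item any remaining quadratic cross terms.
\end{itemize}
Taking conditional expectations under Assumption~\ref{assu:treatment assignment}(1) gives the trace formula with $\mathbf\Sigma^{\mathcal C_{n}}_{[k]}$ and $\mathbf\Sigma^{\mathcal C_{n}}_{\bs\Xi\bs\Xi\epsilon[k]}$. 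Replacing $\bs\xi_{n}$ by $\bs\xi_{n}^{*}$ costs $O_{P}(\Delta_{n}n^{-1/2})$ through the mean-square rate in Assumption~\ref{assu:conditions on =00005Cxi-general D(v)}(2), and $o_{P}(n^{-1})$ through its balance condition.

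\textbf{Main obstacle.} I expect the bookkeeping in this second-order step to be the hardest part. Exact cancellations are needed to isolate $\{\rho^{\prime\prime\prime}(0)/(2\rho^{\prime\prime}(0)^{2})-1\}$, and the centering by $\overline{\bs\xi}_{n[k]}$ and $\pi_{n[k]}$ induces dependence under CAR, so every moment computation must be done conditionally on $\mathcal C_{n}$.
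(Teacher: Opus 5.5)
Your overall architecture coincides with the paper's. You use the dual representation $\widehat w_i=\rho'(\widehat{\bs\lambda}^\top\bs\Xi_i)$ and a concavity argument on a ball of radius $Cn^{-1/2}$, driven by $\max_i\|\bs\Xi_i\|=O_P(n^{1/(2+\epsilon)})$. You linearise to reduce the first claim to Theorem~\ref{thm:asymptotic properties of tau_cal}. For the second claim you expand both the dual first-order condition and $\rho'$ to second order, split the residual through $\bs\beta_{[k],\mathcal C_n}$, and take expectations conditional on $\mathcal C_n$.

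\textbf{First claim.} It goes through, but not with the rates you state. Your expansions of $\widehat{\bs\lambda}$ and of $\frac1n\sum_i\widehat w_ir_i$ both claim $O_P(n^{-1})$ remainders. Obtaining these requires $\frac1n\sum_i\|\bs\Xi_i\|^3$ and $\frac1n\sum_i\|\bs\Xi_i\|^2|r_i|$ to be $O_P(1)$, i.e.\ third moments, which the first claim does not assume. Factor out $\max_i|\widehat{\bs\lambda}^\top\bs\Xi_i|=o_P(1)$ instead. This gives $o_P(n^{-1/2})$, which is all the reduction needs and is what the paper does.

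\textbf{Second claim: the gap.} The coefficient $\{\rho'''(0)/(2\rho''(0)^2)-1\}$ is asserted rather than derived, and your sorting of terms mislocates both of its parts. Here is the bookkeeping that works.
\begin{itemize}
\item Set $\mathbf S=\frac1n\sum_i\bs\Xi_i\bs\Xi_i^\top$, $\overline{\bs\Xi}=\frac1n\sum_i\bs\Xi_i$, $\widehat{\bs\lambda}_{lin}=-\mathbf S^{-1}\overline{\bs\Xi}$ and $c=\rho'''(0)/\rho''(0)^2$.
\item Split $r_i=\bs\Xi_i^\top\bs\beta_{\mathcal C_n}+\epsilon_i$ using $\bs\Xi_i$ rather than $\bs\Xi_i^*$, so that $\mathbf S^{-1}\frac1n\sum_i\bs\Xi_i\bs\Xi_i^\top\bs\beta_{\mathcal C_n}=\bs\beta_{\mathcal C_n}$ exactly.
\item The dual equation gives $\rho''(0)\widehat{\bs\lambda}=\widehat{\bs\lambda}_{lin}-\frac c2\mathbf S^{-1}\frac1n\sum_i(\widehat{\bs\lambda}_{lin}^\top\bs\Xi_i)^2\bs\Xi_i+o_P(n^{-1})$.
\item Substitute this into $\frac1n\sum_i\{\rho''(0)\widehat{\bs\lambda}^\top\bs\Xi_i+\frac12\rho'''(0)(\widehat{\bs\lambda}^\top\bs\Xi_i)^2\}r_i$. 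The $\bs\beta_{\mathcal C_n}$-parts of the two $\rho'''$ terms cancel exactly.
\item The cross term between the $\lambda$-correction and $\frac1n\sum_i\bs\Xi_i\epsilon_i$ is $o_P(n^{-1})$. This is because the projection property makes $\frac1n\sum_i\bs\Xi_i\epsilon_i=O_P(\Delta_n+n^{-1/2})$.
\item What survives is $-\overline{\bs\Xi}^\top\bs\beta_{\mathcal C_n}$, which enters $\psi_{1,i}$ after $\overline{\bs\Xi}=\overline{\bs\Xi}^*+o_P(n^{-1})$. Alongside it are $\text{bias}_1=\widehat{\bs\lambda}_{lin}^\top\frac1n\sum_i\bs\Xi_i\epsilon_i$ and $\text{bias}_2=\frac{c}{2n}\sum_i(\widehat{\bs\lambda}_{lin}^\top\bs\Xi_i)^2\epsilon_i$.
\end{itemize}

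\textbf{Where the two parts of the coefficient come from.} The $-1$ is not produced by a term quadratic in $\widehat{\bs\lambda}$. It is $\text{bias}_1$, the $\epsilon$-part of the linear term. This is a product of two conditionally centred $O_P(n^{-1/2})$ averages, $\overline{\bs\Xi}^*$ and $\frac1n\sum_i\bs\Xi_i^*\epsilon_i^*$. Its $\mathcal C_n$-expectation survives only through the diagonal $i=j$ terms, giving $-n^{-1}\sum_k\tr\{(\mathbf\Sigma^{\mathcal C_n}_{[k]})^{-1}\mathbf\Sigma^{\mathcal C_n}_{\bs\Xi\bs\Xi\epsilon[k]}\}$.

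The sandwich $\overline{\bs\Xi}^\top\mathbf S^{-1}(\frac1n\sum_i\bs\Xi_i\bs\Xi_i^\top\epsilon_i)\mathbf S^{-1}\overline{\bs\Xi}$, which you associate with the $-1$, is in fact $\text{bias}_2$. It carries the factor $c/2$ and comes from the quadratic term of $\rho'$ alone. The second-order correction to $\widehat{\bs\lambda}$ contributes nothing net; it enters only through the cancellation above.
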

The first conclusion of Theorem \ref{thm:asymptotic properties of tau_cal-general D(v)}
implies that, provided that Assumption \ref{assu:conditions on D(v)}
holds, different choices of $D(v)$ lead to calibration estimators
with the same asymptotic distribution. We also note that this conclusion
requires the non-singularity of $\e\left[\widetilde{\bs{\xi}}_{n}^{*}(\bs X_{i})\widetilde{\bs{\xi}}_{n}^{*}(\bs X_{i})^{\trans}\mid B_{i}=k\right]$
in order to ensure that (\ref{eq:cal opt problem}) admits a unique
solution asymptotically. In the special case where $D(v)=(v-1)^{2}/2$,
such a condition is not necessary. This is because, under the quadratic
discrepancy, the weights $\widehat{w}_{i},i=1,\ldots,n$, admit a
closed-form expression. Even if \eqref{eq:cal opt problem} admits
infinitely many solutions, one can always select the solution obtained
via the Moore-Penrose inverse (see (\ref{eq:def of lambdahat}) in
the proof of Theorem~\ref{thm:asymptotic properties of tau_cal-general D(v)}
in Appendix~\ref{subsec:Proof-of-Theorem}). In contrast, when $D(v)$
is a general discrepancy measure, the weights $\widehat{w}_{i}$ no
longer have an explicit form. In this case, if (\ref{eq:cal opt problem})
admits infinitely many solutions, it is unclear which solution an
optimization algorithm would converge to. In practice, a regularization
term could be introduced to enforce the uniqueness of the solution,
although this is beyond the scope of the present paper. Additionally,
if $\e\left[\widetilde{\bs{\xi}}_{n}^{*}(\bs X_{i})\widetilde{\bs{\xi}}_{n}^{*}(\bs X_{i})^{\trans}\mid B_{i}=k\right]$
is singular, one could apply singular value decomposition (SVD) to
eliminate the redundant components of $\bs{\xi}_{n}$ and use the
reduced version of $\bs{\xi}_{n}$ to compute the calibration estimator
(\ref{eq:cal estimator}). We leave the rigorous theoretical development
of this strategy for future investigation.

The second conclusion of Theorem \ref{thm:asymptotic properties of tau_cal-general D(v)}
provides a characterization of the second-order bias under the CAR
design, which, to the best of our knowledge, is a novel contribution
to the literature. Unlike existing studies on second-order bias (\citealp{newey2004Higher,tan2014Secondorder}),
the observed samples under the CAR design are not i.i.d. The proof
of this result relies on a conditional argument: we first examine
the asymptotic expansion of the calibration estimator conditional
on the treatment and stratum indicators, and then apply the law of
iterated expectation. From Theorem~\ref{thm:asymptotic properties of tau_cal-general D(v)},
we see that when $\Delta_{n}=o(n^{-1/2})$, taking $D(v)=v-\log v$
gives the calibration estimator with zero second-order bias $\e\left[\psi_{2}\right]$.
This aligns with the findings in i.i.d. settings, where empirical
likelihood-based estimators exhibit smaller second-order bias (\citealp{newey2004Higher,tan2014Secondorder}).

\section{Simulation studies\label{sec:Simulation-studies}}

We evaluate the performance of our calibration estimators using Monte
Carlo simulations. For $a\in\{0,1\}$ and $1\leq i\leq n$, the potential
outcomes are generated as follows:
\[
Y_{i}(a)=g_{a}(\bs X_{i})+\epsilon_{a,i}\ \ i=1,\ldots,n,\ a\in\{0,1\},
\]
where $\bs X_{i}$, $\epsilon_{a,i}$ and $g_{a}(\cdot)$ will be
specified in each model and the triplet $(\bs X_{i},\epsilon_{0,i},\epsilon_{1,i})$
for $1\leq i\leq n$ is i.i.d.

We present simulation results for the estimators under three randomization
methods: simple randomization, stratified block randomization, and
minimization (\citealp{pocock1975Sequential}). The sample size $n$
varies across 500, 1000, and 2000, and the number of covariates is
set to be $p=30$. For stratified block randomization, we use a block
size of 6. In the minimization method, a biased-coin probability of
0.75 and equal weights are employed. The cross-fitting technique is
applied with two folds as demonstrated in Section~\ref{subsec:cross-fitting}.
Unless otherwise specified, the calibration estimator is obtained
by setting $D(v)=(v-1)^{2}/2$. We compare nine estimators: \textbf{(i)}
\texttt{cal\_rf}: For each stratum $k$, we use random forest to estimate
the conditional mean function $g_{0}(\cdot)$ and $g_{1}(\cdot)$,
denoting their estimates as $\widehat{g}_{0k}^{\text{rf}}(\cdot)$
and $\widehat{g}_{1k}^{\text{rf}}(\cdot)$. Then \texttt{cal\_rf }is
obtained by taking $\bs{\xi}_{n}(\bs X_{i})=(\sum_{k=1}^{K}\widehat{g}_{0k}^{\text{rf}}(\bs X_{i})\1(B_{i}=k),\sum_{k=1}^{K}\widehat{g}_{1k}^{\text{rf}}(\bs X_{i})\1(B_{i}=k))^{\trans}$
in our calibration estimator (\ref{eq:cal estimator}); \textbf{(ii)}
\texttt{cal\_nn}: Defined similarly to \texttt{cal\_rf}, but with
the random forest estimates replaced by neural network estimates $\widehat{g}_{0k}^{\text{rf}}(\cdot)$
and $\widehat{g}_{1k}^{\text{rf}}(\cdot)$; \textbf{(iii)} \texttt{cal\_rfnn}:
This estimator is obtained by combining the random forest and neural
network estimates. It is obtained by taking $\bs{\xi}_{n}(\bs X_{i})=(\sum_{k=1}^{K}\widehat{g}_{0k}^{\sharp}(\bs X_{i})\1(B_{i}=k),\sum_{k=1}^{K}\widehat{g}_{1k}^{\sharp}(\bs X_{i})\1(B_{i}=k):\sharp\in\{\text{rf},\text{nn}\})^{\trans}$
in our calibration estimator (\ref{eq:cal estimator}); \textbf{(iv)}
\texttt{cal\_rflin}: Defined similarly to \texttt{cal\_rfnn}, but
obtained by combining the random forest and linear regression estimates.
\textbf{(v)}\texttt{ cal\_rf\_g}: Take $\bs{\xi}_{n}(\bs X_{i})=(\widehat{g}_{0k}^{\text{rf}}(\bs X_{i}),\widehat{g}_{1k}^{\text{rf}}(\bs X_{i}):1\leq k\leq K)^{\trans}$
and use it in our calibration estimator (\ref{eq:cal estimator});
\textbf{(vi)} \texttt{cal\_nn\_g}: Defined similarly to \texttt{cal\_rf\_g},
but with neural network estimates; \textbf{(vii)} \texttt{cal\_lin\_EL}:
Defined similarly to \texttt{cal\_rf}, but replacing the random forest
estimates with linear regression estimates and taking $D(v)=v-\log v$;\textbf{
(viii)} \texttt{aipw\_rf}: We use AIPW-base method (\citealp{tu2024Unified})
with the random forest estimates $\widehat{g}_{0k}^{\text{rf}}(\bs X_{i})$
and $\widehat{g}_{1k}^{\text{rf}}(\bs X_{i})$ to adjust the covariates;
\textbf{(ix)} \texttt{aipw\_nn}: Defined similarly to \texttt{aipw\_rf},
but replacing the random forest estimates with neural network estimates
$\widehat{g}_{0k}^{\text{nn}}(\cdot)$ and $\widehat{g}_{1k}^{\text{nn}}(\cdot)$;
\textbf{(x)} \texttt{aipw\_lin}: Defined similarly to \texttt{aipw\_rf},
but replacing the random forest estimates with linear regression estimates;
\textbf{(xi)} \texttt{sdim}: The stratified difference-in-means estimator
$\widehat{\tau}_{\mathrm{sdim}}$. For each estimator, we report the
following metrics based on 300 replications: absolute bias, empirical
standard deviation (SD), average estimated standard error (SE), and
the empirical coverage probability (CP) of the 95\% confidence intervals.

\textbf{Model 1.} Model 1 imposes linear models for the conditional
mean functions $g_{0}(\bs X)$ and $g_{1}(\bs X)$. In this model,
we set
\begin{align*}
g_{0}(\bs X_{i}) & =\mu_{0}+\sum_{j=1}^{4}\beta_{0j}X_{ij}\ \text{ and }\ g_{1}(\bs X_{i})=\mu_{1}+\sum_{j=1}^{4}\beta_{1j}X_{ij},
\end{align*}
with $\mu_{0}=1$, $\mu_{1}=4$, $(\beta_{01},\ldots,\beta_{04})=(75,35,125,80)$,
and $(\beta_{11},\ldots,\beta_{14})=(100,80,60,40)$. Additionally,
the variables are specified as follows: $\epsilon_{0,i}\sim N(0,1)$,
$\epsilon_{1,i}\sim N(0,9)$, $X_{i1}\sim\text{Beta}(3,4)$, $X_{i2}\sim\text{Uniform}(-2,2)$,
$X_{i3}$ takes values in $\{-1,1\}$ with equal probability, $X_{i4}$
takes values in $\{3,5\}$ with probabilities 0.6 and 0.4, respectively.
These variables are independent of one another. The remaining variables
$X_{i5},\dots,X_{ip}$ are independent of $X_{i1},\dots,X_{i4}$ and
follow a multivariate normal distribution with zero mean and a covariance
matrix where all off-diagonal elements are 0.2, while the diagonal
elements are 1. The randomization variable is an additional variable
taking values in $\{1,2,3,4\}$ with probabilities 0.2, 0.3, 0.3,
and 0.2, respectively, and is independent of $X_{ij}$ for $j=1,\dots,p$.
The simulation results for Model 1 are presented in Table~\ref{tab:Model1}.
Since both $g_{0}(\bs X_{i})$ and $g_{1}(\bs X_{i})$ are linear
in the covariates $X_{ij}$, the linear regression-adjusted estimator,
\texttt{aipw\_lin}, performs optimally in large samples ($n=1000,2000$),
as expected. The performance of \texttt{cal\_rflin} is very close
to that of \texttt{aipw\_lin}, which aligns with the conclusion of
Theorem \ref{thm:efficiency comparison thm}. However, in smaller
samples ($n=500$), the \texttt{aipw\_lin} estimator does not perform
the best, as linear regression is sensitive to outliers. In contrast,
\texttt{cal\_rflin} maintains stable performance, suggesting that
incorporating different estimates of $g_{0}(\bs X_{i})$ and $g_{1}(\bs X_{i})$
into the calibration estimator enhances its robustness. 
\begin{table}[!tbh]
\caption{\label{tab:Model1}The comparison of the performance of different
estimators under Model 1.}

\resizebox{\textwidth}{!}{%
\begin{threeparttable}
\begin{centering}
\begin{tabular}{clrrrrrrrrrrrr}
\toprule 
\multirow{2}{*}{$n$} & \multirow{2}{*}{Estimator} & \multicolumn{4}{c}{Simple Rand.} & \multicolumn{4}{c}{Stratified Block Rand.} & \multicolumn{4}{c}{Minimization}\tabularnewline
\cmidrule{3-14}
 &  & Bias & SD & SE & CP & Bias & SD & SE & CP & Bias & SD & SE & CP\tabularnewline
\midrule 
\multirow{11}{*}{500} & \texttt{cal\_rf} & 0.12 & 6.85 & 7.39 & 0.973 & 0.12 & 6.74 & 7.31 & 0.960 & 0.36 & 6.98 & 7.33 & 0.953\tabularnewline
 & \texttt{cal\_nn} & 0.33 & 11.38 & 11.03 & 0.960 & 0.65 & 11.01 & 11.05 & 0.950 & 0.63 & 11.47 & 11.00 & 0.933\tabularnewline
 & \texttt{cal\_rfnn} & 0.47 & 6.90 & 7.33 & 0.960 & 0.18 & 6.90 & 7.24 & 0.960 & 0.12 & 7.03 & 7.27 & 0.950\tabularnewline
 & \texttt{cal\_rflin} & 0.08 & 4.36 & 5.23 & 0.983 & 0.35 & 4.78 & 5.19 & 0.967 & 0.37 & 4.60 & 5.19 & 0.973\tabularnewline
 & \texttt{cal\_rf\_g} & 0.13 & 5.83 & 6.08 & 0.967 & 0.48 & 6.01 & 6.07 & 0.960 & 0.01 & 5.68 & 6.09 & 0.960\tabularnewline
 & \texttt{cal\_nn\_g} & 0.48 & 9.23 & 9.10 & 0.960 & 0.79 & 9.41 & 9.15 & 0.943 & 0.31 & 9.31 & 9.10 & 0.950\tabularnewline
 & \texttt{cal\_lin\_EL} & 0.10 & 4.25 & 5.26 & 0.980 & 0.32 & 4.49 & 5.21 & 0.983 & 0.14 & 4.39 & 5.22 & 0.987\tabularnewline
 & \texttt{aipw\_rf} & 0.31 & 9.03 & 8.91 & 0.957 & 0.07 & 8.74 & 8.88 & 0.940 & 0.65 & 9.13 & 8.88 & 0.943\tabularnewline
 & \texttt{aipw\_nn} & 0.29 & 14.17 & 13.55 & 0.923 & 0.74 & 14.45 & 13.54 & 0.910 & 1.22 & 13.98 & 13.52 & 0.950\tabularnewline
 & \texttt{aipw\_lin} & 3.22 & 50.59 & 14.23 & 0.953 & 0.69 & 25.12 & 9.04 & 0.967 & 1.78 & 19.45 & 7.30 & 0.953\tabularnewline
 & \texttt{sdim} & 0.48 & 12.35 & 12.34 & 0.953 & 0.06 & 11.71 & 12.31 & 0.943 & 0.93 & 12.61 & 12.28 & 0.927\tabularnewline
\midrule 
\multirow{11}{*}{1000} & \texttt{cal\_rf} & 0.21 & 4.00 & 4.13 & 0.957 & 0.06 & 3.70 & 4.12 & 0.960 & 0.09 & 3.85 & 4.12 & 0.963\tabularnewline
 & \texttt{cal\_nn} & 0.10 & 6.39 & 6.48 & 0.963 & 0.15 & 6.42 & 6.48 & 0.940 & 0.38 & 6.14 & 6.44 & 0.960\tabularnewline
 & \texttt{cal\_rfnn} & 0.05 & 3.93 & 4.06 & 0.963 & 0.06 & 3.78 & 4.05 & 0.960 & 0.17 & 3.90 & 4.05 & 0.973\tabularnewline
 & \texttt{cal\_rflin} & 0.17 & 2.94 & 3.28 & 0.973 & 0.24 & 2.84 & 3.27 & 0.973 & 0.05 & 3.14 & 3.27 & 0.957\tabularnewline
 & \texttt{cal\_rf\_g} & 0.24 & 3.68 & 3.66 & 0.940 & 0.05 & 3.45 & 3.67 & 0.963 & 0.08 & 3.67 & 3.66 & 0.950\tabularnewline
 & \texttt{cal\_nn\_g} & 0.40 & 4.48 & 4.73 & 0.967 & 0.18 & 4.83 & 4.76 & 0.950 & 0.55 & 4.74 & 4.74 & 0.950\tabularnewline
 & \texttt{cal\_lin\_EL} & 0.12 & 2.87 & 3.27 & 0.973 & 0.29 & 2.76 & 3.27 & 0.977 & 0.05 & 3.06 & 3.27 & 0.970\tabularnewline
 & \texttt{aipw\_rf} & 0.07 & 5.11 & 5.24 & 0.967 & 0.04 & 5.25 & 5.23 & 0.947 & 0.16 & 5.27 & 5.24 & 0.950\tabularnewline
 & \texttt{aipw\_nn} & 0.17 & 7.96 & 7.51 & 0.943 & 0.26 & 7.52 & 7.53 & 0.953 & 0.36 & 7.38 & 7.44 & 0.973\tabularnewline
 & \texttt{aipw\_lin} & 0.08 & 2.81 & 2.90 & 0.943 & 0.32 & 2.74 & 2.91 & 0.960 & 0.05 & 3.03 & 2.91 & 0.940\tabularnewline
 & \texttt{sdim} & 0.11 & 8.27 & 8.71 & 0.953 & 0.10 & 8.70 & 8.67 & 0.950 & 0.27 & 8.61 & 8.67 & 0.953\tabularnewline
\midrule 
\multirow{11}{*}{2000} & \texttt{cal\_rf} & 0.10 & 2.49 & 2.53 & 0.960 & 0.02 & 2.32 & 2.53 & 0.960 & 0.11 & 2.50 & 2.53 & 0.967\tabularnewline
 & \texttt{cal\_nn} & 0.27 & 2.50 & 2.68 & 0.960 & 0.01 & 2.62 & 2.68 & 0.963 & 0.05 & 2.64 & 2.67 & 0.950\tabularnewline
 & \texttt{cal\_rfnn} & 0.17 & 2.23 & 2.33 & 0.960 & 0.04 & 2.17 & 2.33 & 0.963 & 0.04 & 2.32 & 2.33 & 0.953\tabularnewline
 & \texttt{cal\_rflin} & 0.12 & 2.04 & 2.19 & 0.967 & 0.04 & 2.06 & 2.19 & 0.973 & 0.02 & 2.11 & 2.19 & 0.970\tabularnewline
 & \texttt{cal\_rf\_g} & 0.13 & 2.37 & 2.38 & 0.950 & 0.07 & 2.22 & 2.38 & 0.957 & 0.14 & 2.33 & 2.38 & 0.953\tabularnewline
 & \texttt{cal\_nn\_g} & 0.28 & 2.09 & 2.23 & 0.957 & 0.15 & 2.19 & 2.24 & 0.957 & 0.05 & 2.18 & 2.23 & 0.967\tabularnewline
 & \texttt{cal\_lin\_EL} & 0.16 & 2.01 & 2.19 & 0.970 & 0.00 & 2.03 & 2.19 & 0.970 & 0.02 & 2.08 & 2.19 & 0.973\tabularnewline
 & \texttt{aipw\_rf} & 0.12 & 3.04 & 3.05 & 0.957 & 0.15 & 3.00 & 3.05 & 0.953 & 0.20 & 3.10 & 3.05 & 0.947\tabularnewline
 & \texttt{aipw\_nn} & 0.29 & 2.77 & 2.91 & 0.967 & 0.04 & 2.93 & 2.89 & 0.927 & 0.06 & 2.81 & 2.88 & 0.953\tabularnewline
 & \texttt{aipw\_lin} & 0.16 & 1.99 & 2.06 & 0.967 & 0.00 & 2.02 & 2.06 & 0.957 & 0.03 & 2.06 & 2.05 & 0.963\tabularnewline
 & \texttt{sdim} & 0.22 & 5.87 & 6.13 & 0.967 & 0.41 & 6.28 & 6.13 & 0.940 & 0.37 & 6.12 & 6.13 & 0.953\tabularnewline
\bottomrule
\end{tabular}
\par\end{centering}
\begin{tablenotes}[flushleft]
      \footnotesize 
      \item \textit{Abbreviations:} Rand., Randomization; SD, standard deviation, SE: standard error; CP, coverage probability. 
    \end{tablenotes}
  \end{threeparttable}
}
\end{table}

\textbf{Model 2.} Model 2 imposes additive but nonlinear models for
the conditional mean functions $g_{0}(\bs X)$ and $g_{1}(\bs X)$.
In this model, we set
\begin{align*}
g_{0}(\bs X_{i}) & =\mu_{0}+\beta_{01}\log(X_{i1}+1)+\beta_{02}X_{i2}^{2}+\beta_{03}\exp(X_{i3})+\beta_{04}/(X_{i4}+3)\\
g_{1}(\bs X_{i}) & =\mu_{1}+\beta_{11}\exp(X_{i1}+2)+\beta_{12}/(X_{i1}+1)+\beta_{13}X_{i2}^{2},
\end{align*}
with $\mu_{0}=-3$, $\mu_{1}=0$, $(\beta_{01},\ldots,\beta_{04})=(10,24,15,20)$,
and $(\beta_{11},\beta_{12},\beta_{13})=(20,17,10)$. Additionally,
the variables are specified as follows: $\epsilon_{0,i}\sim N(0,1)$,
$\epsilon_{1,i}\sim N(0,9)$, $X_{i1}\sim\text{Beta}(3,4)$, $X_{i2}\sim\text{Uniform}(-2,2)$
with these two variables being independent of each other. The additional
covariates $X_{i3},\dots,X_{ip}$ are first generated as in Model
1. Then we randomly select $\left\lfloor p/3\right\rfloor $ covariates
from the additional covariates and multiply them by either $X_{i1}$
or $X_{i2}$ with equal probability to form the final additional covariates.
The randomization variable is an additional variable taking values
in $\{1,2,3,4\}$ with probabilities 0.2, 0.3, 0.3, and 0.2, respectively,
and is independent of $X_{ij}$ for $j=1,\dots,p$. The simulation
results for Model 2 are presented in Table~\ref{tab:Model2}. The
results indicate that random forests-based calibration estimators
(\texttt{cal\_rf}, \texttt{cal\_rfnn}, \texttt{cal\_rflm}, \texttt{cal\_rf\_g})
consistently outperform other methods across different randomization
schemes and sample sizes. The AIPW-based estimators (\texttt{aipw\_rf}
and \texttt{aipw\_nn}) tend to perform worse than the calibration
estimators (\texttt{cal\_rf} and \texttt{cal\_nn}), especially when
sample size is small ($n=500$). The empirical coverage probabilities
indicate that, in most cases, the estimators yield reliable 95\% confidence
intervals.

\begin{table}[!tbh]
\caption{\label{tab:Model2}The comparison of the performance of different
estimators under Model 2.}

\resizebox{\textwidth}{!}{%
\begin{threeparttable}
\begin{centering}
\begin{tabular}{clrrrrrrrrrrrr}
\toprule 
\multirow{2}{*}{$n$} & \multirow{2}{*}{Estimator} & \multicolumn{4}{c}{Simple Rand.} & \multicolumn{4}{c}{Stratified Block Rand.} & \multicolumn{4}{c}{Minimization}\tabularnewline
\cmidrule{3-14}
 &  & Bias & SD & SE & CP & Bias & SD & SE & CP & Bias & SD & SE & CP\tabularnewline
\midrule 
\multirow{11}{*}{500} & \texttt{cal\_rf} & 0.01 & 2.09 & 2.30 & 0.980 & 0.19 & 2.22 & 2.30 & 0.957 & 0.28 & 2.25 & 2.29 & 0.967\tabularnewline
 & \texttt{cal\_nn} & 0.16 & 3.09 & 3.14 & 0.957 & 0.14 & 3.19 & 3.14 & 0.940 & 0.10 & 3.08 & 3.13 & 0.960\tabularnewline
 & \texttt{cal\_rfnn} & 0.01 & 2.14 & 2.30 & 0.977 & 0.20 & 2.25 & 2.30 & 0.960 & 0.25 & 2.26 & 2.29 & 0.957\tabularnewline
 & \texttt{cal\_rflin} & 0.22 & 2.11 & 2.26 & 0.977 & 0.01 & 2.23 & 2.26 & 0.960 & 0.11 & 2.23 & 2.26 & 0.960\tabularnewline
 & \texttt{cal\_rf\_g} & 0.10 & 2.00 & 2.16 & 0.957 & 0.19 & 2.25 & 2.16 & 0.937 & 0.15 & 2.18 & 2.16 & 0.950\tabularnewline
 & \texttt{cal\_nn\_g} & 0.12 & 3.23 & 3.09 & 0.943 & 0.19 & 3.21 & 3.09 & 0.937 & 0.05 & 3.13 & 3.08 & 0.957\tabularnewline
 & \texttt{cal\_lin\_EL} & 0.33 & 2.68 & 2.78 & 0.950 & 0.19 & 2.73 & 2.77 & 0.960 & 0.10 & 2.69 & 2.76 & 0.963\tabularnewline
 & \texttt{aipw\_rf} & 0.20 & 2.35 & 2.42 & 0.950 & 0.04 & 2.50 & 2.42 & 0.940 & 0.09 & 2.46 & 2.42 & 0.957\tabularnewline
 & \texttt{aipw\_nn} & 0.87 & 5.98 & 5.99 & 0.963 & 0.47 & 6.02 & 5.92 & 0.957 & 0.25 & 5.94 & 5.87 & 0.937\tabularnewline
 & \texttt{aipw\_lin} & 2.91 & 196.74 & 43.13 & 0.953 & 3.67 & 142.28 & 37.24 & 0.963 & 6.10 & 95.39 & 33.45 & 0.937\tabularnewline
 & \texttt{sdim} & 0.22 & 3.07 & 3.12 & 0.950 & 0.14 & 3.19 & 3.12 & 0.940 & 0.08 & 3.06 & 3.12 & 0.970\tabularnewline
\midrule 
\multirow{11}{*}{1000} & \texttt{cal\_rf} & 0.22 & 1.47 & 1.51 & 0.957 & 0.15 & 1.48 & 1.50 & 0.950 & 0.29 & 1.54 & 1.50 & 0.943\tabularnewline
 & \texttt{cal\_nn} & 0.04 & 2.24 & 2.15 & 0.960 & 0.09 & 2.16 & 2.13 & 0.963 & 0.27 & 2.15 & 2.14 & 0.950\tabularnewline
 & \texttt{cal\_rfnn} & 0.22 & 1.49 & 1.50 & 0.953 & 0.15 & 1.50 & 1.50 & 0.947 & 0.25 & 1.55 & 1.50 & 0.940\tabularnewline
 & \texttt{cal\_rflin} & 0.08 & 1.44 & 1.47 & 0.960 & 0.08 & 1.43 & 1.47 & 0.957 & 0.15 & 1.53 & 1.46 & 0.937\tabularnewline
 & \texttt{cal\_rf\_g} & 0.17 & 1.46 & 1.45 & 0.940 & 0.12 & 1.46 & 1.45 & 0.947 & 0.25 & 1.50 & 1.45 & 0.937\tabularnewline
 & \texttt{cal\_nn\_g} & 0.02 & 2.17 & 2.02 & 0.950 & 0.11 & 2.08 & 2.00 & 0.937 & 0.30 & 2.04 & 2.02 & 0.943\tabularnewline
 & \texttt{cal\_lin\_EL} & 0.07 & 1.74 & 1.67 & 0.947 & 0.18 & 1.70 & 1.67 & 0.940 & 0.25 & 1.66 & 1.67 & 0.947\tabularnewline
 & \texttt{aipw\_rf} & 0.07 & 1.62 & 1.59 & 0.947 & 0.06 & 1.60 & 1.59 & 0.943 & 0.26 & 1.63 & 1.59 & 0.943\tabularnewline
 & \texttt{aipw\_nn} & 0.17 & 3.36 & 3.22 & 0.943 & 0.05 & 3.40 & 3.22 & 0.950 & 0.27 & 3.28 & 3.26 & 0.960\tabularnewline
 & \texttt{aipw\_lin} & 0.11 & 1.75 & 1.67 & 0.950 & 0.19 & 1.73 & 1.66 & 0.947 & 0.30 & 1.67 & 1.66 & 0.947\tabularnewline
 & \texttt{sdim} & 0.01 & 2.25 & 2.20 & 0.950 & 0.07 & 2.27 & 2.20 & 0.933 & 0.31 & 2.20 & 2.20 & 0.960\tabularnewline
\midrule 
\multirow{11}{*}{2000} & \texttt{cal\_rf} & 0.09 & 1.00 & 1.01 & 0.943 & 0.05 & 1.04 & 1.01 & 0.937 & 0.09 & 0.97 & 1.01 & 0.953\tabularnewline
 & \texttt{cal\_nn} & 0.03 & 1.26 & 1.27 & 0.950 & 0.13 & 1.31 & 1.27 & 0.950 & 0.05 & 1.25 & 1.27 & 0.950\tabularnewline
 & \texttt{cal\_rfnn} & 0.07 & 1.01 & 1.01 & 0.943 & 0.07 & 1.04 & 1.01 & 0.923 & 0.07 & 0.97 & 1.01 & 0.953\tabularnewline
 & \texttt{cal\_rflin} & 0.01 & 1.00 & 0.99 & 0.943 & 0.14 & 1.02 & 0.99 & 0.923 & 0.00 & 0.96 & 0.99 & 0.957\tabularnewline
 & \texttt{cal\_rf\_g} & 0.05 & 1.00 & 0.99 & 0.950 & 0.08 & 1.04 & 0.99 & 0.927 & 0.07 & 0.97 & 0.99 & 0.950\tabularnewline
 & \texttt{cal\_nn\_g} & 0.03 & 1.10 & 1.07 & 0.957 & 0.13 & 1.08 & 1.07 & 0.930 & 0.04 & 1.08 & 1.07 & 0.953\tabularnewline
 & \texttt{cal\_lin\_EL} & 0.01 & 1.15 & 1.13 & 0.943 & 0.12 & 1.14 & 1.13 & 0.943 & 0.03 & 1.13 & 1.13 & 0.963\tabularnewline
 & \texttt{aipw\_rf} & 0.01 & 1.07 & 1.06 & 0.947 & 0.12 & 1.09 & 1.05 & 0.933 & 0.05 & 1.06 & 1.06 & 0.960\tabularnewline
 & \texttt{aipw\_nn} & 0.06 & 1.38 & 1.44 & 0.957 & 0.13 & 1.57 & 1.43 & 0.923 & 0.12 & 1.49 & 1.43 & 0.940\tabularnewline
 & \texttt{aipw\_lin} & 0.00 & 1.14 & 1.12 & 0.930 & 0.11 & 1.14 & 1.12 & 0.937 & 0.02 & 1.11 & 1.12 & 0.960\tabularnewline
 & \texttt{sdim} & 0.03 & 1.58 & 1.55 & 0.943 & 0.15 & 1.58 & 1.55 & 0.930 & 0.12 & 1.64 & 1.55 & 0.933\tabularnewline
\bottomrule
\end{tabular}
\par\end{centering}
\begin{tablenotes}[flushleft]
      \footnotesize 
      \item \textit{Abbreviations:} Rand., Randomization; SD, standard deviation, SE: standard error; CP, coverage probability. 
    \end{tablenotes}
  \end{threeparttable}
}
\end{table}

\textbf{Model 3.} Model 3 imposes non-additive and nonlinear models
for the conditional mean functions $g_{0}(\bs X)$ and $g_{1}(\bs X)$.
In this model, we set
\begin{align*}
g_{0}(\bs X_{i}) & =\mu_{0}+\beta_{01}X_{i1}X_{i2}/(X_{i1}+X_{i2}+2)+\beta_{02}X_{i1}^{2}(X_{i2}+X_{i3})\\
g_{1}(\bs X_{i}) & =\mu_{1}+\beta_{11}(X_{i2}+X_{i4})+\beta_{12}X_{i2}^{2}/\exp(X_{i1}+2),
\end{align*}
with $\mu_{0}=5$, $\mu_{1}=2$, $(\beta_{01},\beta_{02})=(42,83)$,
and $(\beta_{11},\beta_{12})=(30,75)$. Additionally, the variables
are specified as follows: $\epsilon_{0,i}\sim\text{t}(2)$, $\epsilon_{1,i}\sim3\times\text{t}(2)$,
$X_{i1}\sim\text{Beta}(3,4)$, $X_{i2}\sim\text{Uniform}(-2,2)$,
$X_{i3}\sim N(0,1)$, $X_{i4}\sim\text{Uniform}(0,2)$, respectively.
These variables are independent of one another. The remaining variables
$X_{i5},\dots,X_{ip}$ are independent of $X_{i1},\dots,X_{i4}$ and
follow a multivariate normal distribution with zero mean and a symmetric
Toeplitz covariance matrix where the first row is a geometric sequence
with initial value 1 and common ratio 0.5. The randomization variable
is an additional variable taking values in $\{1,2\}$ with probabilities
0.4, and 0.6, respectively, and is independent of $X_{ij}$ for $j=1,\dots,p$.
The simulation results for Model 3 are presented in Table~\ref{tab:Model3}.
The results indicate that the random forests-based calibration estimators
(\texttt{cal\_rf}, \texttt{cal\_rfnn}, \texttt{cal\_rflm}, \texttt{cal\_rf\_g})
perform the best across different randomization methods and sample
sizes. As the sample size increases, the calibration estimators converge
toward better performance with lower SD, while maintaining correct
empirical coverage probabilities. However, the \texttt{sdim} estimator
consistently lags behind the calibration estimators in terms of SD,
especially in smaller sample sizes. This finding is consistent with
Theorem \ref{thm:asymptotic properties of tau_cal}, which demonstrates
that the calibration estimator is always more efficient than the \texttt{sdim}
estimator.

\begin{table}[!tbh]
\caption{\label{tab:Model3}The comparison of the performance of different
estimators under Model 3.}

\resizebox{\textwidth}{!}{%
\begin{threeparttable}
\begin{centering}
\begin{tabular}{clrrrrrrrrrrrr}
\toprule 
\multirow{2}{*}{$n$} & \multirow{2}{*}{Estimator} & \multicolumn{4}{c}{Simple Rand.} & \multicolumn{4}{c}{Stratified Block Rand.} & \multicolumn{4}{c}{Minimization}\tabularnewline
\cmidrule{3-14}
 &  & Bias & SD & SE & CP & Bias & SD & SE & CP & Bias & SD & SE & CP\tabularnewline
\midrule 
\multirow{11}{*}{500} & \texttt{cal\_rf} & 0.13 & 2.80 & 2.68 & 0.933 & 0.01 & 2.40 & 2.61 & 0.980 & 0.03 & 2.41 & 2.65 & 0.967\tabularnewline
 & \texttt{cal\_nn} & 0.17 & 3.62 & 3.43 & 0.943 & 0.08 & 3.20 & 3.35 & 0.960 & 0.08 & 3.29 & 3.38 & 0.960\tabularnewline
 & \texttt{cal\_rfnn} & 0.09 & 2.83 & 2.67 & 0.940 & 0.00 & 2.45 & 2.61 & 0.967 & 0.01 & 2.41 & 2.65 & 0.960\tabularnewline
 & \texttt{cal\_rflin} & 0.08 & 2.80 & 2.63 & 0.930 & 0.01 & 2.36 & 2.56 & 0.967 & 0.02 & 2.42 & 2.61 & 0.957\tabularnewline
 & \texttt{cal\_rf\_g} & 0.16 & 2.74 & 2.58 & 0.927 & 0.01 & 2.27 & 2.52 & 0.980 & 0.11 & 2.35 & 2.56 & 0.957\tabularnewline
 & \texttt{cal\_nn\_g} & 0.08 & 3.23 & 3.16 & 0.933 & 0.08 & 2.94 & 3.08 & 0.953 & 0.08 & 2.93 & 3.12 & 0.960\tabularnewline
 & \texttt{cal\_lin\_EL} & 0.09 & 3.00 & 2.82 & 0.923 & 0.10 & 2.51 & 2.76 & 0.970 & 0.03 & 2.66 & 2.80 & 0.967\tabularnewline
 & \texttt{aipw\_rf} & 0.04 & 3.08 & 2.84 & 0.937 & 0.16 & 2.60 & 2.78 & 0.960 & 0.16 & 2.67 & 2.82 & 0.953\tabularnewline
 & \texttt{aipw\_nn} & 0.11 & 5.09 & 4.88 & 0.943 & 0.12 & 4.72 & 4.62 & 0.947 & 0.23 & 4.93 & 4.79 & 0.947\tabularnewline
 & \texttt{aipw\_lin} & 0.13 & 3.16 & 3.01 & 0.937 & 0.12 & 2.79 & 2.89 & 0.930 & 0.02 & 2.83 & 2.95 & 0.977\tabularnewline
 & \texttt{sdim} & 0.19 & 4.31 & 4.00 & 0.930 & 0.23 & 3.77 & 3.93 & 0.950 & 0.26 & 3.87 & 3.97 & 0.960\tabularnewline
\midrule 
\multirow{11}{*}{1000} & \texttt{cal\_rf} & 0.03 & 1.87 & 1.79 & 0.947 & 0.06 & 1.82 & 1.80 & 0.940 & 0.07 & 1.66 & 1.78 & 0.957\tabularnewline
 & \texttt{cal\_nn} & 0.09 & 2.31 & 2.18 & 0.937 & 0.02 & 2.15 & 2.20 & 0.947 & 0.02 & 2.20 & 2.18 & 0.960\tabularnewline
 & \texttt{cal\_rfnn} & 0.06 & 1.89 & 1.79 & 0.937 & 0.08 & 1.82 & 1.79 & 0.953 & 0.08 & 1.68 & 1.78 & 0.960\tabularnewline
 & \texttt{cal\_rflin} & 0.06 & 1.88 & 1.76 & 0.943 & 0.05 & 1.82 & 1.77 & 0.943 & 0.08 & 1.69 & 1.75 & 0.953\tabularnewline
 & \texttt{cal\_rf\_g} & 0.01 & 1.87 & 1.75 & 0.933 & 0.06 & 1.76 & 1.75 & 0.957 & 0.11 & 1.67 & 1.74 & 0.953\tabularnewline
 & \texttt{cal\_nn\_g} & 0.04 & 2.12 & 2.01 & 0.950 & 0.04 & 1.94 & 2.02 & 0.963 & 0.15 & 2.03 & 2.01 & 0.963\tabularnewline
 & \texttt{cal\_lin\_EL} & 0.02 & 1.96 & 1.82 & 0.943 & 0.05 & 1.87 & 1.82 & 0.953 & 0.04 & 1.76 & 1.81 & 0.967\tabularnewline
 & \texttt{aipw\_rf} & 0.00 & 2.04 & 1.87 & 0.943 & 0.01 & 1.91 & 1.88 & 0.947 & 0.01 & 1.81 & 1.86 & 0.957\tabularnewline
 & \texttt{aipw\_nn} & 0.20 & 2.98 & 2.66 & 0.920 & 0.01 & 2.97 & 2.70 & 0.950 & 0.05 & 2.76 & 2.66 & 0.950\tabularnewline
 & \texttt{aipw\_lin} & 0.00 & 1.98 & 1.83 & 0.940 & 0.08 & 1.89 & 1.83 & 0.947 & 0.04 & 1.79 & 1.82 & 0.953\tabularnewline
 & \texttt{sdim} & 0.07 & 2.69 & 2.81 & 0.967 & 0.05 & 2.77 & 2.81 & 0.950 & 0.00 & 2.77 & 2.80 & 0.967\tabularnewline
\midrule 
\multirow{11}{*}{2000} & \texttt{cal\_rf} & 0.10 & 1.31 & 1.22 & 0.943 & 0.01 & 1.20 & 1.21 & 0.947 & 0.06 & 1.41 & 1.24 & 0.937\tabularnewline
 & \texttt{cal\_nn} & 0.08 & 1.42 & 1.36 & 0.947 & 0.08 & 1.37 & 1.36 & 0.957 & 0.11 & 1.59 & 1.39 & 0.943\tabularnewline
 & \texttt{cal\_rfnn} & 0.10 & 1.29 & 1.21 & 0.943 & 0.03 & 1.21 & 1.21 & 0.947 & 0.09 & 1.43 & 1.24 & 0.940\tabularnewline
 & \texttt{cal\_rflin} & 0.10 & 1.29 & 1.20 & 0.940 & 0.01 & 1.20 & 1.20 & 0.943 & 0.06 & 1.41 & 1.23 & 0.940\tabularnewline
 & \texttt{cal\_rf\_g} & 0.12 & 1.28 & 1.19 & 0.950 & 0.03 & 1.16 & 1.19 & 0.950 & 0.08 & 1.40 & 1.22 & 0.943\tabularnewline
 & \texttt{cal\_nn\_g} & 0.10 & 1.34 & 1.28 & 0.940 & 0.07 & 1.30 & 1.28 & 0.953 & 0.13 & 1.55 & 1.31 & 0.940\tabularnewline
 & \texttt{cal\_lin\_EL} & 0.07 & 1.34 & 1.24 & 0.930 & 0.04 & 1.24 & 1.23 & 0.957 & 0.02 & 1.46 & 1.26 & 0.953\tabularnewline
 & \texttt{aipw\_rf} & 0.05 & 1.36 & 1.25 & 0.930 & 0.03 & 1.26 & 1.25 & 0.950 & 0.01 & 1.46 & 1.27 & 0.940\tabularnewline
 & \texttt{aipw\_nn} & 0.08 & 1.54 & 1.45 & 0.930 & 0.05 & 1.48 & 1.45 & 0.963 & 0.06 & 1.77 & 1.49 & 0.933\tabularnewline
 & \texttt{aipw\_lin} & 0.07 & 1.31 & 1.23 & 0.933 & 0.03 & 1.25 & 1.23 & 0.953 & 0.04 & 1.57 & 1.26 & 0.930\tabularnewline
 & \texttt{sdim} & 0.07 & 2.14 & 1.98 & 0.927 & 0.05 & 2.06 & 1.98 & 0.943 & 0.01 & 2.17 & 2.00 & 0.947\tabularnewline
\bottomrule
\end{tabular}
\par\end{centering}
\begin{tablenotes}[flushleft]
      \footnotesize 
      \item \textit{Abbreviations:} Rand., Randomization; SD, standard deviation, SE: standard error; CP, coverage probability. 
    \end{tablenotes}
  \end{threeparttable}
}
\end{table}

Models 1-3 assume the homogeneity of the conditional mean functions
$g_{0}(\bs X)$ and $g_{1}(\bs X)$ across different strata. In Appendix~\ref{sec:Additional-simulation-results},
we also examine a model that accounts for the heterogeneity of the
conditional mean functions $g_{0}(\bs X)$ and $g_{1}(\bs X)$ across
different strata. Our findings show that the results are very similar
to those obtained under Models 1--3.

\section{Empirical application\label{sec:Empirical-application}}

In this section, we apply our calibration method to the experimental
data from \citet{dupas2018Bankinga}, which conducted covariate-adaptive
randomized experiments to assess the impact of subsidized bank account
access ($A_{i}$) on total savings ($Y_{i}$) for individuals across
three countries: Uganda, Malawi, and Chile. \citet{dupas2018Bankinga}
estimated both the average treatment effects (ATEs) and the quantile
treatment effects (QTEs) of the subsidy. More recently, \citet{jiang2023Regressionadjusted}
applied a regression-adjusted estimator to this dataset to analyze
the QTE in Uganda.

In this section, we aim to estimate the ATEs of subsidized access
to bank accounts in Uganda and Malawi. The Uganda sample includes
2,159 observations, stratified into 41 strata based on gender, occupation,
and bank branch, following a stratified block randomization design.
The Malawi sample includes 2,108 observations, stratified into 78
strata based on occupation, gender, marital status, literacy, and
whether the respondent was from the household or market, also following
a stratified block  randomization design. To avoid overly small strata,
we exclude those containing fewer than six samples. This results in
a final Uganda sample of 2,115 observations across 37 strata, and
a Malawi sample of 1,987 observations across 67 strata. In both countries,
half of the households were randomly assigned to receive the bank
account subsidy, while the other half served as the control group.
All monetary variables are winsorized at the ninety-ninth percentile
to mitigate the influence of outliers. After the randomization and
the intervention, \citet{dupas2018Bankinga} conducted 3 rounds of
follow-up surveys in Uganda and Malawi. In line with \citet{jiang2023Regressionadjusted},
we focus on the first-round follow-up survey to assess the impact
of the bank account subsidy on total savings.

Following \citet{dupas2018Bankinga}, we consider one baseline covariate
$X$: the baseline value of total savings. When estimating the ATE
of subsidized bank account access in Uganda, we leverage data from
Malawi to obtain certain components of $\bs{\xi}_{n}(X)$ used in
our calibration estimator, and vice versa when estimating the ATE
for Malawi. For both countries, we consider the following estimators:
(i) \texttt{cal\_X}: this estimator is obtained by taking $\bs{\xi}_{n}(X)=X$
in (\ref{eq:cal estimator}); (ii) \texttt{cal\_X}$^{\beta}$: this
estimator is obtained by taking $\bs{\xi}_{n}(X)=(X+1)^{\beta}$ in
(\ref{eq:cal estimator}); (iii) \texttt{cal\_X\_X}$^{\beta}$: this
estimator is obtained by taking $\bs{\xi}_{n}(X)=(X,(X+1)^{\beta})^{\trans}$
in (\ref{eq:cal estimator}); (iv) \texttt{cal\_info\_X}: this estimator
is obtained by taking $\bs{\xi}_{n}(X)=(X,\widehat{g}_{\text{rf}}^{\text{info}}(X))^{\trans}$
in (\ref{eq:cal estimator}), where $\widehat{g}_{\text{rf}}^{\text{info}}(X)$
is the random forest estimator of $\e\left[Y\mid X\right]$ using
the data from the other country; (v) \texttt{cal\_info\_X}$^{\beta}$:
this estimator is obtained by taking $\bs{\xi}_{n}(X)=((X+1)^{\beta},\widehat{g}_{\text{rf}}^{\text{info}}(X))^{\trans}$
in (\ref{eq:cal estimator}); (vi) \texttt{cal\_info\_X\_X}$^{\beta}$:
this estimator is obtained by taking $\bs{\xi}_{n}(X)=(X,(X+1)^{\beta},\widehat{g}_{\text{rf}}^{\text{info}}(X))^{\trans}$
in (\ref{eq:cal estimator}); (vii) \texttt{sdim}: the stratified
difference-in-means estimator $\widehat{\tau}_{\mathrm{sdim}}$. We
include the term $(X+1)^{\beta}$ because an approximately linear
relationship is observed between $\log(Y+1)$ and $\log(X+1)$ in
both countries. We fit linear regressions of the form $\log(Y+1)=\alpha+\beta\log(X+1)+\epsilon$,
yielding an estimated $\beta$ of 0.481 for Uganda and 0.408 for Malawi.
Figure~\ref{fig:real data CI} and Table~\ref{tab:empirical-result}
present the 95\% confidence intervals and point estimates obtained
using these methods.

\begin{figure}[!tbh]
\begin{centering}
\includegraphics[width=0.85\textwidth]{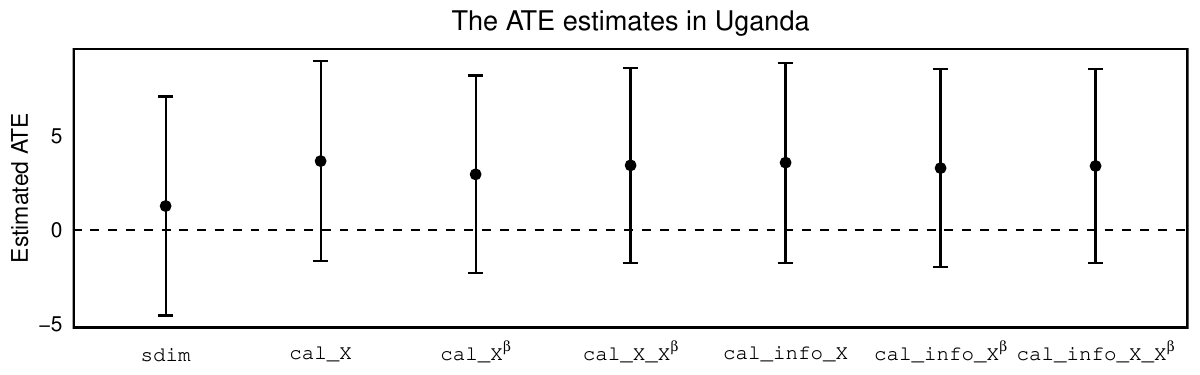}
\par\end{centering}
\begin{centering}
\includegraphics[width=0.85\textwidth]{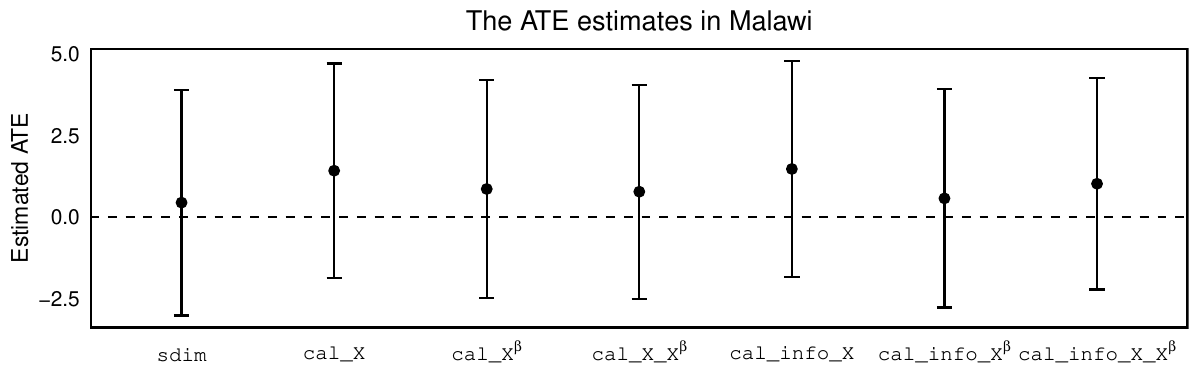}
\par\end{centering}
\caption{\label{fig:real data CI}The 95\% confidence intervals for the ATE
of the bank account subsidy on total household savings in Uganda and
Malawi, respectively. Top panel: Uganda; bottom panel: Malawi. The
estimators whose names contain ``\texttt{X}'' (or ``\texttt{X}$^{\beta}$'')
indicate that we add the component $X$ (or $(X+1)^{\beta}$) to $\protect\bs{\xi}(X)$
to obtain the calibration estimator. The estimators whose names contain ``\texttt{info}'' indicate that we incorporate information from the
other country to form one component of $\protect\bs{\xi}(X)$.}
\end{figure}

\begin{table}[!tbh]
\caption{\label{tab:empirical-result}The ATE estimates of the bank account
subsidy effect on total household savings in Uganda and Malawi, with
standard errors reported in parentheses.}

\resizebox{\textwidth}{!}{%
\begin{threeparttable}
\begin{centering}
\begin{tabular}{cccccccc}
\toprule 
 & \texttt{sdim} & \texttt{cal\_X} & \texttt{cal\_X}$^{\beta}$ & \texttt{cal\_X\_X}$^{\beta}$ & \texttt{cal\_info\_X} & \texttt{cal\_info\_X}$^{\beta}$ & \texttt{cal\_info\_X\_X}$^{\beta}$\tabularnewline
\midrule
\midrule
\multirow{2}{*}{Uganda} & 1.289 & 3.687 & 2.979 & 3.459 & 3.604 & 3.313 & 3.426\tabularnewline
 & (2.980) & (2.726) & (2.691) & (2.661) & (2.722) & (2.697) & (2.645)\tabularnewline
\multirow{2}{*}{Malawi} & 0.452 & 1.431 & 0.870 & 0.787 & 1.484 & 0.582 & 1.031\tabularnewline
 & (1.766) & (1.679) & (1.702) & (1.674) & (1.690) & (1.708) & (1.654)\tabularnewline
\bottomrule
\end{tabular}
\par\end{centering}
\begin{tablenotes}[flushleft]
 \footnotesize 
 \item \textit{Note:} The estimators whose names contain ``\texttt{X}`` (or ``\texttt{X}$^{\beta}$'') indicate that we add the component $X$ (or $(X+1)^{\beta}$) to $\bs{\xi}(X)$ to obtain the calibration estimator. The estimators whose names contain ``\texttt{info}'' indicate that we incorporate information from the other country to form one component of $\bs{\xi}(X)$.
    \end{tablenotes}
  \end{threeparttable}
}
\end{table}

The results presented in Figure~\ref{fig:real data CI} and Table~\ref{tab:empirical-result}
lead to two key observations. First, in line with the theoretical
results (Theorem \ref{thm:efficiency comparison thm}), the standard
errors for the \texttt{cal\_info\_X\_X}$^{\beta}$ estimator are the
lowest among all estimators in both Uganda and Malawi. For example,
the standard errors of \texttt{cal\_info\_X\_X}$^{\beta}$ ATE estimates
are 11.2\% and 6.3\% smaller than those of the stratified difference-in-means
ATE estimates in Uganda and Malawi, respectively. Second, in both
countries, all ATE estimates are statistically insignificant, suggesting
that expanding access to basic bank accounts does not lead to a significant
increase in total savings on average. This finding is consistent with
the results of \citet{dupas2018Bankinga}.

%%%%%%%%%%%%%%%%%%%%%

\putbib 
\end{bibunit}

\newpage
\setcounter{page}{1}
{\setstretch{1.7} 
\begin{center}
    {\bf \Large 
    Supplementary Materials for ``Integrating Heterogeneous Information in Randomized Experiments: A Unified Calibration Framework''}
\end{center}
\bigskip
}

\begin{center}
\large
	Wei Ma \hspace{1em} Zeqi Wu \hspace{1em} Zheng Zhang \\ 
    Institute of Statistics and Big Data, Renmin University of China
\end{center}

\doublespacing

\bigskip
This supplementary material includes appendices containing additional simulation results and proofs for the main paper.
\appendix

\renewcommand{\thesection}{\Alph{section}}
\titleformat{\section}
{\normalfont\Large\bfseries}
{Appendix \thesection}{1em}{}

\begin{bibunit}
\section{Some useful lemmas}
\begin{lem}
\label{lem:LLN}Suppose that Assumptions \ref{assu:independent sampling}-\ref{assu:treatment assignment}
hold and $\frac{\log(2K)}{n\inf_{1\leq k\leq K}p_{[k]}}\to0$ as $n\to\infty$.
Then 
\[
\left|\frac{n_{[k]}}{n}-p_{[k]}\right|=\left|\frac{1}{n}\sum_{i=1}^{n}\1(B_{i}=k)-p_{[k]}\right|\leq\sqrt{p_{[k]}}O_{P}\left(\sqrt{\frac{\log(2K)}{n}}\right)=p_{[k]}o_{P}(1)
\]
and 
\[
\left|\frac{n}{n_{[k]}}-\frac{1}{p_{[k]}}\right|=\frac{1}{p_{[k]}}o_{P}(1),
\]
where the random variables $O_{P}\left(\sqrt{\frac{\log(2K)}{n}}\right)$
and $o_{P}(1)$ do not depend on $k$. In addition, we also have 
\[
\left|\frac{1}{n}\sum_{i=1}^{n}A_{i}\1(B_{i}=k)-\pi_{[k]}p_{[k]}\right|\leq p_{[k]}o_{P}(1),
\]
where the random variable $o_{P}(1)$ does not depend on $k$.
\end{lem}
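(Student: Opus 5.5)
The first claim is a maximal concentration bound for $K$ binomial counts, which I will get from Bernstein's inequality plus a union bound. For each $k$, the variables $\1(B_i=k)$, $i=1,\ldots,n$, are i.i.d.\ Bernoulli$(p_{[k]})$ by Assumption~\ref{assu:independent sampling}. Their centered sum has variance $np_{[k]}(1-p_{[k]})\le np_{[k]}$, and the summands are bounded by $1$. Bernstein gives, for $t>0$,
\[
P\Bigl(\Bigl|\tfrac{n_{[k]}}{n}-p_{[k]}\Bigr|>t\Bigr)\le 2\exp\Bigl(-\frac{nt^{2}/2}{p_{[k]}+t/3}\Bigr).
\]
I will take $t=t_k:=M\sqrt{p_{[k]}\log(2K)/n}$ for a large constant $M$. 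By hypothesis $\log(2K)/(np_{[k]})\to0$ uniformly in $k$, so $t_k/3\le p_{[k]}$ for large $n$. The exponent is then at least $M^{2}\log(2K)/4$, and a union bound over $k$ gives total probability at most $2K(2K)^{-M^{2}/4}\le 2(2K)^{1-M^2/4}$, which is small for large $M$. Hence
\[
\max_k \frac{|n_{[k]}/n-p_{[k]}|}{\sqrt{p_{[k]}}}=O_P\bigl(\sqrt{\log(2K)/n}\bigr),
\]
and this $O_P$ term does not depend on $k$. Dividing by $\sqrt{p_{[k]}}$ once more, $\sqrt{\log(2K)/(np_{[k]})}\le\sqrt{\log(2K)/(n\inf_k p_{[k]})}\to0$. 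This gives the $p_{[k]}o_P(1)$ form with a $k$-free $o_P(1)$.

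The reciprocal bound follows by writing $n_{[k]}/n=p_{[k]}(1+\delta_k)$ with $\max_k|\delta_k|=o_P(1)$. Then
\[
\Bigl|\tfrac{n}{n_{[k]}}-\tfrac{1}{p_{[k]}}\Bigr|=\frac{1}{p_{[k]}}\,\frac{|\delta_k|}{1+\delta_k}\le\frac{1}{p_{[k]}}\,\frac{\max_j|\delta_j|}{1-\max_j|\delta_j|},
\]
and the last ratio is $o_P(1)$ and free of $k$. On the event $\max_j|\delta_j|<1/2$, every $n_{[k]}>0$, so the left side is well defined.

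For the treated counts I will decompose
\[
\frac1n\sum_{i}A_i\1(B_i=k)=\frac{n_{1[k]}}{n}=\pi_{n[k]}\frac{n_{[k]}}{n}.
\]
This gives
\[
\Bigl|\pi_{n[k]}\tfrac{n_{[k]}}{n}-\pi_{[k]}p_{[k]}\Bigr|\le|\pi_{n[k]}-\pi_{[k]}|\tfrac{n_{[k]}}{n}+\pi_{[k]}\Bigl|\tfrac{n_{[k]}}{n}-p_{[k]}\Bigr|.
\]
In the first term I use $n_{[k]}/n\le p_{[k]}(1+o_P(1))$ from the previous step and $\sup_k|\pi_{n[k]}-\pi_{[k]}|=o_P(1)$ from Assumption~\ref{assu:treatment assignment}(2). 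In the second term I use $\pi_{[k]}<1$ together with the first claim. Both bounds are uniform in $k$, so their sum is $p_{[k]}o_P(1)$ with a $k$-free $o_P(1)$. The assignment dependence under CAR causes no difficulty here, because it only enters through the assumed uniform convergence of $\pi_{n[k]}$.

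The main obstacle is uniformity over a possibly growing $K$ with very small $p_{[k]}$. A Chebyshev-type bound would lose a factor of $K$, and Hoeffding would lose the $\sqrt{p_{[k]}}$ scaling. That is why I use Bernstein with a variance-adaptive threshold and check that the linear term $t/3$ is dominated under $\log(2K)/(n\inf_k p_{[k]})\to0$.
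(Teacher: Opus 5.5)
Your proposal is correct and follows essentially the same route as the paper. Both arguments use an exponential binomial tail bound at the $\sqrt{p_{[k]}}$ scale with a union bound over $k$. The paper invokes the multiplicative Chernoff bound (Vershynin, Exercise~2.3.5) in its small-deviation regime, while you use Bernstein and check directly that the linear term is dominated, which amounts to the same thing. The reciprocal step and the triangle-inequality split of $n_{1[k]}/n=\pi_{n[k]}n_{[k]}/n$ then match the paper's up to trivial rearrangement.
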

\begin{proof}
By Chernoff’s inequality (see \citet[Exercise 2.3.5]{vershynin2018HighDimensional}),
we have 
\begin{align*}
P\left(\frac{\left|\frac{1}{n}\sum_{i=1}^{n}\1(B_{i}=k)-p_{[k]}\right|}{\sqrt{p_{[k]}}}\geq\delta\right) & \leq2\exp\left(-c\delta^{2}n\right)
\end{align*}
for any $0<\delta<\sqrt{p_{[k]}}$. Then, it follows from the union
bound that 
\begin{align*}
P\left(\sup_{1\leq k\leq K}\frac{\left|\frac{1}{n}\sum_{i=1}^{n}\1(B_{i}=k)-p_{[k]}\right|}{\sqrt{p_{[k]}}}\geq\delta\right) & \leq2K\exp\left(-c\delta^{2}n\right)
\end{align*}
for any $0<\delta<\min_{1\leq k\leq K}\sqrt{p_{[k]}}$, which leads
to 
\begin{align*}
P\left(\sup_{1\leq k\leq K}\frac{\left|\frac{1}{n}\sum_{i=1}^{n}\1(B_{i}=k)-p_{[k]}\right|}{\sqrt{p_{[k]}}}\geq\delta\sqrt{\frac{\log(2K)}{n}}\right) & \leq\exp\left(\log(2K)\left\{ 1-c\delta^{2}\right\} \right)
\end{align*}
for any $0<\delta<\sqrt{\frac{n}{\log(2K)}}\min_{1\leq k\leq K}\sqrt{p_{[k]}}$.
Let $0<\epsilon<2$ be any number. Since $\frac{\log(2K)}{n\inf_{1\leq k\leq K}p_{[k]}}\to0$,
there exists an integer $N_{0}$ such that 
\[
\sqrt{\frac{1-\log\left(\frac{\epsilon}{2}\right)}{c}}<\sqrt{\frac{n}{\log(2K)}}\min_{1\leq k\leq K}\sqrt{p_{[k]}}
\]
for all $n\geq N_{0}$. Let $\delta_{\epsilon}=\sqrt{\frac{1-\log\left(\frac{\epsilon}{2}\right)}{c}}$,
then we have $1-c\delta_{\epsilon}^{2}=\log\left(\frac{\epsilon}{2}\right)<0$
and
\begin{align*}
 & P\left(\sup_{1\leq k\leq K}\frac{\left|\frac{1}{n}\sum_{i=1}^{n}\1(B_{i}=k)-p_{[k]}\right|}{\sqrt{p_{[k]}}}\geq\delta_{\epsilon}\sqrt{\frac{\log(2K)}{n}}\right)\\
\leq & \exp\left(\log(2K)\left\{ 1-c\delta_{\epsilon}^{2}\right\} \right)\leq\exp\left(\log(2)\left\{ 1-c\delta_{\epsilon}^{2}\right\} \right)=\epsilon.
\end{align*}
This implies that 
\[
\sup_{1\leq k\leq K}\frac{\left|\frac{1}{n}\sum_{i=1}^{n}\1(B_{i}=k)-p_{[k]}\right|}{\sqrt{p_{[k]}}}=O_{P}\left(\sqrt{\frac{\log(2K)}{n}}\right)
\]
and thus
\[
\left|\frac{n_{[k]}}{n}-p_{[k]}\right|=\left|\frac{1}{n}\sum_{i=1}^{n}\1(B_{i}=k)-p_{[k]}\right|\leq\sqrt{p_{[k]}}O_{P}\left(\sqrt{\frac{\log(2K)}{n}}\right),
\]
where the random variable $O_{P}\left(\sqrt{\frac{\log(2K)}{n}}\right)$
does not depend on $k$. Note that 
\begin{align*}
 & \frac{1}{\sqrt{p_{[k]}}}O_{P}\left(\sqrt{\frac{\log(2K)}{n}}\right)\leq\frac{1}{\min_{1\leq k\leq K}\sqrt{p_{[k]}}}O_{P}\left(\sqrt{\frac{\log(2K)}{n}}\right)\\
= & O_{P}\left(\sqrt{\frac{\log(2K)}{\min_{1\leq k\leq K}\sqrt{p_{[k]}}n}}\right)=o_{P}(1).
\end{align*}
We also have 
\[
\sqrt{p_{[k]}}O_{P}\left(\sqrt{\frac{\log(2K)}{n}}\right)=p_{[k]}\frac{1}{\sqrt{p_{[k]}}}O_{P}\left(\sqrt{\frac{\log(2K)}{n}}\right)=p_{[k]}o_{P}(1).
\]

Furthermore, we have 
\begin{align*}
\left|\frac{n}{n_{[k]}}-\frac{1}{p_{[k]}}\right| & =\frac{1}{p_{[k]}}\left|\frac{\frac{n_{[k]}}{n}-p_{[k]}}{\frac{n_{[k]}}{n}}\right|=\frac{1}{p_{[k]}}\left|\frac{p_{[k]}o_{P}(1)}{p_{[k]}+p_{[k]}o_{P}(1)}\right|=\frac{1}{p_{[k]}}\left|\frac{o_{P}(1)}{1+o_{P}(1)}\right|=\frac{1}{p_{[k]}}o_{P}(1)
\end{align*}
where the random variable $o_{P}(1)$ does not depend on $k$ and
the second step follows from $\left|\frac{n_{[k]}}{n}-p_{[k]}\right|=p_{[k]}o_{P}(1)$.

The last conclusion follows from
\begin{align*}
 & \left|\frac{1}{n}\sum_{i=1}^{n}A_{i}\1(B_{i}=k)-\pi_{[k]}p_{[k]}\right|=\left|\frac{1}{n}\sum_{i=1}^{n}\1(B_{i}=k)\frac{\sum_{i=1}^{n}A_{i}\1(B_{i}=k)}{\sum_{i=1}^{n}\1(B_{i}=k)}-\pi_{[k]}p_{[k]}\right|\\
\leq & \left|\frac{1}{n}\sum_{i=1}^{n}\1(B_{i}=k)-p_{[k]}\right|\frac{\sum_{i=1}^{n}A_{i}\1(B_{i}=k)}{\sum_{i=1}^{n}\1(B_{i}=k)}+\left|\frac{\sum_{i=1}^{n}A_{i}\1(B_{i}=k)}{\sum_{i=1}^{n}\1(B_{i}=k)}-\pi_{[k]}\right|p_{[k]}\\
\leq & p_{[k]}o_{P}(1),
\end{align*}
where the last step follows from $\left|\frac{n_{[k]}}{n}-p_{[k]}\right|\leq p_{[k]}o_{P}(1)$
and Assumption \ref{assu:treatment assignment}.
\end{proof}
\begin{lem}
\label{lem:uniform LLN}Suppose that Assumptions \ref{assu:independent sampling}-\ref{assu:treatment assignment}
and \ref{assu:conditions on =00005Cxi-diverging xi} hold. Then for
every $k=1,\ldots,K$, it holds that (i)
\[
\left\Vert \frac{1}{n}\sum_{i=1}^{n}\1(B_{i}=k)\bs{\xi}_{n}^{*}(\bs X_{i})-\e\left[\1(B_{i}=k)\bs{\xi}_{n}^{*}(\bs X_{i})\right]\right\Vert \leq\sqrt{p_{[k]}}O_{P}\left(\sqrt{\frac{r_{n}\log(2Kd)}{n}}\right)
\]
where the random variable $O_{P}\left(\sqrt{\frac{r_{n}\log(2Kd)}{n}}\right)$
does not depend on $k$; (ii)
\[
\left\Vert \frac{1}{n}\sum_{i=1}^{n}A_{i}\1(B_{i}=k)\widetilde{\bs{\xi}}_{n}^{*}(\bs X_{i})\right\Vert \leq\sqrt{p_{[k]}}O_{P}\left(\sqrt{\frac{r_{n}\log(2Kd)}{n}}\right)
\]
where the random variable $O_{P}\left(\sqrt{\frac{r_{n}\log(2Kd)}{n}}\right)$
does not depend on $k$; (iii) 
\[
\left\Vert \frac{1}{n}\sum_{i=1}^{n}\1(B_{i}=k)\bs{\xi}_{n}^{*}(\bs X_{i})\bs{\xi}_{n}^{*}(\bs X_{i})^{\trans}-\e\left[\1(B_{i}=k)\bs{\xi}_{n}^{*}(\bs X_{i})\bs{\xi}_{n}^{*}(\bs X_{i})^{\trans}\right]\right\Vert \leq p_{[k]}o_{P}(1),
\]
where the random variable $o_{P}(1)$ does not depend on $k$; and
(iv)
\[
\left\Vert \frac{1}{n}\sum_{i=1}^{n}\bs{\Xi}_{i,[k]}^{*}\bs{\Xi}_{i,[k]}^{*\trans}-\mathbf{\Sigma}_{[k]}^{\mathcal{C}_{n}}\right\Vert \leq p_{[k]}o_{P}(1),
\]
where the random variable $o_{P}(1)$ does not depend on $k$;
\end{lem}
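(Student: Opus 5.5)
The plan is to prove each part with a matrix (or vector) Bernstein inequality for the relevant stratum-wise sum, take a union bound over the $K$ strata, and then use Lemma~\ref{lem:LLN} to turn $n_{[k]}/n$ into $p_{[k]}$ uniformly in $k$.

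\textbf{Part (i).} First reduce the dimension. Let $\mathbf P_k$ be the orthogonal projector onto the range of $\e[\bs{\xi}_n^*\bs{\xi}_n^{*\trans}\mid B=k]$, which has rank at most $r_n$. Then $\bs{\xi}_n^*(\bs X_i)=\mathbf P_k\bs{\xi}_n^*(\bs X_i)$ almost surely on $\{B_i=k\}$, so the centered summand $\1(B_i=k)\bs{\xi}_n^*(\bs X_i)-\e[\cdot]$ lives in a subspace of dimension at most $r_n+1$.

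Its variance proxy is bounded by
\[
\e\bigl[\1(B_i=k)\|\bs{\xi}_n^*\|^2\bigr]=p_{[k]}\tr\bigl(\e[\bs{\xi}_n^*\bs{\xi}_n^{*\trans}\mid B=k]\bigr)\le C p_{[k]} r_n,
\]
using the singular value bound in Assumption~\ref{assu:conditions on =00005Cxi-diverging xi}(1)(b). Each summand is bounded by $2\zeta_n$. Vector Bernstein, applied via the $(d+1)\times(d+1)$ Hermitian dilation, gives a tail $\sqrt{p_{[k]}r_n t/n}+\zeta_n t/n$ with probability at least $1-2(d+1)e^{-t}$.

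Take $t\asymp\log(2Kd)$ and union bound over $k$. The bound becomes $\sqrt{p_{[k]}}\{\sqrt{r_n\log(2Kd)/n}+\zeta_n\log(2Kd)/(n\sqrt{p_{[k]}})\}$. The second term is $\le\sqrt{r_n\log(2Kd)/n}$ times $\sqrt{\zeta_n^2\log(2Kd)/(n p_{[k]} r_n)}$. By the rate $\zeta_n^2\log(2Kd)/(n\inf_k p_{[k]})\to0$ in (1)(a), this factor is $o(1)$ whenever $r_n\ge1$.

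\textbf{Part (ii).} Here the $A_i$ are dependent, so I condition on $(A^{(n)},B^{(n)})$. By Assumption~\ref{assu:treatment assignment}(1), given these the $\bs X_i$ with $B_i=k$ are i.i.d.\ from the stratum-$k$ law, and $\widetilde{\bs{\xi}}_n^*$ has conditional mean zero. So $\sum_i A_i\1(B_i=k)\widetilde{\bs{\xi}}_n^*(\bs X_i)$ is a sum of $n_{1[k]}$ conditionally independent mean-zero vectors. The same Bernstein bound holds conditionally with $n_{1[k]}\le n_{[k]}$ in place of $np_{[k]}$. Since the bound does not depend on the conditioning, it also holds unconditionally. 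Finally, $n_{[k]}/n\le 2p_{[k]}$ uniformly with probability tending to one by Lemma~\ref{lem:LLN}, whose hypothesis $\log(2K)/(n\inf p_{[k]})\to0$ is implied by (1)(a).

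\textbf{Parts (iii) and (iv).} Apply the matrix Bernstein inequality to $\1(B_i=k)\bs{\xi}_n^*\bs{\xi}_n^{*\trans}$, again restricted to the $r_n$-dimensional range. Summands have operator norm at most $\zeta_n^2$. The variance proxy is bounded by $\zeta_n^2\|\e[\1(B_i=k)\bs{\xi}_n^*\bs{\xi}_n^{*\trans}]\|\le C\zeta_n^2 p_{[k]}$. The resulting deviation is
\[
p_{[k]}\Bigl\{\sqrt{\zeta_n^2\log(2Kd)/(np_{[k]})}+\zeta_n^2\log(2Kd)/(np_{[k]})\Bigr\}=p_{[k]}o_P(1)
\]
uniformly in $k$, again by (1)(a).

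Part (iv) uses the same argument, but $\bs{\Xi}^*_{i,[k]}$ and $\mathbf\Sigma^{\mathcal C_n}_{[k]}$ are defined only in the appendix. I expect $\bs{\Xi}^*_{i,[k]}$ to be a vector like $(A_i-\pi_{n[k]})\1(B_i=k)\widetilde{\bs{\xi}}^*_n(\bs X_i)$, and $\mathbf\Sigma^{\mathcal C_n}_{[k]}$ its conditional second moment given $\mathcal C_n=(A^{(n)},B^{(n)})$. With that reading, I would argue conditionally on $\mathcal C_n$ exactly as in (ii). Replacing $\widetilde{\bs{\xi}}^*_n$ by its centered-at-sample-mean version produces cross terms that are products of (i)/(ii)-type errors, and these are $p_{[k]}O_P(r_n\log(2Kd)/n)=p_{[k]}o_P(1)$.

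\textbf{Main obstacle.} The hardest step is the conditional Bernstein argument for (ii) and (iv). One must check that the bound holds uniformly in $k$ with a random variable that does not depend on $k$, while $n_{[k]}$ itself is random. My first approach is to intersect with the event $\{n_{[k]}\le 2np_{[k]}\ \forall k\}$ from Lemma~\ref{lem:LLN} before applying the union bound.
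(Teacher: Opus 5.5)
Your proposal is correct and follows the paper's route: matrix Bernstein (Tropp's rectangular version) with the trace bound $\tr\{\e[\bs{\xi}_n^*\bs{\xi}_n^{*\trans}\mid B=k]\}\le Cr_n$, a union bound over the $K$ strata, conditioning on $\mathcal C_n$ for (ii) and (iv), and Lemma~\ref{lem:LLN} to replace $n_{[k]}/n$ by $p_{[k]}$ uniformly in $k$; your guess for $\bs{\Xi}^*_{i,[k]}$ and $\mathbf{\Sigma}^{\mathcal C_n}_{[k]}$ matches the paper's definitions. The differences are minor. Writing the Bernstein tail in the inverted form $\sqrt{Vt}+Lt$ means you only need the upper bound $n_{[k]}\le 2np_{[k]}$, whereas the paper restricts the range of $t$ and therefore works on the event $\inf_k n_{[k]}/(np_{[k]})\ge 1/4$. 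The projection onto the range and the sample-mean-centering cross terms in (iv) are unnecessary: the trace bound already supplies the $r_n$ factor, and $\bs{\Xi}^*_{i,[k]}$ is centered at the population stratum mean, so no cross terms arise.
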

\begin{proof}
\textbf{(i)} Let 
\[
\bs Z_{i}=\1(B_{i}=k)\bs{\xi}_{n}^{*}(\bs X_{i})-\e\left[\1(B_{i}=k)\bs{\xi}_{n}^{*}(\bs X_{i})\right]
\]
and 
\[
\bs S_{n}=\sum_{i=1}^{n}\bs Z_{i}.
\]
Then by Assumption \ref{assu:conditions on =00005Cxi-diverging xi}
it holds that 
\[
\left\Vert \bs Z_{i}\right\Vert \leq2\zeta_{n}
\]
for all $1\leq i\leq n$ and 
\begin{align*}
V & :=\max\left\{ \left\Vert \e\left[\bs S_{n}\bs S_{n}^{\trans}\right]\right\Vert ,\left\Vert \e\left[\bs S_{n}^{\trans}\bs S_{n}\right]\right\Vert \right\} =\max\left\{ \left\Vert \sum_{i=1}^{n}\e\left[\bs Z_{i}\bs Z_{i}^{\trans}\right]\right\Vert ,\left\Vert \sum_{i=1}^{n}\e\left[\bs Z_{i}^{\trans}\bs Z_{i}\right]\right\Vert \right\} \\
 & \leq\sum_{i=1}^{n}\e\left[\left\Vert \bs Z_{i}\right\Vert ^{2}\right]\leq\sum_{i=1}^{n}\e\left[\1(B_{i}=k)\left\Vert \bs{\xi}_{n}^{*}(\bs X_{i})\right\Vert ^{2}\right]=np_{[k]}\tr\left\{ \e\left[\bs{\xi}_{n}^{*}(\bs X_{i})\bs{\xi}_{n}^{*}(\bs X_{i})^{\trans}\mid B_{i}=k\right]\right\} \\
 & \leq Cnp_{[k]}r_{n}.
\end{align*}
It follows from the Bernstain's inequality for matrices (see \citet[Theorem 1.6.2]{tropp2015Introduction})
that
\[
P\left(\left\Vert \bs S_{n}\right\Vert \geq t\right)\leq(d+1)\exp\left(-\frac{t^{2}/2}{V+2\zeta_{n}t/3}\right)\leq(d+1)\exp\left(-\frac{t^{2}/2}{Cnp_{[k]}r_{n}+2\zeta_{n}t/3}\right)
\]
for all $t\geq0$. Thus, 
\begin{align*}
 & P\left(\frac{\left\Vert \frac{1}{n}\sum_{i=1}^{n}\1(B_{i}=k)\bs{\xi}_{n}^{*}(\bs X_{i})-\e\left[\1(B_{i}=k)\bs{\xi}_{n}^{*}(\bs X_{i})\right]\right\Vert }{\sqrt{p_{[k]}}}\geq t\right)\\
\leq & (d+1)\exp\left(-\frac{np_{[k]}t^{2}/2}{Cp_{[k]}r_{n}+2\zeta_{n}\sqrt{p_{[k]}}t/3}\right)\\
\leq & (d+1)\exp\left(-\frac{nt^{2}}{4Cr_{n}}\right)
\end{align*}
for all $0\leq t\leq3Cr_{n}\sqrt{p_{[k]}}/(2\zeta_{n})$. By the union
bound we have 
\begin{align*}
 & P\left(\sup_{1\leq k\leq K}\frac{\left\Vert \frac{1}{n}\sum_{i=1}^{n}\1(B_{i}=k)\bs{\xi}_{n}^{*}(\bs X_{i})-\e\left[\1(B_{i}=k)\bs{\xi}_{n}^{*}(\bs X_{i})\right]\right\Vert }{\sqrt{p_{[k]}}}\geq\delta\sqrt{\frac{r_{n}\log(K(d+1))}{n}}\right)\\
\leq & K(d+1)\exp\left(-\frac{\delta^{2}}{4C}\log(K(d+1))\right)=\exp\left(\log(K(d+1))\left\{ 1-\frac{\delta^{2}}{4C}\right\} \right)
\end{align*}
for any 
\[
0\leq\delta\leq\sqrt{\frac{n}{r_{n}\log(K(d+1))}}\frac{3Cr_{n}\min_{1\leq k\leq K}\sqrt{p_{[k]}}}{2\zeta_{n}}=\frac{3C}{2}\sqrt{\frac{nr_{n}\inf_{1\leq k\leq K}p_{[k]}}{\zeta_{n}^{2}\log(K(d+1))}}.
\]
Let $0<\epsilon<2$ be any number. Since 
\[
\frac{3C}{2}\sqrt{\frac{nr_{n}\inf_{1\leq k\leq K}p_{[k]}}{\zeta_{n}^{2}\log(K(d+1))}}\geq\frac{3C}{2}\sqrt{\frac{n\inf_{1\leq k\leq K}p_{[k]}}{\zeta_{n}^{2}\log(2Kd)}}\to\infty,
\]
there exists an integer $N_{0}$ such that 
\[
\sqrt{4C\left\{ 1-\log\left(\frac{\epsilon}{2}\right)\right\} }<\frac{3C}{2}\sqrt{\frac{nr_{n}\inf_{1\leq k\leq K}p_{[k]}}{\zeta_{n}^{2}\log(K(d+1))}}
\]
for all $n\geq N_{0}$. Let $\delta_{\epsilon}=\sqrt{4C\left\{ 1-\log\left(\frac{\epsilon}{2}\right)\right\} }$,
then we have $1-\frac{\delta_{\epsilon}^{2}}{4C}=\log\left(\frac{\epsilon}{2}\right)<0$
and
\begin{align*}
 & P\left(\sup_{1\leq k\leq K}\frac{\left\Vert \frac{1}{n}\sum_{i=1}^{n}\1(B_{i}=k)\bs{\xi}_{n}^{*}(\bs X_{i})-\e\left[\1(B_{i}=k)\bs{\xi}_{n}^{*}(\bs X_{i})\right]\right\Vert }{\sqrt{p_{[k]}}}\geq\delta_{\epsilon}\sqrt{\frac{r_{n}\log(K(d+1))}{n}}\right)\\
\leq & \exp\left(\log(K(d+1))\left\{ 1-\frac{\delta_{\epsilon}^{2}}{4C}\right\} \right)\leq\exp\left(\log(2)\log\left(\frac{\epsilon}{2}\right)\right)=\epsilon.
\end{align*}
This implies that 
\[
\sup_{1\leq k\leq K}\frac{\left\Vert \frac{1}{n}\sum_{i=1}^{n}\1(B_{i}=k)\bs{\xi}_{n}^{*}(\bs X_{i})-\e\left[\1(B_{i}=k)\bs{\xi}_{n}^{*}(\bs X_{i})\right]\right\Vert }{\sqrt{p_{[k]}}}=O_{P}\left(\sqrt{\frac{r_{n}\log(K(d+1))}{n}}\right)
\]
and thus 
\[
\left\Vert \frac{1}{n}\sum_{i=1}^{n}\1(B_{i}=k)\bs{\xi}_{n}^{*}(\bs X_{i})-\e\left[\1(B_{i}=k)\bs{\xi}_{n}^{*}(\bs X_{i})\right]\right\Vert \leq\sqrt{p_{[k]}}O_{P}\left(\sqrt{\frac{r_{n}\log(2Kd)}{n}}\right),
\]
where the random variable $O_{P}\left(\sqrt{\frac{r_{n}\log(2Kd)}{n}}\right)$
does not depend on $k$.

\textbf{(ii)} Let $\mathcal{C}_{n}$ denote the $\sigma$-algebra
generated by $(A^{(n)},B^{(n)})$ and $\e_{\mathcal{C}_{n}}\left[\cdot\right]:=\e\left[\cdot\mid\mathcal{C}_{n}\right]$,
$P_{\mathcal{C}_{n}}\left(\cdot\right):=P\left(\cdot\mid\mathcal{C}_{n}\right)$.
Then $\bs{\Xi}_{i,[k]}^{*}\bs{\Xi}_{i,[k]}^{*\trans}$, $i=1,\ldots,n$,
are independent conditional on $\mathcal{C}_{n}$. Now, we fix any
value of $(A^{(n)},B^{(n)})$. Let 
\[
\bs Z_{i}=A_{i}\1(B_{i}=k)\widetilde{\bs{\xi}}_{n}^{*}(\bs X_{i})-\e_{\mathcal{C}_{n}}\left[A_{i}\1(B_{i}=k)\widetilde{\bs{\xi}}_{n}^{*}(\bs X_{i})\right]=A_{i}\1(B_{i}=k)\widetilde{\bs{\xi}}_{n}^{*}(\bs X_{i})
\]
and 
\[
\bs S_{n}=\sum_{i=1}^{n}\bs Z_{i}.
\]
Then by Assumption \ref{assu:conditions on =00005Cxi-diverging xi}
it holds that 
\[
\left\Vert \bs Z_{i}\right\Vert \leq2\zeta_{n}
\]
for all $1\leq i\leq n$ and 
\begin{align*}
V & :=\max\left\{ \left\Vert \e_{\mathcal{C}_{n}}\left[\bs S_{n}\bs S_{n}^{\trans}\right]\right\Vert ,\left\Vert \e_{\mathcal{C}_{n}}\left[\bs S_{n}^{\trans}\bs S_{n}\right]\right\Vert \right\} =\max\left\{ \left\Vert \sum_{i=1}^{n}\e_{\mathcal{C}_{n}}\left[\bs Z_{i}\bs Z_{i}^{\trans}\right]\right\Vert ,\left\Vert \sum_{i=1}^{n}\e_{\mathcal{C}_{n}}\left[\bs Z_{i}^{\trans}\bs Z_{i}\right]\right\Vert \right\} \\
 & \leq\sum_{i=1}^{n}\e_{\mathcal{C}_{n}}\left[\left\Vert \bs Z_{i}\right\Vert ^{2}\right]=\sum_{i=1}^{n}\e_{\mathcal{C}_{n}}\left[A_{i}^{2}\1(B_{i}=k)\left\Vert \widetilde{\bs{\xi}}_{n}^{*}(\bs X_{i})\right\Vert ^{2}\right]\leq\sum_{i=1}^{n}\1(B_{i}=k)\e\left[\left\Vert \widetilde{\bs{\xi}}_{n}^{*}(\bs X_{i})\right\Vert ^{2}\mid B_{i}=k\right]\\
 & \leq n_{[k]}\e\left[\left\Vert \bs{\xi}_{n}^{*}(\bs X_{i})\right\Vert ^{2}\mid B_{i}=k\right]=n_{[k]}\tr\left\{ \e\left[\bs{\xi}_{n}^{*}(\bs X_{i})\bs{\xi}_{n}^{*}(\bs X_{i})^{\trans}\mid B_{i}=k\right]\right\} \leq Cr_{n}n_{[k]}
\end{align*}
a.s. It follows from the Bernstain's inequality for matrices (see
\citet[Theorem 1.6.2]{tropp2015Introduction}) that
\[
P_{\mathcal{C}_{n}}\left(\left\Vert \bs S_{n}\right\Vert \geq t\right)\leq2d\exp\left(-\frac{t^{2}/2}{V+2\zeta_{n}t/3}\right)\leq2d\exp\left(-\frac{t^{2}/2}{Cr_{n}n_{[k]}+2\zeta_{n}t/3}\right)
\]
a.s. for all $t\geq0$. Thus, 
\begin{align*}
 & P_{\mathcal{C}_{n}}\left(\frac{\left\Vert \frac{1}{n}\sum_{i=1}^{n}A_{i}\1(B_{i}=k)\widetilde{\bs{\xi}}_{n}^{*}(\bs X_{i})\right\Vert }{\sqrt{n_{[k]}/n}}\geq t\right)\\
\leq & 2d\exp\left(-\frac{nt^{2}/2}{Cr_{n}+2(\sqrt{n_{[k]}/n})^{-1}\zeta_{n}t/3}\right)\leq2d\exp\left(-\frac{nt^{2}}{4Cr_{n}}\right)
\end{align*}
a.s. for all 
\[
0\leq t\leq\frac{3}{2}C\frac{r_{n}}{\zeta_{n}}\sqrt{\frac{n_{[k]}}{n}}.
\]
By the union bound we have 
\begin{align*}
 & P_{\mathcal{C}_{n}}\left(\sup_{1\leq k\leq K}\frac{\left\Vert \frac{1}{n}\sum_{i=1}^{n}A_{i}\1(B_{i}=k)\widetilde{\bs{\xi}}_{n}^{*}(\bs X_{i})\right\Vert }{\sqrt{n_{[k]}/n}}\geq\delta\sqrt{\frac{r_{n}\log(2Kd)}{n}}\right)\\
\leq & 2Kd\exp\left(-\frac{\delta^{2}}{4C}\log(2Kd)\right)\leq\exp\left(\log(2Kd)\left\{ 1-\frac{\delta^{2}}{4C}\right\} \right)
\end{align*}
a.s. for all 
\[
0\leq\delta\leq\frac{3}{2}C\sqrt{\frac{nr_{n}\min_{1\leq k\leq K}(n_{[k]}/n)}{\zeta_{n}^{2}\log(2Kd)}}.
\]
Let $0<\epsilon<2$ be any number. By Lemma \ref{lem:LLN} we have
$\frac{n_{[k]}}{np_{[k]}}=1+o_{P}(1)$, where the random variable
$o_{P}(1)$ does not depend on $k$. As a result, the event $E_{n}:=\left\{ 1/4\leq\inf_{1\leq k\leq K}\frac{n_{[k]}}{np_{[k]}}\right\} $
happens with probability $1-\epsilon$ for all $n\geq N_{1}$, where
$N_{1}$ is a finite integer. Then on the event $E_{n}$ we have 
\begin{align*}
 & \frac{3}{2}C\sqrt{\frac{nr_{n}\min_{1\leq k\leq K}(n_{[k]}/n)}{\zeta_{n}^{2}\log(2Kd)}}\geq\frac{3}{2}C\sqrt{\frac{n\min_{1\leq k\leq K}\left\{ (n_{[k]}/np_{[k]})p_{[k]}\right\} }{\zeta_{n}^{2}\log(2Kd)}}\\
\geq & \frac{3}{2}C\sqrt{\frac{n\inf_{1\leq k\leq K}p_{[k]}}{\zeta_{n}^{2}\log(2Kd)}}\sqrt{\min_{1\leq k\leq K}(n_{[k]}/np_{[k]})}\geq\frac{3}{4}C\sqrt{\frac{n\inf_{1\leq k\leq K}p_{[k]}}{\zeta_{n}^{2}\log(2Kd)}}.
\end{align*}
By Assumption \ref{assu:conditions on =00005Cxi-diverging xi}, we
have $\sqrt{\frac{n\inf_{1\leq k\leq K}p_{[k]}}{\zeta_{n}^{2}\log(2Kd)}}\to\infty$,
thus there exists an integer $N_{0}$ such that on the event $E_{n}$
\[
\sqrt{4C\left\{ 1-\log\left(\frac{\epsilon}{2}\right)\right\} }\leq\frac{3}{4}C\sqrt{\frac{n\inf_{1\leq k\leq K}p_{[k]}}{\zeta_{n}^{2}\log(2Kd)}}\leq\frac{3}{2}C\sqrt{\frac{nr_{n}\min_{1\leq k\leq K}(n_{[k]}/n)}{\zeta_{n}^{2}\log(2Kd)}}
\]
for all $n\geq N_{0}$. Let $\delta_{\epsilon}=\sqrt{4C\left\{ 1-\log\left(\frac{\epsilon}{2}\right)\right\} }$,
then we have $1-\frac{\delta_{\epsilon}^{2}}{4C}=\log\left(\frac{\epsilon}{2}\right)<0$
and
\begin{align*}
 & P_{\mathcal{C}_{n}}\left(\sup_{1\leq k\leq K}\frac{\left\Vert \frac{1}{n}\sum_{i=1}^{n}A_{i}\1(B_{i}=k)\widetilde{\bs{\xi}}_{n}^{*}(\bs X_{i})\right\Vert }{\sqrt{n_{[k]}/n}}\geq\delta_{\epsilon}\sqrt{\frac{r_{n}\log(2Kd)}{n}}\right)\\
\leq & \exp\left(\log(2Kd)\left\{ 1-\frac{\delta_{\epsilon}^{2}}{4C}\right\} \right)\leq\exp\left(\log(2)\log\left(\frac{\epsilon}{2}\right)\right)=\epsilon
\end{align*}
for any fixed value of $(A^{(n)},B^{(n)})$ such that $E_{n}$ happens\footnote{This is feasible since $E_{n}$ is measurable with respect to $\mathcal{C}_{n}$.}.
Then, for all $n\geq\max\{N_{0},N_{1}\}$, we have 
\begin{align*}
 & P\left(\sup_{1\leq k\leq K}\frac{\left\Vert \frac{1}{n}\sum_{i=1}^{n}A_{i}\1(B_{i}=k)\widetilde{\bs{\xi}}_{n}^{*}(\bs X_{i})\right\Vert }{\sqrt{n_{[k]}/n}}\geq\delta_{\epsilon}\sqrt{\frac{r_{n}\log(2Kd)}{n}}\right)\\
= & \e\left[P_{\mathcal{C}_{n}}\left(\sup_{1\leq k\leq K}\frac{\left\Vert \frac{1}{n}\sum_{i=1}^{n}\bs{\Xi}_{i,[k]}^{*}\bs{\Xi}_{i,[k]}^{*\trans}-\mathbf{\Sigma}_{[k]}^{\mathcal{C}_{n}}\right\Vert }{\sqrt{n_{[k]}/n}}\geq\delta_{\epsilon}\sqrt{\frac{r_{n}\log(2Kd)}{n}}\right)\right]\\
\leq & \e\left[\epsilon\1(E_{n}\text{ happens})\right]+\e\left[\1(E_{n}\text{does not happen})\right]\leq\epsilon+\epsilon=2\epsilon.
\end{align*}
This implies that 
\[
\sup_{1\leq k\leq K}\frac{\left\Vert \frac{1}{n}\sum_{i=1}^{n}A_{i}\1(B_{i}=k)\widetilde{\bs{\xi}}_{n}^{*}(\bs X_{i})\right\Vert }{\sqrt{n_{[k]}/n}}=O_{P}\left(\sqrt{\frac{r_{n}\log(2Kd)}{n}}\right),
\]
where the random variable $O_{P}\left(\sqrt{\frac{r_{n}\log(2Kd)}{n}}\right)$
does not depend on $k$. By Lemma \ref{lem:LLN}, we have 
\[
\sqrt{n_{[k]}/n}=\sqrt{p_{[k]}}O_{P}(1),
\]
where the random variable $O_{P}(1)$ does not depend on $k$. As
a result, 
\[
\left\Vert \frac{1}{n}\sum_{i=1}^{n}A_{i}\1(B_{i}=k)\widetilde{\bs{\xi}}_{n}^{*}(\bs X_{i})\right\Vert \leq\sqrt{n_{[k]}/n}O_{P}\left(\sqrt{\frac{r_{n}\log(2Kd)}{n}}\right)=\sqrt{p_{[k]}}O_{P}\left(\sqrt{\frac{r_{n}\log(2Kd)}{n}}\right),
\]
where the random variable $O_{P}\left(\sqrt{\frac{r_{n}\log(2Kd)}{n}}\right)$
does not depend on $k$.

\textbf{(iii)} Let 
\[
\bs Z_{i}=\1(B_{i}=k)\bs{\xi}_{n}^{*}(\bs X_{i})\bs{\xi}_{n}^{*}(\bs X_{i})^{\trans}-\e\left[\1(B_{i}=k)\bs{\xi}_{n}^{*}(\bs X_{i})\bs{\xi}_{n}^{*}(\bs X_{i})^{\trans}\right]
\]
and 
\[
\bs S_{n}=\sum_{i=1}^{n}\bs Z_{i}.
\]
Then by Assumption \ref{assu:conditions on =00005Cxi-diverging xi}
it holds that 
\[
\left\Vert \bs Z_{i}\right\Vert \leq2\zeta_{n}^{2}
\]
for all $1\leq i\leq n$ and 
\begin{align*}
V & :=\max\left\{ \left\Vert \e\left[\bs S_{n}\bs S_{n}^{\trans}\right]\right\Vert ,\left\Vert \e\left[\bs S_{n}^{\trans}\bs S_{n}\right]\right\Vert \right\} \\
 & =\max\left\{ \left\Vert \sum_{i=1}^{n}\e\left[\bs Z_{i}\bs Z_{i}^{\trans}\right]\right\Vert ,\left\Vert \sum_{i=1}^{n}\e\left[\bs Z_{i}^{\trans}\bs Z_{i}\right]\right\Vert \right\} =\left\Vert \sum_{i=1}^{n}\e\left[\bs Z_{i}^{2}\right]\right\Vert \\
 & =\left\Vert \sum_{i=1}^{n}\left\{ \e\left[\1(B_{i}=k)\left\{ \bs{\xi}_{n}^{*}(\bs X_{i})\bs{\xi}_{n}^{*}(\bs X_{i})^{\trans}\right\} ^{2}\right]-\left(\e\left[\1(B_{i}=k)\bs{\xi}_{n}^{*}(\bs X_{i})\bs{\xi}_{n}^{*}(\bs X_{i})^{\trans}\right]\right)^{2}\right\} \right\Vert \\
 & \leq n\left\Vert \e\left[\1(B_{i}=k)\left\{ \bs{\xi}_{n}^{*}(\bs X_{i})\bs{\xi}_{n}^{*}(\bs X_{i})^{\trans}\right\} ^{2}\right]\right\Vert +n\left\Vert \e\left[\1(B_{i}=k)\bs{\xi}_{n}^{*}(\bs X_{i})\bs{\xi}_{n}^{*}(\bs X_{i})^{\trans}\right]\right\Vert ^{2}\\
 & \leq2n\left\Vert \e\left[\1(B_{i}=k)\left\{ \bs{\xi}_{n}^{*}(\bs X_{i})\bs{\xi}_{n}^{*}(\bs X_{i})^{\trans}\right\} ^{2}\right]\right\Vert \\
 & =2np_{[k]}\left\Vert \e\left[\bs{\xi}_{n}^{*}(\bs X_{i})\bs{\xi}_{n}^{*}(\bs X_{i})^{\trans}\bs{\xi}_{n}^{*}(\bs X_{i})\bs{\xi}_{n}^{*}(\bs X_{i})^{\trans}\mid B_{i}=k\right]\right\Vert \\
 & \leq2np_{[k]}\zeta_{n}^{2}\left\Vert \e\left[\bs{\xi}_{n}^{*}(\bs X_{i})\bs{\xi}_{n}^{*}(\bs X_{i})^{\trans}\mid B_{i}=k\right]\right\Vert \leq2Cp_{[k]}n\zeta_{n}^{2}.
\end{align*}
It follows from the Bernstain's inequality for matrices (see \citet[Theorem 1.6.2]{tropp2015Introduction})
that
\[
P\left(\left\Vert \bs S_{n}\right\Vert \geq t\right)\leq2d\exp\left(-\frac{t^{2}/2}{V+2\zeta_{n}^{2}t/3}\right)\leq2d\exp\left(-\frac{t^{2}/2}{2Cp_{[k]}n\zeta_{n}^{2}+2\zeta_{n}^{2}t/3}\right)
\]
for all $t\geq0$. Thus, 
\begin{align*}
 & P\left(\frac{\left\Vert \frac{1}{n}\sum_{i=1}^{n}\1(B_{i}=k)\bs{\xi}_{n}^{*}(\bs X_{i})\bs{\xi}_{n}^{*}(\bs X_{i})^{\trans}-\e\left[\1(B_{i}=k)\bs{\xi}_{n}^{*}(\bs X_{i})\bs{\xi}_{n}^{*}(\bs X_{i})^{\trans}\right]\right\Vert }{\sqrt{p_{[k]}}}\geq t\right)\\
\leq & 2d\exp\left(-\frac{nt^{2}/2}{2C\zeta_{n}^{2}+2\sqrt{p_{[k]}^{-1}}\zeta_{n}^{2}t/3}\right)\leq2d\exp\left(-\frac{nt^{2}}{8C\zeta_{n}^{2}}\right).
\end{align*}
for all $0\leq t\leq3C\sqrt{p_{[k]}}.$ By the union bound we have
{\small
\begin{align*}
 & P\left(\sup_{1\leq k\leq K}\frac{\left\Vert \frac{1}{n}\sum_{i=1}^{n}\1(B_{i}=k)\bs{\xi}_{n}^{*}(\bs X_{i})\bs{\xi}_{n}^{*}(\bs X_{i})^{\trans}-\e\left[\1(B_{i}=k)\bs{\xi}_{n}^{*}(\bs X_{i})\bs{\xi}_{n}^{*}(\bs X_{i})^{\trans}\right]\right\Vert }{\sqrt{p_{[k]}}}\geq\delta\sqrt{\frac{\zeta_{n}^{2}\log(2Kd)}{n}}\right)\\
\leq & 2Kd\exp\left(-\frac{\delta^{2}}{8C}\log(2Kd)\right)\leq\exp\left(\log(2Kd)\left\{ 1-\frac{\delta^{2}}{8C}\right\} \right)
\end{align*}
}for all 
\[
0\leq\delta\leq3C\sqrt{\frac{n\inf_{1\leq k\leq K}p_{[k]}}{\zeta_{n}^{2}\log(2Kd)}}.
\]
Let $0<\epsilon<2$ be any number. Since 
\[
3C\sqrt{\frac{n\inf_{1\leq k\leq K}p_{[k]}}{\zeta_{n}^{2}\log(2Kd)}}\to\infty,
\]
there exists an integer $N_{0}$ such that 
\[
\sqrt{8C\left\{ 1-\log\left(\frac{\epsilon}{2}\right)\right\} }<3C\sqrt{\frac{n\inf_{1\leq k\leq K}p_{[k]}}{\zeta_{n}^{2}\log(2Kd)}}
\]
for all $n\geq N_{0}$. Let $\delta_{\epsilon}=\sqrt{8C\left\{ 1-\log\left(\frac{\epsilon}{2}\right)\right\} }$,
then we have $1-\frac{\delta_{\epsilon}^{2}}{8C}=\log\left(\frac{\epsilon}{2}\right)<0$
and{\small
\begin{align*}
 & P\left(\sup_{1\leq k\leq K}\frac{\left\Vert \frac{1}{n}\sum_{i=1}^{n}\1(B_{i}=k)\bs{\xi}_{n}^{*}(\bs X_{i})\bs{\xi}_{n}^{*}(\bs X_{i})^{\trans}-\e\left[\1(B_{i}=k)\bs{\xi}_{n}^{*}(\bs X_{i})\bs{\xi}_{n}^{*}(\bs X_{i})^{\trans}\right]\right\Vert }{\sqrt{p_{[k]}}}\geq\delta\sqrt{\frac{\zeta_{n}^{2}\log(2Kd)}{n}}\right)\\
\leq & \exp\left(\log(2Kd)\left\{ 1-\frac{\delta_{\epsilon}^{2}}{8C}\right\} \right)\leq\exp\left(\log(2)\log\left(\frac{\epsilon}{2}\right)\right)=\epsilon.
\end{align*}
}This implies that 
\begin{align*}
 & \sup_{1\leq k\leq K}\frac{\left\Vert \frac{1}{n}\sum_{i=1}^{n}\1(B_{i}=k)\bs{\xi}_{n}^{*}(\bs X_{i})\bs{\xi}_{n}^{*}(\bs X_{i})^{\trans}-\e\left[\1(B_{i}=k)\bs{\xi}_{n}^{*}(\bs X_{i})\bs{\xi}_{n}^{*}(\bs X_{i})^{\trans}\right]\right\Vert }{\sqrt{p_{[k]}}}\\
= & \sqrt{\frac{\zeta_{n}^{2}\log(2Kd)}{n}}O_{P}(1)
\end{align*}
and thus 
\begin{align*}
 & \left\Vert \frac{1}{n}\sum_{i=1}^{n}\1(B_{i}=k)\bs{\xi}_{n}^{*}(\bs X_{i})\bs{\xi}_{n}^{*}(\bs X_{i})^{\trans}-\e\left[\1(B_{i}=k)\bs{\xi}_{n}^{*}(\bs X_{i})\bs{\xi}_{n}^{*}(\bs X_{i})^{\trans}\right]\right\Vert \\
\leq & p_{[k]}\sqrt{\frac{\zeta_{n}^{2}\log(2Kd)}{n\inf_{1\leq k\leq K}p_{[k]}}}O_{P}(1)=p_{[k]}o_{P}(1),
\end{align*}
where the random variable $o_{P}(1)$ does not depend on $k$ and
the inequality follows from 
\[
\zeta_{n}^{2}\log(2Kd)/(n\inf_{1\leq k\leq K}p_{[k]})\to0
\]
as $n\to\infty$.

\textbf{(iv)} Let $\mathcal{C}_{n}$ denote the $\sigma$-algebra
generated by $(A^{(n)},B^{(n)})$ and $\e_{\mathcal{C}_{n}}\left[\cdot\right]:=\e\left[\cdot\mid\mathcal{C}_{n}\right]$,
$P_{\mathcal{C}_{n}}\left(\cdot\right):=P\left(\cdot\mid\mathcal{C}_{n}\right)$.
Then $\bs{\Xi}_{i,[k]}^{*}\bs{\Xi}_{i,[k]}^{*\trans}$, $i=1,\ldots,n$,
are independent conditional on $\mathcal{C}_{n}$. Now, we fix any
value of $(A^{(n)},B^{(n)})$. Let 
\[
\bs Z_{i}=\bs{\Xi}_{i,[k]}^{*}\bs{\Xi}_{i,[k]}^{*\trans}-\e_{\mathcal{C}_{n}}\left[\bs{\Xi}_{i,[k]}^{*}\bs{\Xi}_{i,[k]}^{*\trans}\right]
\]
and 
\[
\bs S_{n}=\sum_{i=1}^{n}\bs Z_{i}.
\]
Then by Assumption \ref{assu:conditions on =00005Cxi-diverging xi}
it holds that 
\[
\left\Vert \bs Z_{i}\right\Vert \leq4\zeta_{n}^{2}
\]
for all $1\leq i\leq n$ and 
\begin{align*}
V= & \max\left\{ \left\Vert \e_{\mathcal{C}_{n}}\left[\bs S_{n}\bs S_{n}^{\trans}\right]\right\Vert ,\left\Vert \e_{\mathcal{C}_{n}}\left[\bs S_{n}^{\trans}\bs S_{n}\right]\right\Vert \right\} =\left\Vert \sum_{i=1}^{n}\e_{\mathcal{C}_{n}}\left[\bs Z_{i}^{2}\right]\right\Vert \\
= & \biggl\Vert\sum_{i=1}^{n}\e_{\mathcal{C}_{n}}\left[\1(B_{i}=k)(A_{i}-\pi_{n[k]})^{2}\left\{ \widetilde{\bs{\xi}}_{n}^{*}(\bs X_{i})\widetilde{\bs{\xi}}_{n}^{*}(\bs X_{i})^{\trans}\right\} ^{2}\right]\\
 & -\sum_{i=1}^{n}\left\{ \e_{\mathcal{C}_{n}}\left[(A_{i}-\pi_{n[k]})\1(B_{i}=k)\widetilde{\bs{\xi}}_{n}^{*}(\bs X_{i})\widetilde{\bs{\xi}}_{n}^{*}(\bs X_{i})^{\trans}\right]\right\} ^{2}\biggr\Vert\\
\leq & \sum_{i=1}^{n}\left\Vert \e_{\mathcal{C}_{n}}\left[\1(B_{i}=k)(A_{i}-\pi_{n[k]})^{2}\left\{ \widetilde{\bs{\xi}}_{n}^{*}(\bs X_{i})\widetilde{\bs{\xi}}_{n}^{*}(\bs X_{i})^{\trans}\right\} ^{2}\right]\right\Vert \\
 & +\sum_{i=1}^{n}\left\Vert \e_{\mathcal{C}_{n}}\left[(A_{i}-\pi_{n[k]})\1(B_{i}=k)\widetilde{\bs{\xi}}_{n}^{*}(\bs X_{i})\widetilde{\bs{\xi}}_{n}^{*}(\bs X_{i})^{\trans}\right]\right\Vert ^{2}\\
\leq & 2\sum_{i=1}^{n}\left\Vert \e_{\mathcal{C}_{n}}\left[\1(B_{i}=k)(A_{i}-\pi_{n[k]})^{2}\left\{ \widetilde{\bs{\xi}}_{n}^{*}(\bs X_{i})\widetilde{\bs{\xi}}_{n}^{*}(\bs X_{i})^{\trans}\right\} ^{2}\right]\right\Vert \\
\leq & 2\sum_{i=1}^{n}\1(B_{i}=k)\left\Vert \e\left[\left\{ \widetilde{\bs{\xi}}_{n}^{*}(\bs X_{i})\widetilde{\bs{\xi}}_{n}^{*}(\bs X_{i})^{\trans}\right\} ^{2}\mid B_{i}=k\right]\right\Vert \\
\leq & 8\zeta_{n}^{2}\sum_{i=1}^{n}\1(B_{i}=k)\left\Vert \e\left[\widetilde{\bs{\xi}}_{n}^{*}(\bs X_{i})\widetilde{\bs{\xi}}_{n}^{*}(\bs X_{i})^{\trans}\mid B_{i}=k\right]\right\Vert \\
\leq & 8\zeta_{n}^{2}\sum_{i=1}^{n}\1(B_{i}=k)\left\Vert \e\left[\bs{\xi}_{n}^{*}(\bs X_{i})\bs{\xi}_{n}^{*}(\bs X_{i})^{\trans}\mid B_{i}=k\right]\right\Vert \\
\leq & 8C\zeta_{n}^{2}\sum_{i=1}^{n}\1(B_{i}=k)=8C\zeta_{n}^{2}n_{[k]}
\end{align*}
a.s. It follows from the Bernstain's inequality for matrices (see
\citet[Theorem 1.6.2]{tropp2015Introduction}) that
\[
P_{\mathcal{C}_{n}}\left(\left\Vert \bs S_{n}\right\Vert \geq t\right)\leq2d\exp\left(-\frac{t^{2}/2}{V+4\zeta_{n}^{2}t/3}\right)\leq2d\exp\left(-\frac{t^{2}/2}{8C\zeta_{n}^{2}n_{[k]}+4\zeta_{n}^{2}t/3}\right)
\]
a.s. for all $t\geq0$. Thus, 
\begin{align*}
 & P_{\mathcal{C}_{n}}\left(\frac{\left\Vert \frac{1}{n}\sum_{i=1}^{n}\bs{\Xi}_{i,[k]}^{*}\bs{\Xi}_{i,[k]}^{*\trans}-\mathbf{\Sigma}_{[k]}^{\mathcal{C}_{n}}\right\Vert }{\sqrt{n_{[k]}/n}}\geq t\right)=P_{\mathcal{C}_{n}}\left(\frac{\left\Vert \frac{1}{n}\sum_{i=1}^{n}\left\{ \bs{\Xi}_{i,[k]}^{*}\bs{\Xi}_{i,[k]}^{*\trans}-\e_{\mathcal{C}_{n}}\left[\bs{\Xi}_{i,[k]}^{*}\bs{\Xi}_{i,[k]}^{*\trans}\right]\right\} \right\Vert }{\sqrt{n_{[k]}/n}}\geq t\right)\\
\leq & 2d\exp\left(-\frac{nt^{2}/2}{8C\zeta_{n}^{2}+4(\sqrt{n_{[k]}/n})^{-1}\zeta_{n}^{2}t/3}\right)\leq2d\exp\left(-\frac{nt^{2}}{32C\zeta_{n}^{2}}\right)
\end{align*}
a.s. for all 
\[
0\leq t\leq6C\sqrt{n_{[k]}/n}.
\]
By the union bound we have 
\begin{align*}
 & P_{\mathcal{C}_{n}}\left(\sup_{1\leq k\leq K}\frac{\left\Vert \frac{1}{n}\sum_{i=1}^{n}\bs{\Xi}_{i,[k]}^{*}\bs{\Xi}_{i,[k]}^{*\trans}-\mathbf{\Sigma}_{[k]}^{\mathcal{C}_{n}}\right\Vert }{\sqrt{n_{[k]}/n}}\geq\delta\sqrt{\frac{\zeta_{n}^{2}\log(2Kd)}{n}}\right)\\
\leq & 2Kd\exp\left(-\frac{\delta^{2}}{32C}\log(2Kd)\right)\leq\exp\left(\log(2Kd)\left\{ 1-\frac{\delta^{2}}{32C}\right\} \right)
\end{align*}
a.s. for all 
\[
0\leq\delta\leq6C\sqrt{\frac{n\min_{1\leq k\leq K}(n_{[k]}/n)}{\zeta_{n}^{2}\log(2Kd)}}.
\]
Let $0<\epsilon<2$ be any number. By Lemma \ref{lem:LLN} we have
$\frac{n_{[k]}}{np_{[k]}}=1+o_{P}(1)$, where the random variable
$o_{P}(1)$ does not depend on $k$. As a result, the event $E_{n}:=\left\{ 1/4\leq\inf_{1\leq k\leq K}\frac{n_{[k]}}{np_{[k]}}\right\} $
happens with probability $1-\epsilon$ for all $n\geq N_{1}$, where
$N_{1}$ is a finite integer. Then on the event $E_{n}$ we have 
\begin{align*}
 & 6C\sqrt{\frac{n\min_{1\leq k\leq K}(n_{[k]}/n)}{\zeta_{n}^{2}\log(2Kd)}}\\
= & 6C\sqrt{\frac{n\min_{1\leq k\leq K}\left\{ (n_{[k]}/np_{[k]})p_{[k]}\right\} }{\zeta_{n}^{2}\log(2Kd)}}\geq6C\sqrt{\frac{n\inf_{1\leq k\leq K}p_{[k]}}{\zeta_{n}^{2}\log(2Kd)}}\sqrt{\min_{1\leq k\leq K}(n_{[k]}/np_{[k]})}\\
\geq & 3C\sqrt{\frac{n\inf_{1\leq k\leq K}p_{[k]}}{\zeta_{n}^{2}\log(2Kd)}}.
\end{align*}
By Assumption \ref{assu:conditions on =00005Cxi-diverging xi}, we
have $\sqrt{\frac{n\inf_{1\leq k\leq K}p_{[k]}}{\zeta_{n}^{2}\log(2Kd)}}\to\infty$,
thus there exists an integer $N_{0}$ such that on the event $E_{n}$
\[
\sqrt{32C\left\{ 1-\log\left(\frac{\epsilon}{2}\right)\right\} }\leq3C\sqrt{\frac{n\inf_{1\leq k\leq K}p_{[k]}}{\zeta_{n}^{2}\log(2Kd)}}\leq6C\sqrt{\frac{n\min_{1\leq k\leq K}(n_{[k]}/n)}{\zeta_{n}^{2}\log(2Kd)}}
\]
for all $n\geq N_{0}$. Let $\delta_{\epsilon}=\sqrt{32C\left\{ 1-\log\left(\frac{\epsilon}{2}\right)\right\} }$,
then we have $1-\frac{\delta_{\epsilon}^{2}}{32C}=\log\left(\frac{\epsilon}{2}\right)<0$
and
\begin{align*}
 & P_{\mathcal{C}_{n}}\left(\sup_{1\leq k\leq K}\frac{\left\Vert \frac{1}{n}\sum_{i=1}^{n}\bs{\Xi}_{i,[k]}^{*}\bs{\Xi}_{i,[k]}^{*\trans}-\mathbf{\Sigma}_{[k]}^{\mathcal{C}_{n}}\right\Vert }{\sqrt{n_{[k]}/n}}\geq\delta_{\epsilon}\sqrt{\frac{\zeta_{n}^{2}\log(2Kd)}{n}}\right)\\
\leq & \exp\left(\log(2Kd)\left\{ 1-\frac{\delta_{\epsilon}^{2}}{32C}\right\} \right)\leq\exp\left(\log(2)\log\left(\frac{\epsilon}{2}\right)\right)=\epsilon
\end{align*}
for any fixed value of $(A^{(n)},B^{(n)})$ such that $E_{n}$ happens\footnote{This is feasible since $E_{n}$ is measurable with respect to $\mathcal{C}_{n}$.}.
Then, for all $n\geq\max\{N_{0},N_{1}\}$, we have 
\begin{align*}
 & P\left(\sup_{1\leq k\leq K}\frac{\left\Vert \frac{1}{n}\sum_{i=1}^{n}\bs{\Xi}_{i,[k]}^{*}\bs{\Xi}_{i,[k]}^{*\trans}-\mathbf{\Sigma}_{[k]}^{\mathcal{C}_{n}}\right\Vert }{\sqrt{n_{[k]}/n}}\geq\delta_{\epsilon}\sqrt{\frac{\zeta_{n}^{2}\log(2Kd)}{n}}\right)\\
= & \e\left[P_{\mathcal{C}_{n}}\left(\sup_{1\leq k\leq K}\frac{\left\Vert \frac{1}{n}\sum_{i=1}^{n}\bs{\Xi}_{i,[k]}^{*}\bs{\Xi}_{i,[k]}^{*\trans}-\mathbf{\Sigma}_{[k]}^{\mathcal{C}_{n}}\right\Vert }{\sqrt{n_{[k]}/n}}\geq\delta_{\epsilon}\sqrt{\frac{\zeta_{n}^{2}\log(2Kd)}{n}}\right)\right]\\
\leq & \e\left[\epsilon\1(E_{n}\text{ happens})\right]+\e\left[\1(E_{n}\text{ does not happen})\right]\leq\epsilon+\epsilon=2\epsilon.
\end{align*}
This implies that 
\[
\sup_{1\leq k\leq K}\frac{\left\Vert \frac{1}{n}\sum_{i=1}^{n}\bs{\Xi}_{i,[k]}^{*}\bs{\Xi}_{i,[k]}^{*\trans}-\mathbf{\Sigma}_{[k]}^{\mathcal{C}_{n}}\right\Vert }{\sqrt{n_{[k]}/n}}=O_{P}\left(\sqrt{\frac{\zeta_{n}^{2}\log(2Kd)}{n}}\right),
\]
where the random variable $O_{P}\left(\sqrt{\frac{\zeta_{n}^{2}\log(2Kd)}{n}}\right)$
does not depend on $k$. By Lemma \ref{lem:LLN}, we have 
\[
\sqrt{n_{[k]}/n}=\sqrt{p_{[k]}}O_{P}(1),
\]
where the random variable $O_{P}(1)$ does not depend on $k$. As
a result, 
\begin{align*}
 & \left\Vert \frac{1}{n}\sum_{i=1}^{n}\bs{\Xi}_{i,[k]}^{*}\bs{\Xi}_{i,[k]}^{*\trans}-\mathbf{\Sigma}_{[k]}^{\mathcal{C}_{n}}\right\Vert \leq\sqrt{n_{[k]}/n}O_{P}\left(\sqrt{\frac{\zeta_{n}^{2}\log(2Kd)}{n}}\right)\\
\leq & p_{[k]}\sqrt{\frac{\zeta_{n}^{2}\log(2Kd)}{n\inf_{1\leq k\leq K}p_{[k]}}}O_{P}(1)=p_{[k]}o_{P}(1),
\end{align*}
where the random variable $o_{P}(1)$ does not depend on $k$ and
the last step follows from 
\[
\zeta_{n}^{2}\log(2Kd)/(n\inf_{1\leq k\leq K}p_{[k]})\to0
\]
as $n\to\infty$. The proof is completed.
\end{proof}
\begin{lem}
\label{lem:LLN for triangular array}Let $(\Omega,\mathcal{F},P)$
be a probability space and $\mathcal{C}_{n}$ be a sequence of sub-$\sigma$-algebras
of $\mathcal{F}$. Let $\left\{ \bs X_{n,i}\in\R^{d}:1\leq i\leq n,n\geq1\right\} $
be a triangular array such that for every $n$, $\bs X_{n,1},\ldots,\bs X_{n,n}$
are mutually independent conditional on $\mathcal{C}_{n}$. Let $V_{n}:=\sup_{n\geq1,1\leq i\leq n}\e\left[\left\Vert \bs X_{n,i}\right\Vert ^{q}\mid\mathcal{C}_{n}\right]$
for some $1<q\leq2$. If $\e\left[\bs X_{n,i}\mid\mathcal{C}_{n}\right]=0$
for all $1\leq i\leq n$ and $n\geq1$ , then it holds that 
\[
\left\Vert \frac{1}{n}\sum_{i=1}^{n}\bs X_{n,i}\right\Vert =O_{P}(V_{n}^{1/q}n^{1/q-1}).
\]
In particular, if $V_{n}=O_{P}(1)$, then 
\[
\left\Vert \frac{1}{n}\sum_{i=1}^{n}\bs X_{n,i}\right\Vert =O_{P}(n^{1/q-1})=o_{P}(1).
\]
\end{lem}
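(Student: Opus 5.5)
Conditional on $\mathcal{C}_n$ the summands are independent and mean zero, so the natural route is a conditional von Bahr--Esseen (Marcinkiewicz--Zygmund type) moment inequality followed by conditional Markov. Set $\bs S_n=\sum_{i=1}^n\bs X_{n,i}$.

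\textbf{Step 1 (moment bound).} Reduce to scalars coordinatewise: $\|\bs S_n\|\le\sum_{j=1}^d|S_{n,j}|$, with $d$ fixed. For each coordinate, the von Bahr--Esseen inequality for independent mean-zero variables with $1\le q\le2$, applied under $P_{\mathcal{C}_n}$ (regular conditional probability, where the $X_{n,i,j}$ are independent and centered), gives
\[
\e_{\mathcal{C}_n}|S_{n,j}|^q\le 2\sum_{i=1}^n\e_{\mathcal{C}_n}|X_{n,i,j}|^q\le 2nV_n .
\]
Hence $\e_{\mathcal{C}_n}\|\bs S_n\|^q\le C_{d,q}\,nV_n$. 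Alternatively, one can avoid coordinates by using the Hilbert-space (type-2/Rosenthal) version of the inequality.

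\textbf{Step 2 (conditional Markov).} For $M>0$, Step 1 yields
\[
P_{\mathcal{C}_n}\Bigl(\bigl\|\tfrac1n\bs S_n\bigr\|\ge M V_n^{1/q}n^{1/q-1}\Bigr)\le \frac{\e_{\mathcal{C}_n}\|\bs S_n\|^q}{M^q V_n n}\le \frac{C_{d,q}}{M^q}
\]
on $\{V_n>0\}$. On $\{V_n=0\}$ the sum vanishes almost surely, so the bound holds there trivially.

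\textbf{Step 3 (uncondition).} Take expectations and use the tower property. This gives an unconditional probability at most $C_{d,q}/M^q$, which is below $\epsilon$ for $M$ large. This is exactly $\|\frac1n\bs S_n\|=O_P(V_n^{1/q}n^{1/q-1})$.

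The case $V_n=O_P(1)$ then follows at once, since $q>1$ implies $n^{1/q-1}\to0$. Note that $V_n$ is a $\mathcal{C}_n$-measurable random variable, which is harmless because it is treated as a constant under $P_{\mathcal{C}_n}$.

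\textbf{Main obstacle.} The only delicate point is justifying the conditional application of von Bahr--Esseen rigorously. The plan is to handle it by working with a regular conditional distribution of $(\bs X_{n,1},\dots,\bs X_{n,n})$ given $\mathcal{C}_n$ (which exists since the variables are Euclidean-valued). Under that distribution the $\bs X_{n,i}$ are independent and centered for almost every $\omega$, so the inequality applies pointwise in $\omega$.
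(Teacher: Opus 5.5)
Your argument is correct. Its architecture (a conditional $L^q$ moment bound, then conditional Markov, then the tower property) is the same as the paper's; what differs is the moment inequality.

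\textbf{The paper's inequality.} The paper applies Burkholder's sharp square-function inequality for Hilbert-space-valued martingales directly to the vectors:
\[
\Vert\textstyle\sum_i\bs X_{n,i}\Vert_{L^q,\mathcal{C}_n}\le (q-1)^{-1}\Vert(\textstyle\sum_i\Vert\bs X_{n,i}\Vert^2)^{1/2}\Vert_{L^q,\mathcal{C}_n}.
\]
It then uses $\ell^2\le\ell^q$ for $q\le2$ to obtain $\e[\Vert\sum_i\bs X_{n,i}\Vert^q\mid\mathcal{C}_n]\le(q-1)^{-q}nV_n$, with no dependence on $d$.

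\textbf{Your inequality.} Your coordinatewise von Bahr--Esseen route is more elementary, but its constant grows with $d$ (e.g.\ $2d^q$). This is harmless, because $d$ is fixed in the lemma and in every application in the paper (vectorized $d\times d$ matrices, and $Kd$-vectors with $K,d$ fixed). The dimension-free version would be needed only if the lemma were used with growing dimension; your parenthetical Hilbert-space alternative is essentially the paper's route.

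\textbf{Where your write-up is tighter.} First, you justify applying an unconditional inequality under $\mathcal{C}_n$ via a regular conditional distribution, which the paper does without comment. Second, your unconditioning is immediate, since your conditional tail bound $C_{d,q}/M^q$ is deterministic. The paper instead detours through $R_n=O_P(1)$ and a split on $\{R_n\ge M\}$, even though its own bound already gives $R_n\le(q-1)^{-q}$ almost surely.
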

\begin{proof}
For any random vector $\bs X$, let $\left\Vert \bs X\right\Vert _{L^{q},\mathcal{C}_{n}}$
denote its $L^{q}$-norm conditional on $\mathcal{C}_{n}$, i.e.,
$\left\Vert \bs X\right\Vert _{L^{q},\mathcal{C}_{n}}:=\left(\e\left[\left\Vert \bs X\right\Vert ^{q}\mid\mathcal{C}_{n}\right]\right)^{1/q}$,
where $\left\Vert \cdot\right\Vert $ stands for the Euclidean norm.
By Theorem 3.1 in \citet{burkholder1988sharp}, we have 
\[
\left\Vert \sum_{i=1}^{n}\bs X_{n,i}\right\Vert _{L^{q},\mathcal{C}_{n}}\leq\frac{1}{q-1}\left\Vert \left(\sum_{i=1}^{n}\left\Vert \bs X_{n,i}\right\Vert ^{2}\right)^{1/2}\right\Vert _{L^{q},\mathcal{C}_{n}}.
\]
Notice that for any non-negative real numbers $a_{1},\ldots,a_{n}$
and $r\geq q>1$, we have $\left(\sum_{i=1}^{n}a_{i}^{q}\right)^{1/q}\geq\left(\sum_{i=1}^{n}a_{i}^{r}\right)^{1/r}$
(Proposition 9.1.5 in \citet{dennis2009matrix}, p.599). Thus, from
the above two inequalities, we have 
\begin{align*}
\left\Vert \frac{1}{n}\sum_{i=1}^{n}\bs X_{n,i}\right\Vert _{L^{q},\mathcal{C}_{n}} & \leq\frac{1}{n}\frac{1}{q-1}\left\Vert \left(\sum_{i=1}^{n}\left\Vert \bs X_{n,i}\right\Vert ^{2}\right)^{1/2}\right\Vert _{L^{q},\mathcal{C}_{n}}\leq\frac{1}{n}\frac{1}{q-1}\left\Vert \left(\sum_{i=1}^{n}\left\Vert \bs X_{n,i}\right\Vert ^{q}\right)^{1/q}\right\Vert _{L^{q},\mathcal{C}_{n}}\\
 & =\frac{1}{n}\frac{1}{q-1}\left(\e\left[\sum_{i=1}^{n}\left\Vert \bs X_{n,i}\right\Vert ^{q}\mid\mathcal{C}_{n}\right]\right)^{1/q}=\frac{1}{n}\frac{1}{q-1}\left(\sum_{i=1}^{n}\left\Vert \bs X_{n,i}\right\Vert _{L^{q},\mathcal{C}_{n}}^{q}\right)^{1/q}\\
 & \leq\frac{1}{n}\frac{1}{q-1}n^{1/q}\sup_{n\geq1,1\leq i\leq n}\left(\e\left[\left\Vert \bs X_{n,i}\right\Vert ^{q}\mid\mathcal{C}_{n}\right]\right)^{1/q}=O_{P}(n^{1/q-1}V_{n}^{1/q}).
\end{align*}
By conditional Markov's inequality, we have, for all $\delta>0$
\begin{align*}
 & P\left(\left\Vert \frac{1}{V_{n}^{1/q}n^{1/q-1}}\frac{1}{n}\sum_{i=1}^{n}\bs X_{n,i}\right\Vert >\delta\mid\mathcal{C}_{n}\right)\\
\leq & \frac{\left\Vert \frac{1}{V_{n}^{1/q}n^{1/q-1}}\frac{1}{n}\sum_{i=1}^{n}\bs X_{n,i}\right\Vert _{L^{q},\mathcal{C}_{n}}^{q}}{\delta^{q}}=\frac{\left\Vert \frac{1}{n}\sum_{i=1}^{n}\bs X_{n,i}\right\Vert _{L^{q},\mathcal{C}_{n}}^{q}}{\left\{ V_{n}^{1/q}n^{1/q-1}\right\} ^{q}\delta^{q}}=\frac{R_{n}}{\delta^{q}},
\end{align*}
where the second step follows from $V_{n}$ is measurable with respect
to $\mathcal{C}_{n}$ and in the last step we let $R_{n}:=\left\{ \left\Vert \frac{1}{n}\sum_{i=1}^{n}\bs X_{n,i}\right\Vert _{L^{q},\mathcal{C}_{n}}/(V_{n}^{1/q}n^{1/q-1})\right\} ^{q}=O_{P}(1)$.
For any $\epsilon>0$, there exists $N,M>1$ such that 
\[
P\left(R_{n}\geq M\right)<\epsilon/2
\]
for all $n>N$. Let $\delta:=(\frac{2M}{\epsilon})^{1/q}$, then we
have 
\begin{align*}
 & P\left(\left\Vert \frac{1}{n^{1/q-1}}\frac{1}{n}\sum_{i=1}^{n}\bs X_{n,i}\right\Vert >(\frac{2M}{\epsilon})^{1/q}\right)=\e\left[P\left(\left\Vert \frac{1}{n^{1/q-1}}\frac{1}{n}\sum_{i=1}^{n}\bs X_{n,i}\right\Vert >(\frac{2M}{\epsilon})^{1/q}\mid\mathcal{C}_{n}\right)\right]\\
\leq & P\left(P\left(\left\Vert \frac{1}{n^{1/q-1}}\frac{1}{n}\sum_{i=1}^{n}\bs X_{n,i}\right\Vert >(\frac{2M}{\epsilon})^{1/q}\mid\mathcal{C}_{n}\right)>\frac{\epsilon}{2}\right)+\frac{\epsilon}{2}\\
\leq & P\left(\frac{R_{n}}{2M/\epsilon}>\frac{\epsilon}{2}\right)+\frac{\epsilon}{2}=P\left(R_{n}\geq M\right)+\frac{\epsilon}{2}<\epsilon
\end{align*}
for all $n>N$. Thus, $\frac{1}{V_{n}^{1/q}n^{1/q-1}}\frac{1}{n}\sum_{i=1}^{n}\bs X_{n,i}=O_{P}(1)$
and the desired result follows.
\end{proof}
\begin{lem}
\label{lem:clt for triangular array}Let $(\Omega,\mathcal{F},P)$
be a probability space and $\mathcal{C}_{n}$ be a sequence of sub-$\sigma$-algebras
of $\mathcal{F}$. Let $\left\{ X_{n,i}\in\R:1\leq i\leq n,n\geq1\right\} $
be a triangular array such that for every $n$, $X_{n,1},\ldots,X_{n,n}$
are mutually independent conditional on $\mathcal{C}_{n}$. Let $S_{n}:=\sum_{i=1}^{n}X_{n,i}$
and $\sigma_{n}^{2}(\mathcal{C}_{n}):=\var\left(S_{n}\mid\mathcal{C}_{n}\right)$.
We assume that
\begin{equation}
P\left(\left\{ \omega:\var(X_{n,i}\mid\mathcal{C}_{n})<\infty,\ \forall1\leq i\leq n\right\} \bigcap\left\{ \omega:\sigma_{n}^{2}(\mathcal{C}_{n})>0\right\} \right)\to1\label{eq: positive variance in clt}
\end{equation}
as $n\to\infty$. If the Lindeberg condition is satisfied: for any
$\delta>0$
\[
\frac{1}{\sigma_{n}^{2}(\mathcal{C}_{n})}\sum_{i=1}^{n}\e\left[\left\{ X_{n,i}-\e\left[X_{n,i}\mid\mathcal{C}_{n}\right]\right\} ^{2}\1\left\{ \left|X_{n,i}-\e\left[X_{n,i}\mid\mathcal{C}_{n}\right]\right|\geq\delta\sigma_{n}(\mathcal{C}_{n})\right\} \mid\mathcal{C}_{n}\right]=o_{P}(1)
\]
as $n\to\infty$, then for every $t\in\R$, (i) 
\[
\e\left[\exp\left\{ \mathrm{i}t\frac{S_{n}-\e\left[S_{n}\mid\mathcal{C}_{n}\right]}{\sigma_{n}(\mathcal{C}_{n})}\right\} \mid\mathcal{C}_{n}\right]\tp\exp\left(-t^{2}/2\right)
\]
as $n\to\infty$ and (ii) 
\[
\e\left[\left|\e\left[\exp\left\{ \mathrm{i}t\frac{S_{n}-\e\left[S_{n}\mid\mathcal{C}_{n}\right]}{\sigma_{n}(\mathcal{C}_{n})}\right\} \mid\mathcal{C}_{n}\right]-\exp\left(-t^{2}/2\right)\right|\right]\to0
\]
as $n\to\infty$.
\end{lem}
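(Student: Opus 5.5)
We prove Lemma~\ref{lem:clt for triangular array} by running the classical Lindeberg--Feller argument conditionally on $\mathcal{C}_{n}$, and then upgrading convergence in probability of the conditional characteristic function to $L^{1}$ convergence by bounded convergence.

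Write $Z_{n,i}:=\{X_{n,i}-\e[X_{n,i}\mid\mathcal{C}_{n}]\}/\sigma_{n}(\mathcal{C}_{n})$ on the event $G_{n}$ appearing in (\ref{eq: positive variance in clt}). On $G_{n}$ the $Z_{n,i}$ are conditionally independent given $\mathcal{C}_{n}$, with conditional mean zero and $\sum_{i}\e[Z_{n,i}^{2}\mid\mathcal{C}_{n}]=1$. Set $\phi_{n,i}(t):=\e[\exp(\mathrm{i}tZ_{n,i})\mid\mathcal{C}_{n}]$, using regular conditional distributions. Conditional independence then gives
\[
\e\Bigl[\exp\Bigl\{\mathrm{i}t\tfrac{S_{n}-\e[S_{n}\mid\mathcal{C}_{n}]}{\sigma_{n}(\mathcal{C}_{n})}\Bigr\}\Bigm|\mathcal{C}_{n}\Bigr]=\prod_{i=1}^{n}\phi_{n,i}(t).
\]
Put $v_{n,i}:=\e[Z_{n,i}^{2}\mid\mathcal{C}_{n}]$ and $L_{n}(\delta):=\sum_{i}\e[Z_{n,i}^{2}\1(|Z_{n,i}|\geq\delta)\mid\mathcal{C}_{n}]$; by assumption $L_{n}(\delta)=o_{P}(1)$ for every $\delta>0$. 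The standard deterministic estimates then hold pathwise on $G_{n}$:
\[
|\phi_{n,i}(t)-(1-t^{2}v_{n,i}/2)|\leq\e\bigl[\min\{|tZ_{n,i}|^{3},|tZ_{n,i}|^{2}\}\mid\mathcal{C}_{n}\bigr],
\]
and therefore
\[
\sum_{i}|\phi_{n,i}(t)-(1-t^{2}v_{n,i}/2)|\leq|t|^{3}\delta+t^{2}L_{n}(\delta).
\]
In addition, $\max_{i}v_{n,i}\leq\delta^{2}+L_{n}(\delta)$.

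Next we combine these with the product comparison lemma $|\prod a_{i}-\prod b_{i}|\leq\sum|a_{i}-b_{i}|$ for complex numbers of modulus at most one. This needs $|1-t^{2}v_{n,i}/2|\leq1$, which holds once $\max_{i}v_{n,i}\leq2/t^{2}$, an event of probability tending to one. We also use $|e^{-x}-(1-x)|\leq x^{2}/2$ for $x\geq0$. Together,
\[
\Bigl|\prod_{i}\phi_{n,i}(t)-e^{-t^{2}/2}\Bigr|\leq|t|^{3}\delta+t^{2}L_{n}(\delta)+\tfrac{t^{4}}{4}\max_{i}v_{n,i}\sum_{i}v_{n,i}.
\]
Since $\sum_{i}v_{n,i}=1$, the right side is at most $C_{t}\{\delta+L_{n}(\delta)+\delta^{2}\}$, and $P(G_{n}^{c})\to0$. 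Given $\epsilon>0$, we first fix $\delta$ small enough that $C_{t}(\delta+\delta^{2})<\epsilon/2$, and then use $L_{n}(\delta)=o_{P}(1)$. This yields (i).

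For (ii), observe that the difference in question is bounded by $2$ in modulus, because a conditional characteristic function has modulus at most one. A uniformly bounded sequence converging in probability to $0$ converges in $L^{1}$ by dominated convergence (apply it along subsequences, or use uniform integrability). This gives (ii).

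The main obstacle is measure-theoretic rather than analytic. We must make sure the conditional product formula and the pathwise Lindeberg bounds hold simultaneously for almost every $\omega$, using regular conditional probabilities given $\mathcal{C}_{n}$. We must also handle the fact that $\delta$ is fixed while $L_{n}(\delta)$ is only $o_{P}(1)$ and not deterministic. Both points are addressed by working on the events $G_{n}\cap\{L_{n}(\delta)\leq\eta\}\cap\{\max_{i}v_{n,i}\leq2/t^{2}\}$, whose probabilities tend to one.
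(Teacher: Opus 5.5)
Your proposal is correct and follows essentially the same route as the paper. Both run the Lindeberg--Feller argument pathwise given $\mathcal{C}_n$ on the good event (the paper's $E_n$, your $G_n$), use the same Taylor bound together with the conditional Lindeberg quantity $L_n(\delta)$, and obtain (ii) from (i) by boundedness of the conditional characteristic functions. The only difference is cosmetic, in how you pass from $\sum_i\{\phi_{n,i}(t)-1\}\approx -t^2/2$ to the product: you use the telescoping bound $|\prod_i a_i-\prod_i b_i|\le\sum_i|a_i-b_i|$, whereas the paper (following Bulinski) takes logarithms on the event $D_n=\{\max_i|\phi_{n,i}(t)-1|\le 1/2\}\cap E_n$.
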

\begin{proof}
To obtain the first conclusion, we slightly modify the proof of \citet[Theorem 1]{bulinski2017Conditional}.
Let 
\[
E_{n}:=\left\{ \omega:\var(X_{n,i}\mid\mathcal{C}_{n})<\infty,\ \forall1\leq i\leq n\right\} \bigcap\left\{ \omega:\sigma_{n}^{2}(\mathcal{C}_{n})>0\right\} ,
\]
then $E_{n}\in\mathcal{C}_{n}$ and $P(E_{n})\to1$ as $n\to\infty$.
The only difference between our proof and that of \citet[Theorem 1]{bulinski2017Conditional}
lies in that we restrict our attention to $\omega\in E_{n}$, which
does not ruin the desired result since $P(E_{n})\to1$. We give some
key steps of the proof.

We let 
\[
Z_{n,i}(\omega):=\begin{cases}
\frac{X_{n,i}-\e\left[X_{n,i}\mid\mathcal{C}_{n}\right]}{\sigma_{n}(\mathcal{C}_{n})}, & \omega\in E_{n}\\
0, & \omega\in\Omega\backslash E_{n}
\end{cases},\quad i=1,\ldots n,\quad n\geq1,
\]
and 
\[
T_{n}(\omega):=\sum_{i=1}^{n}Z_{n,i}(\omega)=\begin{cases}
\frac{S_{n}-\e\left[S_{n}\mid\mathcal{C}_{n}\right]}{\sigma_{n}(\mathcal{C}_{n})}, & \omega\in E_{n},\\
0, & \omega\in\Omega\backslash E_{n}.
\end{cases}
\]
By $E_{n}\in\mathcal{C}_{n}$ and Lemma \ref{lem:conditional expectation},
$\e\left[Z_{n,i}\mid\mathcal{C}_{n}\right]=0$ a.s.,
\[
\var\left(Z_{n,i}\mid\mathcal{C}_{n}\right)=\e\left[Z_{n,i}^{2}\mid\mathcal{C}_{n}\right]=\begin{cases}
\frac{\var\left(X_{n,i}\mid\mathcal{C}_{n}\right)}{\sigma_{n}^{2}(\mathcal{C}_{n})}, & \omega\in E_{n},\\
0, & \omega\in\Omega\backslash E_{n},
\end{cases}\quad\text{a.s.}
\]
and 
\[
\sum_{i=1}^{n}\var\left(Z_{n,i}\mid\mathcal{C}_{n}\right)=\sum_{i=1}^{n}\e\left[Z_{n,i}^{2}\mid\mathcal{C}_{n}\right]=\begin{cases}
1, & \omega\in E_{n},\\
0, & \omega\in\Omega\backslash E_{n},
\end{cases}\quad\text{a.s.}
\]
 for $\omega\in E_{n}$. Now for any $\epsilon>0$ we define {\small
\begin{align*}
 & L_{n}^{\mathcal{C}_{n}}(\epsilon):=\sum_{i=1}^{n}\e\left[Z_{n,i}^{2}\1(\left|Z_{n,i}\right|\geq\epsilon)\mid\mathcal{C}_{n}\right]\\
= & \begin{cases}
\frac{1}{\sigma_{n}^{2}(\mathcal{C}_{n})}\sum_{i=1}^{n}\e\left[\left\{ X_{n,i}-\e\left[X_{n,i}\mid\mathcal{C}_{n}\right]\right\} ^{2}\1\left\{ \left|X_{n,i}-\e\left[X_{n,i}\mid\mathcal{C}_{n}\right]\right|\geq\epsilon\sigma_{n}(\mathcal{C}_{n})\right\} \mid\mathcal{C}_{n}\right], & \omega\in E_{n},\\
0, & \omega\in\Omega\backslash E_{n}.
\end{cases}
\end{align*}
}By the Lindeberg condition and $P(E_{n})\to1$ we have 
\[
L_{n}^{\mathcal{C}_{n}}(\epsilon)\tp0
\]
for any $\epsilon>0$. For any $n\geq1$ one has
\begin{align*}
\max_{1\leq i\leq n}\e\left[Z_{n,i}^{2}\mid\mathcal{C}_{n}\right] & =\max_{1\leq i\leq n}\left\{ \e\left[Z_{n,i}^{2}\1(\left|Z_{n,i}\right|<\epsilon)\mid\mathcal{C}_{n}\right]+\e\left[Z_{n,i}^{2}\1(\left|Z_{n,i}\right|\geq\epsilon)\mid\mathcal{C}_{n}\right]\right\} \\
 & \leq\epsilon^{2}+L_{n}^{\mathcal{C}_{n}}(\epsilon)\tp\epsilon^{2}.
\end{align*}
Since $\epsilon>0$ is arbitrary, we have $\e\left[Z_{n,i}^{2}\mid\mathcal{C}_{n}\right]\tp0$
as $n\to\infty$. Similar to the derivation of \citet[Eq.(15)]{bulinski2017Conditional},
we can obtain that, for any $t\in\R$
\[
\max_{1\leq i\leq n}\left|\varphi_{Z_{n,i}}^{\mathcal{C}_{n}}(t)-1\right|\leq\frac{t^{2}}{2}\max_{1\leq i\leq n}\e\left[Z_{n,i}^{2}\mid\mathcal{C}_{n}\right]\tp0,
\]
where $\varphi_{Z_{n,i}}^{\mathcal{C}_{n}}:=\e\left[\exp\left\{ \mathrm{i}tZ_{n,i}\right\} \mid\mathcal{C}_{n}\right]$.
Now, we have shown \citet[Eq.(13)]{bulinski2017Conditional}.

Introduce the events $D_{n}:=\left\{ \max_{1\leq i\leq n}\left|\varphi_{Z_{n,i}}^{\mathcal{C}_{n}}(t)-1\right|\leq1/2\right\} \bigcap E_{n}$,
$n\geq1$. Then $P(D_{n})\to1$ as $n\to\infty$. Besides, we can
show that \citet[Eqs.(17)-(20)]{bulinski2017Conditional} hold in
our case.

Let 
\[
G_{n}^{\mathcal{C}_{n}}(t):=\begin{cases}
\sum_{i=1}^{n}\left(\varphi_{Z_{n,i}}^{\mathcal{C}_{n}}(t)-1\right)+\frac{t^{2}}{2}, & \omega\in D_{n}.\\
0, & \omega\in\Omega\backslash D_{n}.
\end{cases}
\]
For $\omega\in D_{n}$, we have $\sum_{i=1}^{n}\e\left[Z_{n,i}^{2}\mid\mathcal{C}_{n}\right]=1$
and thus
\begin{align*}
G_{n}^{\mathcal{C}_{n}}(t) & =\sum_{i=1}^{n}\e\left[g_{t}(Z_{n,i})\mid\mathcal{C}_{n}\right]\\
 & =\sum_{i=1}^{n}\e\left[g_{t}(Z_{n,i})\1(\left|Z_{n,i}\right|<\epsilon)\mid\mathcal{C}_{n}\right]+\sum_{i=1}^{n}\e\left[g_{t}(Z_{n,i})\1(\left|Z_{n,i}\right|\geq\epsilon)\mid\mathcal{C}_{n}\right],
\end{align*}
where $g_{t}(x):=\exp\{\mathrm{i}tx\}-1-\mathrm{i}tx-(\mathrm{i}tx)^{2}/2$,
$x\in\R$. Applying \citet[Eq.(14)]{bulinski2017Conditional}, for
$\omega\in D_{n}$ we have 
\[
\left|G_{n}^{\mathcal{C}_{n}}(t)\right|\leq\frac{\left|t\right|^{3}\epsilon}{6}+t^{2}L_{n}^{\mathcal{C}_{n}}(\epsilon)
\]
for any $\epsilon>0$. Since $G_{n}^{\mathcal{C}_{n}}(t)=0$ for $\omega\in\Omega\backslash D_{n}$,
we have 
\[
\left|G_{n}^{\mathcal{C}_{n}}(t)\right|\leq\frac{\left|t\right|^{3}\epsilon}{6}+t^{2}L_{n}^{\mathcal{C}_{n}}(\epsilon),\quad\forall\omega\in\Omega.
\]
By $L_{n}^{\mathcal{C}_{n}}(\epsilon)\tp0$, we have $G_{n}^{\mathcal{C}_{n}}(t)\tp0$
as $n\to\infty$ for every $t\in\R$. Following the same steps as
in the proof of \citet[Theorem 1]{bulinski2017Conditional}, we can
obtain that 
\[
\e\left[\exp\left\{ \mathrm{i}tT_{n}\right\} \mid\mathcal{C}_{n}\right]\tp\exp(-t^{2}/2)
\]
as $n\to\infty$. By Lemma \ref{lem:conditional expectation} and
$E_{n}\in\mathcal{C}_{n}$, we have 
\[
\e\left[\exp\left\{ \mathrm{i}tT_{n}\right\} \mid\mathcal{C}_{n}\right]=\begin{cases}
\e\left[\exp\left\{ \mathrm{i}t\frac{S_{n}-\e\left[S_{n}\mid\mathcal{C}_{n}\right]}{\sigma_{n}(\mathcal{C}_{n})}\right\} \mid\mathcal{C}_{n}\right], & \omega\in E_{n}\\
1, & \omega\in\Omega\backslash E_{n}.
\end{cases}
\]
Note that $P(E_{n})\to1$ as $n\to\infty$, the first conclusion follows.

To obtain the second conclusion, we note that $\left|\e\left[\exp\left\{ \mathrm{i}t\frac{S_{n}-\e\left[S_{n}\mid\mathcal{C}_{n}\right]}{\sigma_{n}(\mathcal{C}_{n})}\right\} \mid\mathcal{C}_{n}\right]-\exp\left(-t^{2}/2\right)\right|\leq2$
is bounded. Then, combining $\left|\e\left[\exp\left\{ \mathrm{i}t\frac{S_{n}-\e\left[S_{n}\mid\mathcal{C}_{n}\right]}{\sigma_{n}(\mathcal{C}_{n})}\right\} \mid\mathcal{C}_{n}\right]-\exp\left(-t^{2}/2\right)\right|=o_{P}(1)$
and \citet[Theorem 5.4, chapter 1]{billingsley1968convergence} gives
the desired result.
\end{proof}
\begin{lem}
\label{lem:joint convergence}Let $(\Omega,\mathcal{F},P)$ be a probability
space and $\mathcal{C}_{n}$ be a sequence of sub-$\sigma$-algebras
of $\mathcal{F}$. Suppose that $X_{n}$ is a sequence of random variables
satisfies 
\[
\e\left[\left|\e\left[\exp\left\{ \mathrm{i}tX_{n}\right\} \mid\mathcal{C}_{n}\right]-\exp\left(-t^{2}/2\right)\right|\right]\to0
\]
as $n\to\infty$ for any $t\in\R$, and $Y_{n}$ is a sequence of
$\mathcal{C}_{n}$-measurable random variables satisfying 
\[
\e\left[\exp\left\{ \mathrm{i}sY_{n}\right\} \right]\to\exp\left(-s^{2}/2\right)
\]
as $n\to\infty$ for any $s\in\R$. Then 
\[
(X_{n},Y_{n})^{\trans}\tod N\left(\begin{pmatrix}0\\
0
\end{pmatrix},\begin{pmatrix}1 & 0\\
0 & 1
\end{pmatrix}\right).
\]
\end{lem}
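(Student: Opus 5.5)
We will show convergence of the joint characteristic function, via the Cramér--Wold device / Lévy continuity theorem applied to $(X_n,Y_n)$. Fix $(t,s)\in\R^{2}$. Because $Y_{n}$ is $\mathcal{C}_{n}$-measurable, the tower property gives
\[
\e\left[\exp\{\mathrm{i}tX_{n}+\mathrm{i}sY_{n}\}\right]=\e\left[\exp\{\mathrm{i}sY_{n}\}\,\e\left[\exp\{\mathrm{i}tX_{n}\}\mid\mathcal{C}_{n}\right]\right].
\]
The plan is to replace the inner conditional characteristic function by its deterministic limit $\exp(-t^{2}/2)$, and then use the assumed convergence of the characteristic function of $Y_{n}$.

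Concretely, we will decompose
\begin{align*}
 & \e\left[\exp\{\mathrm{i}tX_{n}+\mathrm{i}sY_{n}\}\right]-\exp(-t^{2}/2)\exp(-s^{2}/2)\\
= & \e\left[\exp\{\mathrm{i}sY_{n}\}\left(\e\left[\exp\{\mathrm{i}tX_{n}\}\mid\mathcal{C}_{n}\right]-\exp(-t^{2}/2)\right)\right]+\exp(-t^{2}/2)\left(\e\left[\exp\{\mathrm{i}sY_{n}\}\right]-\exp(-s^{2}/2)\right).
\end{align*}
Since $|\exp\{\mathrm{i}sY_{n}\}|=1$, the modulus of the first term is bounded by $\e\bigl[\bigl|\e[\exp\{\mathrm{i}tX_{n}\}\mid\mathcal{C}_{n}]-\exp(-t^{2}/2)\bigr|\bigr]$. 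This tends to $0$ by the first hypothesis, which is exactly the $L^{1}$ form delivered by Lemma~\ref{lem:clt for triangular array}(ii). The second term tends to $0$ by the second hypothesis, since $|\exp(-t^{2}/2)|\le1$.

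It follows that the joint characteristic function of $(X_{n},Y_{n})$ converges pointwise to $\exp\{-(t^{2}+s^{2})/2\}$. This is the characteristic function of the bivariate standard normal with identity covariance, so Lévy's continuity theorem yields the claim.

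There is no real obstacle in this argument. The only point requiring care is that the first hypothesis is needed in $L^{1}$ form rather than merely in probability. That form is precisely what allows us to pull the bounded factor $\exp\{\mathrm{i}sY_{n}\}$ outside and control the error by an expectation, without any uniform integrability argument.
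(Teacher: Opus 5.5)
Your proof is correct and follows the paper's argument essentially line for line. You condition on $\mathcal{C}_n$ using the $\mathcal{C}_n$-measurability of $Y_n$, bound the conditional-characteristic-function term by the $L^1$ hypothesis (since $|\exp\{\mathrm{i}sY_n\}|=1$), handle the marginal term for $Y_n$ by the second hypothesis, and conclude with L\'evy's continuity theorem. The Cram\'er--Wold device you mention is not actually needed, since you verify convergence of the joint characteristic function directly.
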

\begin{proof}
Note that we have 
\begin{align*}
 & \left|\e\left[\exp\left\{ \mathrm{i}tX_{n}+\mathrm{i}sY_{n}\right\} \right]-\exp\left(-t^{2}/2-s^{2}/2\right)\right|\\
= & \left|\e\left[\e\left[\exp\left\{ \mathrm{i}tX_{n}+\mathrm{i}sY_{n}\right\} \mid\mathcal{C}_{n}\right]\right]-\exp\left(-t^{2}/2-s^{2}/2\right)\right|\\
= & \left|\e\left[\exp\left\{ \mathrm{i}sY_{n}\right\} \e\left[\exp\left\{ \mathrm{i}tX_{n}\right\} \mid\mathcal{C}_{n}\right]\right]-\exp\left(-t^{2}/2-s^{2}/2\right)\right|\\
\leq & \left|\e\left[\left|\exp\left\{ \mathrm{i}sY_{n}\right\} \right|\left|\e\left[\exp\left\{ \mathrm{i}tX_{n}\right\} \mid\mathcal{C}_{n}\right]-\exp(-t^{2}/2)\right|\right]\right|\\
 & +\left|\exp(-t^{2}/2)\e\left[\exp\left\{ \mathrm{i}sY_{n}\right\} -\exp\left(-s^{2}/2\right)\right]\right|\\
\leq & \e\left[\left|\e\left[\exp\left\{ \mathrm{i}tX_{n}\right\} \mid\mathcal{C}_{n}\right]-\exp(-t^{2}/2)\right|\right]+\exp(-t^{2}/2)\left|\e\left[\exp\left\{ \mathrm{i}sY_{n}\right\} -\exp\left(-s^{2}/2\right)\right]\right|\\
\to & 0
\end{align*}
for any $t\in\R$ and $s\in\R$, where the second equality follows
from the fact that $Y_{n}$ is measurable with respect to $\mathcal{C}_{n}$
and the last step follows from $\e\left[\left|\e\left[\exp\left\{ \mathrm{i}tX_{n}\right\} \mid\mathcal{C}_{n}\right]-\exp\left(-t^{2}/2\right)\right|\right]\to0$
and $\e\left[\exp\left\{ \mathrm{i}sY_{n}\right\} \right]\to\exp\left(-s^{2}/2\right)$.
By Levy's continuity theorem, the conclusion follows.
\end{proof}
\begin{lem}
\label{lem:covergence of combin of normal}Suppose $X_{n}$ and $Y_{n}$
are two sequences of random variables satisfying 
\[
(X_{n},Y_{n})^{\trans}\tod(X,Y)^{\trans}\overset{d}{=}N\left(\begin{pmatrix}0\\
0
\end{pmatrix},\begin{pmatrix}1 & 0\\
0 & 1
\end{pmatrix}\right).
\]
Let $a_{n}$ and $b_{n}$ be two sequences of real numbers satisfying
$a_{n}^{2}+b_{n}^{2}=1$ for all $n\geq1$. Then
\[
a_{n}X_{n}+b_{n}Y_{n}\tod N(0,1).
\]
\end{lem}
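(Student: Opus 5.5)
The argument rests on the continuous mapping theorem and Slutsky's lemma, together with the fact that $\{a_n\}$ and $\{b_n\}$ live in a compact set. Since $a_n^2+b_n^2=1$, the pair $(a_n,b_n)$ lies on the unit circle, which is compact. We proceed by a subsequence argument: it suffices to show that every subsequence of $a_nX_n+b_nY_n$ has a further subsequence converging in distribution to $N(0,1)$. Convergence in distribution is metrizable (for instance, by the Lévy metric), so this subsequence principle gives convergence of the whole sequence.

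Fix a subsequence. By compactness, extract a further subsequence $n_j$ along which $(a_{n_j},b_{n_j})\to(a,b)$ with $a^2+b^2=1$. Along $n_j$ we still have $(X_{n_j},Y_{n_j})\tod(X,Y)$. Write
\[
a_{n_j}X_{n_j}+b_{n_j}Y_{n_j}=aX_{n_j}+bY_{n_j}+(a_{n_j}-a)X_{n_j}+(b_{n_j}-b)Y_{n_j}.
\]
By the continuous mapping theorem, $aX_{n_j}+bY_{n_j}\tod aX+bY$. Because $X$ and $Y$ are independent standard normals (their covariance is the identity), $aX+bY\sim N(0,a^2+b^2)=N(0,1)$. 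The remainder terms are $o_P(1)\cdot O_P(1)=o_P(1)$, since convergent-in-distribution sequences are tight and $a_{n_j}-a\to0$, $b_{n_j}-b\to0$ are deterministic. Slutsky's lemma then gives $a_{n_j}X_{n_j}+b_{n_j}Y_{n_j}\tod N(0,1)$.

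An alternative that avoids subsequences uses characteristic functions directly. We have
\[
\e\bigl[e^{\mathrm{i}t(a_nX_n+b_nY_n)}\bigr]=\phi_n(ta_n,tb_n),
\]
where $\phi_n$ is the joint characteristic function of $(X_n,Y_n)$. By Lévy's theorem, $\phi_n$ converges to $e^{-(u^2+v^2)/2}$ uniformly on compact sets, and $(ta_n,tb_n)$ stays in a compact set. Hence $\phi_n(ta_n,tb_n)-e^{-t^2(a_n^2+b_n^2)/2}\to0$, and $e^{-t^2(a_n^2+b_n^2)/2}=e^{-t^2/2}$.

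The only delicate point is the non-convergence of $(a_n,b_n)$, which rules out a direct application of the continuous mapping theorem. Either the subsequence argument or the locally uniform convergence of characteristic functions handles it. I would present the characteristic-function version, since it is shortest.
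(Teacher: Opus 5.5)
Your proof is correct, and the characteristic-function version you say you would present is essentially the paper's argument: the paper shows $\phi_n\to\phi$ uniformly on compact subsets of $\mathbb{R}^{2}$, proving this by hand (tightness via Prohorov's theorem, an equicontinuity $3\epsilon$ bound, and a finite subcover) rather than citing it. It then evaluates on the circle $\{t^{2}+s^{2}=u^{2}\}$, which contains $(ua_n,ub_n)$, and concludes by L\'evy's continuity theorem. Your subsequence/Slutsky argument is also valid and avoids characteristic functions entirely.
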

\begin{proof}
By Levy's continuity theorem, we have 
\[
\phi_{n}(t,s):=\e\left[\exp\left\{ \mathrm{i}tX_{n}+\mathrm{i}sY_{n}\right\} \right]\to\e\left[\exp\left\{ \mathrm{i}tX+\mathrm{i}sY\right\} \right]=\exp\left(-t^{2}/2-s^{2}/2\right)=:\phi(t,s)
\]
as $n\to\infty$ for any $(t,s)\in\R^{2}$. We show that $\phi_{n}(t,s)$
converges to $\phi(t,s)$ uniformly over any compact set $\mathcal{G}\subset\R^{2}$.
We fix any point $(t_{0},s_{0})\in\R^{2}$ and any number $\ep>0$.
We employ the $3\epsilon$-argument.

\textbf{First $\epsilon$.} Note that $\phi(t,s)$ is continuous,
we have, there exists $\delta_{1}>0$ such that 
\[
\left|\phi(t_{0},s_{0})-\phi(t,s)\right|\leq\epsilon
\]
for any $(t,s)\in\mathcal{B}((t_{0},s_{0});\delta_{1}):=\left\{ (t,s):\left\Vert (t,s)-(t_{0},s_{0})\right\Vert <\delta_{1}\right\} $.

\textbf{Second $\epsilon$.} Since $\phi_{n}(t_{0},s_{0})\to\phi(t_{0},s_{0})$,
there exists $N_{2}\geq1$ such that 
\[
\left|\phi_{n}(t_{0},s_{0})\to\phi(t_{0},s_{0})\right|\leq\epsilon
\]
for any $n\geq N_{2}$.

\textbf{Third $\ep$.} By Prohorov's theorem (\citet[Theorem 2.4]{vaart1998Asymptotic}),
$\{(X_{n},Y_{n}):n\geq1\}$ is uniformly tight. Then there exists
$R>0$ such that 
\[
\sup_{n\geq1}P\left(\left\Vert (X_{n},Y_{n})\right\Vert \geq R\right)<\epsilon/4.
\]
Note that $\exp\{\mathrm{i}t\}$ is continuous at $t=0$, we have,
there exists $\delta_{3}>0$ such that 
\[
\left|\exp\{\mathrm{i}(t-t_{0})x+\mathrm{i}(s-s_{0})y\}-1\right|\leq\epsilon/2
\]
for all $(t,s)\in\mathcal{B}((t_{0},s_{0});\delta_{3})$ and $(x,y)\in\mathcal{B}(\bs 0;R)$.
Now we have 
\begin{align*}
\left|\phi_{n}(t,s)-\phi_{n}(t_{0},s_{0})\right| & \leq\left|\e\left[\exp\left\{ \mathrm{i}tX_{n}+\mathrm{i}sY_{n}\right\} \right]-\e\left[\exp\left\{ \mathrm{i}t_{0}X_{n}+\mathrm{i}s_{0}Y_{n}\right\} \right]\right|\\
 & \leq\left|\e\left[\exp\left\{ \mathrm{i}t_{0}X_{n}+\mathrm{i}s_{0}Y_{n}\right\} \left(\exp\left\{ \mathrm{i}(t-t_{0})X_{n}+\mathrm{i}(s-s_{0})Y_{n}\right\} -1\right)\right]\right|\\
 & \leq\e\left[\left|\exp\left\{ \mathrm{i}(t-t_{0})X_{n}+\mathrm{i}(s-s_{0})Y_{n}\right\} -1\right|\right]\\
 & =\int_{\R^{2}}\left|\exp\left\{ \mathrm{i}(t-t_{0})x+\mathrm{i}(s-s_{0})y\right\} -1\right|dF_{X_{n},Y_{n}}(x,y)\\
 & =\int_{\mathcal{B}(\bs 0;R)}\left|\exp\left\{ \mathrm{i}(t-t_{0})x+\mathrm{i}(s-s_{0})y\right\} -1\right|dF_{X_{n},Y_{n}}(x,y)\\
 & \qquad+\int_{\R^{2}\backslash\mathcal{B}(\bs 0;R)}\left|\exp\left\{ \mathrm{i}(t-t_{0})x+\mathrm{i}(s-s_{0})y\right\} -1\right|dF_{X_{n},Y_{n}}(x,y)\\
 & \leq\epsilon/2+2\times\epsilon/4=\epsilon
\end{align*}
for all $(t,s)\in\mathcal{B}((t_{0},s_{0});\delta_{3})$ and $n\geq1$.

Let $\delta(t_{0},s_{0}):=\min\left\{ \delta_{1},\delta_{3}\right\} >0$
and $N(t_{0},s_{0}):=N_{2}\geq1$, we have 
\begin{align}
\left|\phi_{n}(t,s)-\phi(t,s)\right| & \leq\left|\phi_{n}(t,s)-\phi_{n}(t_{0},s_{0})\right|+\left|\phi_{n}(t_{0},s_{0})-\phi(t_{0},s_{0})\right|+\left|\phi(t_{0},s_{0})-\phi(t,s)\right|\nonumber \\
 & \leq3\epsilon\label{eq:3=00005Cepsilon}
\end{align}
for all $(t,s)\in\mathcal{B}((t_{0},s_{0});\delta(t_{0},s_{0}))$
and $n\geq N(t_{0},s_{0})$.

Let $\mathcal{G}$ be any compact subset of $\R^{2}$. Then $\bigcup_{(t_{0},s_{0})\in G}\mathcal{B}((t_{0},s_{0});\delta(t_{0},s_{0}))$
is an open cover of $\mathcal{G}$. By the Heine--Borel theorem,
there exists a finite constant $G\geq1$ such that $\bigcup_{g=1}^{G}\mathcal{B}((t_{g},s_{g});\delta(t_{g},s_{g}))$
is a finite subcover of $\mathcal{G}$. Let $\delta:=\min_{1\leq g\leq G}\left\{ \delta(t_{g},s_{g})\right\} >0$
and $N:=\max_{1\leq g\leq G}\left\{ N(t_{g},s_{g})\right\} <\infty$,
then it follows from (\ref{eq:3=00005Cepsilon}) that 
\[
\left|\phi_{n}(t,s)-\phi(t,s)\right|\leq3\epsilon
\]
for all $(t,s)\in\mathcal{G}$ and $n\geq N$. Thus, $\phi_{n}(t,s)$
converges to $\phi(t,s)$ uniformly over any compact set $\mathcal{G}\subset\R^{2}$.

For any $u\in\R$, let $\mathcal{G}_{u}:=\left\{ (t,s)\in\R^{2}:t^{2}+s^{2}=u^{2}\right\} $.
Then $\mathcal{G}_{u}$ is a compact subset of $\R^{2}$. Now we have
\begin{align*}
 & \left|\e\left[\exp\left\{ \mathrm{i}u\left\{ a_{n}X_{n}+b_{n}Y_{n}\right\} \right\} \right]-\e\left[\exp\left\{ -u^{2}/2\right\} \right]\right|\\
= & \left|\phi_{n}(ua_{n},ub_{n})-\phi(ua_{n},ub_{n})\right|\leq\sup_{(t,s)\in\mathcal{G}_{u}}\left|\phi_{n}(t,s)-\phi(t,s)\right|\\
\to & 0
\end{align*}
as $n\to\infty$. Since $u\in\R$ is arbitrary, the result follows
from Levy's continuity theorem.
\end{proof}
\begin{lem}
\label{lem:conditional expectation}Let $(\Omega,\mathcal{F},P)$
be a probability space and $\mathcal{C}_{n}$ be a sequence of sub-$\sigma$-algebras
of $\mathcal{F}$. Let $\left\{ X_{n}\in\R,n\geq1\right\} $ and $\left\{ Y_{n}\in\R,n\geq1\right\} $
be two sequences of random variables. Let $E_{n}$ be a subset of
$\Omega$ and 
\[
Z_{n}(\omega):=\begin{cases}
X_{n}(\omega), & \omega\in E_{n},\\
Y_{n}(\omega), & \omega\in\Omega\backslash E_{n}.
\end{cases}
\]
If $E_{n}\in\mathcal{C}_{n}$, then $\e\left[Z_{n}\mid\mathcal{C}_{n}\right]=R_{n}$
a.s., where 
\[
R_{n}(\omega):=\begin{cases}
\e\left[X_{n}\mid\mathcal{C}_{n}\right](\omega), & \omega\in E_{n},\\
\e\left[Y_{n}\mid\mathcal{C}_{n}\right](\omega), & \omega\in\Omega\backslash E_{n}.
\end{cases}
\]
\end{lem}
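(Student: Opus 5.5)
The plan is to prove the identity $\e[Z_{n}\mid\mathcal{C}_{n}]=R_{n}$ a.s. directly from the defining property of conditional expectation. First, $R_{n}$ must be $\mathcal{C}_{n}$-measurable. Then $\e[Z_{n}\1_{G}]=\e[R_{n}\1_{G}]$ must hold for every $G\in\mathcal{C}_{n}$. Both steps rest on one observation: $E_{n}\in\mathcal{C}_{n}$, so $\1_{E_{n}}$ is $\mathcal{C}_{n}$-measurable.

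For measurability, I would write
\[
R_{n}=\1_{E_{n}}\e[X_{n}\mid\mathcal{C}_{n}]+\1_{\Omega\backslash E_{n}}\e[Y_{n}\mid\mathcal{C}_{n}]
\]
and
\[
Z_{n}=\1_{E_{n}}X_{n}+\1_{\Omega\backslash E_{n}}Y_{n}.
\]
$R_{n}$ is a sum of products of $\mathcal{C}_{n}$-measurable functions, hence $\mathcal{C}_{n}$-measurable.

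For the integral identity, fix $G\in\mathcal{C}_{n}$. Then $G\cap E_{n}$ and $G\cap(\Omega\backslash E_{n})$ both lie in $\mathcal{C}_{n}$. Apply the defining property of $\e[X_{n}\mid\mathcal{C}_{n}]$ on $G\cap E_{n}$, and that of $\e[Y_{n}\mid\mathcal{C}_{n}]$ on $G\cap(\Omega\backslash E_{n})$. Summing the two gives
\[
\e[Z_{n}\1_{G}]=\e[X_{n}\1_{G\cap E_{n}}]+\e[Y_{n}\1_{G\cap E_{n}^{c}}]=\e[R_{n}\1_{G}].
\]
Uniqueness of conditional expectation up to null sets then yields the claim.

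The only delicate point is integrability: the conditional expectations must be well defined. I would state that $X_{n}$ and $Y_{n}$ are assumed integrable, or at least quasi-integrable in the sense that their conditional expectations exist. Alternatively, the argument can be run for nonnegative variables first and extended to the general case via positive and negative parts. In the paper's uses (indicators, bounded complex exponentials split into real and imaginary parts, and square-integrable summands restricted to $E_{n}$), this is automatic. I would add a remark that the complex-valued case follows by applying the result to the real and imaginary parts separately.
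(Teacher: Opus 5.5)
Your proposal is correct and follows the same route as the paper: split an arbitrary $G\in\mathcal{C}_{n}$ into $G\cap E_{n}$ and $G\cap(\Omega\backslash E_{n})$, apply the defining property of each conditional expectation on its piece, and add the two. You are somewhat more careful than the paper, which leaves implicit both the $\mathcal{C}_{n}$-measurability of $R_{n}$ and the integrability assumptions under which the conditional expectations exist.
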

\begin{proof}
For any $H\in\mathcal{C}_{n}$, we have 
\begin{align*}
\int_{H}R_{n}dP & =\int_{H\cap E_{n}}R_{n}dP+\int_{H\cap\left(\Omega\backslash E_{n}\right)}R_{n}dP\\
 & =\int_{H\cap E_{n}}\e\left[X_{n}\mid\mathcal{C}_{n}\right]dP+\int_{H\cap\left(\Omega\backslash E_{n}\right)}\e\left[Y_{n}\mid\mathcal{C}_{n}\right]dP\\
 & =\int_{H\cap E_{n}}X_{n}dP+\int_{H\cap\left(\Omega\backslash E_{n}\right)}Y_{n}dP=\int_{H\cap E_{n}}Z_{n}dP+\int_{H\cap\left(\Omega\backslash E_{n}\right)}Z_{n}dP\\
 & =\int_{H}Z_{n}dP,
\end{align*}
where the third inequality follows from $E_{n}\in\mathcal{C}_{n}$
and the definitions of $\e\left[X_{n}\mid\mathcal{C}_{n}\right]$
and $\e\left[Y_{n}\mid\mathcal{C}_{n}\right]$. By the definition
of the conditional expectation, the desired result follows.
\end{proof}

\section{Proofs for the main theorems\label{sec:Proofs-for-the}}

\subsection{Notation}

We give the expressions for the asymptotic variances:
\[
\varsigma_{\widetilde{Y}}^{2}:=\sum_{k=1}^{K}\frac{p_{[k]}}{1-\pi_{[k]}}\var\left(\widetilde{Y}_{i}(0)\mid B_{i}=k\right)+\sum_{k=1}^{K}\frac{p_{[k]}}{\pi_{[k]}}\var\left(\widetilde{Y}_{i}(1)\mid B_{i}=k\right)
\]
\[
\varsigma_{H}^{2}:=\sum_{k=1}^{K}p_{[k]}\left\{ \e\left[Y_{i}(1)-Y_{i}(0)\mid B_{i}=k\right]\right\} ^{2}-\left\{ \sum_{k=1}^{K}p_{[k]}\e\left[Y_{i}(1)-Y_{i}(0)\mid B_{i}=k\right]\right\} ^{2},
\]
\[
\mathbf{\Sigma}_{[k]\widetilde{\bs{\xi}}_{n}^{*}\widetilde{Y}(a)}:=\e\left[\widetilde{\bs{\xi}}_{n}^{*}(\bs X_{i})\widetilde{Y}_{i}(a)\mid B_{i}=k\right],\ 1\leq k\leq K,a\in\{0,1\},
\]
\[
\mathbf{\Sigma}_{[k]\widetilde{\bs{\xi}}_{n}^{*}\widetilde{\bs{\xi}}_{n}^{*}}:=\e\left[\widetilde{\bs{\xi}}_{n}^{*}(\bs X_{i})\widetilde{\bs{\xi}}_{n}^{*}(\bs X_{i})^{\trans}\mid B_{i}=k\right],\ 1\leq k\leq K,
\]
\begin{align*}
\varsigma_{\widetilde{Y}\mid\widetilde{\bs{\xi}}_{n}^{*}}^{2} & :=\sum_{k=1}^{K}\frac{p_{[k]}}{\pi_{[k]}(1-\pi_{[k]})}\left\{ (1-\pi_{[k]})\mathbf{\Sigma}_{[k]\widetilde{\bs{\xi}}_{n}^{*}\widetilde{Y}(1)}+\pi_{[k]}\mathbf{\Sigma}_{[k]\widetilde{\bs{\xi}}_{n}^{*}\widetilde{Y}(0)}\right\} ^{\trans}\\
 & \qquad\times\mathbf{\Sigma}_{[k]\widetilde{\bs{\xi}}_{n}^{*}\widetilde{\bs{\xi}}_{n}^{*}}^{+}\left\{ (1-\pi_{[k]})\mathbf{\Sigma}_{[k]\widetilde{\bs{\xi}}_{n}^{*}\widetilde{Y}(1)}+\pi_{[k]}\mathbf{\Sigma}_{[k]\widetilde{\bs{\xi}}_{n}^{*}\widetilde{Y}(0)}\right\} ,
\end{align*}
and
\[
\varsigma_{\widetilde{Y}\mid\widetilde{h}^{*}}^{2}:=\sum_{k=1}^{K}\frac{p_{[k]}}{\pi_{[k]}(1-\pi_{[k]})}\e\left[\left\{ (1-\pi_{[k]})\widetilde{h}_{1[k]}^{*}(\bs X_{i})+\pi_{[k]}\widetilde{h}_{0[k]}^{*}(\bs X_{i})\right\} ^{2}\mid B_{i}=k\right].
\]
The estimates of the asymptotic variances are defined as:
\begin{align}
\widehat{\varsigma}_{\widetilde{Y}}^{2} & :=\sum_{k=1}^{K}\frac{n_{[k]}}{n_{[k]}-\widehat{r}_{[k]}-1}\frac{p_{n[k]}}{1-\pi_{n[k]}}\frac{1}{n_{0[k]}}\sum_{i\in[k]}(1-A_{i})(Y_{i}-\overline{Y}_{0[k]})^{2}\nonumber \\
 & \quad+\sum_{k=1}^{K}\frac{n_{[k]}}{n_{[k]}-\widehat{r}_{[k]}-1}\frac{p_{n[k]}}{\pi_{n[k]}}\frac{1}{n_{1[k]}}\sum_{i\in[k]}A_{i}(Y_{i}-\overline{Y}_{1[k]})^{2},\label{eq:hat varsigma Y}
\end{align}
\begin{equation}
\widehat{\varsigma}_{H}^{2}:=\sum_{k=1}^{K}\frac{n_{[k]}}{n_{[k]}-\widehat{r}_{[k]}-1}p_{n[k]}\left\{ \overline{Y}_{1[k]}-\overline{Y}_{0[k]}-(\overline{Y}_{1}-\overline{Y}_{0})\right\} ^{2},\label{eq:hat varsigma H}
\end{equation}
\[
\widehat{\mathbf{\Gamma}}_{[k]}:=\frac{1}{n}\sum_{i\in[k]}\left\{ \frac{(1-\pi_{n[k]})A_{i}}{\pi_{n[k]}}(\bs{\xi}_{n}(\bs X_{i})-\overline{\bs{\xi}}_{n[k]})(Y_{i}-\overline{Y}_{1[k]})+\frac{\pi_{n[k]}(1-A_{i})}{1-\pi_{n[k]}}(\bs{\xi}_{n}(\bs X_{i})-\overline{\bs{\xi}}_{n[k]})(Y_{i}-\overline{Y}_{0[k]})\right\} ,
\]
\[
\widehat{\mathbf{\Sigma}}_{[k]}:=\frac{1}{n}\sum_{i\in[k]}(A_{i}-\pi_{n[k]})^{2}(\bs{\xi}_{n}(\bs X_{i})-\overline{\bs{\xi}}_{n[k]})(\bs{\xi}_{n}(\bs X_{i})-\overline{\bs{\xi}}_{n[k]})^{\trans},\ 1\leq k\leq K,
\]
and 
\begin{equation}
\widehat{\varsigma}_{\widetilde{Y}\mid\widetilde{\bs{\xi}}_{n}^{*}}^{2}:=\sum_{k=1}^{K}\frac{n_{[k]}}{n_{[k]}-\widehat{r}_{[k]}-1}\widehat{\mathbf{\Gamma}}_{[k]}^{\trans}\widehat{\mathbf{\Sigma}}_{[k]}^{+}\widehat{\mathbf{\Gamma}}_{[k]}.\label{eq:hat varsigma Y|xi}
\end{equation}
Note that we adjust the degree of freedom in every stratum by a factor
of $n_{[k]}/(n_{[k]}-\widehat{r}_{n,[k]}-1)$ in (\ref{eq:hat varsigma Y})-(\ref{eq:hat varsigma Y|xi}),
where $\widehat{r}_{[k]}$ is a consistent estimator of $r_{[k]}=\mathrm{rank}\left\{ \e\left[\bs{\xi}_{n}^{*}(\bs X_{i})\bs{\xi}_{n}^{*}(\bs X_{i})^{\trans}\mid B_{i}=k\right]\right\} $,
i.e., $\sup_{1\leq k\leq K}\left\{ \left|\widehat{r}_{[k]}-r_{[k]}\right|/r_{[k]}\right\} =o_{P}(1)$
as $n\to\infty$. If $\e\left[\bs{\xi}_{n}^{*}(\bs X_{i})\bs{\xi}_{n}^{*}(\bs X_{i})^{\trans}\mid B_{i}=k\right]$
is non-singular, then we can simply take $\widehat{r}_{[k]}=d$.

Table~\ref{tab:The-Notation-that} lists the notation that will be
used in this section. Let $\mathcal{C}_{n}$ denote the $\sigma$-algebra
generated by $(A^{(n)},B^{(n)})$ and $\e_{\mathcal{C}_{n}}\left[\cdot\right]:=\e\left[\cdot\mid\mathcal{C}_{n}\right]$.
\begin{center}
\begin{table}[!tbh]
\begin{centering}
\caption{\label{tab:The-Notation-that}The list of notation used in Section
\ref{sec:Proofs-for-the}.}
\par\end{centering}
\resizebox{\textwidth}{!}{%
\begin{centering}
\begin{tabular}{llll}
\toprule 
Notation & Expression & Notation & Expression\tabularnewline
\midrule
\midrule 
$\widetilde{Y}_{i}(a)$ & $Y_{i}(a)-\e\left[Y_{i}(a)\mid B_{i}\right]$ & $\widetilde{\bs{\xi}}_{n}^{*}(\bs X_{i})$ & $\bs{\xi}_{n}^{*}(\bs X_{i})-\e\left[\bs{\xi}_{n}^{*}(\bs X_{i})\mid B_{i}\right]$\tabularnewline
\midrule 
$\bs{\Xi}_{i,[k]}$ & $\left\{ A_{i}-\pi_{n[k]}\right\} \1(B_{i}=k)\left\{ \bs{\xi}_{n}(\bs X_{i})-\overline{\bs{\xi}}_{n[k]}\right\} $ & $\bs{\Xi}_{i}$ & $\left(\bs{\Xi}_{i,[1]}^{\trans},\ldots,\bs{\Xi}_{i,[K]}^{\trans}\right)^{\trans}$\tabularnewline
\midrule 
$\bs{\Xi}_{i,[k]}^{*}$ & $\left\{ A_{i}-\pi_{n[k]}\right\} \1(B_{i}=k)\widetilde{\bs{\xi}}_{n}^{*}(\bs X_{i})$ & $\bs{\Xi}_{i}^{*}$ & $\left(\bs{\Xi}_{i,[1]}^{*\trans},\ldots,\bs{\Xi}_{i,[K]}^{*\trans}\right)^{\trans}$\tabularnewline
\midrule 
$\widetilde{Y}_{i,[k]}$ & $\1(B_{i}=k)\{\frac{A_{i}}{\pi_{n[k]}}(Y_{i}-\overline{Y}_{1[k]})-\frac{1-A_{i}}{1-\pi_{n[k]}}(Y_{i}-\overline{Y}_{0[k]})\}$ & $\widetilde{Y}_{i}$ & $\sum_{k=1}^{K}\widetilde{Y}_{i,[k]}$\tabularnewline
\midrule 
$\widetilde{Y}_{i,[k]}^{*}$ & $\1(B_{i}=k)\{\frac{A_{i}}{\pi_{n[k]}}\widetilde{Y}_{i}(1)-\frac{1-A_{i}}{1-\pi_{n[k]}}\widetilde{Y}_{i}(0)\}$ & $\widetilde{Y}_{i}^{*}$ & $\sum_{k=1}^{K}\widetilde{Y}_{i,[k]}^{*}$\tabularnewline
\midrule 
$\mathbf{\Sigma}_{[k]}$ & $\pi_{[k]}\left(1-\pi_{[k]}\right)p_{[k]}\mathbf{\Sigma}_{[k]\widetilde{\bs{\xi}}_{n}^{*}\widetilde{\bs{\xi}}_{n}^{*}}$ & $\mathbf{\Sigma}$ & $\diag\left\{ \mathbf{\Sigma}_{[k]}:k=1,\ldots,K\right\} $\tabularnewline
\midrule 
$\mathbf{\Sigma}_{[k]}^{\mathcal{C}_{n}}$ & $\frac{1}{n}\sum_{i=1}^{n}\left\{ A_{i}-\pi_{n[k]}\right\} ^{2}\1(B_{i}=k)\mathbf{\Sigma}_{[k]\widetilde{\bs{\xi}}_{n}^{*}\widetilde{\bs{\xi}}_{n}^{*}}$ & $\mathbf{\Sigma}_{\bs{\Xi}\bs{\Xi}\epsilon[k]}^{\mathcal{C}_{n}}$ & $\frac{1}{n}\sum_{i=1}^{n}\e_{\mathcal{C}_{n}}\left[\bs{\Xi}_{i,[k]}^{*}\bs{\Xi}_{i,[k]}^{*\trans}\epsilon_{i,[k]}^{*}\right]$\tabularnewline
\midrule 
$\bs{\beta}_{[k]}^{*}$ & $\mathbf{\Sigma}_{[k]\widetilde{\bs{\xi}}_{n}^{*}\widetilde{\bs{\xi}}_{n}^{*}}^{+}\left\{ \frac{1}{\pi_{[k]}}\mathbf{\Sigma}_{[k]\widetilde{\bs{\xi}}_{n}^{*}\widetilde{Y}(1)}+\frac{1}{1-\pi_{[k]}}\mathbf{\Sigma}_{[k]\widetilde{\bs{\xi}}_{n}^{*}\widetilde{Y}(0)}\right\} $ & $\bs{\beta}^{*}$ & $\left(\bs{\beta}_{[1]}^{*\trans},\ldots,\bs{\beta}_{[K]}^{*\trans}\right)^{\trans}$\tabularnewline
\midrule 
$\bs{\beta}_{[k],\mathcal{C}_{n}}$ & $\left\{ \mathbf{\Sigma}_{[k]}^{\mathcal{C}_{n}}\right\} ^{+}\left\{ \frac{1}{n}\sum_{i=1}^{n}\e_{\mathcal{C}_{n}}\left[\widetilde{Y}_{i,[k]}^{*}\bs{\Xi}_{i,[k]}^{*}\right]\right\} $ & $\bs{\beta}_{\mathcal{C}_{n}}$ & $\left(\bs{\beta}_{[1],\mathcal{C}_{n}}^{\trans},\ldots,\bs{\beta}_{[K],\mathcal{C}_{n}}^{\trans}\right)^{\trans}$\tabularnewline
\midrule 
$\widehat{\bs{\beta}}_{[k]}$ & $\left\{ \frac{1}{n}\sum_{i=1}^{n}\bs{\Xi}_{i,[k]}\bs{\Xi}_{i,[k]}^{\trans}\right\} ^{+}\frac{1}{n}\sum_{i=1}^{n}\widetilde{Y}_{i,[k]}\bs{\Xi}_{i,[k]}$ & $\widehat{\bs{\beta}}$ & $\left(\widehat{\bs{\beta}}_{[1]}^{\trans},\ldots,\widehat{\bs{\beta}}_{[K]}^{\trans}\right)^{\trans}$\tabularnewline
\midrule 
$\epsilon_{i,[k]}$ & $\widetilde{Y}_{i,[k]}-\bs{\beta}_{[k],\mathcal{C}_{n}}^{\trans}\bs{\Xi}_{i,[k]}$ & $\epsilon_{i}$ & $\left(\epsilon_{i,[1]},\ldots,\epsilon_{i,[K]}\right)^{\trans}$\tabularnewline
\midrule 
$\epsilon_{i,[k]}^{*}$ & $\widetilde{Y}_{i,[k]}^{*}-\bs{\beta}_{[k],\mathcal{C}_{n}}^{\trans}\bs{\Xi}_{i,[k]}^{*}$ & $\epsilon_{i}^{*}$ & $\left(\epsilon_{i,[1]}^{*},\ldots,\epsilon_{i,[K]}^{*}\right)^{\trans}$\tabularnewline
\bottomrule
\end{tabular}
\par\end{centering}
}
\end{table}
\par\end{center}

\subsection{Proof of Theorem \ref{thm:asymptotic properties of tau_cal}}

\textbf{Derive the expressions for the calibration weights.} We first
derive the expressions for the calibration weights $\widehat{w}_{i}$'s.
For all $i\in\left\{ 1,\ldots,n\right\} $, let 
\[
\bs{\Xi}_{i}:=\begin{pmatrix}\left\{ A_{i}-\pi_{n[1]}\right\} \1(B_{i}=1)\left\{ \bs{\xi}_{n}(\bs X_{i})-\overline{\bs{\xi}}_{n[1]}\right\} \\
\left\{ A_{i}-\pi_{n[2]}\right\} \1(B_{i}=2)\left\{ \bs{\xi}_{n}(\bs X_{i})-\overline{\bs{\xi}}_{n[2]}\right\} \\
\vdots\\
\left\{ A_{i}-\pi_{n[K]}\right\} \1(B_{i}=K)\left\{ \bs{\xi}_{n}(\bs X_{i})-\overline{\bs{\xi}}_{n[K]}\right\} 
\end{pmatrix}\in\R^{Kd}.
\]
Let $D(\bs w)=\sum_{i=1}^{n}D(w_{i},1)$ . Then the calibration problem
can be rewritten as 
\begin{equation}
\min_{w_{i}:1\leq i\leq n}D(\bs w)\ \text{ s.t. }\ \left(\bs{\Xi}_{1},\ldots,\bs{\Xi}_{n}\right)\bs w=0.\label{eq:cal step rewritten}
\end{equation}
Note that the conjugate function of $D(\bs w)$ is 
\[
D^{*}(\bs z)=\sum_{i=1}^{n}\left\{ z_{i}\cdot(D^{\prime})^{-1}(z_{i})-D\left\{ (D^{\prime})^{-1}(z_{i})\right\} \right\} =\sum_{i=1}^{n}-\rho(-z_{i}),\quad\forall\bs z=(z_{1},\ldots,z_{n})^{\trans},
\]
where $\rho(v):=D\left\{ (D^{\prime})^{-1}(-v)\right\} +v\cdot(D^{\prime})^{-1}(-v)$
for any $v\in\R$. By \citet{tseng1991Relaxation} (see also \citet{boyd2004Convex}),
the dual problem of (\ref{eq:cal step rewritten}) is 
\begin{align*}
\max_{\bs{\lambda}\in\R^{Kd}}\left\{ -D^{*}((\bs{\Xi}_{1},\ldots,\bs{\Xi}_{n})^{\trans}\bs{\lambda})\right\}  & =\max_{\bs{\lambda}\in\R^{Kd}}\sum_{i=1}^{n}\rho(-\bs{\lambda}^{\trans}\bs{\Xi}_{i})=\max_{\bs{\lambda}\in\R^{Kd}}\frac{1}{n}\sum_{i=1}^{n}\rho(\bs{\lambda}^{\trans}\bs{\Xi}_{i}).
\end{align*}
Let $\widehat{\bs{\lambda}}:=\arg\max_{\bs{\lambda}\in\R^{Kd}}\frac{1}{n}\sum_{i=1}^{n}\rho(\bs{\lambda}^{\trans}\bs{\Xi}_{i})$.
In the following, considering the special case that $D(v)=(v-1)^{2}/2$,
we have $\rho(v)=-v^{2}/2+v$, $\rho^{\prime}(v)=-v+1$ and $\rho^{\prime\prime}(v)=-1$.
Then 
\begin{equation}
\widehat{\bs{\lambda}}=\underset{\bs{\lambda}\in\R^{Kd}}{\arg\max}\frac{1}{n}\sum_{i=1}^{n}\left\{ -\frac{1}{2}\bs{\lambda}^{\trans}\bs{\Xi}_{i}\bs{\Xi}_{i}^{\trans}\bs{\lambda}+\bs{\lambda}^{\trans}\bs{\Xi}_{i}\right\} =\underset{\bs{\lambda}\in\R^{Kd}}{\arg\max}\left\{ -\frac{1}{2}\bs{\lambda}^{\trans}\frac{1}{n}\sum_{i=1}^{n}\bs{\Xi}_{i}\bs{\Xi}_{i}^{\trans}\bs{\lambda}+\bs{\lambda}^{\trans}\frac{1}{n}\sum_{i=1}^{n}\bs{\Xi}_{i}\right\} .\label{eq:solve lambda}
\end{equation}
We claim that 
\begin{equation}
\frac{1}{n}\sum_{i=1}^{n}\bs{\Xi}_{i}\in\text{range}\left(\frac{1}{n}\sum_{i=1}^{n}\bs{\Xi}_{i}\bs{\Xi}_{i}^{\trans}\right),\label{eq:mean in variance}
\end{equation}
otherwise the optimization problem (\ref{eq:solve lambda}) is unbounded,
which contradicts to 
\[
-\frac{1}{2}\bs{\lambda}^{\trans}\frac{1}{n}\sum_{i=1}^{n}\bs{\Xi}_{i}\bs{\Xi}_{i}^{\trans}\bs{\lambda}+\bs{\lambda}^{\trans}\frac{1}{n}\sum_{i=1}^{n}\bs{\Xi}_{i}=\frac{1}{n}\sum_{i=1}^{n}\left\{ -\frac{1}{2}\left(\bs{\lambda}^{\trans}\bs{\Xi}_{i}\right)^{2}+\bs{\lambda}^{\trans}\bs{\Xi}_{i}\right\} \leq\frac{1}{2}.
\]
By the first order condition of (\ref{eq:solve lambda}) we have 
\[
\frac{1}{n}\sum_{i=1}^{n}\bs{\Xi}_{i}\bs{\Xi}_{i}^{\trans}\bs{\lambda}=\frac{1}{n}\sum_{i=1}^{n}\bs{\Xi}_{i}.
\]
It follows from (\ref{eq:mean in variance}) that this equation has
at least one solution. We always take the solution as

\begin{equation}
\widehat{\bs{\lambda}}=\left\{ \frac{1}{n}\sum_{i=1}^{n}\bs{\Xi}_{i}\bs{\Xi}_{i}^{\trans}\right\} ^{+}\frac{1}{n}\sum_{i=1}^{n}\bs{\Xi}_{i},\label{eq:def of lambdahat}
\end{equation}
where $\left\{ \frac{1}{n}\sum_{i=1}^{n}\bs{\Xi}_{i}\bs{\Xi}_{i}^{\trans}\right\} ^{+}$
is the Moore-Penrose inverse of $\frac{1}{n}\sum_{i=1}^{n}\bs{\Xi}_{i}\bs{\Xi}_{i}^{\trans}$.

Note that 
\begin{equation}
\sum_{k=1}^{K}p_{n[k]}\left(\overline{Y}_{1[k]}-\overline{Y}_{0[k]}\right)=\frac{1}{n}\sum_{i=1}^{n}\sum_{k=1}^{K}\left\{ \frac{A_{i}}{\pi_{n[k]}}-\frac{1-A_{i}}{1-\pi_{n[k]}}\right\} \1(B_{i}=k)\cdot Y_{i}\label{eq:modification of tau_hat1}
\end{equation}
and 
\begin{equation}
\frac{1}{n}\sum_{i=1}^{n}\sum_{k=1}^{K}\left\{ \frac{A_{i}}{\pi_{n[k]}}\left(Y_{i}-\overline{Y}_{1[k]}\right)-\frac{1-A_{i}}{1-\pi_{n[k]}}\left(Y_{i}-\overline{Y}_{0[k]}\right)\right\} \1(B_{i}=k)=0.\label{eq:modification of tau_hat2}
\end{equation}
We have

\begin{align}
 & \widehat{\tau}_{\mathrm{cal}}-\tau\nonumber \\
= & \frac{1}{n}\sum_{i=1}^{n}\sum_{k=1}^{K}\left\{ \frac{A_{i}}{\pi_{n[k]}}-\frac{1-A_{i}}{1-\pi_{n[k]}}\right\} \1(B_{i}=k)\cdot Y_{i}-\tau\nonumber \\
 & \quad+\frac{1}{n}\sum_{i=1}^{n}(\widehat{w}_{i}-1)\sum_{k=1}^{K}\left\{ \frac{A_{i}}{\pi_{n[k]}}\left(Y_{i}-\overline{Y}_{1[k]}\right)-\frac{1-A_{i}}{1-\pi_{n[k]}}\left(Y_{i}-\overline{Y}_{0[k]}\right)\right\} \1(B_{i}=k)\nonumber \\
= & \frac{1}{n}\sum_{i=1}^{n}\sum_{k=1}^{K}\left\{ \frac{A_{i}}{\pi_{n[k]}}-\frac{1-A_{i}}{1-\pi_{n[k]}}\right\} \1(B_{i}=k)\cdot Y_{i}-\tau-\frac{1}{n}\sum_{i=1}^{n}\widehat{\bs{\lambda}}^{\trans}\bs{\Xi}_{i}\sum_{k=1}^{K}\widetilde{Y}_{i,[k]}\nonumber \\
= & \frac{1}{n}\sum_{i=1}^{n}\sum_{k=1}^{K}\left\{ \frac{A_{i}}{\pi_{n[k]}}-\frac{1-A_{i}}{1-\pi_{n[k]}}\right\} \1(B_{i}=k)\cdot Y_{i}-\tau-\frac{1}{n}\sum_{i=1}^{n}\sum_{k=1}^{K}\bs{\Xi}_{i}^{\trans}\widetilde{Y}_{i,[k]}\left\{ \frac{1}{n}\sum_{i=1}^{n}\bs{\Xi}_{i}\bs{\Xi}_{i}^{\trans}\right\} ^{+}\frac{1}{n}\sum_{i=1}^{n}\bs{\Xi}_{i}\nonumber \\
= & \sum_{k=1}^{K}\frac{1}{\pi_{n[k]}}\frac{1}{n}\sum_{i=1}^{n}A_{i}\1(B_{i}=k)\widetilde{Y}_{i}(1)-\sum_{k=1}^{K}\frac{1}{1-\pi_{n[k]}}\frac{1}{n}\sum_{i=1}^{n}\left(1-A_{i}\right)\1(B_{i}=k)\widetilde{Y}_{i}(0)\nonumber \\
 & \quad+\sum_{k=1}^{K}\e\left[Y_{i}(1)\mid B_{i}=k\right]\frac{1}{n}\sum_{i=1}^{n}\left(\1(B_{i}=k)-p_{[k]}\right)-\sum_{k=1}^{K}\e\left[Y_{i}(0)\mid B_{i}=k\right]\frac{1}{n}\sum_{i=1}^{n}\left(\1(B_{i}=k)-p_{[k]}\right)\nonumber \\
 & \quad-\sum_{k=1}^{K}\frac{1}{n}\sum_{i=1}^{n}\bs{\Xi}_{i,[k]}^{\trans}\widetilde{Y}_{i,[k]}\left\{ \frac{1}{n}\sum_{i=1}^{n}\bs{\Xi}_{i,[k]}\bs{\Xi}_{i,[k]}^{\trans}\right\} ^{+}\frac{1}{n}\sum_{i=1}^{n}\bs{\Xi}_{i,[k]}\nonumber \\
= & \underbrace{\frac{1}{n}\sum_{i=1}^{n}\sum_{k=1}^{K}\widetilde{Y}_{i,[k]}^{*}}_{R_{1}}+\underbrace{\left\{ -\frac{1}{n}\sum_{i=1}^{n}\sum_{k=1}^{K}\widehat{\bs{\beta}}_{[k]}^{\trans}\bs{\Xi}_{i,[k]}\right\} }_{R_{2}}+\underbrace{\left\{ \sum_{k=1}^{K}\e\left[Y_{i}(1)\mid B_{i}=k\right]\cdot\frac{1}{n}\sum_{i=1}^{n}\left(\1(B_{i}=k)-p_{[k]}\right)\right\} }_{R_{3}}\nonumber \\
 & \quad+\underbrace{\left\{ -\sum_{k=1}^{K}\e\left[Y_{i}(0)\mid B_{i}=k\right]\cdot\frac{1}{n}\sum_{i=1}^{n}\left(\1(B_{i}=k)-p_{[k]}\right)\right\} }_{R_{4}},\label{eq:main estimator decomposition}
\end{align}
where the first equality uses (\ref{eq:modification of tau_hat1})
and (\ref{eq:modification of tau_hat2}), the third one follows from
$\widehat{\bs{\lambda}}=\left\{ \frac{1}{n}\sum_{i=1}^{n}\bs{\Xi}_{i}\bs{\Xi}_{i}^{\trans}\right\} ^{+}\frac{1}{n}\sum_{i=1}^{n}\bs{\Xi}_{i}$,
and the fourth one follows from 
\[
\sum_{k=1}^{K}\bs{\Xi}_{i}^{\trans}\widetilde{Y}_{i,[k]}=\left(\bs{\Xi}_{i,[1]}^{\trans}\sum_{k=1}^{K}\widetilde{Y}_{i,[k]},\ldots,\bs{\Xi}_{i,[K]}^{\trans}\sum_{k=1}^{K}\widetilde{Y}_{i,[k]}\right)=\left(\bs{\Xi}_{i,[1]}^{\trans}\widetilde{Y}_{i,[1]},\ldots,\bs{\Xi}_{i,[K]}^{\trans}\widetilde{Y}_{i,[K]}\right)
\]
 and $\frac{1}{n}\sum_{i=1}^{n}A_{i}\1(B_{i}=k)=\pi_{n[k]}\frac{1}{n}\sum_{i=1}^{n}\1(B_{i}=k)$.

The last two terms ($R_{3}$ and $R_{4}$) in (\ref{eq:main estimator decomposition})
can be handled using the classical central limit theorem. We focus
on the first two terms ($R_{1}$ and $R_{2}$) in (\ref{eq:main estimator decomposition}).
To begin with, we show the consistency of $\widehat{\bs{\beta}}_{[k]}$.
It suffices to show the consistency results of $\frac{1}{n}\sum_{i=1}^{n}\bs{\Xi}_{i,[k]}\bs{\Xi}_{i,[k]}^{\trans}$
and $\frac{1}{n}\sum_{i=1}^{n}\bs{\Xi}_{i,[k]}^{\trans}\widetilde{Y}_{i,[k]}$
for every $k=1,\ldots,K$.

\textbf{The probability limit of $\left\{ \frac{1}{n}\sum_{i=1}^{n}\bs{\Xi}_{i,[k]}\bs{\Xi}_{i,[k]}^{\trans}\right\} ^{+}$.}
We have {\footnotesize
\begin{align}
 & \frac{1}{n}\sum_{i=1}^{n}\bs{\Xi}_{i,[k]}\bs{\Xi}_{i,[k]}^{\trans}\nonumber \\
= & \frac{1}{n}\sum_{i=1}^{n}A_{i}\1(B_{i}=k)\left\{ \bs{\xi}_{n}^{*}(\bs X_{i})-\overline{\bs{\xi}}_{n[k]}^{*}\right\} \left\{ \bs{\xi}_{n}^{*}(\bs X_{i})-\overline{\bs{\xi}}_{n[k]}^{*}\right\} ^{\trans}-\nonumber \\
 & \frac{2}{n}\sum_{i=1}^{n}\pi_{n[k]}A_{i}\1(B_{i}=k)\left\{ \bs{\xi}_{n}^{*}(\bs X_{i})-\overline{\bs{\xi}}_{n[k]}^{*}\right\} \left\{ \bs{\xi}_{n}^{*}(\bs X_{i})-\overline{\bs{\xi}}_{n[k]}^{*}\right\} ^{\trans}+\nonumber \\
 & \frac{1}{n}\sum_{i=1}^{n}\pi_{n[k]}^{2}\1(B_{i}=k)\left\{ \bs{\xi}_{n}^{*}(\bs X_{i})-\overline{\bs{\xi}}_{n[k]}^{*}\right\} \left\{ \bs{\xi}_{n}^{*}(\bs X_{i})-\overline{\bs{\xi}}_{n[k]}^{*}\right\} ^{\trans}+\nonumber \\
 & \underbrace{\frac{1}{n}\sum_{i=1}^{n}(A_{i}-\pi_{n[k]})^{2}\1(B_{i}=k)\left[\left(\bs{\xi}_{n}(\bs X_{i})-\overline{\bs{\xi}}_{n[k]}\right)\left(\bs{\xi}_{n}(\bs X_{i})-\overline{\bs{\xi}}_{n[k]}\right)^{\trans}-\left(\bs{\xi}_{n}^{*}(\bs X_{i})-\overline{\bs{\xi}}_{n[k]}^{*}\right)\left(\bs{\xi}_{n}^{*}(\bs X_{i})-\overline{\bs{\xi}}_{n[k]}^{*}\right)^{\trans}\right]}_{R_{5}},\label{eq:analyze empirical var}
\end{align}
}where $\overline{\bs{\xi}}_{n[k]}^{*}:=\frac{1}{n_{[k]}}\sum_{i=1}^{n}\1(B_{i}=k)\bs{\xi}_{n}^{*}(\bs X_{i})$
is the stratum-specific sample mean for $\bs{\xi}_{n}^{*}(\bs X_{i})$.
For every $1\leq k\leq K$ and $\bs{\alpha}\in\R^{d}$, by the Cauchy--Schwarz
inequality we have 
\begin{align*}
\bs{\alpha}^{\trans}R_{5}\bs{\alpha} & =\frac{1}{n}\sum_{i=1}^{n}\left(A_{i}-\pi_{n[k]}\right)^{2}\1(B_{i}=k)\left[\left(\bs{\alpha}^{\trans}\bs{\xi}_{n}(\bs X_{i})-\bs{\alpha}^{\trans}\overline{\bs{\xi}}_{n[k]}\right)^{2}-\left(\bs{\alpha}^{\trans}\bs{\xi}_{n}^{*}(\bs X_{i})-\bs{\alpha}^{\trans}\overline{\bs{\xi}}_{n[k]}^{*}\right)^{2}\right]\\
 & \leq\sqrt{\frac{1}{n}\sum_{i=1}^{n}\1(B_{i}=k)\left\{ \bs{\alpha}^{\trans}\bs{\xi}_{n}(\bs X_{i})-\bs{\alpha}^{\trans}\overline{\bs{\xi}}_{n[k]}-\left(\bs{\alpha}^{\trans}\bs{\xi}_{n}^{*}(\bs X_{i})-\bs{\alpha}^{\trans}\overline{\bs{\xi}}_{n[k]}^{*}\right)\right\} ^{2}}\times\\
 & \qquad\sqrt{\frac{1}{n}\sum_{i=1}^{n}\1(B_{i}=k)\left\{ \bs{\alpha}^{\trans}\bs{\xi}_{n}(\bs X_{i})-\bs{\alpha}^{\trans}\overline{\bs{\xi}}_{n[k]}+\left(\bs{\alpha}^{\trans}\bs{\xi}_{n}^{*}(\bs X_{i})-\bs{\alpha}^{\trans}\overline{\bs{\xi}}_{n[k]}^{*}\right)\right\} ^{2}}\\
 & \leq\left\Vert \bs{\alpha}\right\Vert ^{2}\sqrt{\underbrace{\frac{1}{n}\sum_{i=1}^{n}\1(B_{i}=k)\left\Vert \bs{\xi}_{n}(\bs X_{i})-\overline{\bs{\xi}}_{n[k]}-\left(\bs{\xi}_{n}^{*}(\bs X_{i})-\overline{\bs{\xi}}_{n[k]}^{*}\right)\right\Vert ^{2}}_{R_{6}}}\times\\
 & \qquad\sqrt{\underbrace{\frac{1}{n}\sum_{i=1}^{n}\1(B_{i}=k)\left\Vert \bs{\xi}_{n}(\bs X_{i})-\overline{\bs{\xi}}_{n[k]}+\left(\bs{\xi}_{n}^{*}(\bs X_{i})-\overline{\bs{\xi}}_{n[k]}^{*}\right)\right\Vert ^{2}}_{R_{7}}},
\end{align*}
which leads to 
\[
\left\Vert R_{5}\right\Vert \leq\sqrt{R_{6}}\times\sqrt{R_{7}}.
\]
We control $R_{6}$ and $R_{7}$ separately. By Assumption \ref{assu:conditions on =00005Cxi},
we have 
\begin{align}
\left\Vert \overline{\bs{\xi}}_{n[k]}-\overline{\bs{\xi}}_{n[k]}^{*}\right\Vert  & =\left\Vert \frac{1}{n_{[k]}}\sum_{i=1}^{n}\1(B_{i}=k)\left\{ \bs{\xi}_{n}(\bs X_{i})-\bs{\xi}_{n}^{*}(\bs X_{i})\right\} \right\Vert \nonumber \\
 & \leq\sqrt{\frac{1}{n_{[k]}}\sum_{i=1}^{n}\1(B_{i}=k)\left\Vert \bs{\xi}_{n}(\bs X_{i})-\bs{\xi}_{n}^{*}(\bs X_{i})\right\Vert ^{2}}=o_{P}(1)\label{eq:sample group mean-pop group mean}
\end{align}
for all $k=1,\ldots K$. Then it follows from Assumption \ref{assu:conditions on =00005Cxi},
weak law of large numbers and (\ref{eq:sample group mean-pop group mean})
that 
\begin{align}
R_{6} & \leq2\times\frac{1}{n}\sum_{i=1}^{n}\1(B_{i}=k)\left\Vert \bs{\xi}_{n}(\bs X_{i})-\bs{\xi}_{n}^{*}(\bs X_{i})\right\Vert ^{2}+2\times\frac{1}{n}\sum_{i=1}^{n}\1(B_{i}=k)\left\Vert \overline{\bs{\xi}}_{n[k]}-\overline{\bs{\xi}}_{n[k]}^{*}\right\Vert ^{2}=o_{P}(1).\label{eq:R6 in main theorem}
\end{align}
By Lemma \ref{lem:LLN for triangular array} and Assumption \ref{assu:conditions on =00005Cxi},
we have 
\[
\frac{1}{n}\sum_{i=1}^{n}\left\{ \1(B_{i}=k)\left\Vert \bs{\xi}_{n}^{*}(\bs X_{i})\right\Vert ^{2}-\e\left[\1(B_{i}=k)\left\Vert \bs{\xi}_{n}^{*}(\bs X_{i})\right\Vert ^{2}\right]\right\} =o_{P}(1)
\]
and 
\begin{align}
 & \overline{\bs{\xi}}_{n[k]}^{*}-\e\left[\bs{\xi}_{n}^{*}(\bs X_{i})\mid B_{i}=k\right]\nonumber \\
= & \frac{n}{n_{[k]}}\frac{1}{n}\sum_{i=1}^{n}\left\{ \1(B_{i}=k)\bs{\xi}_{n}^{*}(\bs X_{i})-\e\left[\1(B_{i}=k)\bs{\xi}_{n}^{*}(\bs X_{i})\right]\right\} +o_{P}(1)=o_{P}(1).\label{eq:main theorem xibar-true mean}
\end{align}
As a result, we have
\begin{align*}
R_{7} & \leq2\times R_{6}+8\times\frac{1}{n}\sum_{i=1}^{n}\1(B_{i}=k)\left\Vert \bs{\xi}_{n}^{*}(\bs X_{i})-\overline{\bs{\xi}}_{n[k]}^{*}\right\Vert ^{2}\\
 & \leq o_{P}(1)+\frac{16}{n}\sum_{i=1}^{n}\1(B_{i}=k)\left\Vert \bs{\xi}_{n}^{*}(\bs X_{i})\right\Vert ^{2}+\frac{16}{n}\sum_{i=1}^{n}\1(B_{i}=k)\left\Vert \overline{\bs{\xi}}_{n[k]}^{*}\right\Vert ^{2}\\
 & =o_{P}(1)+\left\{ O(1)+o_{P}(1)\right\} +O_{P}(1)\left\{ O(1)+o_{P}(1)\right\} ^{2}=O_{P}(1).
\end{align*}
 Now, we can obtain that $\left\Vert R_{5}\right\Vert \leq\sqrt{R_{6}}\times\sqrt{R_{7}}=o_{P}(1)$
and thus
\begin{align*}
 & \frac{1}{n}\sum_{i=1}^{n}\bs{\Xi}_{i,[k]}\bs{\Xi}_{i,[k]}^{\trans}\\
= & \frac{1}{n}\sum_{i=1}^{n}A_{i}\1(B_{i}=k)\left\{ \bs{\xi}_{n}^{*}(\bs X_{i})-\overline{\bs{\xi}}_{n[k]}^{*}\right\} \left\{ \bs{\xi}_{n}^{*}(\bs X_{i})-\overline{\bs{\xi}}_{n[k]}^{*}\right\} ^{\trans}\\
 & -\frac{2}{n}\sum_{i=1}^{n}\pi_{n[k]}A_{i}\1(B_{i}=k)\left\{ \bs{\xi}_{n}^{*}(\bs X_{i})-\overline{\bs{\xi}}_{n[k]}^{*}\right\} \left\{ \bs{\xi}_{n}^{*}(\bs X_{i})-\overline{\bs{\xi}}_{n[k]}^{*}\right\} ^{\trans}\\
 & +\frac{1}{n}\sum_{i=1}^{n}\pi_{n[k]}^{2}\1(B_{i}=k)\left\{ \bs{\xi}_{n}^{*}(\bs X_{i})-\overline{\bs{\xi}}_{n[k]}^{*}\right\} \left\{ \bs{\xi}_{n}^{*}(\bs X_{i})-\overline{\bs{\xi}}_{n[k]}^{*}\right\} ^{\trans}+o_{P}(1).
\end{align*}
Let $\mathcal{C}_{n}$ denote the $\sigma$-algebra generated by $(A^{(n)},B^{(n)})$.
Then by Assumptions \ref{assu:independent sampling} and \ref{assu:treatment assignment}
we have $(Y_{i}(1),Y_{i}(0),\bs X_{i}^{\trans})^{\trans},$ $i=1,\ldots,n$,
are independent conditional on $\mathcal{C}_{n}$.For the sake of
simplicity we let $\e_{\mathcal{C}_{n}}\left[\cdot\right]:=\e\left[\cdot\mid\mathcal{C}_{n}\right]$.

Note that 
\[
\e_{\mathcal{C}_{n}}\left[A_{i}\1(B_{i}=k)\bs{\xi}_{n}^{*}(\bs X_{i})\bs{\xi}_{n}^{*}(\bs X_{i})^{\trans}\right]=A_{i}\1(B_{i}=k)\e\left[\bs{\xi}_{n}^{*}(\bs X_{i})\bs{\xi}_{n}^{*}(\bs X_{i})^{\trans}\mid B_{i}=k\right]
\]
and 
\begin{align*}
 & \e_{\mathcal{C}_{n}}\left[\left\{ \left\Vert A_{i}\1(B_{i}=k)\bs{\xi}_{n}^{*}(\bs X_{i})\bs{\xi}_{n}^{*}(\bs X_{i})^{\trans}-\e_{\mathcal{C}}\left[A_{i}\1(B_{i}=k)\bs{\xi}_{n}^{*}(\bs X_{i})\bs{\xi}_{n}^{*}(\bs X_{i})^{\trans}\right]\right\Vert _{F}\right\} ^{1+\epsilon/2}\right]\\
\leq & \e_{\mathcal{C}_{n}}\left[\left\{ \left\Vert A_{i}\1(B_{i}=k)\bs{\xi}_{n}^{*}(\bs X_{i})\bs{\xi}_{n}^{*}(\bs X_{i})^{\trans}\right\Vert _{F}\right\} ^{1+\epsilon/2}\right]\\
\leq & \e\left[\left\{ \sqrt{\tr\left\{ \bs{\xi}_{n}^{*}(\bs X_{i})\bs{\xi}_{n}^{*}(\bs X_{i})^{\trans}\bs{\xi}_{n}^{*}(\bs X_{i})\bs{\xi}_{n}^{*}(\bs X_{i})^{\trans}\right\} }\right\} ^{1+\epsilon/2}\mid B_{i}=k\right]\\
= & \e\left[\left\Vert \bs{\xi}_{n}^{*}(\bs X_{i})\right\Vert ^{2+\epsilon}\mid B_{i}=k\right]=O(1),
\end{align*}
where the last step follows from Assumption \ref{assu:conditions on =00005Cxi}.
By Lemma \ref{lem:LLN for triangular array}, we have\footnote{View $A_{i}\1(B_{i}=k)\bs{\xi}_{n}^{*}(\bs X_{i})\bs{\xi}_{n}^{*}(\bs X_{i})^{\trans}$
as a vector and use the fact that 
\[
\left\Vert \vec\left\{ A_{i}\1(B_{i}=k)\bs{\xi}_{n}^{*}(\bs X_{i})\bs{\xi}_{n}^{*}(\bs X_{i})^{\trans}\right\} \right\Vert =\left\Vert A_{i}\1(B_{i}=k)\bs{\xi}_{n}^{*}(\bs X_{i})\bs{\xi}_{n}^{*}(\bs X_{i})^{\trans}\right\Vert _{F}.
\]
} 
\[
\frac{1}{n}\sum_{i=1}^{n}A_{i}\1(B_{i}=k)\bs{\xi}_{n}^{*}(\bs X_{i})\bs{\xi}_{n}^{*}(\bs X_{i})^{\trans}=\frac{1}{n}\sum_{i=1}^{n}A_{i}\1(B_{i}=k)\e\left[\bs{\xi}_{n}^{*}(\bs X_{i})\bs{\xi}_{n}^{*}(\bs X_{i})^{\trans}\mid B_{i}=k\right]+o_{P}(1).
\]
Similarly, we can also obtain that 
\[
\frac{1}{n}\sum_{i=1}^{n}A_{i}\1(B_{i}=k)\bs{\xi}_{n}^{*}(\bs X_{i})=\frac{1}{n}\sum_{i=1}^{n}A_{i}\1(B_{i}=k)\e\left[\bs{\xi}_{n}^{*}(\bs X_{i})\mid B_{i}=k\right]+o_{P}(1),
\]
\[
\frac{1}{n}\sum_{i=1}^{n}\1(B_{i}=k)\bs{\xi}_{n}^{*}(\bs X_{i})\bs{\xi}_{n}^{*}(\bs X_{i})^{\trans}=\frac{1}{n}\sum_{i=1}^{n}\1(B_{i}=k)\e\left[\bs{\xi}_{n}^{*}(\bs X_{i})\bs{\xi}_{n}^{*}(\bs X_{i})^{\trans}\mid B_{i}=k\right]+o_{P}(1),
\]
\[
\frac{1}{n}\sum_{i=1}^{n}\1(B_{i}=k)\bs{\xi}_{n}^{*}(\bs X_{i})=\frac{1}{n}\sum_{i=1}^{n}\1(B_{i}=k)\e\left[\bs{\xi}_{n}^{*}(\bs X_{i})\mid B_{i}=k\right]+o_{P}(1),
\]
and 
\begin{equation}
\overline{\bs{\xi}}_{n[k]}^{*}=\frac{n}{n_{[k]}}\frac{1}{n}\sum_{i=1}^{n}\1(B_{i}=k)\bs{\xi}_{n}^{*}(\bs X_{i})=\e\left[\bs{\xi}_{n}^{*}(\bs X_{i})\mid B_{i}=k\right]+o_{P}(1).\label{eq:lln for =00005Coverline=00007B=00005Cbs=00007B=00005Cxi=00007D=00007D_=00007Bn=00005Bk=00005D=00007D^=00007B*=00007D}
\end{equation}
Since $\pi_{n[k]}=O_{P}(1)$ and $\overline{\bs{\xi}}_{n[k]}^{*}=O_{P}(1)$,
we can derive that 
\begin{align*}
 & \frac{1}{n}\sum_{i=1}^{n}\bs{\Xi}_{i,[k]}\bs{\Xi}_{i,[k]}^{\trans}\\
= & o_{P}(1)+\frac{1}{n}\sum_{i=1}^{n}\left\{ A_{i}-2\pi_{n[k]}A_{i}+\pi_{n[k]}^{2}\right\} \1(B_{i}=k)\times\\
 & \quad\left\{ \e\left[\bs{\xi}_{n}^{*}(\bs X_{i})\bs{\xi}_{n}^{*}(\bs X_{i})^{\trans}\mid B_{i}=k\right]-\e\left[\bs{\xi}_{n}^{*}(\bs X_{i})\mid B_{i}=k\right]\e\left[\bs{\xi}_{n}^{*}(\bs X_{i})^{\trans}\mid B_{i}=k\right]\right\} \\
= & \frac{1}{n}\sum_{i=1}^{n}\e_{\mathcal{C}_{n}}\left[\bs{\Xi}_{i,[k]}^{*}\bs{\Xi}_{i,[k]}^{*\trans}\right]+o_{P}(1)\\
= & \frac{1}{n}\sum_{i=1}^{n}\left\{ A_{i}-\pi_{n[k]}\right\} ^{2}\1(B_{i}=k)\mathbf{\Sigma}_{[k]\widetilde{\bs{\xi}}_{n}^{*}\widetilde{\bs{\xi}}_{n}^{*}}+o_{P}(1)=\mathbf{\Sigma}_{[k]}^{\mathcal{C}_{n}}+o_{P}(1).
\end{align*}
By Lemma \ref{lem:LLN}, we have $\frac{1}{n}\sum_{i=1}^{n}A_{i}\1(B_{i}=k)=\pi_{[k]}p_{[k]}+o_{P}(1)$.
Then it follows from 
\[
\left\Vert \e\left[\mathbf{\Sigma}_{[k]\widetilde{\bs{\xi}}_{n}^{*}\widetilde{\bs{\xi}}_{n}^{*}}\mid B_{i}=k\right]\right\Vert \leq\e\left[\left\Vert \bs{\xi}_{n}^{*}(\bs X_{i})\right\Vert ^{2}\mid B_{i}=k\right]<\infty
\]
that
\[
\mathbf{\Sigma}_{[k]}^{\mathcal{C}_{n}}=\frac{1}{n}\sum_{i=1}^{n}\left\{ A_{i}-\pi_{n[k]}\right\} ^{2}\1(B_{i}=k)\mathbf{\Sigma}_{[k]\widetilde{\bs{\xi}}_{n}^{*}\widetilde{\bs{\xi}}_{n}^{*}}=\mathbf{\Sigma}_{[k]}+o_{P}(1).
\]
As a result, we have 
\begin{equation}
\frac{1}{n}\sum_{i=1}^{n}\bs{\Xi}_{i,[k]}\bs{\Xi}_{i,[k]}^{\trans}=\mathbf{\Sigma}_{[k]}+o_{P}(1).\label{eq:convergence of var each k}
\end{equation}
By Assumption \ref{assu:conditions on =00005Cxi} and the derivation
of the Moore-Penrose inverse, we have
\[
\left\Vert \mathbf{\Sigma}_{[k]\widetilde{\bs{\xi}}_{n}^{*}\widetilde{\bs{\xi}}_{n}^{*}}^{+}\right\Vert \leq c^{-1}
\]
for all $k=1,\ldots,K$ and $n\geq1$. Then we have 
\[
\left\Vert \mathbf{\Sigma}_{[k]}^{+}\right\Vert =\left\Vert \pi_{[k]}^{-1}(1-\pi_{[k]})^{-1}p_{[k]}^{-1}\mathbf{\Sigma}_{[k]\widetilde{\bs{\xi}}_{n}^{*}\widetilde{\bs{\xi}}_{n}^{*}}^{+}\right\Vert \leq\pi_{[k]}^{-1}(1-\pi_{[k]})^{-1}p_{[k]}^{-1}c^{-1}.
\]
By Theorem 3.3 in \citet{stewart1977Perturbation} and Assumption
\ref{assu:conditions on =00005Cxi}, we have 
\begin{align}
\left\Vert \left\{ \frac{1}{n}\sum_{i=1}^{n}\bs{\Xi}_{i,[k]}\bs{\Xi}_{i,[k]}^{\trans}\right\} ^{+}-\mathbf{\Sigma}_{[k]}^{+}\right\Vert  & \leq3\max\left\{ \left\Vert \left\{ \frac{1}{n}\sum_{i=1}^{n}\bs{\Xi}_{i,[k]}\bs{\Xi}_{i,[k]}^{\trans}\right\} ^{+}\right\Vert ^{2},\left\Vert \mathbf{\Sigma}_{[k]}^{+}\right\Vert ^{2}\right\} \left\Vert \frac{1}{n}\sum_{i=1}^{n}\bs{\Xi}_{i,[k]}\bs{\Xi}_{i,[k]}^{\trans}-\mathbf{\Sigma}_{[k]}^{+}\right\Vert \nonumber \\
 & =O_{P}(1)\times o_{P}(1)=o_{P}(1).\label{eq:consistency of inverse of var}
\end{align}

\textbf{The probability limit of $\frac{1}{n}\sum_{i=1}^{n}\bs{\Xi}_{i,[k]}^{\trans}\widetilde{Y}_{i,[k]}$.}
Note that for every $k=1,\ldots,K$ we have the following decomposition:
\begin{align}
 & \frac{1}{n}\sum_{i=1}^{n}\frac{1-\pi_{n[k]}}{\pi_{n[k]}}\1(B_{i}=k)A_{i}(\bs{\xi}_{n}(\bs X_{i})-\overline{\bs{\xi}}_{n[k]})(Y_{i}-\overline{Y}_{1[k]})\nonumber \\
= & \frac{1}{n}\sum_{i=1}^{n}\frac{1-\pi_{n[k]}}{\pi_{n[k]}}\1(B_{i}=k)A_{i}(\bs{\xi}_{n}(\bs X_{i})-\overline{\bs{\xi}}_{n[k]})Y_{i}(1)\nonumber \\
 & -\overline{Y}_{1[k]}\times\frac{1}{n}\sum_{i=1}^{n}\frac{1-\pi_{n[k]}}{\pi_{n[k]}}\1(B_{i}=k)A_{i}(\bs{\xi}_{n}(\bs X_{i})-\overline{\bs{\xi}}_{n[k]})\nonumber \\
= & \frac{1}{n}\sum_{i=1}^{n}\frac{1-\pi_{n[k]}}{\pi_{n[k]}}\1(B_{i}=k)A_{i}(\bs{\xi}_{n}^{*}(\bs X_{i})-\overline{\bs{\xi}}_{n[k]}^{*})(Y_{i}(1)-\overline{Y}_{1[k]})\nonumber \\
 & +\underbrace{\frac{1}{n}\sum_{i=1}^{n}\frac{1-\pi_{n[k]}}{\pi_{n[k]}}\1(B_{i}=k)A_{i}\left\{ \bs{\xi}_{n}(\bs X_{i})-\overline{\bs{\xi}}_{n[k]}-\left(\bs{\xi}_{n}^{*}(\bs X_{i})-\overline{\bs{\xi}}_{n[k]}^{*}\right)\right\} Y_{i}}_{R_{8}}\nonumber \\
 & -\underbrace{\overline{Y}_{1[k]}\times\frac{1}{n}\sum_{i=1}^{n}\frac{1-\pi_{n[k]}}{\pi_{n[k]}}\1(B_{i}=k)A_{i}\left\{ \bs{\xi}_{n}(\bs X_{i})-\overline{\bs{\xi}}_{n[k]}-(\bs{\xi}_{n}^{*}(\bs X_{i})-\overline{\bs{\xi}}_{n[k]}^{*})\right\} }_{R_{9}}.\label{eq:analyze empirical cov}
\end{align}
We control $R_{8}$ and $R_{9}$ separately. We have 
\begin{align*}
\left|R_{8}\right| & \leq\frac{1-\pi_{n[k]}}{\pi_{n[k]}}\sqrt{\frac{1}{n}\sum_{i=1}^{n}Y_{i}^{2}}\times\sqrt{\frac{1}{n}\sum_{i=1}^{n}\1(B_{i}=k)\left\Vert \bs{\xi}_{n}(\bs X_{i})-\overline{\bs{\xi}}_{n[k]}-\left(\bs{\xi}_{n}^{*}(\bs X_{i})-\overline{\bs{\xi}}_{n[k]}^{*}\right)\right\Vert ^{2}}\\
 & =O_{P}(1)\times\sqrt{R_{6}}=o_{P}(1)
\end{align*}
for every $k=1,\ldots,K$, where the first step follows from the Cauchy--Schwarz
inequality, the second step follows from Assumption \ref{assu:independent sampling}
and the last step follows from $R_{6}=o_{P}(1)$. Similarly, by noting
that $\overline{Y}_{1[k]}=\frac{1}{n_{1[k]}}\sum_{i=1}^{n}\1(B_{i}=k)A_{i}Y_{i}(1)=O_{P}(1)$
we have 
\[
\left|R_{9}\right|\leq O_{P}(1)\times\sqrt{R_{6}}=o_{P}(1).
\]
Combining the results for $R_{8}$ and $R_{9}$ together, we have
\begin{align}
 & \frac{1}{n}\sum_{i=1}^{n}\frac{1-\pi_{n[k]}}{\pi_{n[k]}}\1(B_{i}=k)A_{i}(\bs{\xi}_{n}(\bs X_{i})-\overline{\bs{\xi}}_{n[k]})(Y_{i}-\overline{Y}_{1[k]})\nonumber \\
= & \frac{1}{n}\sum_{i=1}^{n}\frac{1-\pi_{n[k]}}{\pi_{n[k]}}\1(B_{i}=k)A_{i}(\bs{\xi}_{n}^{*}(\bs X_{i})-\overline{\bs{\xi}}_{n[k]}^{*})(Y_{i}(1)-\overline{Y}_{1[k]})+o_{P}(1)\label{eq:mid step for cov limit}
\end{align}
Note that we have
\[
\e_{\mathcal{C}_{n}}\left[\1(B_{i}=k)A_{i}\bs{\xi}_{n}^{*}(\bs X_{i})Y_{i}(1)\right]=\1(B_{i}=k)A_{i}\e\left[\bs{\xi}_{n}^{*}(\bs X_{i})Y_{i}(1)\mid B_{i}=k\right]
\]
and 
\begin{align*}
 & \e_{\mathcal{C}_{n}}\left[\left\Vert \1(B_{i}=k)A_{i}\bs{\xi}_{n}^{*}(\bs X_{i})Y_{i}(1)\right\Vert ^{1+\widetilde{\epsilon}}\right]\leq\e\left[\left\Vert \bs{\xi}_{n}^{*}(\bs X_{i})Y_{i}(1)\right\Vert ^{1+\widetilde{\epsilon}}\mid B_{i}=k\right]\\
\leq & \left\{ \e\left[\left\Vert \bs{\xi}_{n}^{*}(\bs X_{i})\right\Vert ^{(1+\widetilde{\epsilon})\times\frac{2}{1-\widetilde{\epsilon}}}\mid B_{i}=k\right]\right\} ^{\frac{1-\widetilde{\epsilon}}{2}}\left\{ \e\left[\left|Y_{i}(1)\right|^{(1+\widetilde{\epsilon})\times\frac{2}{1+\widetilde{\epsilon}}}\mid B_{i}=k\right]\right\} ^{\frac{1+\widetilde{\epsilon}}{2}}\\
= & \left\{ \e\left[\left\Vert \bs{\xi}_{n}^{*}(\bs X_{i})\right\Vert ^{2+\epsilon}\mid B_{i}=k\right]\right\} ^{\frac{1}{2+\epsilon/2}}\left\{ \e\left[\left|Y_{i}(1)\right|^{2}\mid B_{i}=k\right]\right\} ^{\frac{1+\epsilon/2}{2+\epsilon/2}}\\
= & O(1),
\end{align*}
where the second inequality follows from the Hölder inequality, the
first equality follows from $\widetilde{\epsilon}:=\frac{\epsilon/4}{1+\epsilon/4}$
and the last step follows from Assumptions \ref{assu:independent sampling}
and \ref{assu:conditions on =00005Cxi}. By Lemma \ref{lem:LLN for triangular array}
we have 
\[
\frac{1}{n}\sum_{i=1}^{n}\1(B_{i}=k)A_{i}\bs{\xi}_{n}^{*}(\bs X_{i})Y_{i}(1)=\frac{1}{n}\sum_{i=1}^{n}\1(B_{i}=k)A_{i}\e\left[\bs{\xi}_{n}^{*}(\bs X_{i})Y_{i}(1)\mid B_{i}=k\right]+o_{P}(1).
\]
This, combined with $\frac{1}{n}\sum_{i=1}^{n}\1(B_{i}=k)A_{i}=\pi_{[k]}p_{[k]}+o_{P}(1)$
gives that 
\[
\frac{1}{n}\sum_{i=1}^{n}\1(B_{i}=k)A_{i}\bs{\xi}_{n}^{*}(\bs X_{i})Y_{i}(1)=\pi_{[k]}p_{[k]}\e\left[\bs{\xi}_{n}^{*}(\bs X_{i})Y_{i}(1)\mid B_{i}=k\right]+o_{P}(1).
\]
Similarly we have 
\[
\frac{1}{n}\sum_{i=1}^{n}\1(B_{i}=k)A_{i}Y_{i}(1)=\pi_{[k]}p_{[k]}\e\left[Y_{i}(1)\mid B_{i}=k\right]+o_{P}(1),
\]
\[
\frac{1}{n}\sum_{i=1}^{n}\1(B_{i}=k)A_{i}\bs{\xi}_{n}^{*}(\bs X_{i})=\pi_{[k]}p_{[k]}\e\left[\bs{\xi}_{n}^{*}(\bs X_{i})\mid B_{i}=k\right]+o_{P}(1)
\]
and 
\[
\overline{Y}_{1[k]}=\frac{1}{n_{1[k]}}\sum_{i=1}^{n}\1(B_{i}=k)A_{i}Y_{i}(1)=\e\left[Y_{i}(1)\mid B_{i}=k\right]+o_{P}(1).
\]
Recalling (\ref{eq:lln for =00005Coverline=00007B=00005Cbs=00007B=00005Cxi=00007D=00007D_=00007Bn=00005Bk=00005D=00007D^=00007B*=00007D})
and (\ref{eq:mid step for cov limit}), we can obtain that 
\begin{align*}
 & \frac{1}{n}\sum_{i=1}^{n}\frac{1-\pi_{n[k]}}{\pi_{n[k]}}\1(B_{i}=k)A_{i}(\bs{\xi}_{n}(\bs X_{i})-\overline{\bs{\xi}}_{n[k]})(Y_{i}-\overline{Y}_{1[k]})\\
= & \frac{1-\pi_{[k]}}{\pi_{[k]}}\pi_{[k]}p_{[k]}\e\left[\left\{ \bs{\xi}_{n}^{*}(\bs X_{i})-\e\left[\bs{\xi}_{n}^{*}(\bs X_{i})\mid B_{i}=k\right]\right\} \left\{ Y_{i}(1)-\e\left[Y_{i}(1)\mid B_{i}=k\right]\right\} \mid B_{i}=k\right]+o_{P}(1)\\
= & (1-\pi_{[k]})p_{[k]}\e\left[\widetilde{\bs{\xi}}_{n}^{*}(\bs X_{i})\widetilde{Y}_{i}(1)\mid B_{i}=k\right]+o_{P}(1)=(1-\pi_{[k]})p_{[k]}\mathbf{\Sigma}_{[k]\widetilde{\bs{\xi}}_{n}^{*}\widetilde{Y}(1)}+o_{P}(1).
\end{align*}
By a similar argument, we can also obtain that 
\[
\frac{1}{n}\sum_{i=1}^{n}\frac{\pi_{n[k]}}{1-\pi_{n[k]}}\1(B_{i}=k)(1-A_{i})(\bs{\xi}_{n}(\bs X_{i})-\overline{\bs{\xi}}_{n[k]})(Y_{i}-\overline{Y}_{0[k]})=\pi_{[k]}p_{[k]}\mathbf{\Sigma}_{[k]\widetilde{\bs{\xi}}_{n}^{*}\widetilde{Y}(0)}+o_{P}(1).
\]
Then we have 
\begin{align}
 & \frac{1}{n}\sum_{i=1}^{n}\bs{\Xi}_{i,[k]}^{\trans}\widetilde{Y}_{i,[k]}\nonumber \\
= & \frac{1}{n}\sum_{i=1}^{n}\frac{1-\pi_{n[k]}}{\pi_{n[k]}}\1(B_{i}=k)A_{i}(\bs{\xi}_{n}(\bs X_{i})-\overline{\bs{\xi}}_{n[k]})(Y_{i}-\overline{Y}_{1[k]})\nonumber \\
 & +\frac{1}{n}\sum_{i=1}^{n}\frac{\pi_{n[k]}}{1-\pi_{n[k]}}\1(B_{i}=k)(1-A_{i})(\bs{\xi}_{n}(\bs X_{i})-\overline{\bs{\xi}}_{n[k]})(Y_{i}-\overline{Y}_{0[k]})\nonumber \\
= & (1-\pi_{[k]})p_{[k]}\mathbf{\Sigma}_{[k]\widetilde{\bs{\xi}}_{n}^{*}\widetilde{Y}(1)}+\pi_{[k]}p_{[k]}\mathbf{\Sigma}_{[k]\widetilde{\bs{\xi}}_{n}^{*}\widetilde{Y}(0)}+o_{P}(1).\label{eq:consistency of cov}
\end{align}

Combining the results for $\left\{ \frac{1}{n}\sum_{i=1}^{n}\bs{\Xi}_{i,[k]}\bs{\Xi}_{i,[k]}^{\trans}\right\} ^{+}$
and $\frac{1}{n}\sum_{i=1}^{n}\bs{\Xi}_{i,[k]}^{\trans}\widetilde{Y}_{i,[k]}$
gives that 
\begin{align}
\widehat{\bs{\beta}}_{[k]} & =\left\{ \frac{1}{n}\sum_{i=1}^{n}\bs{\Xi}_{i,[k]}\bs{\Xi}_{i,[k]}^{\trans}\right\} ^{+}\frac{1}{n}\sum_{i=1}^{n}\bs{\Xi}_{i,[k]}^{\trans}\widetilde{Y}_{i,[k]}\nonumber \\
 & =\left\{ \mathbf{\Sigma}_{[k]}^{+}+o_{P}(1)\right\} \left\{ (1-\pi_{[k]})p_{[k]}\mathbf{\Sigma}_{[k]\widetilde{\bs{\xi}}_{n}^{*}\widetilde{Y}(1)}+\pi_{[k]}p_{[k]}\mathbf{\Sigma}_{[k]\widetilde{\bs{\xi}}_{n}^{*}\widetilde{Y}(0)}+o_{P}(1)\right\} \nonumber \\
 & =\bs{\beta}_{[k]}^{*}+o_{P}(1).\label{eq:converge of beta_hat}
\end{align}
Besides, by Assumptions \ref{assu:independent sampling}-\ref{assu:conditions on =00005Cxi}
and the Cauchy--Schwarz inequality, it holds that $\left\Vert \mathbf{\Sigma}_{[k]\widetilde{\bs{\xi}}_{n}^{*}\widetilde{Y}(a)}\right\Vert =O(1)$
for all $a=0,1$. Then we have 
\begin{equation}
\left\Vert \bs{\beta}_{[k]}^{*}\right\Vert \leq\left\Vert \mathbf{\Sigma}_{[k]}^{+}\right\Vert \left\Vert (1-\pi_{[k]})p_{[k]}\mathbf{\Sigma}_{[k]\widetilde{\bs{\xi}}_{n}^{*}\widetilde{Y}(1)}+\pi_{[k]}p_{[k]}\mathbf{\Sigma}_{[k]\widetilde{\bs{\xi}}_{n}^{*}\widetilde{Y}(0)}\right\Vert =O(1).\label{eq:beta* is bounded}
\end{equation}

\textbf{The analysis of $\frac{1}{n}\sum_{i=1}^{n}\bs{\Xi}_{i,[k]}$.}
Let $\bs{\xi}_{[k]}^{*}:=\e\left[\bs{\xi}_{n}^{*}(\bs X_{i})\mid B_{i}=k\right]$.
By $\frac{1}{n}\sum_{i=1}^{n}A_{i}\1(B_{i}=k)=\pi_{n[k]}\frac{1}{n}\sum_{i=1}^{n}\1(B_{i}=k)$
and $\1(B_{i}=k)\widetilde{\bs{\xi}}_{n}^{*}(\bs X_{i})=\1(B_{i}=k)\left\{ \bs{\xi}_{n}^{*}(\bs X_{i})-\bs{\xi}_{[k]}^{*}\right\} $,
we have 
\begin{align}
 & \frac{1}{n}\sum_{i=1}^{n}\bs{\Xi}_{i,[k]}\nonumber \\
= & \frac{1}{n}\sum_{i=1}^{n}\left\{ A_{i}-\pi_{n[k]}\right\} \1(B_{i}=k)\left\{ \bs{\xi}_{n}(\bs X_{i})-\overline{\bs{\xi}}_{n[k]}\right\} \nonumber \\
= & \frac{1}{n}\sum_{i=1}^{n}\left\{ A_{i}-\pi_{n[k]}\right\} \1(B_{i}=k)\left\{ \bs{\xi}_{n}(\bs X_{i})-\bs{\xi}_{[k]}^{*}\right\} \nonumber \\
= & \frac{1}{n}\sum_{i=1}^{n}\left\{ A_{i}-\pi_{n[k]}\right\} \1(B_{i}=k)\widetilde{\bs{\xi}}_{n}^{*}(\bs X_{i})+\frac{1}{n}\sum_{i=1}^{n}\left\{ A_{i}-\pi_{n[k]}\right\} \1(B_{i}=k)\left\{ \bs{\xi}_{n}(\bs X_{i})-\bs{\xi}_{n}^{*}(\bs X_{i})\right\} \nonumber \\
= & \frac{1}{n}\sum_{i=1}^{n}\bs{\Xi}_{i,[k]}^{*}+\frac{1}{n}\sum_{i=1}^{n}\left\{ A_{i}-\pi_{n[k]}\right\} \1(B_{i}=k)\left\{ \bs{\xi}_{n}(\bs X_{i})-\bs{\xi}_{n}^{*}(\bs X_{i})\right\} \label{eq:sample ave of xi,k}
\end{align}
By Assumption \ref{assu:conditions on =00005Cxi}, we have 
\begin{align*}
 & \frac{1}{n}\sum_{i=1}^{n}\left\{ A_{i}-\pi_{n[k]}\right\} \1(B_{i}=k)\left\{ \bs{\xi}_{n}(\bs X_{i})-\bs{\xi}_{n}^{*}(\bs X_{i})\right\} \\
= & \frac{n_{[k]}}{n}\pi_{n[k]}(1-\pi_{n[k]})\times\\
 & \quad\left\{ \frac{1}{n_{1[k]}}\sum_{i=1}^{n}A_{i}\1(B_{i}=k)\left\{ \bs{\xi}_{n}(\bs X_{i})-\bs{\xi}_{n}^{*}(\bs X_{i})\right\} -\frac{1}{n_{0[k]}}\sum_{i=1}^{n}\left(1-A_{i}\right)\1(B_{i}=k)\left\{ \bs{\xi}_{n}(\bs X_{i})-\bs{\xi}_{n}^{*}(\bs X_{i})\right\} \right\} \\
= & o_{P}(n^{-1/2}).
\end{align*}
Thus we have 
\begin{align*}
\frac{1}{n}\sum_{i=1}^{n}\bs{\Xi}_{i,[k]} & =\frac{1}{n}\sum_{i=1}^{n}\bs{\Xi}_{i,[k]}^{*}+o_{P}(n^{-1/2}).
\end{align*}
Note that $\bs{\Xi}_{i,[k]}^{*}$, $i=1,\ldots,n$, are independent
conditional on $\mathcal{C}_{n}$,
\begin{align*}
\e_{\mathcal{C}_{n}}\left[\bs{\Xi}_{i,[k]}^{*}\right] & =\e_{\mathcal{C}_{n}}\left[\left\{ A_{i}-\pi_{n[k]}\right\} \1(B_{i}=k)\widetilde{\bs{\xi}}_{n}^{*}(\bs X_{i})\right]\\
 & =\left\{ A_{i}-\pi_{n[k]}\right\} \1(B_{i}=k)\e\left[\widetilde{\bs{\xi}}_{n}^{*}(\bs X_{i})\mid B_{i}=k\right]=0
\end{align*}
and 
\[
\e_{\mathcal{C}_{n}}\left[\left\Vert \bs{\Xi}_{i,[k]}^{*}\right\Vert ^{2}\right]=\e_{\mathcal{C}_{n}}\left[\left\Vert \left\{ A_{i}-\pi_{n[k]}\right\} \1(B_{i}=k)\widetilde{\bs{\xi}}_{n}^{*}(\bs X_{i})\right\Vert ^{2}\right]\leq\e\left[\left\Vert \widetilde{\bs{\xi}}_{n}^{*}(\bs X_{i})\right\Vert ^{2}\mid B_{i}=k\right]=O(1),
\]
where the last step follows from Assumption \ref{assu:conditions on =00005Cxi}.
Applying Lemma \ref{lem:LLN for triangular array} gives that 
\begin{align*}
\frac{1}{n}\sum_{i=1}^{n}\bs{\Xi}_{i,[k]}^{*} & =O_{P}(n^{-1/2}).
\end{align*}
As a result, we can conclude that 
\begin{equation}
\frac{1}{n}\sum_{i=1}^{n}\bs{\Xi}_{i,[k]}=\frac{1}{n}\sum_{i=1}^{n}\bs{\Xi}_{i,[k]}^{*}+o_{P}(n^{-1/2})=O_{P}(n^{-1/2}).\label{eq: mean of xi is root-n}
\end{equation}
This, combined with (\ref{eq:converge of beta_hat}) and (\ref{eq:beta* is bounded})
yields that
\begin{align*}
\widehat{\bs{\beta}}_{[k]}^{\trans}\frac{1}{n}\sum_{i=1}^{n}\bs{\Xi}_{i,[k]} & =\left\{ \bs{\beta}_{[k]}^{*}+o_{P}(1)\right\} ^{\trans}\left\{ \frac{1}{n}\sum_{i=1}^{n}\bs{\Xi}_{i,[k]}^{*}+o_{P}(n^{-1/2})\right\} =\bs{\beta}_{[k]}^{*\trans}\frac{1}{n}\sum_{i=1}^{n}\bs{\Xi}_{i,[k]}^{*}+o_{P}(n^{-1/2})
\end{align*}

As a result, the first two terms ($R_{1}$ and $R_{2}$) in (\ref{eq:main estimator decomposition})
satisfy
\begin{align}
R_{1}+R_{2} & =\frac{1}{n}\sum_{i=1}^{n}\left\{ \sum_{k=1}^{K}\left(\widetilde{Y}_{i,[k]}^{*}-\bs{\beta}_{[k]}^{*\trans}\bs{\Xi}_{i,[k]}^{*}\right)\right\} +o_{P}(n^{-1/2})\label{eq:R_1+R_2}
\end{align}
Note that $\sum_{k=1}^{K}\left(\widetilde{Y}_{i,[k]}^{*}-\bs{\beta}_{[k]}^{*\trans}\bs{\Xi}_{i,[k]}^{*}\right)$
are independent conditional on $\mathcal{C}_{n}$, we intend to apply
Lemma~\ref{lem:clt for triangular array}. By direct calculation
we have 
\begin{align*}
 & \e_{\mathcal{C}_{n}}\left[\sum_{k=1}^{K}\left(\widetilde{Y}_{i,[k]}^{*}-\bs{\beta}_{[k]}^{*\trans}\bs{\Xi}_{i,[k]}^{*}\right)\right]\\
= & \sum_{k=1}^{K}\1(B_{i}=k)\left\{ \frac{A_{i}}{\pi_{n[k]}}\e\left[\widetilde{Y}_{i}(1)\mid B_{i}=k\right]-\frac{1-A_{i}}{1-\pi_{n[k]}}\e\left[\widetilde{Y}_{i}(0)\mid B_{i}=k\right]\right\} \\
 & \qquad-\sum_{k=1}^{K}\1(B_{i}=k)(A_{i}-\pi_{n[k]})\bs{\beta}_{[k]}^{*\trans}\e\left[\widetilde{\bs{\xi}}_{n}^{*}(\bs X_{i})\mid B_{i}=k\right]\\
= & 0,
\end{align*}
\begin{align}
 & \var\left(\sum_{k=1}^{K}\left(\widetilde{Y}_{i,[k]}^{*}-\bs{\beta}_{[k]}^{*\trans}\bs{\Xi}_{i,[k]}^{*}\right)\mid\mathcal{C}_{n}\right)=\sum_{k=1}^{K}\e\left[\left(\widetilde{Y}_{i,[k]}^{*}-\bs{\beta}_{[k]}^{*\trans}\bs{\Xi}_{i,[k]}^{*}\right)^{2}\mid\mathcal{C}_{n}\right]\nonumber \\
= & \sum_{k=1}^{K}\1(B_{i}=k)\left\{ \frac{A_{i}}{\pi_{n[k]}^{2}}\e\left[\widetilde{Y}_{i}^{2}(1)\mid B_{i}=k\right]+\frac{1-A_{i}}{(1-\pi_{n[k]})^{2}}\e\left[\widetilde{Y}_{i}^{2}(0)\mid B_{i}=k\right]\right\} \nonumber \\
 & -2\sum_{k=1}^{K}\1(B_{i}=k)A_{i}\frac{1-\pi_{n[k]}}{\pi_{n[k]}}\e\left[\widetilde{Y}_{i}(1)\bs{\beta}_{[k]}^{*\trans}\widetilde{\bs{\xi}}_{n}^{*}(\bs X_{i})\mid B_{i}=k\right]\nonumber \\
 & -2\sum_{k=1}^{K}\1(B_{i}=k)(1-A_{i})\frac{\pi_{n[k]}}{1-\pi_{n[k]}}\e\left[\widetilde{Y}_{i}(0)\bs{\beta}_{[k]}^{*\trans}\widetilde{\bs{\xi}}_{n}^{*}(\bs X_{i})\mid B_{i}=k\right]\nonumber \\
 & +\sum_{k=1}^{K}\1(B_{i}=k)(A_{i}-\pi_{n[k]})^{2}\e\left[(\bs{\beta}_{[k]}^{*\trans}\widetilde{\bs{\xi}}_{n}^{*}(\bs X_{i}))^{2}\mid B_{i}=k\right]\nonumber \\
= & \sum_{k=1}^{K}\1(B_{i}=k)\left\{ \frac{A_{i}}{\pi_{n[k]}^{2}}\e\left[\widetilde{Y}_{i}^{2}(1)\mid B_{i}=k\right]+\frac{1-A_{i}}{(1-\pi_{n[k]})^{2}}\e\left[\widetilde{Y}_{i}^{2}(0)\mid B_{i}=k\right]\right\} \nonumber \\
 & -2\sum_{k=1}^{K}\1(B_{i}=k)\bs{\beta}_{[k]}^{*\trans}\left\{ A_{i}\frac{1-\pi_{n[k]}}{\pi_{n[k]}}\mathbf{\Sigma}_{[k]\widetilde{\bs{\xi}}_{n}^{*}\widetilde{Y}(1)}+(1-A_{i})\frac{\pi_{n[k]}}{1-\pi_{n[k]}}\mathbf{\Sigma}_{[k]\widetilde{\bs{\xi}}_{n}^{*}\widetilde{Y}(0)}\right\} \nonumber \\
 & +\sum_{k=1}^{K}\1(B_{i}=k)(A_{i}-\pi_{n[k]})^{2}\bs{\beta}_{[k]}^{*\trans}\mathbf{\Sigma}_{[k]\widetilde{\bs{\xi}}_{n}^{*}\widetilde{\bs{\xi}}_{n}^{*}}\bs{\beta}_{[k]}^{*}\label{eq:var(sum delta)}
\end{align}
and 
\begin{align*}
\sigma_{n}^{2}(\mathcal{C}_{n}) & :=\var\left(\sum_{i=1}^{n}\sum_{k=1}^{K}\left(\widetilde{Y}_{i,[k]}^{*}-\bs{\beta}_{[k]}^{*\trans}\bs{\Xi}_{i,[k]}^{*}\right)\mid\mathcal{C}_{n}\right)=\sum_{i=1}^{n}\var\left(\sum_{k=1}^{K}\left(\widetilde{Y}_{i,[k]}^{*}-\bs{\beta}_{[k]}^{*\trans}\bs{\Xi}_{i,[k]}^{*}\right)\mid\mathcal{C}_{n}\right)\\
 & =\sum_{i=1}^{n}\sum_{k=1}^{K}\1(B_{i}=k)\left\{ \frac{A_{i}}{\pi_{n[k]}^{2}}\e\left[\widetilde{Y}_{i}^{2}(1)\mid B_{i}=k\right]+\frac{1-A_{i}}{(1-\pi_{n[k]})^{2}}\e\left[\widetilde{Y}_{i}^{2}(0)\mid B_{i}=k\right]\right\} \\
 & \quad-2\sum_{i=1}^{n}\sum_{k=1}^{K}\1(B_{i}=k)\bs{\beta}_{[k]}^{*\trans}\left\{ A_{i}\frac{1-\pi_{n[k]}}{\pi_{n[k]}}\mathbf{\Sigma}_{[k]\widetilde{\bs{\xi}}_{n}^{*}\widetilde{Y}(1)}+(1-A_{i})\frac{\pi_{n[k]}}{1-\pi_{n[k]}}\mathbf{\Sigma}_{[k]\widetilde{\bs{\xi}}_{n}^{*}\widetilde{Y}(0)}\right\} \\
 & \quad+\sum_{i=1}^{n}\sum_{k=1}^{K}\1(B_{i}=k)(A_{i}-\pi_{n[k]})^{2}\bs{\beta}_{[k]}^{*\trans}\mathbf{\Sigma}_{[k]\widetilde{\bs{\xi}}_{n}^{*}\widetilde{\bs{\xi}}_{n}^{*}}\bs{\beta}_{[k]}^{*}.
\end{align*}
By Lemma \ref{lem:LLN} and $\pi_{n[k]}\tp\pi_{[k]}$, we have 
\begin{align}
\frac{\sigma_{n}^{2}(\mathcal{C}_{n})}{n} & =\sum_{k=1}^{K}\left\{ \frac{p_{[k]}}{\pi_{[k]}}\e\left[\widetilde{Y}_{i}^{2}(1)\mid B_{i}=k\right]+\frac{p_{[k]}}{1-\pi_{[k]}}\e\left[\widetilde{Y}_{i}^{2}(0)\mid B_{i}=k\right]\right\} \nonumber \\
 & \quad-2\sum_{k=1}^{K}p_{[k]}\bs{\beta}_{[k]}^{*\trans}\left\{ (1-\pi_{[k]})\mathbf{\Sigma}_{[k]\widetilde{\bs{\xi}}_{n}^{*}\widetilde{Y}(1)}+\pi_{[k]}\mathbf{\Sigma}_{[k]\widetilde{\bs{\xi}}_{n}^{*}\widetilde{Y}(0)}\right\} \nonumber \\
 & \quad+\sum_{k=1}^{K}p_{[k]}\pi_{[k]}(1-\pi_{[k]})\bs{\beta}_{[k]}^{*\trans}\mathbf{\Sigma}_{[k]\widetilde{\bs{\xi}}_{n}^{*}\widetilde{\bs{\xi}}_{n}^{*}}\bs{\beta}_{[k]}^{*}+o_{P}(1)\nonumber \\
 & =\varsigma_{\widetilde{Y}}^{2}-\varsigma_{\widetilde{Y}\mid\widetilde{\bs{\xi}}_{n}^{*}}^{2}+o_{P}(1)\label{eq:limit of sigma_n}\\
 & =O_{P}(1),\nonumber 
\end{align}
where the second equality follows from the definition of $\bs{\beta}_{[k]}^{*}$.
According to (\ref{eq:var(sum delta)}), (\ref{eq:limit of sigma_n})
and Assumption \ref{assu:conditions on =00005Cxi}, Condition (\ref{eq: positive variance in clt})
in Lemma \ref{lem:clt for triangular array} is satisfied.

It remains to verify the Lindeberg condition. By Loève’s $c_{r}$
inequality (\citealp[Theorem 9.32]{davidson2021Stochastic}) and Assumptions
\ref{assu:independent sampling} and \ref{assu:conditions on =00005Cxi},
we have 
\begin{align*}
 & \e\left[\left|\sum_{k=1}^{K}\left(\widetilde{Y}_{i,[k]}^{*}-\bs{\beta}_{[k]}^{*\trans}\bs{\Xi}_{i,[k]}^{*}\right)\right|^{2+\epsilon}\mid\mathcal{C}_{n}\right]\\
\leq & \frac{(3K)^{1+\epsilon}}{\min_{1\leq k\leq K}\left\{ \pi_{n[k]}^{2+\epsilon},(1-\pi_{n[k]})^{2+\epsilon}\right\} }\times\\
 & \quad\left\{ 2\max_{a\in\{0,1\}}\e\left[\left|\widetilde{Y}_{i}(a)\right|^{2+\epsilon}\mid B_{i}=k\right]+\left\Vert \bs{\beta}_{[k]}^{*}\right\Vert ^{2+\epsilon}\e\left[\left\Vert \widetilde{\bs{\xi}}_{n}^{*}(\bs X_{i})\right\Vert ^{2+\epsilon}\mid B_{i}=k\right]\right\} \\
= & O_{P}(1).
\end{align*}
Then, the Lindeberg condition in Lemma \ref{lem:clt for triangular array}
follows from 
\begin{align*}
 & \frac{1}{\sigma_{n}^{2}(\mathcal{C}_{n})}\sum_{i=1}^{n}\e\left[\left\{ \sum_{k=1}^{K}\left(\widetilde{Y}_{i,[k]}^{*}-\bs{\beta}_{[k]}^{*\trans}\bs{\Xi}_{i,[k]}^{*}\right)\right\} ^{2}\1\left\{ \left|\sum_{k=1}^{K}\left(\widetilde{Y}_{i,[k]}^{*}-\bs{\beta}_{[k]}^{*\trans}\bs{\Xi}_{i,[k]}^{*}\right)\right|\geq\delta\sigma_{n}(\mathcal{C}_{n})\right\} \mid\mathcal{C}_{n}\right]\\
\leq & \frac{1}{\sigma_{n}^{2}(\mathcal{C}_{n})}\sum_{i=1}^{n}\frac{\e\left[\left|\sum_{k=1}^{K}\left(\widetilde{Y}_{i,[k]}^{*}-\bs{\beta}_{[k]}^{*\trans}\bs{\Xi}_{i,[k]}^{*}\right)\right|^{2+\epsilon}\mid\mathcal{C}_{n}\right]}{\left\{ \delta\sigma_{n}(\mathcal{C}_{n})\right\} ^{\epsilon}}=\frac{1}{\varsigma_{\widetilde{Y}}^{2}-\varsigma_{\widetilde{Y}\mid\widetilde{\bs{\xi}}_{n}^{*}}^{2}+o_{P}(1)}\frac{O_{P}(1)}{O_{P}(n^{\epsilon/2})}\frac{1}{\delta^{\ep}}=o_{P}(1)
\end{align*}
for any $\delta>0$, where the second inequality follows from (\ref{eq:limit of sigma_n}).
Applying Lemma~\ref{lem:clt for triangular array} we have 
\[
\e\left[\left|\e\left[\exp\left\{ \mathrm{i}t\frac{\sum_{i=1}^{n}\sum_{k=1}^{K}\left(\widetilde{Y}_{i,[k]}^{*}-\bs{\beta}_{[k]}^{*\trans}\bs{\Xi}_{i,[k]}^{*}\right)}{\sigma_{n}(\mathcal{C}_{n})}\right\} \mid\mathcal{C}_{n}\right]-\exp\left(-t^{2}/2\right)\right|\right]\to0
\]
as $n\to\infty$ for all $t\in\R$.

Recalling the definitions of $R_{3}$ and $R_{4}$ in (\ref{eq:main estimator decomposition}),
by the classical central limit theorem we have 
\[
\e\left[\left|\e\left[\exp\left\{ \mathrm{i}s\frac{\sqrt{n}(R_{3}+R_{4})}{\varsigma_{H}}\right\} \mid\mathcal{C}_{n}\right]-\exp\left(-s^{2}/2\right)\right|\right]\to0
\]
for any $s\in\R$. Note that $R_{3}+R_{4}$ is measurable with respect
to $\mathcal{C}_{n}$, by Lemma \ref{lem:joint convergence} we have
\[
\left(\frac{\sum_{i=1}^{n}\sum_{k=1}^{K}\left(\widetilde{Y}_{i,[k]}^{*}-\bs{\beta}_{[k]}^{*\trans}\bs{\Xi}_{i,[k]}^{*}\right)}{\sigma_{n}(\mathcal{C}_{n})},\frac{\sqrt{n}(R_{3}+R_{4})}{\varsigma_{H}}\right)^{\trans}\tod N\left(\begin{pmatrix}0\\
0
\end{pmatrix},\begin{pmatrix}1 & 0\\
0 & 1
\end{pmatrix}\right).
\]
Combining (\ref{eq:main estimator decomposition}) and (\ref{eq:R_1+R_2}),
we have 
\begin{align*}
 & \frac{\sqrt{n}\left(\widehat{\tau}_{\mathrm{cal}}-\tau\right)}{\sqrt{\varsigma_{H}^{2}+\varsigma_{\widetilde{Y}}^{2}-\varsigma_{\widetilde{Y}\mid\widetilde{\bs{\xi}}_{n}^{*}}^{2}}}=\frac{\sqrt{n}\left(R_{1}+R_{2}\right)}{\sqrt{\varsigma_{H}^{2}+\varsigma_{\widetilde{Y}}^{2}-\varsigma_{\widetilde{Y}\mid\widetilde{\bs{\xi}}_{n}^{*}}^{2}}}+\frac{\sqrt{n}\left(R_{3}+R_{4}\right)}{\sqrt{\varsigma_{H}^{2}+\varsigma_{\widetilde{Y}}^{2}-\varsigma_{\widetilde{Y}\mid\widetilde{\bs{\xi}}_{n}^{*}}^{2}}}\\
= & \frac{\sigma_{n}(\mathcal{C}_{n})}{\sqrt{n}\sqrt{\varsigma_{H}^{2}+\varsigma_{\widetilde{Y}}^{2}-\varsigma_{\widetilde{Y}\mid\widetilde{\bs{\xi}}_{n}^{*}}^{2}}}\frac{\sum_{i=1}^{n}\sum_{k=1}^{K}\left(\widetilde{Y}_{i,[k]}^{*}-\bs{\beta}_{[k]}^{*\trans}\bs{\Xi}_{i,[k]}^{*}\right)}{\sigma_{n}(\mathcal{C}_{n})}+\frac{\varsigma_{H}}{\sqrt{\varsigma_{H}^{2}+\varsigma_{\widetilde{Y}}^{2}-\varsigma_{\widetilde{Y}\mid\widetilde{\bs{\xi}}_{n}^{*}}^{2}}}\frac{\sqrt{n}(R_{3}+R_{4})}{\varsigma_{H}}+o_{P}(1)\\
= & \frac{\sqrt{\varsigma_{\widetilde{Y}}^{2}-\varsigma_{\widetilde{Y}\mid\widetilde{\bs{\xi}}_{n}^{*}}^{2}}+o_{P}(1)}{\sqrt{\varsigma_{H}^{2}+\varsigma_{\widetilde{Y}}^{2}-\varsigma_{\widetilde{Y}\mid\widetilde{\bs{\xi}}_{n}^{*}}^{2}}}\frac{\sum_{i=1}^{n}\sum_{k=1}^{K}\left(\widetilde{Y}_{i,[k]}^{*}-\bs{\beta}_{[k]}^{*\trans}\bs{\Xi}_{i,[k]}^{*}\right)}{\sigma_{n}(\mathcal{C}_{n})}+\frac{\varsigma_{H}}{\sqrt{\varsigma_{H}^{2}+\varsigma_{\widetilde{Y}}^{2}-\varsigma_{\widetilde{Y}\mid\widetilde{\bs{\xi}}_{n}^{*}}^{2}}}\frac{\sqrt{n}(R_{3}+R_{4})}{\varsigma_{H}}+o_{P}(1)\\
= & \frac{\sqrt{\varsigma_{\widetilde{Y}}^{2}-\varsigma_{\widetilde{Y}\mid\widetilde{\bs{\xi}}_{n}^{*}}^{2}}}{\sqrt{\varsigma_{H}^{2}+\varsigma_{\widetilde{Y}}^{2}-\varsigma_{\widetilde{Y}\mid\widetilde{\bs{\xi}}_{n}^{*}}^{2}}}\frac{\sum_{i=1}^{n}\sum_{k=1}^{K}\left(\widetilde{Y}_{i,[k]}^{*}-\bs{\beta}_{[k]}^{*\trans}\bs{\Xi}_{i,[k]}^{*}\right)}{\sigma_{n}(\mathcal{C}_{n})}+\frac{\varsigma_{H}}{\sqrt{\varsigma_{H}^{2}+\varsigma_{\widetilde{Y}}^{2}-\varsigma_{\widetilde{Y}\mid\widetilde{\bs{\xi}}_{n}^{*}}^{2}}}\frac{\sqrt{n}(R_{3}+R_{4})}{\varsigma_{H}}+o_{P}(1)\\
\tod & N(0,1),
\end{align*}
where the third equality follows from (\ref{eq:limit of sigma_n}),
the fourth one follows from $\sum_{i=1}^{n}\sum_{k=1}^{K}(\widetilde{Y}_{i,[k]}^{*}-\bs{\beta}_{[k]}^{*\trans}\bs{\Xi}_{i,[k]}^{*})/\sigma_{n}(\mathcal{C}_{n})$
is uniformly tight and the last step follows from Lemma \ref{lem:covergence of combin of normal}
and Slutsky's theorem. The asymptotic normality of $\widehat{\tau}_{\mathrm{cal}}$
is now established.

\textbf{Consistency of the variance estimator.} The consistency of
$\widehat{\varsigma}_{\widetilde{Y}}^{2}$ and $\widehat{\varsigma}_{H}^{2}$
can be established in a manner similar to that in \citet{bugni2018Inference,ma2022Regression},
so we omit the details here. It remains to show the consistency of
$\widehat{\varsigma}_{\widetilde{Y}\mid\widetilde{\bs{\xi}}_{n}^{*}}^{2}$.

From (\ref{eq:consistency of inverse of var}), we have 
\[
\widehat{\mathbf{\Sigma}}_{[k]}^{+}=\pi_{[k]}^{-1}(1-\pi_{[k]})^{-1}p_{[k]}^{-1}\mathbf{\Sigma}_{[k]\widetilde{\bs{\xi}}_{n}^{*}\widetilde{\bs{\xi}}_{n}^{*}}^{+}+o_{P}(1),\ \forall1\leq k\leq K.
\]
Similarly, from (\ref{eq:consistency of cov}), we have 
\[
\widehat{\mathbf{\Gamma}}_{[k]}=(1-\pi_{[k]})p_{[k]}\mathbf{\Sigma}_{[k]\widetilde{\bs{\xi}}_{n}^{*}\widetilde{Y}(1)}+\pi_{[k]}p_{[k]}\mathbf{\Sigma}_{[k]\widetilde{\bs{\xi}}_{n}^{*}\widetilde{Y}(0)}+o_{P}(1),\ \forall1\leq k\leq K.
\]
This, combined with $\left|n_{[k]}/(n_{[k]}-\widehat{r}_{[k]}-1)-1\right|=o_{P}(1)$
gives that 
\[
\widehat{\varsigma}_{\widetilde{Y}\mid\widetilde{\bs{\xi}}_{n}^{*}}^{2}=\sum_{k=1}^{K}\frac{n_{[k]}}{n_{[k]}-\widehat{r}_{[k]}-1}\widehat{\mathbf{\Gamma}}_{[k]}^{\trans}\widehat{\mathbf{\Sigma}}_{[k]}^{+}\widehat{\mathbf{\Gamma}}_{[k]}=\varsigma_{\widetilde{Y}\mid\widetilde{\bs{\xi}}_{n}^{*}}^{2}+o_{P}(1).
\]
The proof is completed.

\textbf{Establishing the semiparametric efficiency.} If for every
$k=1,\ldots,K$, there exists a non-stochastic vector $\bs{\alpha}_{k}$
such that 
\[
\left\{ \sqrt{\frac{1-\pi_{[k]}}{\pi_{[k]}}}\widetilde{h}_{1[k]}^{*}(\bs X)+\sqrt{\frac{\pi_{[k]}}{1-\pi_{[k]}}}\widetilde{h}_{0[k]}^{*}(\bs X)\right\} \1(B=k)=\bs{\alpha}_{k}^{\trans}\left\{ \bs{\xi}_{n}^{*}(\bs X)-\e\left[\bs{\xi}_{n}^{*}(\bs X)\mid B=k\right]\right\} \1(B=k),
\]
then 
\begin{align*}
 & \varsigma_{\widetilde{Y}\mid\widetilde{\bs{\xi}}_{n}^{*}}^{2}\\
= & \sum_{k=1}^{K}p_{[k]}\bs{\alpha}_{k}^{\trans}\e\left[\widetilde{\bs{\xi}}_{n}^{*}(\bs X_{i})\widetilde{\bs{\xi}}_{n}^{*}(\bs X_{i})^{\trans}\mid B_{i}=k\right]\left\{ \var\left(\widetilde{\bs{\xi}}_{n}^{*}(\bs X_{i})\mid B_{i}=k\right)\right\} ^{+}\e\left[\widetilde{\bs{\xi}}_{n}^{*}(\bs X_{i})\widetilde{\bs{\xi}}_{n}^{*}(\bs X_{i})^{\trans}\mid B_{i}=k\right]\bs{\alpha}_{k}\\
= & \sum_{k=1}^{K}p_{[k]}\bs{\alpha}_{k}^{\trans}\e\left[\widetilde{\bs{\xi}}_{n}^{*}(\bs X_{i})\widetilde{\bs{\xi}}_{n}^{*}(\bs X_{i})^{\trans}\1(B_{i}=k)\mid B_{i}=k\right]\bs{\alpha}_{k}\\
= & \sum_{k=1}^{K}p_{[k]}\e\left[\left\{ \sqrt{\frac{1-\pi_{[k]}}{\pi_{[k]}}}\widetilde{h}_{1[k]}^{*}(\bs X)+\sqrt{\frac{\pi_{[k]}}{1-\pi_{[k]}}}\widetilde{h}_{0[k]}^{*}(\bs X)\right\} ^{2}\mid B=k\right]\\
= & \varsigma_{\widetilde{Y}\mid\widetilde{h}^{*}}^{2}
\end{align*}
and thus
\[
\sqrt{n}\left(\widehat{\tau}_{\mathrm{cal}}-\tau\right)\tod\varsigma_{H}^{2}+\varsigma_{\widetilde{Y}}^{2}-\varsigma_{\widetilde{Y}\mid\widetilde{h}^{*}}^{2}.
\]
The semiparametric efficient bound in \citet[Theorem 3.1]{rafi2023Efficient}
is given by 
\begin{align*}
V^{*} & =\e\left[\frac{\var\left(Y(1)\mid\bs X,B\right)}{\pi_{[B]}}+\frac{\var\left(Y(0)\mid\bs X,B\right)}{1-\pi_{[B]}}+\left\{ h_{1[B]}^{*}(\bs X)-h_{0[B]}^{*}(\bs X)-\tau\right\} ^{2}\right].
\end{align*}
By direct calculation, we have 
\begin{align*}
 & \e\left[\frac{\var\left(Y(1)\mid\bs X,B\right)}{\pi_{[B]}}+\frac{\var\left(Y(0)\mid\bs X,B\right)}{1-\pi_{[B]}}\right]\\
= & \sum_{k=1}^{K}p_{[k]}\e\left[\frac{\var\left(Y(1)\mid\bs X,B=k\right)}{\pi_{[k]}}+\frac{\var\left(Y(0)\mid\bs X,B=k\right)}{1-\pi_{[k]}}\mid B=k\right]\\
= & \sum_{k=1}^{K}\frac{p_{[k]}}{\pi_{[k]}}\left\{ \var\left(Y(1)\mid B=k\right)-\var\left(\e\left[Y(1)\mid\bs X,B=k\right]\right)\right\} \\
 & +\sum_{k=1}^{K}\frac{p_{[k]}}{1-\pi_{[k]}}\left\{ \var\left(Y(0)\mid B=k\right)-\var\left(\e\left[Y(0)\mid\bs X,B=k\right]\right)\right\} \\
= & \varsigma_{\widetilde{Y}}^{2}-\sum_{k=1}^{K}\frac{p_{[k]}}{\pi_{[k]}}\var\left(\widetilde{h}_{1[k]}^{*}(\bs X)\right)-\sum_{k=1}^{K}\frac{p_{[k]}}{1-\pi_{[k]}}\var\left(\widetilde{h}_{0[k]}^{*}(\bs X)\right)
\end{align*}
and 
\begin{align*}
 & \e\left[\left\{ h_{1[B]}^{*}(\bs X)-h_{0[B]}^{*}(\bs X)-\tau\right\} ^{2}\right]\\
= & \sum_{k=1}^{K}p_{[k]}\e\left[\left\{ h_{1[k]}^{*}(\bs X)-h_{0[k]}^{*}(\bs X)-\tau\right\} ^{2}\mid B=k\right]\\
= & \sum_{k=1}^{K}p_{[k]}\e\left[\left\{ \widetilde{h}_{1[k]}^{*}(\bs X)-\widetilde{h}_{0[k]}^{*}(\bs X)+\e\left[h_{1[k]}^{*}(\bs X)\mid B=k\right]-\e\left[h_{0[k]}^{*}(\bs X)\mid B=k\right]-\tau\right\} ^{2}\mid B=k\right]\\
= & \sum_{k=1}^{K}p_{[k]}\e\left[\left\{ \widetilde{h}_{1[k]}^{*}(\bs X)-\widetilde{h}_{0[k]}^{*}(\bs X)\right\} ^{2}\mid B=k\right]\\
 & +\sum_{k=1}^{K}p_{[k]}\left\{ \e\left[h_{1[k]}^{*}(\bs X)\mid B=k\right]-\e\left[h_{0[k]}^{*}(\bs X)\mid B=k\right]-\tau\right\} ^{2}\\
= & \sum_{k=1}^{K}p_{[k]}\left\{ \var\left(\widetilde{h}_{1[k]}^{*}(\bs X)\right)+\var\left(\widetilde{h}_{0[k]}^{*}(\bs X)\right)-2\e\left[\widetilde{h}_{1[k]}^{*}(\bs X)\widetilde{h}_{0[k]}^{*}(\bs X)\mid B=k\right]\right\} \\
 & +\sum_{k=1}^{K}p_{[k]}\left\{ \e\left[Y(1)\mid B=k\right]-\e\left[Y(0)\mid B=k\right]-\tau\right\} ^{2}\\
= & \sum_{k=1}^{K}p_{[k]}\left\{ \var\left(\widetilde{h}_{1[k]}^{*}(\bs X)\right)+\var\left(\widetilde{h}_{0[k]}^{*}(\bs X)\right)-2\e\left[\widetilde{h}_{1[k]}^{*}(\bs X)\widetilde{h}_{0[k]}^{*}(\bs X)\mid B=k\right]\right\} +\varsigma_{H}^{2}.
\end{align*}
Thus, we have 
\begin{align*}
V^{*} & =\e\left[\frac{\var\left(Y(1)\mid\bs X,B\right)}{\pi_{[B]}}+\frac{\var\left(Y(0)\mid\bs X,B\right)}{1-\pi_{[B]}}+\left\{ h_{1[B]}^{*}(\bs X)-h_{0[B]}^{*}(\bs X)-\tau\right\} ^{2}\right].\\
 & =\varsigma_{\widetilde{Y}}^{2}-\sum_{k=1}^{K}\frac{p_{[k]}}{\pi_{[k]}}\var\left(\widetilde{h}_{1[k]}^{*}(\bs X)\right)-\sum_{k=1}^{K}\frac{p_{[k]}}{1-\pi_{[k]}}\var\left(\widetilde{h}_{0[k]}^{*}(\bs X)\right)\\
 & \qquad+\sum_{k=1}^{K}p_{[k]}\left\{ \var\left(\widetilde{h}_{1[k]}^{*}(\bs X)\right)+\var\left(\widetilde{h}_{0[k]}^{*}(\bs X)\right)-2\e\left[\widetilde{h}_{1[k]}^{*}(\bs X)\widetilde{h}_{0[k]}^{*}(\bs X)\mid B=k\right]\right\} +\varsigma_{H}^{2}\\
 & =\varsigma_{\widetilde{Y}}^{2}+\varsigma_{H}^{2}-\varsigma_{\widetilde{Y}\mid\widetilde{h}^{*}}^{2}.
\end{align*}
The proof is completed.$\hfill\qedsymbol$

\textbf{Proof of Theorem \ref{thm:efficiency comparison thm}.} By
Theorem \ref{thm:asymptotic properties of tau_cal} we have 
\[
\frac{\sqrt{n}\left(\widehat{\tau}_{\mathrm{cal}}(\bs{\xi}_{n})-\tau\right)}{\sqrt{\varsigma_{H}^{2}+\varsigma_{\widetilde{Y}}^{2}-\varsigma_{\widetilde{Y}\mid\widetilde{\bs{\xi}}_{n}^{*}}^{2}}}\tod N(0,1).
\]
Note that Assumption \ref{assu:conditions on =00005Cxi} still holds
if we replace $\bs{\xi}_{n}$ by $\mathbf{\Lambda}\bs{\xi}_{n}$.
It follows from Theorem \ref{thm:asymptotic properties of tau_cal}
that 
\[
\frac{\sqrt{n}\left(\widehat{\tau}_{\mathrm{cal}}(\mathbf{\Lambda}\bs{\xi}_{n})-\tau\right)}{\sqrt{\varsigma_{H}^{2}+\varsigma_{\widetilde{Y}}^{2}-\varsigma_{\widetilde{Y}\mid\mathbf{\Lambda}\widetilde{\bs{\xi}}_{n}^{*}}^{2}}}\tod N(0,1).
\]
Now, it remains to show that $\varsigma_{\widetilde{Y}\mid\widetilde{\bs{\xi}}_{n}^{*}}^{2}\geq\varsigma_{\widetilde{Y}\mid\mathbf{\Lambda}\widetilde{\bs{\xi}}_{n}^{*}}^{2}$.
Note that we have 
\[
\mathbf{\Sigma}_{[k]\mathbf{\Lambda}\widetilde{\bs{\xi}}_{n}^{*}\widetilde{Y}(a)}=\mathbf{\Lambda}\mathbf{\Sigma}_{[k]\widetilde{\bs{\xi}}_{n}^{*}\widetilde{Y}(a)}\text{ and }\mathbf{\Sigma}_{[k]\mathbf{\Lambda}\widetilde{\bs{\xi}}_{n}^{*}\mathbf{\Lambda}\widetilde{\bs{\xi}}_{n}^{*}}^{+}=\left\{ \mathbf{\Lambda}\mathbf{\Sigma}_{[k]\widetilde{\bs{\xi}}_{n}^{*}\widetilde{\bs{\xi}}_{n}^{*}}\mathbf{\Lambda}^{\trans}\right\} ^{+}.
\]
As a result, 
\begin{align*}
 & \varsigma_{\widetilde{Y}\mid\widetilde{\bs{\xi}}_{n}^{*}}^{2}-\varsigma_{\widetilde{Y}\mid\mathbf{\Lambda}\widetilde{\bs{\xi}}_{n}^{*}}^{2}\\
= & \sum_{k=1}^{K}\frac{p_{[k]}}{\pi_{[k]}(1-\pi_{[k]})}\left\{ (1-\pi_{[k]})\mathbf{\Sigma}_{[k]\widetilde{\bs{\xi}}_{n}^{*}\widetilde{Y}(1)}+\pi_{[k]}\mathbf{\Sigma}_{[k]\widetilde{\bs{\xi}}_{n}^{*}\widetilde{Y}(0)}\right\} ^{\trans}\times\\
 & \qquad\left\{ \mathbf{\Sigma}_{[k]\widetilde{\bs{\xi}}_{n}^{*}\widetilde{\bs{\xi}}_{n}^{*}}^{+}-\mathbf{\Lambda}^{\trans}\left(\mathbf{\Lambda}\mathbf{\Sigma}_{[k]\widetilde{\bs{\xi}}_{n}^{*}\widetilde{\bs{\xi}}_{n}^{*}}\mathbf{\Lambda}^{\trans}\right)^{+}\mathbf{\Lambda}\right\} \times\\
 & \qquad\left\{ (1-\pi_{[k]})\mathbf{\Sigma}_{[k]\widetilde{\bs{\xi}}_{n}^{*}\widetilde{Y}(1)}+\pi_{[k]}\mathbf{\Sigma}_{[k]\widetilde{\bs{\xi}}_{n}^{*}\widetilde{Y}(0)}\right\} .
\end{align*}

We claim that 
\begin{equation}
\mathbf{\Sigma}_{[k]\widetilde{\bs{\xi}}_{n}^{*}\widetilde{Y}(a)}\in\text{range}(\mathbf{\Sigma}_{[k]\widetilde{\bs{\xi}}_{n}^{*}\widetilde{\bs{\xi}}_{n}^{*}}),\ \forall k=1,\ldots,K\text{ and }a\in\{0,1\}.\label{eq: cov in range=00007Bvar=00007D}
\end{equation}
Since $\R^{d}=\text{range}(\mathbf{\Sigma}_{[k]\widetilde{\bs{\xi}}_{n}^{*}\widetilde{\bs{\xi}}_{n}^{*}})\oplus\text{kernel}(\mathbf{\Sigma}_{[k]\widetilde{\bs{\xi}}_{n}^{*}\widetilde{\bs{\xi}}_{n}^{*}})$,
it suffices to show that 
\[
\bs v^{\trans}\mathbf{\Sigma}_{[k]\widetilde{\bs{\xi}}_{n}^{*}\widetilde{Y}(a)}=0,\ \forall\bs v\in\text{kernel}(\mathbf{\Sigma}_{[k]\widetilde{\bs{\xi}}_{n}^{*}\widetilde{\bs{\xi}}_{n}^{*}}).
\]
For any $\bs v\in\text{kernel}(\mathbf{\Sigma}_{[k]\widetilde{\bs{\xi}}_{n}^{*}\widetilde{\bs{\xi}}_{n}^{*}})$,
we have 
\[
\mathbf{\Sigma}_{[k]\widetilde{\bs{\xi}}_{n}^{*}\widetilde{\bs{\xi}}_{n}^{*}}\bs v=0\Longrightarrow\e\left[\bs v^{\trans}\widetilde{\bs{\xi}}_{n}^{*}(\bs X_{i})\widetilde{\bs{\xi}}_{n}^{*}(\bs X_{i})^{\trans}\bs v\mid B_{i}=k\right]=0\Longrightarrow\e\left[\left(\bs v^{\trans}\widetilde{\bs{\xi}}_{n}^{*}(\bs X_{i})\right)^{2}\1(B_{i}=k)\right]=0,
\]
which further implies that $\bs v^{\trans}\widetilde{\bs{\xi}}_{n}^{*}(\bs X_{i})\1(B_{i}=k)=0$
a.s. As a result, it holds that 
\[
\bs v^{\trans}\mathbf{\Sigma}_{[k]\widetilde{\bs{\xi}}_{n}^{*}\widetilde{Y}(a)}=\e\left[\bs v^{\trans}\widetilde{\bs{\xi}}_{n}^{*}(\bs X_{i})\widetilde{Y}_{i}(a)\mid B_{i}=k\right]=\frac{1}{p_{[k]}}\e\left[\bs v^{\trans}\widetilde{\bs{\xi}}_{n}^{*}(\bs X_{i})\1(B_{i}=k)\widetilde{Y}_{i}(a)\right]=0.
\]
Thus, (\ref{eq: cov in range=00007Bvar=00007D}) holds.

Now, to show $\varsigma_{\widetilde{Y}\mid\widetilde{\bs{\xi}}_{n}^{*}}^{2}\geq\varsigma_{\widetilde{Y}\mid\mathbf{\Lambda}\widetilde{\bs{\xi}}_{n}^{*}}^{2}$,
it suffices to show that $\bs x^{\trans}\left\{ \mathbf{\Sigma}_{[k]\widetilde{\bs{\xi}}_{n}^{*}\widetilde{\bs{\xi}}_{n}^{*}}^{+}-\mathbf{\Lambda}^{\trans}\left(\mathbf{\Lambda}\mathbf{\Sigma}_{[k]\widetilde{\bs{\xi}}_{n}^{*}\widetilde{\bs{\xi}}_{n}^{*}}\mathbf{\Lambda}^{\trans}\right)^{+}\mathbf{\Lambda}\right\} \bs x\geq0$
for all $\bs x\in\text{range}(\mathbf{\Sigma}_{[k]\widetilde{\bs{\xi}}_{n}^{*}\widetilde{\bs{\xi}}_{n}^{*}})$.
Since $\mathbf{\Sigma}_{[k]\widetilde{\bs{\xi}}_{n}^{*}\widetilde{\bs{\xi}}_{n}^{*}}$
is positive semi-definite, there exists a positive semi-definite matrix
$\mathbf{\Sigma}_{[k]\widetilde{\bs{\xi}}_{n}^{*}\widetilde{\bs{\xi}}_{n}^{*}}^{1/2}$
\[
\mathbf{\Sigma}_{[k]\widetilde{\bs{\xi}}_{n}^{*}\widetilde{\bs{\xi}}_{n}^{*}}=\mathbf{\Sigma}_{[k]\widetilde{\bs{\xi}}_{n}^{*}\widetilde{\bs{\xi}}_{n}^{*}}^{1/2}\mathbf{\Sigma}_{[k]\widetilde{\bs{\xi}}_{n}^{*}\widetilde{\bs{\xi}}_{n}^{*}}^{1/2}.
\]
For any $\bs x\in\text{range}(\mathbf{\Sigma}_{[k]\widetilde{\bs{\xi}}_{n}^{*}\widetilde{\bs{\xi}}_{n}^{*}})$,
there exists a vector $\bs w$ such that $\bs x=\mathbf{\Sigma}_{[k]\widetilde{\bs{\xi}}_{n}^{*}\widetilde{\bs{\xi}}_{n}^{*}}^{1/2}\mathbf{\Sigma}_{[k]\widetilde{\bs{\xi}}_{n}^{*}\widetilde{\bs{\xi}}_{n}^{*}}^{1/2}\bs w$.
Thus, 
\[
\bs x^{\trans}\mathbf{\Sigma}_{[k]\widetilde{\bs{\xi}}_{n}^{*}\widetilde{\bs{\xi}}_{n}^{*}}^{+}\bs x=\bs w^{\trans}\mathbf{\Sigma}_{[k]\widetilde{\bs{\xi}}_{n}^{*}\widetilde{\bs{\xi}}_{n}^{*}}^{1/2}\mathbf{\Sigma}_{[k]\widetilde{\bs{\xi}}_{n}^{*}\widetilde{\bs{\xi}}_{n}^{*}}^{1/2}\mathbf{\Sigma}_{[k]\widetilde{\bs{\xi}}_{n}^{*}\widetilde{\bs{\xi}}_{n}^{*}}^{+}\mathbf{\Sigma}_{[k]\widetilde{\bs{\xi}}_{n}^{*}\widetilde{\bs{\xi}}_{n}^{*}}^{1/2}\mathbf{\Sigma}_{[k]\widetilde{\bs{\xi}}_{n}^{*}\widetilde{\bs{\xi}}_{n}^{*}}^{1/2}\bs w.
\]
By singular value decomposition, we have $\mathbf{\Sigma}_{[k]\widetilde{\bs{\xi}}_{n}^{*}\widetilde{\bs{\xi}}_{n}^{*}}^{+}=\left(\mathbf{\Sigma}_{[k]\widetilde{\bs{\xi}}_{n}^{*}\widetilde{\bs{\xi}}_{n}^{*}}^{1/2}\right)^{+}\left(\mathbf{\Sigma}_{[k]\widetilde{\bs{\xi}}_{n}^{*}\widetilde{\bs{\xi}}_{n}^{*}}^{1/2}\right)^{+}$,
which leads to 
\begin{align*}
\bs x^{\trans}\mathbf{\Sigma}_{[k]\widetilde{\bs{\xi}}_{n}^{*}\widetilde{\bs{\xi}}_{n}^{*}}^{+}\bs x & =\bs w^{\trans}\mathbf{\Sigma}_{[k]\widetilde{\bs{\xi}}_{n}^{*}\widetilde{\bs{\xi}}_{n}^{*}}^{1/2}\mathbf{\Sigma}_{[k]\widetilde{\bs{\xi}}_{n}^{*}\widetilde{\bs{\xi}}_{n}^{*}}^{1/2}\left(\mathbf{\Sigma}_{[k]\widetilde{\bs{\xi}}_{n}^{*}\widetilde{\bs{\xi}}_{n}^{*}}^{1/2}\right)^{+}\left(\mathbf{\Sigma}_{[k]\widetilde{\bs{\xi}}_{n}^{*}\widetilde{\bs{\xi}}_{n}^{*}}^{1/2}\right)^{+}\mathbf{\Sigma}_{[k]\widetilde{\bs{\xi}}_{n}^{*}\widetilde{\bs{\xi}}_{n}^{*}}^{1/2}\mathbf{\Sigma}_{[k]\widetilde{\bs{\xi}}_{n}^{*}\widetilde{\bs{\xi}}_{n}^{*}}^{1/2}\bs w\\
 & =\bs w^{\trans}\mathbf{\Sigma}_{[k]\widetilde{\bs{\xi}}_{n}^{*}\widetilde{\bs{\xi}}_{n}^{*}}^{1/2}\mathbf{P}^{2}\mathbf{\Sigma}_{[k]\widetilde{\bs{\xi}}_{n}^{*}\widetilde{\bs{\xi}}_{n}^{*}}^{1/2}\bs w
\end{align*}
where $\mathbf{P}:=\mathbf{\Sigma}_{[k]\widetilde{\bs{\xi}}_{n}^{*}\widetilde{\bs{\xi}}_{n}^{*}}^{1/2}\left(\mathbf{\Sigma}_{[k]\widetilde{\bs{\xi}}_{n}^{*}\widetilde{\bs{\xi}}_{n}^{*}}^{1/2}\right)^{+}$
is the projection matrix onto $\text{range}(\mathbf{\Sigma}_{[k]\widetilde{\bs{\xi}}_{n}^{*}\widetilde{\bs{\xi}}_{n}^{*}}^{1/2})$
and we have used the fact that $\mathbf{P}=\mathbf{P}^{\trans}$. Let
$\bs u=\mathbf{\Sigma}_{[k]\widetilde{\bs{\xi}}_{n}^{*}\widetilde{\bs{\xi}}_{n}^{*}}^{1/2}\bs w\in\text{range}(\mathbf{\Sigma}_{[k]\widetilde{\bs{\xi}}_{n}^{*}\widetilde{\bs{\xi}}_{n}^{*}}^{1/2})$,
then
\begin{equation}
\bs x^{\trans}\mathbf{\Sigma}_{[k]\widetilde{\bs{\xi}}_{n}^{*}\widetilde{\bs{\xi}}_{n}^{*}}^{+}\bs x=\bs u^{\trans}\mathbf{P}^{2}\bs u=\bs u^{\trans}\bs u.\label{eq:x Sigma x}
\end{equation}
On the other hand we have 
\begin{align}
\bs x^{\trans}\mathbf{\Lambda}^{\trans}\left(\mathbf{\Lambda}\mathbf{\Sigma}_{[k]\widetilde{\bs{\xi}}_{n}^{*}\widetilde{\bs{\xi}}_{n}^{*}}\mathbf{\Lambda}^{\trans}\right)^{+}\mathbf{\Lambda}\bs x & =\bs u^{\trans}\mathbf{\Sigma}_{[k]\widetilde{\bs{\xi}}_{n}^{*}\widetilde{\bs{\xi}}_{n}^{*}}^{1/2}\mathbf{\Lambda}^{\trans}\left(\mathbf{\Lambda}\mathbf{\Sigma}_{[k]\widetilde{\bs{\xi}}_{n}^{*}\widetilde{\bs{\xi}}_{n}^{*}}^{1/2}\mathbf{\Sigma}_{[k]\widetilde{\bs{\xi}}_{n}^{*}\widetilde{\bs{\xi}}_{n}^{*}}^{1/2}\mathbf{\Lambda}^{\trans}\right)^{+}\mathbf{\Lambda}\mathbf{\Sigma}_{[k]\widetilde{\bs{\xi}}_{n}^{*}\widetilde{\bs{\xi}}_{n}^{*}}^{1/2}\bs u\nonumber \\
 & =\bs u^{\trans}\mathbf{Q}^{\trans}\left(\mathbf{Q}\mathbf{Q}^{\trans}\right)^{+}\mathbf{Q}\bs u,\label{eq:xLambda()^+x}
\end{align}
where $\mathbf{Q}:=\mathbf{\Lambda}\mathbf{\Sigma}_{[k]\widetilde{\bs{\xi}}_{n}^{*}\widetilde{\bs{\xi}}_{n}^{*}}^{1/2}$.
Note that $\left\{ \mathbf{Q}^{\trans}\left(\mathbf{Q}\mathbf{Q}^{\trans}\right)^{+}\mathbf{Q}\right\} ^{2}=\mathbf{Q}^{\trans}\left(\mathbf{Q}\mathbf{Q}^{\trans}\right)^{+}\mathbf{Q}$
and $\mathbf{Q}^{\trans}\left(\mathbf{Q}\mathbf{Q}^{\trans}\right)^{+}\mathbf{Q}$
is symmetric. We have $\mathbf{Q}^{\trans}\left(\mathbf{Q}\mathbf{Q}^{\trans}\right)^{+}\mathbf{Q}$
is idempotent and thus $\left\Vert \mathbf{Q}^{\trans}\left(\mathbf{Q}\mathbf{Q}^{\trans}\right)^{+}\mathbf{Q}\right\Vert \leq1$.
Combining (\ref{eq:x Sigma x}) and (\ref{eq:xLambda()^+x}) gives
that 
\begin{align*}
\bs x^{\trans}\mathbf{\Sigma}_{[k]\widetilde{\bs{\xi}}_{n}^{*}\widetilde{\bs{\xi}}_{n}^{*}}^{+}\bs x-\bs x^{\trans}\mathbf{\Lambda}^{\trans}\left(\mathbf{\Lambda}\mathbf{\Sigma}_{[k]\widetilde{\bs{\xi}}_{n}^{*}\widetilde{\bs{\xi}}_{n}^{*}}\mathbf{\Lambda}^{\trans}\right)^{+}\mathbf{\Lambda}\bs x & =\bs u^{\trans}\bs u-\bs u^{\trans}\mathbf{Q}^{\trans}\left(\mathbf{Q}\mathbf{Q}^{\trans}\right)^{+}\mathbf{Q}\bs u\\
 & \geq\left\Vert \bs u\right\Vert ^{2}-\left\Vert \bs u\right\Vert ^{2}=0.
\end{align*}
Since $\bs x\in\text{range}(\mathbf{\Sigma}_{[k]\widetilde{\bs{\xi}}_{n}^{*}\widetilde{\bs{\xi}}_{n}^{*}})$
is arbitrary, the proof is completed.$\hfill\qedsymbol$

\subsection{Proof of Theorem \ref{thm:asymptotic properties of tau_cal-diverging xi}}

We maintain the notation conventions in the proof of Theorem \ref{thm:asymptotic properties of tau_cal}.
Recall (\ref{eq:main estimator decomposition}), we have 
\begin{equation}
\widehat{\tau}_{\mathrm{cal}}-\tau=R_{1}+R_{2}+R_{3}+R_{4}.\label{eq:decomposition of tau_hat-tau*}
\end{equation}
We first handle $R_{1}+R_{2}$.

\textbf{Handling $R_{1}+R_{2}$.} By the definition of $\epsilon_{i,[k]}$
we have 
\begin{align}
R_{1}+R_{2} & =\sum_{k=1}^{K}\frac{1}{n}\sum_{i=1}^{n}\left\{ \widetilde{Y}_{i,[k]}^{*}-\widehat{\bs{\beta}}_{[k]}^{\trans}\bs{\Xi}_{i,[k]}\right\} \nonumber \\
 & =\sum_{k=1}^{K}\frac{1}{n}\sum_{i=1}^{n}\left\{ \widetilde{Y}_{i,[k]}^{*}-\left\{ \frac{1}{n}\sum_{i=1}^{n}\bs{\Xi}_{i,[k]}^{\trans}\widetilde{Y}_{i,[k]}\right\} \left\{ \frac{1}{n}\sum_{i=1}^{n}\bs{\Xi}_{i,[k]}\bs{\Xi}_{i,[k]}^{\trans}\right\} ^{+}\bs{\Xi}_{i,[k]}\right\} \nonumber \\
 & =\sum_{k=1}^{K}\frac{1}{n}\sum_{i=1}^{n}\left\{ \widetilde{Y}_{i,[k]}^{*}-\left\{ \frac{1}{n}\sum_{i=1}^{n}\bs{\Xi}_{i,[k]}^{\trans}\left(\bs{\beta}_{[k],\mathcal{C}_{n}}^{\trans}\bs{\Xi}_{i,[k]}+\epsilon_{i,[k]}\right)\right\} \left\{ \frac{1}{n}\sum_{i=1}^{n}\bs{\Xi}_{i,[k]}\bs{\Xi}_{i,[k]}^{\trans}\right\} ^{+}\bs{\Xi}_{i,[k]}\right\} \nonumber \\
 & =\sum_{k=1}^{K}\frac{1}{n}\sum_{i=1}^{n}\left\{ \widetilde{Y}_{i,[k]}^{*}-\left\{ \bs{\beta}_{[k],\mathcal{C}_{n}}^{\trans}\frac{1}{n}\sum_{i=1}^{n}\bs{\Xi}_{i,[k]}\bs{\Xi}_{i,[k]}^{\trans}+\frac{1}{n}\sum_{i=1}^{n}\bs{\Xi}_{i,[k]}^{\trans}\epsilon_{i,[k]}\right\} \left\{ \frac{1}{n}\sum_{i=1}^{n}\bs{\Xi}_{i,[k]}\bs{\Xi}_{i,[k]}^{\trans}\right\} ^{+}\bs{\Xi}_{i,[k]}\right\} \nonumber \\
 & =\sum_{k=1}^{K}\frac{1}{n}\sum_{i=1}^{n}\widetilde{Y}_{i,[k]}^{*}-\sum_{k=1}^{K}\bs{\beta}_{[k],\mathcal{C}_{n}}^{\trans}\frac{1}{n}\sum_{i=1}^{n}\bs{\Xi}_{i,[k]}\bs{\Xi}_{i,[k]}^{\trans}\left\{ \frac{1}{n}\sum_{i=1}^{n}\bs{\Xi}_{i,[k]}\bs{\Xi}_{i,[k]}^{\trans}\right\} ^{+}\frac{1}{n}\sum_{i=1}^{n}\bs{\Xi}_{i,[k]}\nonumber \\
 & \qquad-\sum_{k=1}^{K}\frac{1}{n}\sum_{i=1}^{n}\bs{\Xi}_{i,[k]}^{\trans}\epsilon_{i,[k]}\left\{ \frac{1}{n}\sum_{i=1}^{n}\bs{\Xi}_{i,[k]}\bs{\Xi}_{i,[k]}^{\trans}\right\} ^{+}\frac{1}{n}\sum_{i=1}^{n}\bs{\Xi}_{i,[k]}\nonumber \\
 & =\underbrace{\sum_{k=1}^{K}\frac{1}{n}\sum_{i=1}^{n}\left\{ \widetilde{Y}_{i,[k]}^{*}-\bs{\beta}_{[k],\mathcal{C}_{n}}^{\trans}\bs{\Xi}_{i,[k]}\right\} }_{Q_{2}}+\underbrace{-\sum_{k=1}^{K}\frac{1}{n}\sum_{i=1}^{n}\bs{\Xi}_{i,[k]}^{\trans}\epsilon_{i,[k]}\left\{ \frac{1}{n}\sum_{i=1}^{n}\bs{\Xi}_{i,[k]}\bs{\Xi}_{i,[k]}^{\trans}\right\} ^{+}\frac{1}{n}\sum_{i=1}^{n}\bs{\Xi}_{i,[k]}}_{Q_{1}},\label{eq:Q_2+Q_1}
\end{align}
where the last step follows from (\ref{eq:mean in variance}) and
\[
\frac{1}{n}\sum_{i=1}^{n}\bs{\Xi}_{i}\bs{\Xi}_{i}^{\trans}\left\{ \frac{1}{n}\sum_{i=1}^{n}\bs{\Xi}_{i}\bs{\Xi}_{i}^{\trans}\right\} ^{+}\frac{1}{n}\sum_{i=1}^{n}\bs{\Xi}_{i}=\frac{1}{n}\sum_{i=1}^{n}\bs{\Xi}_{i}.
\]
We bound $Q_{1}$and $Q_{2}$, respectively.

To bound $Q_{1}$, we derive the convergence rates of $\left\{ \frac{1}{n}\sum_{i=1}^{n}\bs{\Xi}_{i,[k]}\bs{\Xi}_{i,[k]}^{\trans}\right\} ^{+}$,
$\frac{1}{n}\sum_{i=1}^{n}\bs{\Xi}_{i,[k]}^{\trans}\epsilon_{i,[k]}$
and $\frac{1}{n}\sum_{i=1}^{n}\bs{\Xi}_{i,[k]}$ uniformly over $k=1,\ldots,K$.
In this proof, the random variables hidden by $o_{P}(\cdot)$, $O_{P}(\cdot)$,
$O(1)$ and $o(1)$ do not depend on $k$. For any random vector $\bs Z_{n}$
and a sequence of sub-$\sigma$-algebras $\mathcal{C}_{n}$, $\bs Z_{n}=O_{L^{2},\mathcal{C}_{n}}\left(a_{n}\right)$
stands for $\e\left[\left\Vert \bs Z_{n}/a_{n}\right\Vert _{2}\mid\mathcal{C}_{n}\right]=O_{P}(1)$
as $n\to\infty$, where the random variable $O_{P}(1)$ does not depend
on $k$ and $a_{n}$ is a sequence of non-negative numbers.

\textbf{The analysis of $\left\{ \frac{1}{n}\sum_{i=1}^{n}\bs{\Xi}_{i,[k]}\bs{\Xi}_{i,[k]}^{\trans}\right\} ^{+}$.}
Recall that {\footnotesize
\begin{align}
 & \frac{1}{n}\sum_{i=1}^{n}\bs{\Xi}_{i,[k]}\bs{\Xi}_{i,[k]}^{\trans}\nonumber \\
= & \frac{1}{n}\sum_{i=1}^{n}A_{i}\1(B_{i}=k)\left\{ \bs{\xi}_{n}^{*}(\bs X_{i})-\overline{\bs{\xi}}_{n[k]}^{*}\right\} \left\{ \bs{\xi}_{n}^{*}(\bs X_{i})-\overline{\bs{\xi}}_{n[k]}^{*}\right\} ^{\trans}-\nonumber \\
 & \frac{2}{n}\sum_{i=1}^{n}\pi_{n[k]}A_{i}\1(B_{i}=k)\left\{ \bs{\xi}_{n}^{*}(\bs X_{i})-\overline{\bs{\xi}}_{n[k]}^{*}\right\} \left\{ \bs{\xi}_{n}^{*}(\bs X_{i})-\overline{\bs{\xi}}_{n[k]}^{*}\right\} ^{\trans}+\nonumber \\
 & \frac{1}{n}\sum_{i=1}^{n}\pi_{n[k]}^{2}\1(B_{i}=k)\left\{ \bs{\xi}_{n}^{*}(\bs X_{i})-\overline{\bs{\xi}}_{n[k]}^{*}\right\} \left\{ \bs{\xi}_{n}^{*}(\bs X_{i})-\overline{\bs{\xi}}_{n[k]}^{*}\right\} ^{\trans}+\nonumber \\
 & \underbrace{\frac{1}{n}\sum_{i=1}^{n}\left(A_{i}-\pi_{n[k]}\right)^{2}\1(B_{i}=k)\left[\left(\bs{\xi}_{n}(\bs X_{i})-\overline{\bs{\xi}}_{n[k]}\right)\left(\bs{\xi}_{n}(\bs X_{i})-\overline{\bs{\xi}}_{n[k]}\right)^{\trans}-\left(\bs{\xi}_{n}^{*}(\bs X_{i})-\overline{\bs{\xi}}_{n[k]}^{*}\right)\left(\bs{\xi}_{n}^{*}(\bs X_{i})-\overline{\bs{\xi}}_{n[k]}^{*}\right)^{\trans}\right]}_{R_{5}},\label{eq: sample variance decomposition}
\end{align}
}where $\overline{\bs{\xi}}_{n[k]}^{*}:=\frac{1}{n_{[k]}}\sum_{i=1}^{n}\1(B_{i}=k)\bs{\xi}_{n}^{*}(\bs X_{i})$
is the stratum-specific sample mean for $\bs{\xi}_{n}^{*}(\bs X_{i})$.
Let $S^{d-1}:=\{\bs{\alpha}\in\R^{d}:\left\Vert \bs{\alpha}\right\Vert =1\}$
denote the set of all unit vectors in $\R^{d}$. For every $1\leq k\leq K$
and $\bs{\alpha}\in\R^{d}$, by the Cauchy--Schwarz inequality we
have 
\begin{align*}
 & \sup_{\bs{\alpha}\in S^{d-1}}\left|\bs{\alpha}^{\trans}R_{5}\bs{\alpha}\right|\\
= & \sup_{\bs{\alpha}\in S^{d-1}}\left|\frac{1}{n}\sum_{i=1}^{n}\left(A_{i}-\pi_{n[k]}\right)^{2}\1(B_{i}=k)\left[\left(\bs{\alpha}^{\trans}\bs{\xi}_{n}(\bs X_{i})-\bs{\alpha}^{\trans}\overline{\bs{\xi}}_{n[k]}\right)^{2}-\left(\bs{\alpha}^{\trans}\bs{\xi}_{n}^{*}(\bs X_{i})-\bs{\alpha}^{\trans}\overline{\bs{\xi}}_{n[k]}^{*}\right)^{2}\right]\right|\\
\leq & \sup_{\bs{\alpha}\in S^{d-1}}\sqrt{\frac{1}{n}\sum_{i=1}^{n}\1(B_{i}=k)\left\{ \bs{\alpha}^{\trans}\bs{\xi}_{n}(\bs X_{i})-\bs{\alpha}^{\trans}\overline{\bs{\xi}}_{n[k]}-\left(\bs{\alpha}^{\trans}\bs{\xi}_{n}^{*}(\bs X_{i})-\bs{\alpha}^{\trans}\overline{\bs{\xi}}_{n[k]}^{*}\right)\right\} ^{2}}\times\\
 & \qquad\sup_{\bs{\alpha}\in S^{d-1}}\sqrt{\frac{1}{n}\sum_{i=1}^{n}\1(B_{i}=k)\left\{ \bs{\alpha}^{\trans}\bs{\xi}_{n}(\bs X_{i})-\bs{\alpha}^{\trans}\overline{\bs{\xi}}_{n[k]}+\left(\bs{\alpha}^{\trans}\bs{\xi}_{n}^{*}(\bs X_{i})-\bs{\alpha}^{\trans}\overline{\bs{\xi}}_{n[k]}^{*}\right)\right\} ^{2}}\\
\leq & \sqrt{\underbrace{\frac{1}{n}\sum_{i=1}^{n}\1(B_{i}=k)\left\Vert \bs{\xi}_{n}(\bs X_{i})-\overline{\bs{\xi}}_{n[k]}-\left(\bs{\xi}_{n}^{*}(\bs X_{i})-\overline{\bs{\xi}}_{n[k]}^{*}\right)\right\Vert ^{2}}_{R_{6}}}\times\\
 & \qquad\sqrt{\underbrace{\sup_{\bs{\alpha}\in S^{d-1}}\frac{1}{n}\sum_{i=1}^{n}\1(B_{i}=k)\left\{ \bs{\alpha}^{\trans}\bs{\xi}_{n}(\bs X_{i})-\bs{\alpha}^{\trans}\overline{\bs{\xi}}_{n[k]}+\left(\bs{\alpha}^{\trans}\bs{\xi}_{n}^{*}(\bs X_{i})-\bs{\alpha}^{\trans}\overline{\bs{\xi}}_{n[k]}^{*}\right)\right\} ^{2}}_{R_{7}}},
\end{align*}
which leads to 
\[
\left\Vert R_{5}\right\Vert \leq\sqrt{R_{6}}\times\sqrt{R_{7}}.
\]
We control $R_{6}$ and $R_{7}$ separately. By Assumption \ref{assu:conditions on =00005Cxi-diverging xi},
we have 
\begin{align}
\left\Vert \overline{\bs{\xi}}_{n[k]}-\overline{\bs{\xi}}_{n[k]}^{*}\right\Vert  & =\left\Vert \frac{1}{n_{[k]}}\sum_{i=1}^{n}\1(B_{i}=k)\left\{ \bs{\xi}_{n}(\bs X_{i})-\bs{\xi}_{n}^{*}(\bs X_{i})\right\} \right\Vert \nonumber \\
 & \leq\sqrt{\frac{1}{n_{[k]}}\sum_{i=1}^{n}\1(B_{i}=k)\left\Vert \bs{\xi}_{n}(\bs X_{i})-\bs{\xi}_{n}^{*}(\bs X_{i})\right\Vert ^{2}}=O_{P}(n^{-1/4})\label{eq:sample group mean-pop group mean-1}
\end{align}
uniformly over $k=1,\ldots K$. Then we have 
\begin{align}
R_{6} & \leq2\times\frac{1}{n}\sum_{i=1}^{n}\1(B_{i}=k)\left\Vert \bs{\xi}_{n}(\bs X_{i})-\bs{\xi}_{n}^{*}(\bs X_{i})\right\Vert ^{2}+2\times\frac{1}{n}\sum_{i=1}^{n}\1(B_{i}=k)\left\Vert \overline{\bs{\xi}}_{n[k]}-\overline{\bs{\xi}}_{n[k]}^{*}\right\Vert ^{2}\nonumber \\
 & \leq\frac{n_{[k]}}{n}\frac{2}{n_{[k]}}\sum_{i=1}^{n}\1(B_{i}=k)\left\Vert \bs{\xi}_{n}(\bs X_{i})-\bs{\xi}_{n}^{*}(\bs X_{i})\right\Vert ^{2}+\frac{n_{[k]}}{n}\times2\left\Vert \overline{\bs{\xi}}_{n[k]}-\overline{\bs{\xi}}_{n[k]}^{*}\right\Vert ^{2}\nonumber \\
 & =\frac{n_{[k]}}{n}O_{P}(n^{-1/2})=p_{[k]}n^{-1/2}O_{P}(1),\label{eq:R6 diverging xi}
\end{align}
where the last step follows from Lemma \ref{lem:LLN}. For $R_{7}$,
we note that
\begin{align*}
R_{7} & \leq2\sup_{\bs{\alpha}\in S^{d-1}}\frac{1}{n}\sum_{i=1}^{n}\1(B_{i}=k)\left\{ \bs{\alpha}^{\trans}\bs{\xi}_{n}(\bs X_{i})-\bs{\alpha}^{\trans}\overline{\bs{\xi}}_{n[k]}-\left(\bs{\alpha}^{\trans}\bs{\xi}_{n}^{*}(\bs X_{i})-\bs{\alpha}^{\trans}\overline{\bs{\xi}}_{n[k]}^{*}\right)\right\} ^{2}\\
 & \qquad+2\sup_{\bs{\alpha}\in S^{d-1}}\frac{1}{n}\sum_{i=1}^{n}\1(B_{i}=k)\left\{ 2\left(\bs{\alpha}^{\trans}\bs{\xi}_{n}^{*}(\bs X_{i})-\bs{\alpha}^{\trans}\overline{\bs{\xi}}_{n[k]}^{*}\right)\right\} ^{2}\\
 & \leq2R_{6}+16\sup_{\bs{\alpha}\in S^{d-1}}\frac{1}{n}\sum_{i=1}^{n}\1(B_{i}=k)\left\{ \bs{\alpha}^{\trans}\overline{\bs{\xi}}_{n[k]}^{*}\right\} ^{2}+16\sup_{\bs{\alpha}\in S^{d-1}}\frac{1}{n}\sum_{i=1}^{n}\1(B_{i}=k)\left\{ \bs{\alpha}^{\trans}\bs{\xi}_{n}^{*}(\bs X_{i})\right\} ^{2}\\
 & \leq2R_{6}+16\underbrace{\frac{1}{n}\sum_{i=1}^{n}\1(B_{i}=k)\left\Vert \overline{\bs{\xi}}_{n[k]}^{*}\right\Vert ^{2}}_{R_{7,1}}+16\underbrace{\left\Vert \frac{1}{n}\sum_{i=1}^{n}\1(B_{i}=k)\bs{\xi}_{n}^{*}(\bs X_{i})\bs{\xi}_{n}^{*}(\bs X_{i})^{\trans}\right\Vert }_{R_{7,2}}.
\end{align*}
We deal with $R_{7,1}$. By Lemma \ref{lem:uniform LLN} we have 
\begin{align}
\left\Vert \frac{n_{[k]}}{n}\overline{\bs{\xi}}_{n[k]}^{*}-\e\left[\1(B_{i}=k)\bs{\xi}_{n}^{*}(\bs X_{i})\right]\right\Vert  & =\left\Vert \frac{1}{n}\sum_{i=1}^{n}\left\{ \1(B_{i}=k)\bs{\xi}_{n}^{*}(\bs X_{i})-\e\left[\1(B_{i}=k)\bs{\xi}_{n}^{*}(\bs X_{i})\right]\right\} \right\Vert \nonumber \\
 & =\sqrt{\frac{p_{[k]}r_{n}\log(2Kd)}{n}}O_{P}(1).\label{eq:lln for =00005Coverline=00007B=00005Cbs=00007B=00005Cxi=00007D=00007D_=00007Bn=00005Bk=00005D=00007D^=00007B*=00007D-1}
\end{align}
On the other hand, by Assumption \ref{assu:conditions on =00005Cxi-diverging xi}
we have 
\begin{align}
 & \left\Vert \e\left[\1(B_{i}=k)\bs{\xi}_{n}^{*}(\bs X_{i})\right]\right\Vert \nonumber \\
= & p_{[k]}\sup_{\bs{\alpha}\in S^{d-1}}\left|\e\left[\bs{\alpha}^{\trans}\bs{\xi}_{n}^{*}(\bs X_{i})\mid B_{i}=k\right]\right|\leq p_{[k]}\sup_{\bs{\alpha}\in S^{d-1}}\sqrt{\e\left[\left\{ \bs{\alpha}^{\trans}\bs{\xi}_{n}^{*}(\bs X_{i})\right\} ^{2}\mid B_{i}=k\right]}\nonumber \\
= & p_{[k]}\sup_{\bs{\alpha}\in S^{d-1}}\sqrt{p_{[k]}\bs{\alpha}^{\trans}\e\left[\bs{\xi}_{n}^{*}(\bs X_{i})\bs{\xi}_{n}^{*}(\bs X_{i})^{\trans}\mid B_{i}=k\right]\bs{\alpha}}\leq p_{[k]}\sqrt{C}=p_{[k]}O(1).\label{eq:E xin* is bounded}
\end{align}
As a result, by Lemma \ref{lem:LLN} we have 
\begin{align*}
\left\Vert \overline{\bs{\xi}}_{n[k]}^{*}\right\Vert  & =\frac{n}{n_{[k]}}\left\{ p_{[k]}O(1)+\sqrt{\frac{p_{[k]}r_{n}\log(2Kd)}{n}}O_{P}(1)\right\} =O_{P}(1)+\sqrt{\log(2Kd)r_{n}/(np_{[k]})}O_{P}(1)=O_{P}(1),
\end{align*}
where the last step follows from $\log(2Kd)r_{n}/(n\inf_{1\leq k\leq K}p_{[k]})\to0$
as $n\to\infty$. Recall the definition of $R_{7,1}$, then it follows
from Lemma \ref{lem:LLN} that 
\begin{align*}
R_{7,1} & =\frac{1}{n}\sum_{i=1}^{n}\1(B_{i}=k)\left\Vert \overline{\bs{\xi}}_{n[k]}^{*}\right\Vert ^{2}=\left\{ p_{[k]}+p_{[k]}o_{P}(1)\right\} \left\{ O_{P}(1)\right\} ^{2}=p_{[k]}O_{P}(1).
\end{align*}
By Lemma \ref{lem:uniform LLN} we have 
\[
\left\Vert \frac{1}{n}\sum_{i=1}^{n}\1(B_{i}=k)\bs{\xi}_{n}^{*}(\bs X_{i})\bs{\xi}_{n}^{*}(\bs X_{i})^{\trans}-\e\left[\1(B_{i}=k)\bs{\xi}_{n}^{*}(\bs X_{i})\bs{\xi}_{n}^{*}(\bs X_{i})^{\trans}\right]\right\Vert \leq p_{[k]}o_{P}(1).
\]
Thus,
\begin{align*}
R_{7,2} & \leq\left\Vert \e\left[\1(B_{i}=k)\bs{\xi}_{n}^{*}(\bs X_{i})\bs{\xi}_{n}^{*}(\bs X_{i})^{\trans}\right]\right\Vert +p_{[k]}o_{P}(1)\\
 & =p_{[k]}O(1)+p_{[k]}o_{P}(1)=p_{[k]}O_{P}(1).
\end{align*}
Combining (\ref{eq:R6 diverging xi}) and the results for $R_{7,1}$
and $R_{7,2}$ gives that 
\begin{align*}
R_{7} & \leq2R_{6}+16R_{7,1}+16R_{7,2}=p_{[k]}n^{-1/2}O_{P}(1)+p_{[k]}O_{P}(1)=p_{[k]}O_{P}(1).
\end{align*}
As a result, we have 
\begin{align}
\left\Vert R_{5}\right\Vert \leq\sqrt{R_{6}}\times\sqrt{R_{7}}= & p_{[k]}n^{-1/4}O_{P}(1).\label{eq:result for R5 diverging xi}
\end{align}
Recalling (\ref{eq: sample variance decomposition}), we have 
\begin{align}
 & \frac{1}{n}\sum_{i=1}^{n}\bs{\Xi}_{i,[k]}\bs{\Xi}_{i,[k]}^{\trans}\nonumber \\
= & \frac{1}{n}\sum_{i=1}^{n}A_{i}\1(B_{i}=k)\left\{ \bs{\xi}_{n}^{*}(\bs X_{i})-\overline{\bs{\xi}}_{n[k]}^{*}\right\} \left\{ \bs{\xi}_{n}^{*}(\bs X_{i})-\overline{\bs{\xi}}_{n[k]}^{*}\right\} ^{\trans}\nonumber \\
 & -\frac{2}{n}\sum_{i=1}^{n}\pi_{n[k]}A_{i}\1(B_{i}=k)\left\{ \bs{\xi}_{n}^{*}(\bs X_{i})-\overline{\bs{\xi}}_{n[k]}^{*}\right\} \left\{ \bs{\xi}_{n}^{*}(\bs X_{i})-\overline{\bs{\xi}}_{n[k]}^{*}\right\} ^{\trans}\nonumber \\
 & +\frac{1}{n}\sum_{i=1}^{n}\pi_{n[k]}^{2}\1(B_{i}=k)\left\{ \bs{\xi}_{n}^{*}(\bs X_{i})-\overline{\bs{\xi}}_{n[k]}^{*}\right\} \left\{ \bs{\xi}_{n}^{*}(\bs X_{i})-\overline{\bs{\xi}}_{n[k]}^{*}\right\} ^{\trans}+p_{[k]}n^{-1/4}O_{P}(1)\nonumber \\
= & \frac{1}{n}\sum_{i=1}^{n}\left(A_{i}-\pi_{n[k]}\right)^{2}\1(B_{i}=k)\left\{ \widetilde{\bs{\xi}}_{n}^{*}(\bs X_{i})-\overline{\widetilde{\bs{\xi}}}_{n[k]}^{*}\right\} \left\{ \widetilde{\bs{\xi}}_{n}^{*}(\bs X_{i})-\overline{\widetilde{\bs{\xi}}}_{n[k]}^{*}\right\} ^{\trans}+p_{[k]}n^{-1/4}O_{P}(1)\nonumber \\
= & \frac{1}{n}\sum_{i=1}^{n}\bs{\Xi}_{i,[k]}^{*}\bs{\Xi}_{i,[k]}^{*\trans}-2\times\overline{\widetilde{\bs{\xi}}}_{n[k]}^{*}\times\frac{1}{n}\sum_{i=1}^{n}\left(A_{i}-\pi_{n[k]}\right)^{2}\1(B_{i}=k)\widetilde{\bs{\xi}}_{n}^{*\trans}(\bs X_{i})\nonumber \\
 & +\overline{\widetilde{\bs{\xi}}}_{n[k]}^{*}\times\overline{\widetilde{\bs{\xi}}}_{n[k]}^{*\trans}\times\frac{1}{n}\sum_{i=1}^{n}\left(A_{i}-\pi_{n[k]}\right)^{2}\1(B_{i}=k)+p_{[k]}n^{-1/4}O_{P}(1)\label{eq:decomposition of sample variance}
\end{align}
where $\overline{\widetilde{\bs{\xi}}}_{n[k]}^{*}:=\frac{1}{n_{[k]}}\sum_{i=1}^{n}\1(B_{i}=k)\widetilde{\bs{\xi}}_{n}^{*}(\bs X_{i})$.
It follows from (\ref{eq:lln for =00005Coverline=00007B=00005Cbs=00007B=00005Cxi=00007D=00007D_=00007Bn=00005Bk=00005D=00007D^=00007B*=00007D-1}),
\begin{align*}
 & \left\Vert \e\left[\bs{\xi}_{n}^{*}(\bs X_{i})\mid B_{i}=k\right]\right\Vert =\sup_{\bs{\alpha}\in S^{d-1}}\left|\e\left[\bs{\alpha}^{\trans}\bs{\xi}_{n}^{*}(\bs X_{i})\mid B_{i}=k\right]\right|\\
\leq & \sup_{\bs{\alpha}\in S^{d-1}}\sqrt{\e\left[\left\{ \bs{\alpha}^{\trans}\bs{\xi}_{n}^{*}(\bs X_{i})\right\} ^{2}\mid B_{i}=k\right]}\leq C
\end{align*}
and $\frac{n_{[k]}}{n}=p_{[k]}+\sqrt{p_{[k]}}O_{P}(\sqrt{\log(2K)/n})$
that 
\begin{align}
 & \left\Vert \frac{1}{n}\sum_{i=1}^{n}\1(B_{i}=k)\widetilde{\bs{\xi}}_{n}^{*}(\bs X_{i})\right\Vert =\left\Vert \frac{1}{n}\sum_{i=1}^{n}\1(B_{i}=k)\bs{\xi}_{n}^{*}(\bs X_{i})-\frac{n_{[k]}}{n}\e\left[\bs{\xi}_{n}^{*}(\bs X_{i})\mid B_{i}=k\right]\right\Vert \nonumber \\
\leq & \left\Vert \frac{1}{n}\sum_{i=1}^{n}\1(B_{i}=k)\bs{\xi}_{n}^{*}(\bs X_{i})-p_{[k]}\e\left[\bs{\xi}_{n}^{*}(\bs X_{i})\mid B_{i}=k\right]\right\Vert +\left|\frac{n_{[k]}}{n}-p_{[k]}\right|\left\Vert \e\left[\bs{\xi}_{n}^{*}(\bs X_{i})\mid B_{i}=k\right]\right\Vert \nonumber \\
= & \sqrt{\frac{p_{[k]}r_{n}\log(2Kd)}{n}}O_{P}(1)+\sqrt{\frac{p_{[k]}\log(2K)}{n}}O_{P}(1).\label{eq:xi tilde * mean}
\end{align}
Thus, it follows from Lemma \ref{lem:LLN} that 
\begin{align}
\left\Vert \overline{\widetilde{\bs{\xi}}}_{n[k]}^{*}\right\Vert  & =\frac{n}{n_{[k]}}\left\Vert \frac{1}{n}\sum_{i=1}^{n}\1(B_{i}=k)\widetilde{\bs{\xi}}_{n}^{*}(\bs X_{i})\right\Vert =p_{[k]}^{-1}(1+o_{P}(1))\left\Vert \frac{1}{n}\sum_{i=1}^{n}\1(B_{i}=k)\widetilde{\bs{\xi}}_{n}^{*}(\bs X_{i})\right\Vert \nonumber \\
 & =\sqrt{r_{n}\log(2Kd)/(np_{[k]})}O_{P}(1)+\sqrt{\log(2K)/(np_{[k]})}O_{P}(1).\label{eq:rate for bar tilde xi}
\end{align}
In addition, by Lemma \ref{lem:uniform LLN} we have 
\[
\left\Vert \frac{1}{n}\sum_{i=1}^{n}A_{i}\1(B_{i}=k)\widetilde{\bs{\xi}}_{n}^{*}(\bs X_{i})\right\Vert \leq\sqrt{\frac{p_{[k]}r_{n}\log(2Kd)}{n}}O_{P}(1).
\]
Then it follows from Assumption \ref{assu:treatment assignment} and
(\ref{eq:xi tilde * mean}) that 
\begin{align*}
 & \left\Vert \frac{1}{n}\sum_{i=1}^{n}\left(A_{i}-\pi_{n[k]}\right)^{2}\1(B_{i}=k)\widetilde{\bs{\xi}}_{n}^{*}(\bs X_{i})\right\Vert \\
= & \left\Vert \frac{1}{n}\sum_{i=1}^{n}A_{i}\1(B_{i}=k)\widetilde{\bs{\xi}}_{n}^{*}(\bs X_{i})-2\pi_{n[k]}\frac{1}{n}\sum_{i=1}^{n}A_{i}\1(B_{i}=k)\widetilde{\bs{\xi}}_{n}^{*}(\bs X_{i})+\pi_{n[k]}^{2}\frac{1}{n}\sum_{i=1}^{n}\1(B_{i}=k)\widetilde{\bs{\xi}}_{n}^{*}(\bs X_{i})\right\Vert \\
= & \sqrt{\frac{p_{[k]}r_{n}\log(2Kd)}{n}}O_{P}(1)+\sqrt{\frac{p_{[k]}r_{n}\log(2Kd)}{n}}O_{P}(1)+\sqrt{\frac{p_{[k]}\log(2K)}{n}}O_{P}(1)\\
= & \sqrt{\frac{p_{[k]}r_{n}\log(2Kd)}{n}}O_{P}(1)+\sqrt{\frac{p_{[k]}\log(2K)}{n}}O_{P}(1).
\end{align*}
This, combined with (\ref{eq:decomposition of sample variance}),
(\ref{eq:rate for bar tilde xi}), (\ref{eq:result for R5 diverging xi})
and 
\[
\frac{1}{n}\sum_{i=1}^{n}\left(A_{i}-\pi_{n[k]}\right)^{2}\1(B_{i}=k)\leq\frac{n_{[k]}}{n}=p_{[k]}\left(1+o_{P}(1)\right)=p_{[k]}O_{P}(1)
\]
implies that 
\begin{align}
 & \left\Vert \frac{1}{n}\sum_{i=1}^{n}\bs{\Xi}_{i,[k]}\bs{\Xi}_{i,[k]}^{\trans}-\frac{1}{n}\sum_{i=1}^{n}\bs{\Xi}_{i,[k]}^{*}\bs{\Xi}_{i,[k]}^{*\trans}\right\Vert \nonumber \\
= & \left\{ \sqrt{\frac{p_{[k]}r_{n}\log(2Kd)}{n}}+\sqrt{\frac{p_{[k]}\log(2K)}{n}}\right\} \left\{ \sqrt{\frac{r_{n}\log(2Kd)}{np_{[k]}}}+\sqrt{\frac{\log(2K)}{np_{[k]}}}\right\} O_{P}(1)\nonumber \\
 & +\left\{ \frac{r_{n}\log(2Kd)}{np_{[k]}}+\frac{\log(2K)}{np_{[k]}}\right\} p_{[k]}O_{P}(1)+p_{[k]}n^{-1/4}O_{P}(1)\nonumber \\
= & \left\{ \frac{r_{n}\log(2Kd)}{n}+\frac{\log(2K)}{n}\right\} O_{P}(1)+p_{[k]}n^{-1/4}O_{P}(1).\label{eq:xixi=00003Dxixi*+r/n}
\end{align}

By Lemma \ref{lem:uniform LLN} we have 
\begin{align}
\left\Vert \frac{1}{n}\sum_{i=1}^{n}\bs{\Xi}_{i,[k]}^{*}\bs{\Xi}_{i,[k]}^{*\trans}-\mathbf{\Sigma}_{[k]}^{\mathcal{C}_{n}}\right\Vert  & =p_{[k]}o_{P}(1).\label{eq:conditional lln for var}
\end{align}
Besides, it follows from $\left\Vert \e\left[\widetilde{\bs{\xi}}_{n}^{*}(\bs X_{i})\widetilde{\bs{\xi}}_{n}^{*}(\bs X_{i})^{\trans}\mid B_{i}=k\right]\right\Vert \leq C<\infty$
and Lemma \ref{lem:LLN} that 
\begin{equation}
\left\Vert \mathbf{\Sigma}_{[k]}^{\mathcal{C}_{n}}-\mathbf{\Sigma}_{[k]}\right\Vert =p_{[k]}o_{P}(1).\label{eq:converge of Sigma_=00005Bk=00005D^C_n}
\end{equation}
Combining (\ref{eq:xixi=00003Dxixi*+r/n}), (\ref{eq:conditional lln for var})
and (\ref{eq:converge of Sigma_=00005Bk=00005D^C_n}) gives that 
\begin{align}
 & \left\Vert \frac{1}{n}\sum_{i=1}^{n}\bs{\Xi}_{i,[k]}\bs{\Xi}_{i,[k]}^{\trans}-\mathbf{\Sigma}_{[k]}\right\Vert \nonumber \\
= & \left\{ \frac{r_{n}\log(2Kd)}{n}+\frac{\log(2K)}{n}\right\} O_{P}(1)+p_{[k]}n^{-1/4}O_{P}(1)+p_{[k]}o_{P}(1).\label{eq:convergence of var-1}
\end{align}
By Assumption \ref{assu:conditions on =00005Cxi-diverging xi} and
the derivation of the Moore-Penrose inverse, we have
\[
\left\Vert \mathbf{\Sigma}_{[k]\widetilde{\bs{\xi}}_{n}^{*}\widetilde{\bs{\xi}}_{n}^{*}}^{+}\right\Vert \leq c^{-1}
\]
uniformly over $k=1,\ldots,K$ and $n\geq1$. Then by Assumption \ref{assu:treatment assignment}
we have 
\begin{equation}
\left\Vert \mathbf{\Sigma}_{[k]}^{+}\right\Vert =\left\Vert \pi_{[k]}^{-1}(1-\pi_{[k]})^{-1}p_{[k]}^{-1}\mathbf{\Sigma}_{[k]\widetilde{\bs{\xi}}_{n}^{*}\widetilde{\bs{\xi}}_{n}^{*}}^{+}\right\Vert \leq p_{[k]}^{-1}O(1).\label{eq: Sigma+=00005Bk=00005D is bounded}
\end{equation}
It follows from Theorem 3.3 in \citet{stewart1977Perturbation}, Assumption
\ref{assu:conditions on =00005Cxi-diverging xi} and Lemma \ref{lem:LLN}
that
\begin{align}
 & \left\Vert \left\{ \frac{1}{n}\sum_{i=1}^{n}\bs{\Xi}_{i,[k]}\bs{\Xi}_{i,[k]}^{\trans}\right\} ^{+}-\mathbf{\Sigma}_{[k]}^{+}\right\Vert \nonumber \\
\leq & 3\max\left\{ \left\Vert \left\{ \frac{1}{n}\sum_{i=1}^{n}\bs{\Xi}_{i,[k]}\bs{\Xi}_{i,[k]}^{\trans}\right\} ^{+}\right\Vert ^{2},\left\Vert \mathbf{\Sigma}_{[k]}^{+}\right\Vert ^{2}\right\} \left\Vert \frac{1}{n}\sum_{i=1}^{n}\bs{\Xi}_{i,[k]}\bs{\Xi}_{i,[k]}^{\trans}-\mathbf{\Sigma}_{[k]}\right\Vert \nonumber \\
= & \max\left\{ p_{[k]}^{-2}O(1),\left(\frac{n}{n_{[k]}}\right)^{2}O_{P}(1)\right\} \times\left\Vert \frac{1}{n}\sum_{i=1}^{n}\bs{\Xi}_{i,[k]}\bs{\Xi}_{i,[k]}^{\trans}-\mathbf{\Sigma}_{[k]}\right\Vert \nonumber \\
= & p_{[k]}^{-1}\left\{ \left\{ \frac{r_{n}\log(2Kd)}{np_{[k]}}+\frac{\log(2K)}{np_{[k]}}\right\} O_{P}(1)+n^{-1/4}O_{P}(1)+o_{P}(1)\right\} \\
= & p_{[k]}^{-1}o_{P}(1).\label{eq:consistency of inverse of var-1}
\end{align}
where the last step follows from and $r_{n}\log(2Kd)/(n\inf_{1\leq k\leq K}p_{[k]})\to0$
as $n\to\infty$.

\textbf{The analysis of $\frac{1}{n}\sum_{i=1}^{n}\bs{\Xi}_{i,[k]}\epsilon_{i,[k]}$.}
We have 
\[
\frac{1}{n}\sum_{i=1}^{n}\bs{\Xi}_{i,[k]}\epsilon_{i,[k]}=\frac{1}{n}\sum_{i=1}^{n}\bs{\Xi}_{i,[k]}\widetilde{Y}_{i,[k]}-\frac{1}{n}\sum_{i=1}^{n}\bs{\Xi}_{i,[k]}\bs{\Xi}_{i,[k]}^{\trans}\bs{\beta}_{[k],\mathcal{C}_{n}}.
\]
We handle $\frac{1}{n}\sum_{i=1}^{n}\bs{\Xi}_{i,[k]}\widetilde{Y}_{i,[k]}$
and $\frac{1}{n}\sum_{i=1}^{n}\bs{\Xi}_{i,[k]}\bs{\Xi}_{i,[k]}^{\trans}\bs{\beta}_{[k],\mathcal{C}_{n}}$
one by one.

\textbf{The analysis of $\frac{1}{n}\sum_{i=1}^{n}\bs{\Xi}_{i,[k]}\widetilde{Y}_{i,[k]}$.}
We have the following decomposition:
\begin{align*}
 & \frac{1}{n}\sum_{i=1}^{n}\frac{1-\pi_{n[k]}}{\pi_{n[k]}}\1(B_{i}=k)A_{i}(\bs{\xi}_{n}(\bs X_{i})-\overline{\bs{\xi}}_{n[k]})(Y_{i}-\overline{Y}_{1[k]})\\
= & \frac{1}{n}\sum_{i=1}^{n}\frac{1-\pi_{n[k]}}{\pi_{n[k]}}\1(B_{i}=k)A_{i}(\bs{\xi}_{n}(\bs X_{i})-\overline{\bs{\xi}}_{n[k]})Y_{i}(1)\\
 & -\overline{Y}_{1[k]}\times\frac{1}{n}\sum_{i=1}^{n}\frac{1-\pi_{n[k]}}{\pi_{n[k]}}\1(B_{i}=k)A_{i}(\bs{\xi}_{n}(\bs X_{i})-\overline{\bs{\xi}}_{n[k]})\\
= & \frac{1}{n}\sum_{i=1}^{n}\frac{1-\pi_{n[k]}}{\pi_{n[k]}}\1(B_{i}=k)A_{i}(\bs{\xi}_{n}^{*}(\bs X_{i})-\overline{\bs{\xi}}_{n[k]}^{*})(Y_{i}(1)-\overline{Y}_{1[k]})\\
 & +\underbrace{\frac{1}{n}\sum_{i=1}^{n}\frac{1-\pi_{n[k]}}{\pi_{n[k]}}\1(B_{i}=k)A_{i}\left\{ \bs{\xi}_{n}(\bs X_{i})-\overline{\bs{\xi}}_{n[k]}-\left(\bs{\xi}_{n}^{*}(\bs X_{i})-\overline{\bs{\xi}}_{n[k]}^{*}\right)\right\} Y_{i}}_{R_{8}}\\
 & -\underbrace{\overline{Y}_{1[k]}\times\frac{1}{n}\sum_{i=1}^{n}\frac{1-\pi_{n[k]}}{\pi_{n[k]}}\1(B_{i}=k)A_{i}\left\{ \bs{\xi}_{n}(\bs X_{i})-\overline{\bs{\xi}}_{n[k]}-(\bs{\xi}_{n}^{*}(\bs X_{i})-\overline{\bs{\xi}}_{n[k]}^{*})\right\} }_{R_{9}}.
\end{align*}
We control $R_{8}$ and $R_{9}$ separately. Note that by Assumption
\ref{assu:independent sampling} and Lemma \ref{lem:LLN} we have
\[
\e_{\mathcal{C}_{n}}\left[\frac{1}{n}\sum_{i=1}^{n}\1(B_{i}=k)Y_{i}^{2}\right]=\frac{1}{n}\sum_{i=1}^{n}\1(B_{i}=k)\e\left[Y_{i}^{2}\mid B_{i}=k\right]=\frac{n_{[k]}}{n}O(1)=p_{[k]}O_{P}(1).
\]
It follows from Cauchy-Schwarz inequality, Assumption \ref{assu:independent sampling}
and (\ref{eq:R6 diverging xi}) that 
\begin{align*}
 & \left|R_{8}\right|\\
\leq & \frac{1-\pi_{n[k]}}{\pi_{n[k]}}\sqrt{\frac{1}{n}\sum_{i=1}^{n}\1(B_{i}=k)Y_{i}^{2}}\times\sqrt{\frac{1}{n}\sum_{i=1}^{n}\1(B_{i}=k)\left\Vert \bs{\xi}_{n}(\bs X_{i})-\overline{\bs{\xi}}_{n[k]}-\left(\bs{\xi}_{n}^{*}(\bs X_{i})-\overline{\bs{\xi}}_{n[k]}^{*}\right)\right\Vert ^{2}}\\
= & O_{P}(1)\sqrt{p_{[k]}}O_{L^{2},\mathcal{C}_{n}}(1)\times\sqrt{R_{6}}=O_{P}(1)\sqrt{p_{[k]}}O_{L^{2},\mathcal{C}_{n}}(1)\times\sqrt{p_{[k]}}n^{-1/4}O_{P}(1)\\
= & p_{[k]}n^{-1/4}O_{P}(1)O_{L^{2},\mathcal{C}_{n}}(1).
\end{align*}
Similarly, by noting that 
\begin{align*}
\left|\overline{Y}_{1[k]}\right| & =\frac{n}{n_{[k]}}\frac{n_{[k]}}{n_{1[k]}}\left|\frac{1}{n}\sum_{i=1}^{n}\1(B_{i}=k)A_{i}Y_{i}(1)\right|\leq\frac{n}{n_{[k]}}\frac{n_{[k]}}{n_{1[k]}}\sqrt{\frac{n_{[k]}}{n}}\sqrt{p_{[k]}}O_{L^{2},\mathcal{C}_{n}}(1)\\
 & =\sqrt{p_{[k]}^{-1}(1+o_{P}(1))}\left\{ \frac{1}{\pi_{[k]}}+o_{P}(1)\right\} \sqrt{p_{[k]}}O_{L^{2},\mathcal{C}_{n}}(1)=O_{P}(1)O_{L^{2},\mathcal{C}_{n}}(1)
\end{align*}
we have 
\[
\left|R_{9}\right|\leq O_{P}(1)O_{L^{2},\mathcal{C}_{n}}(1)\times\sqrt{\frac{n_{[k]}}{n}}\times\sqrt{R_{6}}=p_{[k]}n^{-1/4}O_{P}(1)O_{L^{2},\mathcal{C}_{n}}(1).
\]
Combining the results for $R_{8}$ and $R_{9}$, we have
\begin{align}
 & \frac{1}{n}\sum_{i=1}^{n}\frac{1-\pi_{n[k]}}{\pi_{n[k]}}\1(B_{i}=k)A_{i}(\bs{\xi}_{n}(\bs X_{i})-\overline{\bs{\xi}}_{n[k]})(Y_{i}-\overline{Y}_{1[k]})\nonumber \\
= & \frac{1-\pi_{n[k]}}{\pi_{n[k]}}\times\frac{1}{n}\sum_{i=1}^{n}\1(B_{i}=k)A_{i}(\bs{\xi}_{n}^{*}(\bs X_{i})-\overline{\bs{\xi}}_{n[k]}^{*})(Y_{i}(1)-\overline{Y}_{1[k]})+p_{[k]}n^{-1/4}O_{P}(1)O_{L^{2},\mathcal{C}_{n}}(1)\nonumber \\
= & \frac{1-\pi_{n[k]}}{\pi_{n[k]}}\times\frac{1}{n}\sum_{i=1}^{n}\1(B_{i}=k)A_{i}(\widetilde{\bs{\xi}}_{n}^{*}(\bs X_{i})-\overline{\widetilde{\bs{\xi}}}_{n[k]}^{*})(\widetilde{Y}_{i}(1)-\overline{\widetilde{Y}}_{1[k]})+p_{[k]}n^{-1/4}O_{P}(1)O_{L^{2},\mathcal{C}_{n}}(1)\nonumber \\
= & \frac{1-\pi_{n[k]}}{\pi_{n[k]}}\times\frac{1}{n}\sum_{i=1}^{n}\1(B_{i}=k)A_{i}\widetilde{\bs{\xi}}_{n}^{*}(\bs X_{i})\widetilde{Y}_{i}(1)-\overline{\widetilde{Y}}_{1[k]}\times\frac{1-\pi_{n[k]}}{\pi_{n[k]}}\frac{1}{n}\sum_{i=1}^{n}\1(B_{i}=k)A_{i}\widetilde{\bs{\xi}}_{n}^{*}(\bs X_{i})\nonumber \\
 & -\overline{\widetilde{\bs{\xi}}}_{n[k]}^{*}\frac{1-\pi_{n[k]}}{\pi_{n[k]}}\frac{1}{n}\sum_{i=1}^{n}\1(B_{i}=k)A_{i}\widetilde{Y}_{i}(1)+\overline{\widetilde{\bs{\xi}}}_{n[k]}^{*}\overline{\widetilde{Y}}_{1[k]}\frac{1-\pi_{n[k]}}{\pi_{n[k]}}\frac{1}{n}\sum_{i=1}^{n}\1(B_{i}=k)\nonumber \\
 & +p_{[k]}n^{-1/4}O_{P}(1)O_{L^{2},\mathcal{C}_{n}}(1),\label{eq:mid step for cov limit-1}
\end{align}
where $\overline{\widetilde{\bs{\xi}}}_{n[k]}^{*}:=\frac{1}{n_{[k]}}\sum_{i=1}^{n}\1(B_{i}=k)\widetilde{\bs{\xi}}_{n}^{*}(\bs X_{i})=\overline{\bs{\xi}}_{n[k]}^{*}-\e\left[\bs{\xi}_{n}^{*}(\bs X_{i})\mid B_{i}=k\right]$
and $\overline{\widetilde{Y}}_{1[k]}:=\frac{1}{n_{1[k]}}\sum_{i=1}^{n}\1(B_{i}=k)A_{i}\widetilde{Y}_{i}(1)=\overline{Y}_{1[k]}-\e\left[Y_{i}(1)\mid B_{i}=k\right]$.
Note that 
\[
\e_{\mathcal{C}_{n}}\left[\1(B_{i}=k)A_{i}\widetilde{Y}_{i}(1)\right]=\1(B_{i}=k)A_{i}\e\left[\widetilde{Y}_{i}(1)\mid B_{i}=k\right]=0
\]
and 
\begin{align*}
\e_{\mathcal{C}_{n}}\left[\left\{ \frac{1}{n}\sum_{i=1}^{n}\1(B_{i}=k)A_{i}\widetilde{Y}_{i}(1)\right\} ^{2}\right] & =\frac{1}{n^{2}}\sum_{i=1}^{n}\e_{\mathcal{C}_{n}}\left[\1(B_{i}=k)A_{i}\widetilde{Y}_{i}(1)^{2}\right]\\
 & \leq\frac{1}{n}\frac{n_{[k]}}{n}\e\left[\widetilde{Y}_{i}(1)^{2}\mid B_{i}=k\right]=\frac{1}{n}p_{[k]}O_{P}(1)
\end{align*}
We have 
\[
\frac{1}{n}\sum_{i=1}^{n}\1(B_{i}=k)A_{i}\widetilde{Y}_{i}(1)=\sqrt{p_{[k]}/n}O_{L^{2},\mathcal{C}_{n}}(1)
\]
and thus
\begin{align*}
\overline{\widetilde{Y}}_{1[k]} & =\frac{n}{n_{1[k]}}\frac{1}{n}\sum_{i=1}^{n}\1(B_{i}=k)A_{i}\widetilde{Y}_{i}(1)=\frac{n}{n_{1[k]}}\sqrt{p_{[k]}/n}O_{L^{2},\mathcal{C}_{n}}(1)=\frac{n}{n_{[k]}}\frac{n_{[k]}}{n_{1[k]}}\sqrt{p_{[k]}/n}O_{L^{2},\mathcal{C}_{n}}(1)\\
 & =p_{[k]}^{-1}(1+o_{P}(1))O_{P}(1)\sqrt{p_{[k]}/n}O_{L^{2},\mathcal{C}_{n}}(1)=\sqrt{1/(np_{[k]})}O_{P}(1)O_{L^{2},\mathcal{C}_{n}}(1),
\end{align*}
where the fourth step follows from Assumption \ref{assu:treatment assignment}.
By Lemma \ref{lem:uniform LLN} we have 
\[
\frac{1}{n}\sum_{i=1}^{n}\1(B_{i}=k)A_{i}\widetilde{\bs{\xi}}_{n}^{*}(\bs X_{i})=\sqrt{\frac{r_{n}p_{[k]}\log(2Kd)}{n}}O_{P}(1).
\]
Combining the above results with (\ref{eq:rate for bar tilde xi}),
(\ref{eq:mid step for cov limit-1}), Assumption \ref{assu:treatment assignment}
and Lemma \ref{lem:LLN}, we have {\footnotesize
\begin{align*}
 & \left\Vert \frac{1}{n}\sum_{i=1}^{n}\frac{1-\pi_{n[k]}}{\pi_{n[k]}}\1(B_{i}=k)A_{i}(\bs{\xi}_{n}(\bs X_{i})-\overline{\bs{\xi}}_{n[k]})(Y_{i}-\overline{Y}_{1[k]})-\frac{1-\pi_{n[k]}}{\pi_{n[k]}}\times\frac{1}{n}\sum_{i=1}^{n}\1(B_{i}=k)A_{i}\widetilde{\bs{\xi}}_{n}^{*}(\bs X_{i})\widetilde{Y}_{i}(1)\right\Vert \\
\leq & \frac{\sqrt{r_{n}\log(2Kd)}}{n}O_{P}(1)O_{L^{2},\mathcal{C}_{n}}(1)+p_{[k]}n^{-1/4}O_{P}(1)O_{L^{2},\mathcal{C}_{n}}(1).
\end{align*}
}Similarly, we also have {\footnotesize
\begin{align*}
 & \frac{\pi_{n[k]}}{1-\pi_{n[k]}}\left|\frac{1}{n}\sum_{i=1}^{n}\1(B_{i}=k)(1-A_{i})(\bs{\xi}_{n}(\bs X_{i})-\overline{\bs{\xi}}_{n[k]})(Y_{i}-\overline{Y}_{0[k]})-\frac{1}{n}\sum_{i=1}^{n}\1(B_{i}=k)(1-A_{i})\widetilde{\bs{\xi}}_{n}^{*}(\bs X_{i})\widetilde{Y}_{i}(0)\right|\\
\leq & \frac{\sqrt{r_{n}\log(2Kd)}}{n}O_{P}(1)O_{L^{2},\mathcal{C}_{n}}(1)+p_{[k]}n^{-1/4}O_{P}(1)O_{L^{2},\mathcal{C}_{n}}(1).
\end{align*}
}Note that 
\begin{align*}
 & \frac{1}{n}\sum_{i=1}^{n}\bs{\Xi}_{i,[k]}\widetilde{Y}_{i,[k]}\\
= & \frac{1}{n}\sum_{i=1}^{n}\frac{1-\pi_{n[k]}}{\pi_{n[k]}}\1(B_{i}=k)A_{i}(\bs{\xi}_{n}(\bs X_{i})-\overline{\bs{\xi}}_{n[k]})(Y_{i}-\overline{Y}_{1[k]})\\
 & \qquad+\frac{1}{n}\sum_{i=1}^{n}\frac{\pi_{n[k]}}{1-\pi_{n[k]}}\1(B_{i}=k)(1-A_{i})(\bs{\xi}_{n}(\bs X_{i})-\overline{\bs{\xi}}_{n[k]})(Y_{i}-\overline{Y}_{0[k]}).
\end{align*}
We have 
\begin{align}
 & \frac{1}{n}\sum_{i=1}^{n}\bs{\Xi}_{i,[k]}\widetilde{Y}_{i,[k]}\nonumber \\
= & \frac{1-\pi_{n[k]}}{\pi_{n[k]}}\times\frac{1}{n}\sum_{i=1}^{n}\1(B_{i}=k)A_{i}\widetilde{\bs{\xi}}_{n}^{*}(\bs X_{i})\widetilde{Y}_{i}(1)\nonumber \\
 & \qquad+\frac{\pi_{n[k]}}{1-\pi_{n[k]}}\times\frac{1}{n}\sum_{i=1}^{n}\1(B_{i}=k)(1-A_{i})\widetilde{\bs{\xi}}_{n}^{*}(\bs X_{i})\widetilde{Y}_{i}(0)\nonumber \\
 & \qquad+\frac{\sqrt{r_{n}\log(2Kd)}}{n}O_{P}(1)O_{L^{2}}(1)+p_{[k]}n^{-1/4}O_{P}(1)O_{L^{2}}(1)\nonumber \\
= & \frac{1}{n}\sum_{i=1}^{n}\bs{\Xi}_{i,[k]}^{*}\widetilde{Y}_{i,[k]}^{*}+\frac{\sqrt{r_{n}\log(2Kd)}}{n}O_{P}(1)O_{L^{2},\mathcal{C}_{n}}(1)+p_{[k]}n^{-1/4}O_{P}(1)O_{L^{2},\mathcal{C}_{n}}(1).\label{eq:xiY=00003Dxi*Y*+error}
\end{align}

\textbf{The analysis of $\frac{1}{n}\sum_{i=1}^{n}\bs{\Xi}_{i,[k]}\bs{\Xi}_{i,[k]}^{\trans}\bs{\beta}_{[k],\mathcal{C}_{n}}$.}
From the definition of $\bs{\beta}_{[k],\mathcal{C}_{n}}$ we have
\[
\bs{\beta}_{[k],\mathcal{C}_{n}}=\left\{ \mathbf{\Sigma}_{[k]}^{\mathcal{C}_{n}}\right\} ^{+}\left\{ \frac{1}{n}\sum_{i=1}^{n}\e_{\mathcal{C}_{n}}\left[\widetilde{Y}_{i,[k]}^{*}\bs{\Xi}_{i,[k]}^{*}\right]\right\} .
\]
Note that 
\begin{align*}
 & \frac{1}{n}\sum_{i=1}^{n}\e_{\mathcal{C}_{n}}\left[\widetilde{Y}_{i,[k]}^{*}\bs{\Xi}_{i,[k]}^{*}\right]\\
= & \frac{1-\pi_{n[k]}}{\pi_{n[k]}}\times\frac{1}{n}\sum_{i=1}^{n}\1(B_{i}=k)A_{i}\mathbf{\Sigma}_{[k]\widetilde{\bs{\xi}}_{n}^{*}\widetilde{Y}(1)}+\frac{\pi_{n[k]}}{1-\pi_{n[k]}}\times\frac{1}{n}\sum_{i=1}^{n}\1(B_{i}=k)(1-A_{i})\mathbf{\Sigma}_{[k]\widetilde{\bs{\xi}}_{n}^{*}\widetilde{Y}(0)}.
\end{align*}
For any $a\in\{0,1\}$, by Assumption \ref{assu:conditions on =00005Cxi-diverging xi}
we have
\begin{align}
\left\Vert \mathbf{\Sigma}_{[k]\widetilde{\bs{\xi}}_{n}^{*}\widetilde{Y}(a)}\right\Vert  & =\left\Vert \e\left[\widetilde{\bs{\xi}}(\bs X_{i})\widetilde{Y}_{i}(a)\mid B_{i}=k\right]\right\Vert =\sup_{\bs{\alpha}\in S^{d-1}}\left|\e\left[\bs{\alpha}^{\trans}\widetilde{\bs{\xi}}(\bs X_{i})\widetilde{Y}_{i}(a)\mid B_{i}=k\right]\right|\nonumber \\
 & \leq\sqrt{\e\left[\left\{ \bs{\alpha}^{\trans}\widetilde{\bs{\xi}}(\bs X_{i})\right\} ^{2}\mid B_{i}=k\right]\e\left[\widetilde{Y}_{i}(a)^{2}\mid B_{i}=k\right]}\nonumber \\
 & \leq\sqrt{\e\left[\widetilde{Y}_{i}(a)^{2}\mid B_{i}=k\right]}\times\sqrt{\bs{\alpha}^{\trans}\e\left[\widetilde{\bs{\xi}}(\bs X_{i})\widetilde{\bs{\xi}}(\bs X_{i})^{\trans}\mid B_{i}=k\right]\bs{\alpha}}=O(1).\label{eq:SigmaxiY is bounded}
\end{align}
Then, applying Lemma \ref{lem:LLN} we have 
\begin{equation}
\left\Vert \frac{1}{n}\sum_{i=1}^{n}\e_{\mathcal{C}_{n}}\left[\widetilde{Y}_{i,[k]}^{*}\bs{\Xi}_{i,[k]}^{*}\right]-p_{[k]}(1-\pi_{[k]})\mathbf{\Sigma}_{[k]\widetilde{\bs{\xi}}_{n}^{*}\widetilde{Y}(1)}-p_{[k]}\pi_{[k]}\mathbf{\Sigma}_{[k]\widetilde{\bs{\xi}}_{n}^{*}\widetilde{Y}(0)}\right\Vert =p_{[k]}o_{P}(1).\label{eq:consistency of conditional Y*XI*}
\end{equation}
Furthermore, by Lemma \ref{lem:LLN} and (\ref{eq: Sigma+=00005Bk=00005D is bounded})
we have 
\begin{align*}
\left\{ \mathbf{\Sigma}_{[k]}^{\mathcal{C}_{n}}\right\} ^{+} & =\left\{ \frac{1}{n}\sum_{i=1}^{n}\left\{ A_{i}-\pi_{n[k]}\right\} ^{2}\1(B_{i}=k)\right\} ^{-1}\mathbf{\Sigma}_{[k]\widetilde{\bs{\xi}}_{n}^{*}\widetilde{\bs{\xi}}_{n}^{*}}^{+}\\
 & =\left\{ p_{[k]}\pi_{[k]}(1-\pi_{[k]})+p_{[k]}o_{P}(1)\right\} ^{-1}\mathbf{\Sigma}_{[k]\widetilde{\bs{\xi}}_{n}^{*}\widetilde{\bs{\xi}}_{n}^{*}}^{+}\\
 & =p_{[k]}^{-1}\left\{ \pi_{[k]}^{-1}(1-\pi_{[k]}^{-1})+o_{P}(1)\right\} \mathbf{\Sigma}_{[k]\widetilde{\bs{\xi}}_{n}^{*}\widetilde{\bs{\xi}}_{n}^{*}}^{+}\\
 & =\mathbf{\Sigma}_{[k]}^{+}+p_{[k]}^{-1}\mathbf{\Sigma}_{[k]\widetilde{\bs{\xi}}_{n}^{*}\widetilde{\bs{\xi}}_{n}^{*}}^{+}o_{P}(1)=\mathbf{\Sigma}_{[k]}^{+}+p_{[k]}^{-1}o_{P}(1).
\end{align*}
This, combined with (\ref{eq:SigmaxiY is bounded}) and (\ref{eq:consistency of conditional Y*XI*})
yields that 
\begin{align}
\bs{\beta}_{[k],\mathcal{C}_{n}} & =\left\{ \left\{ \mathbf{\Sigma}_{[k]}^{\mathcal{C}_{n}}\right\} ^{+}+p_{[k]}^{-1}o_{P}(1)\right\} \left\{ p_{[k]}(1-\pi_{[k]})\mathbf{\Sigma}_{[k]\widetilde{\bs{\xi}}_{n}^{*}\widetilde{Y}(1)}+p_{[k]}\pi_{[k]}\mathbf{\Sigma}_{[k]\widetilde{\bs{\xi}}_{n}^{*}\widetilde{Y}(0)}+p_{[k]}o_{P}(1)\right\} \nonumber \\
 & =\bs{\beta}_{[k]}^{*}+o_{P}(1)\label{eq:betak C_n is consistent}\\
 & =O_{P}(1)\label{eq:betak C_n is bounded}
\end{align}
Then it follows from (\ref{eq:xixi=00003Dxixi*+r/n}) that 
\begin{align*}
 & \left\Vert \frac{1}{n}\sum_{i=1}^{n}\bs{\Xi}_{i,[k]}\bs{\Xi}_{i,[k]}^{\trans}\bs{\beta}_{[k],\mathcal{C}_{n}}-\frac{1}{n}\sum_{i=1}^{n}\bs{\Xi}_{i,[k]}^{*}\bs{\Xi}_{i,[k]}^{*\trans}\bs{\beta}_{[k],\mathcal{C}_{n}}\right\Vert \\
= & \left\{ \frac{r_{n}\log(2Kd)}{n}+\frac{\log(2K)}{n}\right\} O_{P}(1)+p_{[k]}n^{-1/4}O_{P}(1).
\end{align*}

Recall (\ref{eq:xiY=00003Dxi*Y*+error}), then we obtain that 
\begin{equation}
\left\Vert \frac{1}{n}\sum_{i=1}^{n}\bs{\Xi}_{i,[k]}\epsilon_{i,[k]}-\frac{1}{n}\sum_{i=1}^{n}\bs{\Xi}_{i,[k]}^{*}\epsilon_{i,[k]}^{*}\right\Vert =\frac{r_{n}\log(2Kd)}{n}O_{P}(1)O_{L^{2},\mathcal{C}_{n}}(1)+p_{[k]}n^{-1/4}O_{P}(1)O_{L^{2},\mathcal{C}_{n}}(1).\label{eq:xiepsilon-xi*epsilon*=00003Dsmall}
\end{equation}
Note that 
\begin{align*}
 & \e_{\mathcal{C}_{n}}\left[\left\Vert \bs{\Xi}_{i,[k]}^{*}\epsilon_{i,[k]}^{*}-\e_{\mathcal{C}_{n}}\left[\bs{\Xi}_{i,[k]}^{*}\epsilon_{i,[k]}^{*}\right]\right\Vert ^{2}\right]\leq\e_{\mathcal{C}_{n}}\left[\left\Vert \bs{\Xi}_{i,[k]}^{*}\epsilon_{i,[k]}^{*}\right\Vert ^{2}\right]\\
\leq & 2\e_{\mathcal{C}_{n}}\left[\left\Vert \bs{\Xi}_{i,[k]}^{*}\widetilde{Y}_{i,[k]}^{*}\right\Vert ^{2}\right]+2\e_{\mathcal{C}_{n}}\left[\left\Vert \bs{\Xi}_{i,[k]}^{*}\bs{\Xi}_{i,[k]}^{*\trans}\bs{\beta}_{[k],\mathcal{C}_{n}}\right\Vert ^{2}\right]\\
\leq & 4\max\left\{ \frac{1-\pi_{n[k]}}{\pi_{n[k]}},\frac{\pi_{n[k]}}{1-\pi_{n[k]}}\right\} ^{2}\1(B_{i}=k)\max_{a=0,1}\e\left[\left\Vert \widetilde{\bs{\xi}}_{n}^{*}(\bs X_{i})\widetilde{Y}_{i}(a)\right\Vert ^{2}\mid B_{i}=k\right]\\
 & +2\e_{\mathcal{C}_{n}}\left[\left\Vert \widetilde{\bs{\xi}}_{n}^{*}(\bs X_{i})\widetilde{\bs{\xi}}_{n}^{*}(\bs X_{i})^{\trans}\bs{\beta}_{[k],\mathcal{C}_{n}}\right\Vert ^{2}\right]\1(B_{i}=k)\\
\leq & \zeta_{n}^{2}O_{P}(1)\max_{a=0,1}\e\left[\left\Vert \widetilde{Y}_{i}(a)\right\Vert ^{2}\mid B_{i}=k\right]\1(B_{i}=k)\\
 & +\zeta_{n}^{2}O_{P}(1)\bs{\beta}_{[k],\mathcal{C}_{n}}^{\trans}\e\left[\widetilde{\bs{\xi}}_{n}^{*}(\bs X_{i})\widetilde{\bs{\xi}}_{n}^{*}(\bs X_{i})^{\trans}\mid B_{i}=k\right]\bs{\beta}_{[k],\mathcal{C}_{n}}\1(B_{i}=k)\\
= & \1(B_{i}=k)\zeta_{n}^{2}O_{P}(1),
\end{align*}
where $O_{P}(1)$ depends on neither $k$ nor $i$ and the last step
follows from Assumptions \ref{assu:independent sampling} and \ref{assu:conditions on =00005Cxi-diverging xi}
and (\ref{eq:betak C_n is bounded}). Thus, it follows from Lemma
\ref{lem:LLN} that 
\begin{align*}
 & \e_{\mathcal{C}_{n}}\left[\left\Vert \frac{1}{n}\sum_{i=1}^{n}\left\{ \bs{\Xi}_{i,[k]}^{*}\epsilon_{i,[k]}^{*}-\e_{\mathcal{C}_{n}}\left[\bs{\Xi}_{i,[k]}^{*}\epsilon_{i,[k]}^{*}\right]\right\} \right\Vert ^{2}\right]\\
= & \frac{1}{n^{2}}\sum_{i=1}^{n}\e_{\mathcal{C}_{n}}\left[\left\Vert \bs{\Xi}_{i,[k]}^{*}\epsilon_{i,[k]}^{*}-\e_{\mathcal{C}_{n}}\left[\bs{\Xi}_{i,[k]}^{*}\epsilon_{i,[k]}^{*}\right]\right\Vert ^{2}\right]\\
\leq & \frac{1}{n^{2}}\sum_{i=1}^{n}\1(B_{i}=k)\zeta_{n}^{2}O_{P}(1)=\frac{1}{n}\zeta_{n}^{2}O_{P}(1)p_{[k]}.
\end{align*}
As a result, 
\begin{equation}
\left\Vert \frac{1}{n}\sum_{i=1}^{n}\left\{ \bs{\Xi}_{i,[k]}^{*}\epsilon_{i,[k]}^{*}-\e_{\mathcal{C}_{n}}\left[\bs{\Xi}_{i,[k]}^{*}\epsilon_{i,[k]}^{*}\right]\right\} \right\Vert =\sqrt{\zeta_{n}^{2}p_{[k]}/n}O_{P}(1)O_{L^{2},\mathcal{C}_{n}}(1).\label{eq:lln for xi*ep*}
\end{equation}

Since 
\begin{align*}
 & \frac{1}{n}\sum_{i=1}^{n}\e_{\mathcal{C}_{n}}\left[\widetilde{Y}_{i,[k]}^{*}\bs{\Xi}_{i,[k]}^{*}\right]\\
= & \frac{1-\pi_{n[k]}}{\pi_{n[k]}}\times\frac{1}{n}\sum_{i=1}^{n}\1(B_{i}=k)A_{i}\mathbf{\Sigma}_{[k]\widetilde{\bs{\xi}}_{n}^{*}\widetilde{Y}(1)}+\frac{\pi_{n[k]}}{1-\pi_{n[k]}}\times\frac{1}{n}\sum_{i=1}^{n}\1(B_{i}=k)(1-A_{i})\mathbf{\Sigma}_{[k]\widetilde{\bs{\xi}}_{n}^{*}\widetilde{Y}(0)}
\end{align*}
and 
\[
\frac{1}{n}\sum_{i=1}^{n}\e_{\mathcal{C}_{n}}\left[\bs{\Xi}_{i,[k]}^{*}\bs{\Xi}_{i,[k]}^{*\trans}\right]=\frac{1}{n}\sum_{i=1}^{n}(A_{i}-\pi_{n[k]})^{2}\1(B_{i}=k)\mathbf{\Sigma}_{[k]\widetilde{\bs{\xi}}_{n}^{*}\widetilde{\bs{\xi}}_{n}^{*}}^{+},
\]
by (\ref{eq: cov in range=00007Bvar=00007D}) we have 
\[
\frac{1}{n}\sum_{i=1}^{n}\e_{\mathcal{C}_{n}}\left[\widetilde{Y}_{i,[k]}^{*}\bs{\Xi}_{i,[k]}^{*}\right]\in\text{range}\left(\frac{1}{n}\sum_{i=1}^{n}\e_{\mathcal{C}_{n}}\left[\bs{\Xi}_{i,[k]}^{*}\bs{\Xi}_{i,[k]}^{*\trans}\right]\right).
\]
Using the property of the Moore-Penrose inverse, we have 
\[
\frac{1}{n}\sum_{i=1}^{n}\e_{\mathcal{C}_{n}}\left[\bs{\Xi}_{i,[k]}^{*}\bs{\Xi}_{i,[k]}^{*\trans}\right]\left\{ \frac{1}{n}\sum_{i=1}^{n}\e_{\mathcal{C}_{n}}\left[\bs{\Xi}_{i,[k]}^{*}\bs{\Xi}_{i,[k]}^{*\trans}\right]\right\} ^{+}\left\{ \frac{1}{n}\sum_{i=1}^{n}\e_{\mathcal{C}_{n}}\left[\widetilde{Y}_{i,[k]}^{*}\bs{\Xi}_{i,[k]}^{*}\right]\right\} =\frac{1}{n}\sum_{i=1}^{n}\e_{\mathcal{C}_{n}}\left[\widetilde{Y}_{i,[k]}^{*}\bs{\Xi}_{i,[k]}^{*}\right].
\]
As a result, we have 
\begin{align}
 & \frac{1}{n}\sum_{i=1}^{n}\e_{\mathcal{C}_{n}}\left[\bs{\Xi}_{i,[k]}^{*}\epsilon_{i,[k]}^{*}\right]=\frac{1}{n}\sum_{i=1}^{n}\e_{\mathcal{C}_{n}}\left[\bs{\Xi}_{i,[k]}^{*}\widetilde{Y}_{i,[k]}^{*}\right]-\frac{1}{n}\sum_{i=1}^{n}\e_{\mathcal{C}_{n}}\left[\bs{\Xi}_{i,[k]}^{*}\bs{\Xi}_{i,[k]}^{*\trans}\right]\bs{\beta}_{[k],\mathcal{C}_{n}}\nonumber \\
= & \frac{1}{n}\sum_{i=1}^{n}\e_{\mathcal{C}_{n}}\left[\bs{\Xi}_{i,[k]}^{*}\widetilde{Y}_{i,[k]}^{*}\right]-\frac{1}{n}\sum_{i=1}^{n}\e_{\mathcal{C}_{n}}\left[\bs{\Xi}_{i,[k]}^{*}\bs{\Xi}_{i,[k]}^{*\trans}\right]\left\{ \frac{1}{n}\sum_{i=1}^{n}\e_{\mathcal{C}_{n}}\left[\bs{\Xi}_{i,[k]}^{*}\bs{\Xi}_{i,[k]}^{*\trans}\right]\right\} ^{+}\left\{ \frac{1}{n}\sum_{i=1}^{n}\e_{\mathcal{C}_{n}}\left[\widetilde{Y}_{i,[k]}^{*}\bs{\Xi}_{i,[k]}^{*}\right]\right\} \nonumber \\
= & \frac{1}{n}\sum_{i=1}^{n}\e_{\mathcal{C}_{n}}\left[\bs{\Xi}_{i,[k]}^{*}\widetilde{Y}_{i,[k]}^{*}\right]-\frac{1}{n}\sum_{i=1}^{n}\e_{\mathcal{C}_{n}}\left[\bs{\Xi}_{i,[k]}^{*}\widetilde{Y}_{i,[k]}^{*}\right]=0.\label{eq:condi exp of Xiepsilon}
\end{align}
This, combined with (\ref{eq:lln for xi*ep*}) gives that 
\[
\left\Vert \frac{1}{n}\sum_{i=1}^{n}\bs{\Xi}_{i,[k]}^{*}\epsilon_{i,[k]}^{*}\right\Vert =\sqrt{\zeta_{n}^{2}p_{[k]}/n}O_{P}(1)O_{L^{2},\mathcal{C}_{n}}(1).
\]
Recall (\ref{eq:xiepsilon-xi*epsilon*=00003Dsmall}), we have 
\begin{equation}
\left\Vert \frac{1}{n}\sum_{i=1}^{n}\bs{\Xi}_{i,[k]}\epsilon_{i,[k]}\right\Vert =\left\{ \frac{r_{n}\log(2Kd)}{n}+p_{[k]}n^{-1/4}+\sqrt{\frac{\zeta_{n}^{2}p_{[k]}}{n}}\right\} O_{P}(1)O_{L^{2},\mathcal{C}_{n}}(1).\label{eq:conver rate for =00005Cfrac=00007B1=00007D=00007Bn=00007D=00005Csum_=00007Bi=00003D1=00007D^=00007Bn=00007D=00005Cbs=00007B=00005CXi=00007D_=00007Bi,=00005Bk=00005D=00007D=00005Cepsilon_=00007Bi,=00005Bk=00005D=00007D}
\end{equation}

\textbf{The analysis of $\frac{1}{n}\sum_{i=1}^{n}\bs{\Xi}_{i,[k]}$.}
For every $k=1,\ldots,K$, we let $\bs{\xi}_{[k]}^{*}:=\e\left[\bs{\xi}_{n}^{*}(\bs X_{i})\mid B_{i}=k\right]$.
By $\frac{1}{n}\sum_{i=1}^{n}A_{i}\1(B_{i}=k)=\pi_{n[k]}\frac{1}{n}\sum_{i=1}^{n}\1(B_{i}=k)$
and $\1(B_{i}=k)\widetilde{\bs{\xi}}_{n}^{*}(\bs X_{i})=\1(B_{i}=k)\left\{ \bs{\xi}_{n}^{*}(\bs X_{i})-\bs{\xi}_{[k]}^{*}\right\} $,
we have 
\begin{align*}
\frac{1}{n}\sum_{i=1}^{n}\bs{\Xi}_{i,[k]} & =\frac{1}{n}\sum_{i=1}^{n}\left\{ A_{i}-\pi_{n[k]}\right\} \1(B_{i}=k)\left\{ \bs{\xi}_{n}(\bs X_{i})-\overline{\bs{\xi}}_{n[k]}\right\} \\
 & =\frac{1}{n}\sum_{i=1}^{n}\left\{ A_{i}-\pi_{n[k]}\right\} \1(B_{i}=k)\left\{ \bs{\xi}_{n}(\bs X_{i})-\bs{\xi}_{[k]}^{*}\right\} \\
 & =\frac{1}{n}\sum_{i=1}^{n}\bs{\Xi}_{i,[k]}^{*}+\frac{1}{n}\sum_{i=1}^{n}\left\{ A_{i}-\pi_{n[k]}\right\} \1(B_{i}=k)\left\{ \bs{\xi}_{n}(\bs X_{i})-\bs{\xi}_{n}^{*}(\bs X_{i})\right\} .
\end{align*}
By Assumption \ref{assu:conditions on =00005Cxi-diverging xi} and
Lemma \ref{lem:LLN}, we have 
\begin{align*}
 & \left\Vert \frac{1}{n}\sum_{i=1}^{n}\left\{ A_{i}-\pi_{n[k]}\right\} \1(B_{i}=k)\left\{ \bs{\xi}_{n}(\bs X_{i})-\bs{\xi}_{n}^{*}(\bs X_{i})\right\} \right\Vert \\
= & \frac{n_{[k]}}{n}\pi_{n[k]}(1-\pi_{n[k]})\times\\
 & \quad\left\Vert \frac{1}{n_{1[k]}}\sum_{i=1}^{n}A_{i}\1(B_{i}=k)\left\{ \bs{\xi}_{n}(\bs X_{i})-\bs{\xi}_{n}^{*}(\bs X_{i})\right\} -\frac{1}{n_{0[k]}}\sum_{i=1}^{n}\left(1-A_{i}\right)\1(B_{i}=k)\left\{ \bs{\xi}_{n}(\bs X_{i})-\bs{\xi}_{n}^{*}(\bs X_{i})\right\} \right\Vert \\
= & p_{[k]}O_{P}(1)\times o_{P}(n^{-1/2})=p_{[k]}n^{-1/2}o_{P}(1),
\end{align*}
which leads to 
\begin{equation}
\left\Vert \frac{1}{n}\sum_{i=1}^{n}\bs{\Xi}_{i,[k]}-\frac{1}{n}\sum_{i=1}^{n}\bs{\Xi}_{i,[k]}^{*}\right\Vert =p_{[k]}n^{-1/2}o_{P}(1).\label{eq: xi -xi*}
\end{equation}
Note that 
\[
\e_{\mathcal{C}_{n}}\left[\bs{\Xi}_{i,[k]}^{*}\right]=(A_{i}-\pi_{n[k]})\1(B_{i}=k)\e\left[\widetilde{\bs{\xi}}_{n}^{*}(\bs X_{i})\mid B_{i}=k\right]
\]
and
\begin{align*}
\e_{\mathcal{C}_{n}}\left[\left\Vert \frac{1}{n}\sum_{i=1}^{n}\bs{\Xi}_{i,[k]}^{*}\right\Vert ^{2}\right] & =\frac{1}{n^{2}}\sum_{i=1}^{n}(A_{i}-\pi_{n[k]})^{2}\1(B_{i}=k)\e\left[\left\Vert \widetilde{\bs{\xi}}_{n}^{*}(\bs X_{i})\right\Vert ^{2}\mid B_{i}=k\right]\\
 & \leq\frac{1}{n}\frac{n_{[k]}}{n}\tr\left\{ \e\left[\bs{\xi}_{n}^{*}(\bs X_{i})\bs{\xi}_{n}^{*}(\bs X_{i})^{\trans}\mid B_{i}=k\right]\right\} =p_{[k]}\frac{r_{n}}{n}O_{P}(1).
\end{align*}
We have 
\[
\left\Vert \frac{1}{n}\sum_{i=1}^{n}\bs{\Xi}_{i,[k]}^{*}\right\Vert =\sqrt{\frac{r_{n}p_{[k]}}{n}}O_{P}(1)O_{L^{2},\mathcal{C}_{n}}(1)
\]
 Thus, we can conclude that 
\begin{equation}
\left\Vert \frac{1}{n}\sum_{i=1}^{n}\bs{\Xi}_{i,[k]}\right\Vert \leq\left\Vert \frac{1}{n}\sum_{i=1}^{n}\bs{\Xi}_{i,[k]}^{*}\right\Vert +p_{[k]}n^{-1/2}o_{P}(1)=\sqrt{\frac{r_{n}p_{[k]}}{n}}O_{P}(1)O_{L^{2},\mathcal{C}_{n}}(1),\label{eq:conver rate for =00005Cfrac=00007B1=00007D=00007Bn=00007D=00005Csum_=00007Bi=00003D1=00007D^=00007Bn=00007D=00005Cbs=00007B=00005CXi=00007D_=00007Bi,=00005Bk=00005D=00007D^=00007B*=00007D}
\end{equation}
where the last step follows from $p_{[k]}\leq\sqrt{r_{n}p_{[k]}}$.

Combining (\ref{eq: Sigma+=00005Bk=00005D is bounded}), (\ref{eq:consistency of inverse of var-1}),
(\ref{eq:conver rate for =00005Cfrac=00007B1=00007D=00007Bn=00007D=00005Csum_=00007Bi=00003D1=00007D^=00007Bn=00007D=00005Cbs=00007B=00005CXi=00007D_=00007Bi,=00005Bk=00005D=00007D=00005Cepsilon_=00007Bi,=00005Bk=00005D=00007D})
and (\ref{eq:conver rate for =00005Cfrac=00007B1=00007D=00007Bn=00007D=00005Csum_=00007Bi=00003D1=00007D^=00007Bn=00007D=00005Cbs=00007B=00005CXi=00007D_=00007Bi,=00005Bk=00005D=00007D^=00007B*=00007D}),
we have
\begin{align}
\left|Q_{1}\right| & \leq\sum_{k=1}^{K}\left\Vert \frac{1}{n}\sum_{i=1}^{n}\bs{\Xi}_{i,[k]}^{\trans}\epsilon_{i,[k]}\right\Vert \left\Vert \left\{ \frac{1}{n}\sum_{i=1}^{n}\bs{\Xi}_{i,[k]}\bs{\Xi}_{i,[k]}^{\trans}\right\} ^{+}\right\Vert \left\Vert \frac{1}{n}\sum_{i=1}^{n}\bs{\Xi}_{i,[k]}\right\Vert \nonumber \\
 & \leq\sum_{k=1}^{K}\left\{ \frac{r_{n}\log(2Kd)}{n}+p_{[k]}n^{-1/4}+\sqrt{\frac{\zeta_{n}^{2}p_{[k]}}{n}}\right\} O_{P}(1)O_{L^{2},\mathcal{C}_{n}}(1)\nonumber \\
 & \quad\times\left\{ p_{[k]}^{-1}O(1)+p_{[k]}^{-1}o_{P}(1)\right\} \times\left\{ \sqrt{\frac{r_{n}p_{[k]}}{n}}O_{P}(1)O_{L^{2},\mathcal{C}_{n}}(1)\right\} \nonumber \\
 & \leq O_{P}(1)\sqrt{\frac{r_{n}^{3}\log^{2}(2Kd)}{n^{3}\inf_{1\leq k\leq K}p_{[k]}}}\sum_{k=1}^{K}O_{L^{2},\mathcal{C}_{n}}(1)O_{L^{2},\mathcal{C}_{n}}(1)\nonumber \\
 & \quad+O_{P}(1)\left(\frac{r_{n}^{2}}{n^{3}}\right)^{1/4}\sum_{k=1}^{K}\sqrt{p_{[k]}}O_{L^{2},\mathcal{C}_{n}}(1)O_{L^{2},\mathcal{C}_{n}}(1)\nonumber \\
 & \quad+O_{P}(1)\frac{\sqrt{\zeta_{n}^{2}r_{n}}}{n}\sum_{k=1}^{K}O_{L^{2},\mathcal{C}_{n}}(1)O_{L^{2},\mathcal{C}_{n}}(1)\nonumber \\
 & =O_{P}(n^{-1/2})\left\{ \sqrt{\frac{K^{2}r_{n}^{2}}{n}\frac{r_{n}\log^{2}(2Kd)}{n\inf_{1\leq k\leq K}p_{[k]}}}+\left(\frac{K^{2}r_{n}^{2}}{n}\right)^{1/4}+\sqrt{\frac{K^{2}\zeta_{n}^{2}r_{n}}{n}}\right\} \nonumber \\
 & =o_{P}(n^{-1/2})\label{eq:Q1}
\end{align}
The first equality in the above display follows because by \citet[Lemma 6.1]{chernozhukov2018Double}
and 
\begin{align*}
 & \e_{\mathcal{C}_{n}}\left[\left|\sum_{k=1}^{K}O_{L^{2},\mathcal{C}_{n}}(1)O_{L^{2},\mathcal{C}_{n}}(1)\right|\right]\leq\sum_{k=1}^{K}\e_{\mathcal{C}_{n}}\left[\left|O_{L^{2},\mathcal{C}_{n}}(1)O_{L^{2},\mathcal{C}_{n}}(1)\right|\right]\\
\leq & \sum_{k=1}^{K}\sqrt{\e_{\mathcal{C}_{n}}\left[O_{L^{2},\mathcal{C}_{n}}(1)^{2}\right]}\sqrt{\e_{\mathcal{C}_{n}}\left[O_{L^{2},\mathcal{C}_{n}}(1)^{2}\right]}\leq\sum_{k=1}^{K}O_{P}(1)\leq KO_{P}(1),
\end{align*}
\begin{align*}
 & \e_{\mathcal{C}_{n}}\left[\sqrt{p_{[k]}}\left|\sum_{k=1}^{K}O_{L^{2},\mathcal{C}_{n}}(1)O_{L^{2},\mathcal{C}_{n}}(1)\right|\right]\leq\sum_{k=1}^{K}\sqrt{p_{[k]}}\e_{\mathcal{C}_{n}}\left[\left|O_{L^{2},\mathcal{C}_{n}}(1)O_{L^{2},\mathcal{C}_{n}}(1)\right|\right]\\
\leq & \sum_{k=1}^{K}\sqrt{p_{[k]}}\sqrt{\e_{\mathcal{C}_{n}}\left[O_{L^{2},\mathcal{C}_{n}}(1)^{2}\right]}\sqrt{\e_{\mathcal{C}_{n}}\left[O_{L^{2},\mathcal{C}_{n}}(1)^{2}\right]}\leq\sum_{k=1}^{K}\sqrt{p_{[k]}}O_{P}(1)=O_{P}(1)\sum_{k=1}^{K}\sqrt{p_{[k]}}\\
\leq & O_{P}(1)\sqrt{K}\sqrt{\sum_{k=1}^{K}p_{[k]}}=\sqrt{K}O_{P}(1),
\end{align*}
it holds that 
\[
\left|\sum_{k=1}^{K}O_{L^{2},\mathcal{C}_{n}}(1)O_{L^{2},\mathcal{C}_{n}}(1)\right|=O_{P}(K)\text{ and }\left|\sum_{k=1}^{K}\sqrt{p_{[k]}}O_{L^{2},\mathcal{C}_{n}}(1)O_{L^{2},\mathcal{C}_{n}}(1)\right|=O_{P}(\sqrt{K}).
\]

By (\ref{eq:betak C_n is consistent}) and (\ref{eq: xi -xi*}), we
have 
\begin{align}
 & Q_{2}=\sum_{k=1}^{K}\frac{1}{n}\sum_{i=1}^{n}\left\{ \widetilde{Y}_{i,[k]}^{*}-\bs{\beta}_{[k],\mathcal{C}_{n}}^{\trans}\bs{\Xi}_{i,[k]}\right\} \nonumber \\
= & \sum_{k=1}^{K}\frac{1}{n}\sum_{i=1}^{n}\left\{ \widetilde{Y}_{i,[k]}^{*}-\bs{\beta}_{[k],\mathcal{C}_{n}}^{\trans}\bs{\Xi}_{i,[k]}^{*}\right\} +\sum_{k=1}^{K}\bs{\beta}_{[k],\mathcal{C}_{n}}^{\trans}\frac{1}{n}\sum_{i=1}^{n}\left(\bs{\Xi}_{i,[k]}^{*}-\bs{\Xi}_{i,[k]}\right)\nonumber \\
= & \frac{1}{n}\sum_{i=1}^{n}\sum_{k=1}^{K}\epsilon_{i,[k]}^{*}+n^{-1/2}o_{P}(1)\sum_{k=1}^{K}p_{[k]}=\frac{1}{n}\sum_{i=1}^{n}\sum_{k=1}^{K}\epsilon_{i,[k]}^{*}+o_{P}(n^{-1/2}),\label{eq:Q2}
\end{align}
Note that $\sum_{k=1}^{K}\epsilon_{i,[k]}^{*}$ are independent conditional
on $\mathcal{C}_{n}$, we intend to apply Lemma \ref{lem:clt for triangular array}.
By direct calculation we have 
\begin{align*}
\e_{\mathcal{C}_{n}}\left[\sum_{k=1}^{K}\epsilon_{i,[k]}^{*}\right] & =\sum_{k=1}^{K}\1(B_{i}=k)\left\{ \frac{A_{i}}{\pi_{n[k]}}\e\left[\widetilde{Y}_{i}(1)\mid B_{i}=k\right]-\frac{1-A_{i}}{1-\pi_{n[k]}}\e\left[\widetilde{Y}_{i}(0)\mid B_{i}=k\right]\right\} \\
 & \qquad-\sum_{k=1}^{K}\1(B_{i}=k)(A_{i}-\pi_{n[k]})\bs{\beta}_{[k],\mathcal{C}_{n}}^{\trans}\e\left[\widetilde{\bs{\xi}}_{n}^{*}(\bs X_{i})\mid B_{i}=k\right]\\
 & =0,
\end{align*}
\begin{align}
 & \var\left(\sum_{k=1}^{K}\epsilon_{i,[k]}^{*}\mid\mathcal{C}_{n}\right)=\e\left[\left(\sum_{k=1}^{K}\epsilon_{i,[k]}^{*}\right)^{2}\mid\mathcal{C}_{n}\right]=\sum_{k=1}^{K}\e\left[\epsilon_{i,[k]}^{*2}\mid\mathcal{C}_{n}\right]\nonumber \\
= & \sum_{k=1}^{K}\1(B_{i}=k)\left\{ \frac{A_{i}}{\pi_{n[k]}^{2}}\e\left[\widetilde{Y}_{i}^{2}(1)\mid B_{i}=k\right]+\frac{1-A_{i}}{(1-\pi_{n[k]})^{2}}\e\left[\widetilde{Y}_{i}^{2}(0)\mid B_{i}=k\right]\right\} \nonumber \\
 & -2\sum_{k=1}^{K}\1(B_{i}=k)A_{i}\frac{1-\pi_{n[k]}}{\pi_{n[k]}}\bs{\beta}_{[k],\mathcal{C}_{n}}^{\trans}\e\left[\widetilde{Y}_{i}(1)\widetilde{\bs{\xi}}_{n}^{*}(\bs X_{i})\mid B_{i}=k\right]\nonumber \\
 & -2\sum_{k=1}^{K}\1(B_{i}=k)(1-A_{i})\frac{\pi_{n[k]}}{1-\pi_{n[k]}}\bs{\beta}_{[k],\mathcal{C}_{n}}^{\trans}\e\left[\widetilde{Y}_{i}(0)\widetilde{\bs{\xi}}_{n}^{*}(\bs X_{i})\mid B_{i}=k\right]\nonumber \\
 & +\sum_{k=1}^{K}\1(B_{i}=k)(A_{i}-\pi_{n[k]})^{2}\bs{\beta}_{[k],\mathcal{C}_{n}}^{\trans}\e\left[\widetilde{\bs{\xi}}_{n}^{*}(\bs X_{i})\widetilde{\bs{\xi}}_{n}^{*}(\bs X_{i})^{\trans}\mid B_{i}=k\right]\bs{\beta}_{[k],\mathcal{C}_{n}}\nonumber \\
= & \sum_{k=1}^{K}\1(B_{i}=k)\left\{ \frac{A_{i}}{\pi_{n[k]}^{2}}\e\left[\widetilde{Y}_{i}^{2}(1)\mid B_{i}=k\right]+\frac{1-A_{i}}{(1-\pi_{n[k]})^{2}}\e\left[\widetilde{Y}_{i}^{2}(0)\mid B_{i}=k\right]\right\} \nonumber \\
 & -2\sum_{k=1}^{K}\1(B_{i}=k)\bs{\beta}_{[k],\mathcal{C}_{n}}^{\trans}\left\{ A_{i}\frac{1-\pi_{n[k]}}{\pi_{n[k]}}\mathbf{\Sigma}_{[k]\widetilde{\bs{\xi}}_{n}^{*}\widetilde{Y}(1)}+(1-A_{i})\frac{\pi_{n[k]}}{1-\pi_{n[k]}}\mathbf{\Sigma}_{[k]\widetilde{\bs{\xi}}_{n}^{*}\widetilde{Y}(0)}\right\} \nonumber \\
 & +\sum_{k=1}^{K}\1(B_{i}=k)(A_{i}-\pi_{n[k]})^{2}\bs{\beta}_{[k],\mathcal{C}_{n}}^{\trans}\mathbf{\Sigma}_{[k]\widetilde{\bs{\xi}}_{n}^{*}\widetilde{\bs{\xi}}_{n}^{*}}\bs{\beta}_{[k],\mathcal{C}_{n}}\label{eq:var(sum delta)-1}
\end{align}
and 
\begin{align*}
\sigma_{n}^{2}(\mathcal{C}_{n}) & :=\var\left(\sum_{i=1}^{n}\sum_{k=1}^{K}\epsilon_{i,[k]}^{*}\mid\mathcal{C}_{n}\right)=\sum_{i=1}^{n}\var\left(\sum_{k=1}^{K}\epsilon_{i,[k]}^{*}\mid\mathcal{C}_{n}\right)\\
 & =\sum_{i=1}^{n}\sum_{k=1}^{K}\1(B_{i}=k)\left\{ \frac{A_{i}}{\pi_{n[k]}^{2}}\e\left[\widetilde{Y}_{i}^{2}(1)\mid B_{i}=k\right]+\frac{1-A_{i}}{(1-\pi_{n[k]})^{2}}\e\left[\widetilde{Y}_{i}^{2}(0)\mid B_{i}=k\right]\right\} \\
 & \quad-2\sum_{i=1}^{n}\sum_{k=1}^{K}\1(B_{i}=k)\bs{\beta}_{[k],\mathcal{C}_{n}}^{\trans}\left\{ A_{i}\frac{1-\pi_{n[k]}}{\pi_{n[k]}}\mathbf{\Sigma}_{[k]\widetilde{\bs{\xi}}_{n}^{*}\widetilde{Y}(1)}+(1-A_{i})\frac{\pi_{n[k]}}{1-\pi_{n[k]}}\mathbf{\Sigma}_{[k]\widetilde{\bs{\xi}}_{n}^{*}\widetilde{Y}(0)}\right\} \\
 & \quad+\sum_{i=1}^{n}\sum_{k=1}^{K}\1(B_{i}=k)(A_{i}-\pi_{n[k]})^{2}\bs{\beta}_{[k],\mathcal{C}_{n}}^{\trans}\mathbf{\Sigma}_{[k]\widetilde{\bs{\xi}}_{n}^{*}\widetilde{\bs{\xi}}_{n}^{*}}\bs{\beta}_{[k],\mathcal{C}_{n}}.
\end{align*}
By Lemma \ref{lem:LLN}, $\left|\pi_{n[k]}-\pi_{[k]}\right|=o_{P}(1)$,
$\left\Vert \mathbf{\Sigma}_{[k]\widetilde{\bs{\xi}}_{n}^{*}\widetilde{\bs{\xi}}_{n}^{*}}\right\Vert =O(1)$,
$\left\Vert \bs{\beta}_{[k],\mathcal{C}_{n}}\right\Vert =O_{P}(1)$,
$\left\Vert \mathbf{\Sigma}_{[k]\widetilde{\bs{\xi}}_{n}^{*}\widetilde{Y}(a)}\right\Vert =O(1)$
and (\ref{eq:betak C_n is consistent}) we have 
\begin{align}
\frac{\sigma_{n}^{2}(\mathcal{C}_{n})}{n} & =\sum_{k=1}^{K}\left\{ \frac{p_{[k]}}{\pi_{[k]}}\e\left[\widetilde{Y}_{i}^{2}(1)\mid B_{i}=k\right]+\frac{p_{[k]}}{1-\pi_{[k]}}\e\left[\widetilde{Y}_{i}^{2}(0)\mid B_{i}=k\right]\right\} \nonumber \\
 & \quad-2\sum_{k=1}^{K}p_{[k]}\bs{\beta}_{[k]}^{*\trans}\left\{ (1-\pi_{[k]})\mathbf{\Sigma}_{[k]\widetilde{\bs{\xi}}_{n}^{*}\widetilde{Y}(1)}+\pi_{[k]}\mathbf{\Sigma}_{[k]\widetilde{\bs{\xi}}_{n}^{*}\widetilde{Y}(0)}\right\} \nonumber \\
 & \quad+\sum_{k=1}^{K}p_{[k]}\pi_{[k]}(1-\pi_{[k]})\bs{\beta}_{[k]}^{*\trans}\mathbf{\Sigma}_{[k]\widetilde{\bs{\xi}}_{n}^{*}\widetilde{\bs{\xi}}_{n}^{*}}\bs{\beta}_{[k]}^{*}+o_{P}(1)\nonumber \\
 & =\varsigma_{\widetilde{Y}}^{2}-\varsigma_{\widetilde{Y}\mid\widetilde{\bs{\xi}}_{n}^{*}}^{2}+o_{P}(1)\label{eq:limit of sigma_n-1}\\
 & =O_{P}(1).\nonumber 
\end{align}
According to (\ref{eq:var(sum delta)-1}), (\ref{eq:limit of sigma_n-1})
and Assumption \ref{assu:conditions on =00005Cxi-diverging xi}, Condition
(\ref{eq: positive variance in clt}) in Lemma \ref{lem:clt for triangular array}
is satisfied.

It remains to verify the Lindeberg condition. Using Assumptions \ref{assu:independent sampling}
and \ref{assu:conditions on =00005Cxi-diverging xi} and (\ref{eq:betak C_n is bounded}),
we have 
\begin{align*}
 & \e\left[\left|\sum_{k=1}^{K}\epsilon_{i,[k]}^{*}\right|^{2+\epsilon}\mid\mathcal{C}_{n}\right]\\
\leq & 3^{1+\epsilon}\e\left[\left|\sum_{k=1}^{K}\1(B_{i}=k)\frac{A_{i}}{\pi_{n[k]}}\right|^{2+\epsilon}\left|\widetilde{Y}_{i}(1)\right|^{2+\epsilon}+\left|\sum_{k=1}^{K}\1(B_{i}=k)\frac{1-A_{i}}{1-\pi_{n[k]}}\right|^{2+\epsilon}\left|\widetilde{Y}_{i}(0)\right|^{2+\epsilon}\mid\mathcal{C}_{n}\right]\\
 & +3^{1+\epsilon}\e\left[\left|\sum_{k=1}^{K}\1(B_{i}=k)(A_{i}-\pi_{n[k]})\right|^{2+\epsilon}\left|\bs{\beta}_{[k],\mathcal{C}_{n}}^{\trans}\widetilde{\bs{\xi}}_{n}^{*}(\bs X_{i})\right|^{2+\epsilon}\mid\mathcal{C}_{n}\right]\\
\leq & \frac{2\times3^{1+\epsilon}}{\min_{1\leq k\leq K}\left\{ \pi_{n[k]}^{2+\epsilon},(1-\pi_{n[k]})^{2+\epsilon}\right\} }\left\{ \sup_{a\in\{0,1\},1\leq k\leq K}\e\left[\left|\widetilde{Y}_{i}(a)\right|^{2+\epsilon}\mid B_{i}=k\right]+\e\left[\left|\bs{\beta}_{[k],\mathcal{C}_{n}}^{\trans}\widetilde{\bs{\xi}}_{n}^{*}(\bs X_{i})\right|^{2+\epsilon}\mid\mathcal{C}_{n}\right]\right\} \\
\leq & \frac{2\times3^{1+\epsilon}}{\min_{1\leq k\leq K}\left\{ \pi_{n[k]}^{2+\epsilon},(1-\pi_{n[k]})^{2+\epsilon}\right\} }\left\{ O(1)+\zeta_{n}^{\epsilon}\bs{\beta}_{[k],\mathcal{C}_{n}}^{\trans}\sup_{1\leq k\leq K}\e\left[\widetilde{\bs{\xi}}_{n}^{*}(\bs X_{i})\widetilde{\bs{\xi}}_{n}^{*}(\bs X_{i})^{\trans}\mid B_{i}=k\right]\bs{\beta}_{[k],\mathcal{C}_{n}}\right\} \\
= & O_{P}(\zeta_{n}^{\epsilon}),
\end{align*}
where the first step follows from Loève’s $c_{r}$ inequality (\citealp[Theorem 9.32]{davidson2021Stochastic}),
the second step follows from 
\[
\max\left\{ \sum_{k=1}^{K}\1(B_{i}=k)\frac{A_{i}}{\pi_{n[k]}},\sum_{k=1}^{K}\1(B_{i}=k)\frac{1-A_{i}}{1-\pi_{n[k]}}\right\} \leq\frac{1}{\min_{1\leq k\leq K}\left\{ \pi_{n[k]},(1-\pi_{n[k]})\right\} }
\]
and 
\[
\e\left[\left|\widetilde{Y}_{i}(a)\right|^{2+\epsilon}\mid\mathcal{C}_{n}\right]=\e\left[\left|\widetilde{Y}_{i}(a)\right|^{2+\epsilon}\mid B_{i}\right]\leq2^{2+\epsilon}\sup_{a\in\{0,1\},1\leq k\leq K}\e\left[\left|Y_{i}(a)\right|^{2+\epsilon}\mid B_{i}=k\right],
\]
and the third step follows from 
\[
\e\left[\widetilde{\bs{\xi}}_{n}^{*}(\bs X_{i})\widetilde{\bs{\xi}}_{n}^{*}(\bs X_{i})^{\trans}\mid\mathcal{C}_{n}\right]=\e\left[\widetilde{\bs{\xi}}_{n}^{*}(\bs X_{i})\widetilde{\bs{\xi}}_{n}^{*}(\bs X_{i})^{\trans}\mid B_{i}=k\right]\leq\sup_{1\leq k\leq K}\e\left[\widetilde{\bs{\xi}}_{n}^{*}(\bs X_{i})\widetilde{\bs{\xi}}_{n}^{*}(\bs X_{i})^{\trans}\mid B_{i}=k\right].
\]
Then, the Lindeberg condition in Lemma \ref{lem:clt for triangular array}
follows from 
\begin{align*}
 & \frac{1}{\sigma_{n}^{2}(\mathcal{C}_{n})}\sum_{i=1}^{n}\e\left[\left\{ \sum_{k=1}^{K}\epsilon_{i,[k]}^{*}\right\} ^{2}\1\left\{ \left|\sum_{k=1}^{K}\epsilon_{i,[k]}^{*}\right|\geq\delta\sigma_{n}(\mathcal{C}_{n})\right\} \mid\mathcal{C}_{n}\right]\\
\leq & \frac{1}{\sigma_{n}^{2}(\mathcal{C}_{n})}\sum_{i=1}^{n}\frac{\e\left[\left|\sum_{k=1}^{K}\epsilon_{i,[k]}^{*}\right|^{2+\epsilon}\mid\mathcal{C}_{n}\right]}{\left\{ \delta\sigma_{n}(\mathcal{C}_{n})\right\} ^{\epsilon}}=\frac{1}{\varsigma_{\widetilde{Y}}^{2}-\varsigma_{\widetilde{Y}\mid\widetilde{\bs{\xi}}_{n}^{*}}^{2}+o_{P}(1)}\frac{O_{P}(\zeta_{n}^{\epsilon})}{O_{P}(n^{\epsilon/2})}\frac{1}{\delta^{\ep}}=o_{P}(1)
\end{align*}
for any $\delta>0$, where the first step follows from (\ref{eq:limit of sigma_n-1})
and the last step follows from $K^{2}\zeta_{n}^{2}r_{n}/n\to0$ as
$n\to\infty$. Applying Lemma \ref{lem:clt for triangular array}
we have 
\[
\e\left[\left|\e\left[\exp\left\{ \mathrm{i}t\frac{\sum_{i=1}^{n}\sum_{k=1}^{K}\epsilon_{i,[k]}^{*}}{\sigma_{n}(\mathcal{C}_{n})}\right\} \mid\mathcal{C}_{n}\right]-\exp\left(-t^{2}/2\right)\right|\right]\to0
\]
as $n\to\infty$ for all $t\in\R$.

Recalling $R_{3}$ and $R_{4}$ defined in (\ref{eq:main estimator decomposition}),
by the classical central limit theorem we have 
\[
\e\left[\left|\e\left[\exp\left\{ \mathrm{i}s\frac{\sqrt{n}(R_{3}+R_{4})}{\varsigma_{H}}\right\} \mid\mathcal{C}_{n}\right]-\exp\left(-s^{2}/2\right)\right|\right]\to0
\]
for any $s\in\R$. Note that $R_{3}+R_{4}$ is measurable with respect
to $\mathcal{C}_{n}$, by Lemma \ref{lem:joint convergence} we have
\[
\left(\frac{\sum_{i=1}^{n}\sum_{k=1}^{K}\epsilon_{i,[k]}^{*}}{\sigma_{n}(\mathcal{C}_{n})},\frac{\sqrt{n}(R_{3}+R_{4})}{\varsigma_{H}}\right)^{\trans}\tod N\left(\begin{pmatrix}0\\
0
\end{pmatrix},\begin{pmatrix}1 & 0\\
0 & 1
\end{pmatrix}\right).
\]

Combining (\ref{eq:decomposition of tau_hat-tau*}), (\ref{eq:Q_2+Q_1}),
(\ref{eq:Q1}) and (\ref{eq:Q2}), we have 
\begin{align*}
 & \frac{\sqrt{n}\left(\widehat{\tau}_{\mathrm{cal}}-\tau\right)}{\sqrt{\varsigma_{H}^{2}+\varsigma_{\widetilde{Y}}^{2}-\varsigma_{\widetilde{Y}\mid\widetilde{\bs{\xi}}_{n}^{*}}^{2}}}\\
= & \frac{R_{1}+R_{2}+R_{3}+R_{4}}{\sqrt{\varsigma_{H}^{2}+\varsigma_{\widetilde{Y}}^{2}-\varsigma_{\widetilde{Y}\mid\widetilde{\bs{\xi}}_{n}^{*}}^{2}}}=\frac{1}{\sqrt{n}}\frac{\sum_{i=1}^{n}\sum_{k=1}^{K}\epsilon_{i,[k]}^{*}}{\sqrt{\varsigma_{H}^{2}+\varsigma_{\widetilde{Y}}^{2}-\varsigma_{\widetilde{Y}\mid\widetilde{\bs{\xi}}_{n}^{*}}^{2}}}+\frac{\sqrt{n}(R_{3}+R_{4})}{\sqrt{\varsigma_{H}^{2}+\varsigma_{\widetilde{Y}}^{2}-\varsigma_{\widetilde{Y}\mid\widetilde{\bs{\xi}}_{n}^{*}}^{2}}}+o_{P}(1)\\
= & \frac{\sigma_{n}(\mathcal{C}_{n})}{\sqrt{n}\sqrt{\varsigma_{H}^{2}+\varsigma_{\widetilde{Y}}^{2}-\varsigma_{\widetilde{Y}\mid\widetilde{\bs{\xi}}_{n}^{*}}^{2}}}\frac{\sum_{i=1}^{n}\sum_{k=1}^{K}\epsilon_{i,[k]}^{*}}{\sigma_{n}(\mathcal{C}_{n})}+\frac{\varsigma_{H}}{\sqrt{\varsigma_{H}^{2}+\varsigma_{\widetilde{Y}}^{2}-\varsigma_{\widetilde{Y}\mid\widetilde{\bs{\xi}}_{n}^{*}}^{2}}}\frac{\sqrt{n}(R_{3}+R_{4})}{\varsigma_{H}}+o_{P}(1)\\
= & \frac{\sqrt{\varsigma_{\widetilde{Y}}^{2}-\varsigma_{\widetilde{Y}\mid\widetilde{\bs{\xi}}_{n}^{*}}^{2}}+o_{P}(1)}{\sqrt{\varsigma_{H}^{2}+\varsigma_{\widetilde{Y}}^{2}-\varsigma_{\widetilde{Y}\mid\widetilde{\bs{\xi}}_{n}^{*}}^{2}}}\frac{\sum_{i=1}^{n}\sum_{k=1}^{K}\epsilon_{i,[k]}^{*}}{\sigma_{n}(\mathcal{C}_{n})}+\frac{\varsigma_{H}}{\sqrt{\varsigma_{H}^{2}+\varsigma_{\widetilde{Y}}^{2}-\varsigma_{\widetilde{Y}\mid\widetilde{\bs{\xi}}_{n}^{*}}^{2}}}\frac{\sqrt{n}(R_{3}+R_{4})}{\varsigma_{H}}+o_{P}(1)\\
= & \frac{\sqrt{\varsigma_{\widetilde{Y}}^{2}-\varsigma_{\widetilde{Y}\mid\widetilde{\bs{\xi}}_{n}^{*}}^{2}}}{\sqrt{\varsigma_{H}^{2}+\varsigma_{\widetilde{Y}}^{2}-\varsigma_{\widetilde{Y}\mid\widetilde{\bs{\xi}}_{n}^{*}}^{2}}}\frac{\sum_{i=1}^{n}\sum_{k=1}^{K}\epsilon_{i,[k]}^{*}}{\sigma_{n}(\mathcal{C}_{n})}+\frac{\varsigma_{H}}{\sqrt{\varsigma_{H}^{2}+\varsigma_{\widetilde{Y}}^{2}-\varsigma_{\widetilde{Y}\mid\widetilde{\bs{\xi}}_{n}^{*}}^{2}}}\frac{\sqrt{n}(R_{3}+R_{4})}{\varsigma_{H}}+o_{P}(1)\\
\tod & N(0,1),
\end{align*}
where the third equality follows from (\ref{eq:limit of sigma_n-1}),
the fourth one follows from $\sum_{i=1}^{n}\sum_{k=1}^{K}\epsilon_{i,[k]}^{*}/\sigma_{n}(\mathcal{C}_{n})$
is uniformly tight and the last step follows from Lemma \ref{lem:covergence of combin of normal}
and Slutsky's theorem. The asymptotic normality of $\widehat{\tau}_{\mathrm{cal}}$
is now established.

\textbf{Consistency of the variance estimator.} We only show the consistency
of $\widehat{\varsigma}_{\widetilde{Y}\mid\widetilde{\bs{\xi}}_{n}^{*}}^{2}$
as the consistency of $\widehat{\varsigma}_{\widetilde{Y}}^{2}$ can
be established in a similar manner.

By (\ref{eq:consistency of inverse of var-1}), we have 
\[
\left\Vert \widehat{\mathbf{\Sigma}}_{[k]}^{+}-\pi_{[k]}^{-1}(1-\pi_{[k]})^{-1}p_{[k]}^{-1}\mathbf{\Sigma}_{[k]\widetilde{\bs{\xi}}_{n}^{*}\widetilde{\bs{\xi}}_{n}^{*}}^{+}\right\Vert =p_{[k]}^{-1}o_{P}(1),\ \forall1\leq k\leq K.
\]
Note that 
\begin{align*}
 & \e_{\mathcal{C}_{n}}\left[\left\Vert \bs{\Xi}_{i,[k]}^{*}\widetilde{Y}_{i,[k]}^{*}-\e_{\mathcal{C}_{n}}\left[\bs{\Xi}_{i,[k]}^{*}\widetilde{Y}_{i,[k]}^{*}\right]\right\Vert ^{2}\right]\leq\e_{\mathcal{C}_{n}}\left[\left\Vert \bs{\Xi}_{i,[k]}^{*}\widetilde{Y}_{i,[k]}^{*}\right\Vert ^{2}\right]\\
\leq & 2\max\left\{ \frac{1-\pi_{n[k]}}{\pi_{n[k]}},\frac{\pi_{n[k]}}{1-\pi_{n[k]}}\right\} ^{2}\1(B_{i}=k)\max_{a=0,1}\e\left[\left\Vert \widetilde{\bs{\xi}}_{n}^{*}(\bs X_{i})\widetilde{Y}_{i}(a)\right\Vert ^{2}\mid B_{i}=k\right]\\
\leq & O_{P}(\zeta_{n}^{2})\max_{a=0,1}\e\left[\left\Vert \widetilde{Y}_{i}(a)\right\Vert ^{2}\mid B_{i}=k\right]\1(B_{i}=k)=O_{P}(\zeta_{n}^{2})\1(B_{i}=k),
\end{align*}
where the last step follows from Assumptions \ref{assu:independent sampling}
and \ref{assu:conditions on =00005Cxi-diverging xi} and the random
variable $O_{P}(\zeta_{n}^{2})$ does not depend on $k$. Thus, by
Lemma \ref{lem:LLN} we have 
\begin{align*}
 & \e_{\mathcal{C}_{n}}\left[\left\Vert \frac{1}{n}\sum_{i=1}^{n}\left\{ \bs{\Xi}_{i,[k]}^{*}\widetilde{Y}_{i,[k]}^{*}-\e_{\mathcal{C}_{n}}\left[\bs{\Xi}_{i,[k]}^{*}\widetilde{Y}_{i,[k]}^{*}\right]\right\} \right\Vert ^{2}\right]\\
\leq & \frac{1}{n^{2}}\sum_{i=1}^{n}\e_{\mathcal{C}_{n}}\left[\left\Vert \bs{\Xi}_{i,[k]}^{*}\widetilde{Y}_{i,[k]}^{*}-\e_{\mathcal{C}_{n}}\left[\bs{\Xi}_{i,[k]}^{*}\widetilde{Y}_{i,[k]}^{*}\right]\right\Vert ^{2}\right]\\
\leq & \frac{1}{n}\frac{n_{[k]}}{n}O_{P}(\zeta_{n}^{2})=\frac{\zeta_{n}^{2}}{n}p_{[k]}O_{P}(1)
\end{align*}
and 
\[
\left\Vert \frac{1}{n}\sum_{i=1}^{n}\left\{ \bs{\Xi}_{i,[k]}^{*}\widetilde{Y}_{i,[k]}^{*}-\e_{\mathcal{C}_{n}}\left[\bs{\Xi}_{i,[k]}^{*}\widetilde{Y}_{i,[k]}^{*}\right]\right\} \right\Vert =\sqrt{\frac{\zeta_{n}^{2}p_{[k]}}{n}}O_{L^{2},\mathcal{C}_{n}}(1)O_{P}(1).
\]
This, combined with (\ref{eq:xiY=00003Dxi*Y*+error}) and (\ref{eq:consistency of conditional Y*XI*})
together yields that 
\begin{align*}
 & \left\Vert \widehat{\mathbf{\Gamma}}_{[k]}-(1-\pi_{[k]})p_{[k]}\mathbf{\Sigma}_{[k]\widetilde{\bs{\xi}}_{n}^{*}\widetilde{Y}(1)}-\pi_{[k]}p_{[k]}\mathbf{\Sigma}_{[k]\widetilde{\bs{\xi}}_{n}^{*}\widetilde{Y}(0)}\right\Vert \\
= & \left\{ \sqrt{\frac{\zeta_{n}^{2}p_{[k]}}{n}}+\frac{\sqrt{r_{n}\log(2Kd)}}{n}+p_{[k]}n^{-1/4}\right\} O_{P}(1)O_{L^{2},\mathcal{C}_{n}}(1)+p_{[k]}o_{P}(1)\\
= & p_{[k]}o_{P}(1)O_{L^{2},\mathcal{C}_{n}}(1).
\end{align*}
Then it follows from $\left\Vert \mathbf{\Sigma}_{[k]\widetilde{\bs{\xi}}_{n}^{*}\widetilde{Y}(a)}\right\Vert =O(1)$,
$\left\Vert \mathbf{\Sigma}_{[k]\widetilde{\bs{\xi}}_{n}^{*}\widetilde{\bs{\xi}}_{n}^{*}}^{+}\right\Vert =O(1)$,
\begin{align*}
\sup_{1\leq k\leq K}\left|\frac{\widehat{r}_{[k]}}{np_{[k]}}\right| & \le\sup_{1\leq k\leq K}\left|\frac{r_{[k]}}{np_{[k]}}\right|+\sup_{1\leq k\leq K}\left|\frac{r_{[k]}}{np_{[k]}}\right|\left|\frac{\widehat{r}_{[k]}-r_{[k]}}{r_{[k]}}\right|\\
 & \leq\frac{r_{n}}{n\min_{1\leq k\le K}p_{[k]}}+\frac{r_{n}}{n\min_{1\leq k\le K}p_{[k]}}\sup_{1\leq k\leq K}\left|\frac{\widehat{r}_{[k]}-r_{[k]}}{r_{[k]}}\right|\\
 & =o_{P}(1)
\end{align*}
 and Assumption \ref{assu:treatment assignment} that 
\begin{align*}
\widehat{\varsigma}_{\widetilde{Y}\mid\widetilde{\bs{\xi}}_{n}^{*}}^{2} & =\sum_{k=1}^{K}\{1+o_{P}(1)\}\left\{ (1-\pi_{[k]})p_{[k]}\mathbf{\Sigma}_{[k]\widetilde{\bs{\xi}}_{n}^{*}\widetilde{Y}(1)}+\pi_{[k]}p_{[k]}\mathbf{\Sigma}_{[k]\widetilde{\bs{\xi}}_{n}^{*}\widetilde{Y}(0)}+p_{[k]}o_{P}(1)O_{L^{2},\mathcal{C}_{n}}(1)\right\} \\
 & \qquad\times\left\{ \pi_{[k]}^{-1}(1-\pi_{[k]})^{-1}p_{[k]}^{-1}\mathbf{\Sigma}_{[k]\widetilde{\bs{\xi}}_{n}^{*}\widetilde{\bs{\xi}}_{n}^{*}}^{+}+p_{[k]}^{-1}o_{P}(1)\right\} \\
 & \qquad\times\left\{ (1-\pi_{[k]})p_{[k]}\mathbf{\Sigma}_{[k]\widetilde{\bs{\xi}}_{n}^{*}\widetilde{Y}(1)}+\pi_{[k]}p_{[k]}\mathbf{\Sigma}_{[k]\widetilde{\bs{\xi}}_{n}^{*}\widetilde{Y}(0)}+p_{[k]}o_{P}(1)O_{L^{2},\mathcal{C}_{n}}(1)\right\} \\
 & =\varsigma_{\widetilde{Y}\mid\widetilde{\bs{\xi}}_{n}^{*}}^{2}+o_{P}(1)\sum_{k=1}^{K}p_{[k]}O_{L^{2},\mathcal{C}_{n}}(1)O_{L^{2},\mathcal{C}_{n}}(1)+o_{P}(1)\sum_{k=1}^{K}p_{[k]}O_{L^{2},\mathcal{C}_{n}}(1)\\
 & =\varsigma_{\widetilde{Y}\mid\widetilde{\bs{\xi}}_{n}^{*}}^{2}+o_{P}(1)\sum_{k=1}^{K}p_{[k]}O_{L^{2},\mathcal{C}_{n}}(1)O_{L^{2},\mathcal{C}_{n}}(1)\\
 & =\varsigma_{\widetilde{Y}\mid\widetilde{\bs{\xi}}_{n}^{*}}^{2}+o_{P}(1),
\end{align*}
where the last step holds is because by \citet[Lemma 6.1]{chernozhukov2018Double}
and 
\begin{align*}
 & \e_{\mathcal{C}_{n}}\left[\left|\sum_{k=1}^{K}p_{[k]}O_{L^{2},\mathcal{C}_{n}}(1)O_{L^{2},\mathcal{C}_{n}}(1)\right|\right]\leq\sum_{k=1}^{K}p_{[k]}\e_{\mathcal{C}_{n}}\left[\left|O_{L^{2},\mathcal{C}_{n}}(1)O_{L^{2},\mathcal{C}_{n}}(1)\right|\right]\\
\leq & \sum_{k=1}^{K}p_{[k]}\sqrt{\e_{\mathcal{C}_{n}}\left[O_{L^{2},\mathcal{C}_{n}}(1)^{2}\right]}\sqrt{\e_{\mathcal{C}_{n}}\left[O_{L^{2},\mathcal{C}_{n}}(1)^{2}\right]}\leq\sum_{k=1}^{K}p_{[k]}O_{P}(1)=O_{P}(1),
\end{align*}
it holds that 
\[
\left|\sum_{k=1}^{K}p_{[k]}O_{L^{2},\mathcal{C}_{n}}(1)O_{L^{2},\mathcal{C}_{n}}(1)\right|=O_{P}(1).
\]
The proof is completed.

The establishment of the semiparametric efficiency bound is identical
to that of Theorem \ref{thm:asymptotic properties of tau_cal}; thus
we omit it.$\hfill\qedsymbol$

\subsection{Proof of Theorem \ref{thm:asymptotic properties of tau_cal-general D(v)}\label{subsec:Proof-of-Theorem}}

\textbf{Derive the expressions for the calibration weights.} We first
derive the expressions for the calibration weights $\widehat{w}_{i}$'s.
For all $i\in\left\{ 1,\ldots,n\right\} $, let 
\[
\bs{\Xi}_{i}:=\begin{pmatrix}\left\{ A_{i}-\pi_{n[1]}\right\} \1(B_{i}=1)\left\{ \bs{\xi}_{n}(\bs X_{i})-\overline{\bs{\xi}}_{n[1]}\right\} \\
\left\{ A_{i}-\pi_{n[2]}\right\} \1(B_{i}=2)\left\{ \bs{\xi}_{n}(\bs X_{i})-\overline{\bs{\xi}}_{n[2]}\right\} \\
\vdots\\
\left\{ A_{i}-\pi_{n[K]}\right\} \1(B_{i}=K)\left\{ \bs{\xi}_{n}(\bs X_{i})-\overline{\bs{\xi}}_{n[K]}\right\} 
\end{pmatrix}\in\R^{Kd}.
\]
Let $D(\bs w)=\sum_{i=1}^{n}D(w_{i},1)$ . Then the calibration problem
can be rewritten as 
\begin{equation}
\min_{w_{i}:1\leq i\leq n}D(\bs w)\ \text{ s.t. }\ \left(\bs{\Xi}_{1},\ldots,\bs{\Xi}_{n}\right)\bs w=0.\label{eq:cal step rewritten-1}
\end{equation}
Note that the conjugate function of $D(\bs w)$ is 
\[
D^{*}(\bs z)=\sum_{i=1}^{n}\left\{ z_{i}\cdot(D^{\prime})^{-1}(z_{i})-D\left\{ (D^{\prime})^{-1}(z_{i})\right\} \right\} =\sum_{i=1}^{n}-\rho(-z_{i}),\quad\forall\bs z=(z_{1},\ldots,z_{n})^{\trans},
\]
where $\rho(v):=D\left\{ (D^{\prime})^{-1}(-v)\right\} +v\cdot(D^{\prime})^{-1}(-v)$
for any $v\in\R$. By \citet{tseng1991Relaxation} (see also \citet{boyd2004Convex}),
the dual problem of (\ref{eq:cal step rewritten-1}) is 
\begin{align*}
\max_{\bs{\lambda}\in\R^{Kd}}\left\{ -D^{*}((\bs{\Xi}_{1},\ldots,\bs{\Xi}_{n})^{\trans}\bs{\lambda})\right\}  & =\max_{\bs{\lambda}\in\R^{Kd}}\sum_{i=1}^{n}\rho(-\bs{\lambda}^{\trans}\bs{\Xi}_{i})=\max_{\bs{\lambda}\in\R^{Kd}}\frac{1}{n}\sum_{i=1}^{n}\rho(\bs{\lambda}^{\trans}\bs{\Xi}_{i}).
\end{align*}
Let $\widehat{\bs{\lambda}}:=\arg\max_{\bs{\lambda}\in\R^{Kd}}\frac{1}{n}\sum_{i=1}^{n}\rho(\bs{\lambda}^{\trans}\bs{\Xi}_{i})$.
Then, by the first order condition, the solution of the minimization
problem (\ref{eq:cal step rewritten}) is given by 
\begin{equation}
\widehat{w}_{i}=\rho^{\prime}(\widehat{\bs{\lambda}}^{\trans}\bs{\Xi}_{i})\text{ for all }i\in\{1,\ldots,n\},\label{eq:expression for wi}
\end{equation}
where $\widehat{\bs{\lambda}}$ satisfies 
\begin{equation}
\frac{1}{n}\sum_{i=1}^{n}\rho^{\prime}(\widehat{\bs{\lambda}}^{\trans}\bs{\Xi}_{i})\bs{\Xi}_{i}=0\label{eq:property of lambda_hat}
\end{equation}
and $\rho^{\prime}(\cdot)$ is the derivative of $\rho(\cdot)$. We
first derive the convergence rate of $\widehat{\bs{\lambda}}$.

\textbf{The convergence rate of $\widehat{\bs{\lambda}}$.} For any
$\epsilon>0$, let $\Upsilon(\epsilon):=\{\bs{\lambda}\in\R^{Kd}:\left\Vert \bs{\lambda}\right\Vert \leq C(\epsilon)n^{-1/2}\}$
and $\partial\Upsilon:=\{\bs{\lambda}\in\R^{Kd}:\left\Vert \bs{\lambda}\right\Vert =C(\epsilon)n^{-1/2}\}$,
where the constant $C(\epsilon)$ will be determined later. Let $\widehat{G}(\bs{\lambda}):=\frac{1}{n}\sum_{i=1}^{n}\rho(\bs{\lambda}^{\trans}\bs{\Xi}_{i})$,
then $\frac{\partial\widehat{G}(\bs{\lambda})}{\partial\bs{\lambda}}=\frac{1}{n}\sum_{i=1}^{n}\rho^{\prime}(\bs{\lambda}^{\trans}\bs{\Xi}_{i})\bs{\Xi}_{i}$
and 
\[
\frac{\partial\widehat{G}(\bs 0)}{\partial\bs{\lambda}}=\frac{1}{n}\sum_{i=1}^{n}\bs{\Xi}_{i}=O_{P}(n^{-1/2}),
\]
where the last step follows from (\ref{eq: mean of xi is root-n}).
Then, there exists a constant $M(\epsilon)>0$ such that 
\begin{equation}
\left\Vert \frac{\partial\widehat{G}(\bs 0)}{\partial\bs{\lambda}}\right\Vert \leq M(\epsilon)\times n^{-1/2}\label{eq: bound for G derivative}
\end{equation}
with probability at least $1-\epsilon$ for sufficiently large $n$.

Note that 
\begin{align*}
\e\left[\max_{1\leq i\leq n}\left\Vert \bs{\Xi}_{i}\right\Vert ^{2+\epsilon}\right] & \leq\sum_{i=1}^{n}\e\left[\left\Vert \bs{\Xi}_{i}\right\Vert ^{2+\epsilon}\right]=\sum_{i=1}^{n}\e\left[\left\{ \sum_{k=1}^{K}\left(A_{i}-\pi_{n[k]}\right)^{2}\1(B_{i}=k)\left\Vert \bs{\xi}_{n}(\bs X_{i})-\overline{\bs{\xi}}_{n[k]}\right\Vert ^{2}\right\} ^{\frac{2+\epsilon}{2}}\right]\\
 & \leq\sum_{i=1}^{n}\e\left[\left\{ \sum_{k=1}^{K}\left\{ 2\left\Vert \bs{\xi}_{n}(\bs X_{i})\right\Vert ^{2}+2\left\Vert \overline{\bs{\xi}}_{n[k]}\right\Vert ^{2}\right\} \right\} ^{\frac{2+\epsilon}{2}}\right]\\
 & \leq\sum_{i=1}^{n}(2K)^{\frac{2+\epsilon}{2}}2^{\frac{2+\epsilon}{2}}\e\left[\left\Vert \bs{\xi}_{n}(\bs X_{i})\right\Vert ^{2+\epsilon}+\max_{1\leq k\leq K}\left\Vert \overline{\bs{\xi}}_{n[k]}\right\Vert ^{2+\epsilon}\right]\\
 & \leq\sum_{i=1}^{n}(2K)^{\frac{2+\epsilon}{2}}2^{\frac{2+\epsilon}{2}}\e\left[\left\Vert \bs{\xi}_{n}(\bs X_{i})\right\Vert ^{2+\epsilon}\right]+\sum_{i=1}^{n}(2K)^{\frac{2+\epsilon}{2}}2^{\frac{2+\epsilon}{2}}\e\left[\sum_{k=1}^{K}\frac{1}{n_{[k]}}\sum_{i\in[k]}\left\Vert \bs{\xi}_{n}(\bs X_{i})\right\Vert ^{2+\epsilon}\right]\\
 & =O(n)
\end{align*}
where the third inequality uses the Power-Mean inequality, the fourth
one follows from
\[
\left\Vert \overline{\bs{\xi}}_{n[k]}\right\Vert ^{2+\epsilon}=\left\Vert \frac{1}{n_{[k]}}\sum_{i\in[k]}\bs{\xi}_{n}(\bs X_{i})\right\Vert ^{2+\epsilon}\leq\left\{ \frac{1}{n_{[k]}}\sum_{i\in[k]}\left\Vert \bs{\xi}_{n}(\bs X_{i})\right\Vert \right\} ^{2+\epsilon}\leq\frac{1}{n_{[k]}}\sum_{i\in[k]}\left\Vert \bs{\xi}_{n}(\bs X_{i})\right\Vert ^{2+\epsilon},
\]
and the last step follows from $\sup_{n\geq1}\sup_{1\leq i\leq n}\e\left[\left\Vert \bs{\xi}_{n}(\bs X_{i})\right\Vert ^{2+\epsilon}\right]<\infty$.
Then, by Markov's inequality we have 
\begin{equation}
\max_{1\leq i\leq n}\left\Vert \bs{\Xi}_{i}\right\Vert =O_{P}(n^{\frac{1}{2+\epsilon}}).\label{eq:maximal of Xi}
\end{equation}
As a result, by the Cauchy-Schwarz inequality we have 
\[
\max_{1\leq i\leq n}\sup_{\bs{\lambda}\in\Upsilon(\epsilon)}\left|\bs{\lambda}^{\trans}\bs{\Xi}_{i}\right|\leq O_{P}(n^{\frac{1}{2+\epsilon}})\times C(\epsilon)\times n^{-1/2}=o_{P}(1).
\]
Then it follows from $\rho^{\prime\prime}(v)$ is continuous at $v=0$
and $\rho^{\prime\prime}(0)<0$ that 
\begin{equation}
2\rho^{\prime\prime}(0)\leq\rho^{\prime\prime}(\bs{\lambda}^{\trans}\bs{\Xi}_{i})\leq\rho^{\prime\prime}(0)/2\ \ \forall\bs{\lambda}\in\Upsilon(\epsilon)\text{ and }i=1,\ldots,n\label{eq: range of rho''}
\end{equation}
with probability $1-o(1)$. From (\ref{eq:convergence of var each k})
and the definition of $\mathbf{\Sigma}$ we have 
\[
\frac{1}{n}\sum_{i=1}^{n}\bs{\Xi}_{i}\bs{\Xi}_{i}^{\trans}=\mathbf{\Sigma}+o_{P}(1).
\]
It follows from Weyl's inequality, Assumption \ref{assu:conditions on =00005Cxi}
and the assumption that $\mathbf{\Sigma}$ is non-singular we have
\begin{equation}
\lambda_{\text{min}}\left\{ \frac{1}{n}\sum_{i=1}^{n}\bs{\Xi}_{i}\bs{\Xi}_{i}^{\trans}\right\} \geq c/2\label{eq:minimal eigen value of sample var}
\end{equation}
with probability $1-o(1)$ and 
\[
\left\Vert \left\{ \frac{1}{n}\sum_{i=1}^{n}\bs{\Xi}_{i}\bs{\Xi}_{i}^{\trans}\right\} ^{+}-\mathbf{\Sigma}^{+}\right\Vert =o_{P}(1).
\]

By Taylor's expansion, for any $\bs{\lambda}\in\partial\Upsilon(\epsilon)$,
there exists $\widetilde{\bs{\lambda}}_{1}\in\Upsilon(\epsilon)$
such that 
\begin{align}
 & \widehat{G}(\bs{\lambda})-\widehat{G}(\bs 0)=\bs{\lambda}^{\trans}\frac{\partial\widehat{G}(\bs 0)}{\partial\bs{\lambda}}+\frac{1}{2}\bs{\lambda}^{\trans}\frac{\partial^{2}\widehat{G}(\widetilde{\bs{\lambda}}_{1})}{\partial\bs{\lambda}\partial\bs{\lambda}^{\trans}}\bs{\lambda}\nonumber \\
\leq & C(\epsilon)n^{-1/2}\left\Vert \frac{\partial\widehat{G}(\bs 0)}{\partial\bs{\lambda}}\right\Vert +\frac{1}{2}\bs{\lambda}^{\trans}\frac{1}{n}\sum_{i=1}^{n}\rho^{\prime\prime}(\widetilde{\bs{\lambda}}_{1}^{\trans}\bs{\Xi}_{i})\bs{\Xi}_{i}\bs{\Xi}_{i}^{\trans}\bs{\lambda}\nonumber \\
\leq & C(\epsilon)\times M(\epsilon)n^{-1}+\frac{\rho^{\prime\prime}(0)}{4}\bs{\lambda}^{\trans}\frac{1}{n}\sum_{i=1}^{n}\bs{\Xi}_{i}\bs{\Xi}_{i}^{\trans}\bs{\lambda}\nonumber \\
\leq & C(\epsilon)\times M(\epsilon)n^{-1}+\frac{\rho^{\prime\prime}(0)c}{8}\left(C(\epsilon)n^{-1/2}\right)^{2}=C(\epsilon)n^{-1}\left(M(\epsilon)+\frac{\rho^{\prime\prime}(0)c}{8}C(\epsilon)\right)\label{eq:taylor of G}
\end{align}
with probability $1-o(1)-\epsilon$, where the second inequality follows
from (\ref{eq: bound for G derivative}) and (\ref{eq: range of rho''})
and the last one follows from (\ref{eq:minimal eigen value of sample var}).
If we choose $C(\epsilon)=-16M(\epsilon)/(\rho^{\prime\prime}(0)c)$,
then 
\[
\widehat{G}(\bs{\lambda})-\widehat{G}(\bs 0)<0
\]
for any $\bs{\lambda}\in\partial\Upsilon(\epsilon)$. Since $\widehat{G}(\bs{\lambda})$
is continuous, there exists a local maximum of $\widehat{G}(\bs{\lambda})$
in the interior of $\Upsilon(\epsilon)$. Note that $\widehat{G}(\bs{\lambda})$
is also strictly concave with a unique global maximum point $\widehat{\bs{\lambda}}$,
we conclude that $\widehat{\bs{\lambda}}\in\Upsilon(\epsilon)\backslash\partial\Upsilon(\epsilon)$
and thus $\left\Vert \widehat{\bs{\lambda}}\right\Vert \leq C(\epsilon)n^{-1/2}$.
Since (\ref{eq:taylor of G}) holds with probability $1-o(1)-\epsilon$,
we have 
\[
\left\Vert \widehat{\bs{\lambda}}\right\Vert =O_{P}(n^{-1/2}).
\]

\textbf{Asymptotic expansion of $\widehat{\bs{\lambda}}$.} Recall
(\ref{eq:property of lambda_hat}). By Taylor's expansion, we have
\begin{equation}
\bs 0=\frac{1}{n}\sum_{i=1}^{n}\rho^{\prime}(\widehat{\bs{\lambda}}^{\trans}\bs{\Xi}_{i})\bs{\Xi}_{i}=\frac{1}{n}\sum_{i=1}^{n}\bs{\Xi}_{i}+\frac{1}{n}\sum_{i=1}^{n}\rho^{\prime\prime}(\widetilde{\bs{\lambda}}^{\trans}\bs{\Xi}_{i})\bs{\Xi}_{i}\bs{\Xi}_{i}^{\trans}\widehat{\bs{\lambda}}\label{eq:second order tylor of lambda}
\end{equation}
where $\widetilde{\bs{\lambda}}$ satisfies $\left\Vert \widetilde{\bs{\lambda}}\right\Vert \leq\left\Vert \widehat{\bs{\lambda}}\right\Vert $.
By $\left\Vert \widetilde{\bs{\lambda}}\right\Vert \leq\left\Vert \widehat{\bs{\lambda}}\right\Vert =O_{P}(n^{-1/2})$,
(\ref{eq:maximal of Xi}) and the mean value theorem, we have 
\begin{align*}
 & \sup_{1\leq i\leq n}\left|\rho^{\prime\prime}(\widetilde{\bs{\lambda}}^{\trans}\bs{\Xi}_{i})-\rho^{\prime\prime}(0)\right|\\
\leq & \sup_{1\leq i\leq n}\sup_{\bs{\lambda}:\left\Vert \bs{\lambda}\right\Vert \leq\left\Vert \widehat{\bs{\lambda}}\right\Vert }\left|\rho^{\prime\prime\prime}(\bs{\lambda}^{\trans}\bs{\Xi}_{i})\right|\sup_{1\leq i\leq n}\left|\widetilde{\bs{\lambda}}^{\trans}\bs{\Xi}_{i}\right|\\
\leq & \sup_{1\leq i\leq n}\sup_{\bs{\lambda}:\left\Vert \bs{\lambda}\right\Vert \leq\left\Vert \widehat{\bs{\lambda}}\right\Vert }\left|\rho^{\prime\prime\prime}(\bs{\lambda}^{\trans}\bs{\Xi}_{i})\right|\left\Vert \widehat{\bs{\lambda}}\right\Vert \sup_{1\leq i\leq n}\left\Vert \bs{\Xi}_{i}\right\Vert \\
= & \sup_{1\leq i\leq n}\sup_{\bs{\lambda}:\left\Vert \bs{\lambda}\right\Vert \leq\left\Vert \widehat{\bs{\lambda}}\right\Vert }\left|\rho^{\prime\prime\prime}(\bs{\lambda}^{\trans}\bs{\Xi}_{i})\right|o_{P}(1).
\end{align*}
Besides, we have 
\begin{align*}
\sup_{1\leq i\leq n}\sup_{\bs{\lambda}:\left\Vert \bs{\lambda}\right\Vert \leq\left\Vert \widehat{\bs{\lambda}}\right\Vert }\left|\rho^{\prime\prime\prime}(\bs{\lambda}^{\trans}\bs{\Xi}_{i})\right| & \leq\left|\rho^{\prime\prime\prime}(0)\right|+\sup_{1\leq i\leq n}\sup_{\bs{\lambda}:\left\Vert \bs{\lambda}\right\Vert \leq\left\Vert \widehat{\bs{\lambda}}\right\Vert }\left|\rho^{\prime\prime\prime}(\bs{\lambda}^{\trans}\bs{\Xi}_{i})-\rho^{\prime\prime\prime}(0)\right|\\
 & \leq\left|\rho^{\prime\prime\prime}(0)\right|+C_{\rho}\sup_{1\leq i\leq n}\sup_{\bs{\lambda}:\left\Vert \bs{\lambda}\right\Vert \leq\left\Vert \widehat{\bs{\lambda}}\right\Vert }\left|\bs{\lambda}^{\trans}\bs{\Xi}_{i}\right|=O_{P}(1).
\end{align*}
As a result, we have 
\begin{equation}
\sup_{1\leq i\leq n}\left|\rho^{\prime\prime}(\widetilde{\bs{\lambda}}^{\trans}\bs{\Xi}_{i})-\rho^{\prime\prime}(0)\right|=o_{P}(1)\label{eq:second derivatice of rho consistent}
\end{equation}
and 
\begin{align*}
 & \left|\frac{1}{n}\sum_{i=1}^{n}\rho^{\prime\prime}(\widetilde{\bs{\lambda}}^{\trans}\bs{\Xi}_{i})\bs{\Xi}_{i}\bs{\Xi}_{i}^{\trans}\widehat{\bs{\lambda}}-\frac{1}{n}\sum_{i=1}^{n}\rho^{\prime\prime}(0)\bs{\Xi}_{i}\bs{\Xi}_{i}^{\trans}\widehat{\bs{\lambda}}\right|\leq o_{P}(1)\left\Vert \frac{1}{n}\sum_{i=1}^{n}\bs{\Xi}_{i}\bs{\Xi}_{i}^{\trans}\right\Vert \left\Vert \widehat{\bs{\lambda}}\right\Vert \\
= & o_{P}(1)\left\Vert \mathbf{\Sigma}_{[k]}+o_{P}(1)\right\Vert \left\Vert \widehat{\bs{\lambda}}\right\Vert =o_{P}(n^{-1/2}).
\end{align*}
Then it follows from (\ref{eq:second order tylor of lambda}) that
\[
\widehat{\bs{\lambda}}=-\rho^{\prime\prime}(0)^{-1}\left(\frac{1}{n}\sum_{i=1}^{n}\bs{\Xi}_{i}\bs{\Xi}_{i}^{\trans}\right)^{-1}\frac{1}{n}\sum_{i=1}^{n}\bs{\Xi}_{i}+o_{P}(n^{-1/2}).
\]
By (\ref{eq:modification of tau_hat1})-(\ref{eq:modification of tau_hat2})
we have 
\begin{align*}
\widehat{\tau}_{\mathrm{cal}}-\tau & =\frac{1}{n}\sum_{i=1}^{n}\sum_{k=1}^{K}\left\{ \frac{A_{i}}{\pi_{n[k]}}-\frac{1-A_{i}}{1-\pi_{n[k]}}\right\} \1(B_{i}=k)\cdot Y_{i}-\tau\\
 & \qquad+\frac{1}{n}\sum_{i=1}^{n}(\widehat{w}_{i}-1)\sum_{k=1}^{K}\left\{ \frac{A_{i}}{\pi_{n[k]}}\left(Y_{i}-\overline{Y}_{1[k]}\right)-\frac{1-A_{i}}{1-\pi_{n[k]}}\left(Y_{i}-\overline{Y}_{0[k]}\right)\right\} \1(B_{i}=k)\\
 & =\frac{1}{n}\sum_{i=1}^{n}\sum_{k=1}^{K}\left\{ \frac{A_{i}}{\pi_{n[k]}}-\frac{1-A_{i}}{1-\pi_{n[k]}}\right\} \1(B_{i}=k)\cdot Y_{i}-\tau+\frac{1}{n}\sum_{i=1}^{n}(\rho^{\prime}(\widehat{\bs{\lambda}}^{\trans}\bs{\Xi}_{i})-1)\widetilde{Y}_{i}\\
 & =\frac{1}{n}\sum_{i=1}^{n}\sum_{k=1}^{K}\widetilde{Y}_{i}^{*}+\frac{1}{n}\sum_{i=1}^{n}(\rho^{\prime}(\widehat{\bs{\lambda}}^{\trans}\bs{\Xi}_{i})-1)\widetilde{Y}_{i}\\
 & \qquad+\sum_{k=1}^{K}\left\{ \e\left[Y_{i}(1)\mid B_{i}=k\right]-\e\left[Y_{i}(0)\mid B_{i}=k\right]\right\} \times\frac{1}{n}\sum_{i=1}^{n}\left(\1(B_{i}=k)-p_{[k]}\right).
\end{align*}
Recall (\ref{eq:main estimator decomposition}) in the proof of Theorem
\ref{thm:asymptotic properties of tau_cal}. If we can show that $\frac{1}{n}\sum_{i=1}^{n}(\rho^{\prime}(\widehat{\bs{\lambda}}^{\trans}\bs{\Xi}_{i})-1)\widetilde{Y}_{i}=-\frac{1}{n}\sum_{i=1}^{n}\sum_{k=1}^{K}\widehat{\bs{\beta}}_{[k]}^{\trans}\bs{\Xi}_{i,[k]}+o_{P}(n^{-1/2})$,
then the conclusion of Theorem \ref{thm:asymptotic properties of tau_cal}
holds. By the mean value theorem we have 
\begin{equation}
\frac{1}{n}\sum_{i=1}^{n}(\rho^{\prime}(\widehat{\bs{\lambda}}^{\trans}\bs{\Xi}_{i})-1)\widetilde{Y}_{i}=\widehat{\bs{\lambda}}^{\trans}\frac{1}{n}\sum_{i=1}^{n}\rho^{\prime\prime}(\bs{\lambda}_{1}^{\trans}\bs{\Xi}_{i})\widetilde{Y}_{i}\bs{\Xi}_{i}\label{eq:taylor of remainder term}
\end{equation}
where $\bs{\lambda}_{1}$ satisfies $\left\Vert \bs{\lambda}_{1}\right\Vert \leq\left\Vert \widehat{\bs{\lambda}}\right\Vert $.
Similar to the derivation of (\ref{eq:second derivatice of rho consistent}),
we can show that 
\[
\sup_{1\leq i\leq n}\left|\rho^{\prime\prime}(\bs{\lambda}_{1}^{\trans}\bs{\Xi}_{i})-\rho^{\prime\prime}(0)\right|=o_{P}(1).
\]
As a result, 
\[
\left|\frac{1}{n}\sum_{i=1}^{n}\rho^{\prime\prime}(\bs{\lambda}_{1}^{\trans}\bs{\Xi}_{i})\widetilde{Y}_{i}\bs{\Xi}_{i}-\frac{1}{n}\sum_{i=1}^{n}\rho^{\prime\prime}(0)\widetilde{Y}_{i}\bs{\Xi}_{i}\right|\leq\sup_{1\leq i\leq n}\left|\rho^{\prime\prime}(\bs{\lambda}_{1}^{\trans}\bs{\Xi}_{i})-\rho^{\prime\prime}(0)\right|\frac{1}{n}\sum_{i=1}^{n}\left\Vert \widetilde{Y}_{i}\bs{\Xi}_{i}\right\Vert .
\]
By the Cauchy-Schwarz inequality, we have 
\[
\frac{1}{n}\sum_{i=1}^{n}\left\Vert \widetilde{Y}_{i}\bs{\Xi}_{i}\right\Vert \leq\sqrt{\frac{1}{n}\sum_{i=1}^{n}\widetilde{Y}_{i}^{2}}\sqrt{\frac{1}{n}\sum_{i=1}^{n}\left\Vert \bs{\Xi}_{i}\right\Vert ^{2}}=\sqrt{\frac{1}{n}\sum_{i=1}^{n}\widetilde{Y}_{i}^{2}}\sqrt{\sum_{k=1}^{K}\frac{1}{n}\sum_{i=1}^{n}\left\Vert \bs{\Xi}_{i,[k]}\right\Vert ^{2}}.
\]
First, we have 
\begin{align*}
 & \frac{1}{n}\sum_{i=1}^{n}\left\Vert \bs{\Xi}_{i,[k]}-\bs{\Xi}_{i,[k]}^{*}\right\Vert ^{2}=\frac{1}{n}\sum_{i=1}^{n}(A_{i}-\pi_{n[k]})^{2}\1(B_{i}=k)\left\Vert \bs{\xi}_{n}(\bs X_{i})-\overline{\bs{\xi}}_{n[k]}-\widetilde{\bs{\xi}}_{n}^{*}(\bs X_{i})\right\Vert ^{2}\\
\leq & \frac{1}{n}\sum_{i=1}^{n}\1(B_{i}=k)\left\Vert \bs{\xi}_{n}(\bs X_{i})-\overline{\bs{\xi}}_{n[k]}-\widetilde{\bs{\xi}}_{n}^{*}(\bs X_{i})\right\Vert ^{2}\\
\leq & \frac{2}{n}\sum_{i=1}^{n}\1(B_{i}=k)\left\Vert \bs{\xi}_{n}(\bs X_{i})-\overline{\bs{\xi}}_{n[k]}-\bs{\xi}_{n}^{*}(\bs X_{i})-\overline{\bs{\xi}}_{n[k]}^{*}\right\Vert ^{2}+2\left\Vert \overline{\bs{\xi}}_{n[k]}^{*}-\e\left[\bs{\xi}_{n}^{*}(\bs X_{i})\mid B_{i}=k\right]\right\Vert ^{2}\\
\leq & 2R_{6}+o_{P}(1)=o_{P}(1),
\end{align*}
where the last two inequalities follows from (\ref{eq:R6 in main theorem})
and (\ref{eq:main theorem xibar-true mean}). Second, we have 
\[
\frac{1}{n}\sum_{i=1}^{n}\left\Vert \bs{\Xi}_{i,[k]}^{*}\right\Vert ^{2}\leq\frac{1}{n}\sum_{i=1}^{n}\1(B_{i}=k)\left\Vert \widetilde{\bs{\xi}}_{n}^{*}(\bs X_{i})\right\Vert ^{2}=O_{P}(1)
\]
by Markov's inequality and Assumption \ref{assu:conditions on =00005Cxi}.
Thus, we have 
\[
\frac{1}{n}\sum_{i=1}^{n}\left\Vert \bs{\Xi}_{i,[k]}\right\Vert ^{2}\leq2\frac{1}{n}\sum_{i=1}^{n}\left\Vert \bs{\Xi}_{i,[k]}-\bs{\Xi}_{i,[k]}^{*}\right\Vert ^{2}+2\frac{1}{n}\sum_{i=1}^{n}\left\Vert \bs{\Xi}_{i,[k]}^{*}\right\Vert ^{2}=O_{P}(1).
\]
Note that we have 
\[
\frac{1}{n}\sum_{i=1}^{n}\widetilde{Y}_{i}^{2}\leq K\frac{1}{n}\sum_{i=1}^{n}\sum_{k=1}^{K}\widetilde{Y}_{i,[k]}^{2}\leq O(1)\sum_{k=1}^{K}\left\{ \frac{1}{n}\sum_{i=1}^{n}Y_{i}^{2}+\overline{Y}_{1[k]}^{2}+\overline{Y}_{0[k]}^{2}\right\} =O_{P}(1).
\]
We can conclude that 
\[
\frac{1}{n}\sum_{i=1}^{n}\left\Vert \widetilde{Y}_{i}\bs{\Xi}_{i}\right\Vert \leq\sqrt{\frac{1}{n}\sum_{i=1}^{n}\widetilde{Y}_{i}^{2}}\sqrt{\sum_{k=1}^{K}\frac{1}{n}\sum_{i=1}^{n}\left\Vert \bs{\Xi}_{i,[k]}\right\Vert ^{2}}=O_{P}(1)
\]
and thus
\begin{equation}
\left|\frac{1}{n}\sum_{i=1}^{n}\rho^{\prime\prime}(\bs{\lambda}_{1}^{\trans}\bs{\Xi}_{i})\widetilde{Y}_{i}\bs{\Xi}_{i}-\frac{1}{n}\sum_{i=1}^{n}\rho^{\prime\prime}(0)\widetilde{Y}_{i}\bs{\Xi}_{i}\right|\leq\sup_{1\leq i\leq n}\left|\rho^{\prime\prime}(\bs{\lambda}_{1}^{\trans}\bs{\Xi}_{i})-\rho^{\prime\prime}(0)\right|\frac{1}{n}\sum_{i=1}^{n}\left\Vert \widetilde{Y}_{i}\bs{\Xi}_{i}\right\Vert =o_{P}(1).\label{eq:remainder taylor convergens}
\end{equation}
Recall (\ref{eq:taylor of remainder term}), then we have 
\begin{align*}
 & \frac{1}{n}\sum_{i=1}^{n}(\rho^{\prime}(\widehat{\bs{\lambda}}^{\trans}\bs{\Xi}_{i})-1)\widetilde{Y}_{i}=\widehat{\bs{\lambda}}^{\trans}\frac{1}{n}\sum_{i=1}^{n}\rho^{\prime\prime}(0)\widetilde{Y}_{i}\bs{\Xi}_{i}+\widehat{\bs{\lambda}}^{\trans}\left\{ \frac{1}{n}\sum_{i=1}^{n}\rho^{\prime\prime}(\bs{\lambda}_{1}^{\trans}\bs{\Xi}_{i})\widetilde{Y}_{i}\bs{\Xi}_{i}-\frac{1}{n}\sum_{i=1}^{n}\rho^{\prime\prime}(0)\widetilde{Y}_{i}\bs{\Xi}_{i}\right\} \\
= & \widehat{\bs{\lambda}}^{\trans}\frac{1}{n}\sum_{i=1}^{n}\rho^{\prime\prime}(0)\widetilde{Y}_{i}\bs{\Xi}_{i}+o_{P}(n^{-1/2})\\
= & -\frac{1}{n}\sum_{i=1}^{n}\widetilde{Y}_{i}\bs{\Xi}_{i}^{\trans}\left(\frac{1}{n}\sum_{i=1}^{n}\bs{\Xi}_{i}\bs{\Xi}_{i}^{\trans}\right)^{-1}\frac{1}{n}\sum_{i=1}^{n}\bs{\Xi}_{i}+o_{P}(n^{-1/2})\frac{1}{n}\sum_{i=1}^{n}\rho^{\prime\prime}(0)\widetilde{Y}_{i}\bs{\Xi}_{i}+o_{P}(n^{-1/2})\\
= & -\widehat{\bs{\beta}}^{\trans}\frac{1}{n}\sum_{i=1}^{n}\bs{\Xi}_{i}+o_{P}(n^{-1/2})=-\frac{1}{n}\sum_{i=1}^{n}\sum_{k=1}^{K}\widehat{\bs{\beta}}_{[k]}^{\trans}\bs{\Xi}_{i,[k]}+o_{P}(n^{-1/2}),
\end{align*}
where the second step follows from $\left\Vert \widehat{\bs{\lambda}}\right\Vert =O_{P}(n^{-1/2})$
and (\ref{eq:remainder taylor convergens}), the third step follows
from $\widehat{\bs{\lambda}}=-\rho^{\prime\prime}(0)^{-1}\left(\frac{1}{n}\sum_{i=1}^{n}\bs{\Xi}_{i}\bs{\Xi}_{i}^{\trans}\right)^{-1}\frac{1}{n}\sum_{i=1}^{n}\bs{\Xi}_{i}+o_{P}(n^{-1/2})$
and the fourth step follows from $\left\Vert \frac{1}{n}\sum_{i=1}^{n}\rho^{\prime\prime}(0)\widetilde{Y}_{i}\bs{\Xi}_{i}\right\Vert \leq\left|\rho^{\prime\prime}(0)\right|\frac{1}{n}\sum_{i=1}^{n}\left\Vert \widetilde{Y}_{i}\bs{\Xi}_{i}\right\Vert =O_{P}(1)$.
Therefore, the conclusion of Theorem \ref{thm:asymptotic properties of tau_cal}
holds.

\textbf{Now, we proceed to prove the second result.} Recall (\ref{eq:property of lambda_hat}).
By Taylor's expansion we have

\begin{align*}
\bs 0 & =\frac{1}{n}\sum_{i=1}^{n}\rho^{\prime}(\widehat{\bs{\lambda}}^{\trans}\bs{\Xi}_{i})\bs{\Xi}_{i}=\frac{1}{n}\sum_{i=1}^{n}\bs{\Xi}_{i}+\rho^{\prime\prime}(0)\frac{1}{n}\sum_{i=1}^{n}\bs{\Xi}_{i}\bs{\Xi}_{i}^{\trans}\widehat{\bs{\lambda}}\\
 & \qquad+\frac{1}{2}\frac{1}{n}\sum_{i=1}^{n}\rho^{\prime\prime\prime}(\widetilde{\bs{\lambda}}^{\trans}\bs{\Xi}_{i})\bs{\Xi}_{i}\left(\bs{\Xi}_{i}^{\trans}\widehat{\bs{\lambda}}\right)^{2},
\end{align*}
where $\widetilde{\bs{\lambda}}$ satisfies $\left\Vert \widetilde{\bs{\lambda}}\right\Vert \leq\left\Vert \widehat{\bs{\lambda}}\right\Vert $.
Then 
\begin{align*}
\widehat{\bs{\lambda}} & =-\left\{ \rho^{\prime\prime}(0)\frac{1}{n}\sum_{i=1}^{n}\bs{\Xi}_{i}\bs{\Xi}_{i}^{\trans}\right\} ^{-1}\left\{ \frac{1}{n}\sum_{i=1}^{n}\bs{\Xi}_{i}+\frac{1}{2}\frac{1}{n}\sum_{i=1}^{n}\rho^{\prime\prime\prime}(\widetilde{\bs{\lambda}}^{\trans}\bs{\Xi}_{i})\widehat{\bs{\lambda}}^{\trans}\bs{\Xi}_{i}\bs{\Xi}_{i}\bs{\Xi}_{i}^{\trans}\widehat{\bs{\lambda}}\right\} \\
 & =-\left\{ \rho^{\prime\prime}(0)\frac{1}{n}\sum_{i=1}^{n}\bs{\Xi}_{i}\bs{\Xi}_{i}^{\trans}\right\} ^{-1}\frac{1}{n}\sum_{i=1}^{n}\bs{\Xi}_{i}-\left\{ \rho^{\prime\prime}(0)\frac{1}{n}\sum_{i=1}^{n}\bs{\Xi}_{i}\bs{\Xi}_{i}^{\trans}\right\} ^{-1}\frac{1}{2}\frac{1}{n}\sum_{i=1}^{n}\rho^{\prime\prime\prime}(\widetilde{\bs{\lambda}}^{\trans}\bs{\Xi}_{i})\widehat{\bs{\lambda}}^{\trans}\bs{\Xi}_{i}\bs{\Xi}_{i}\bs{\Xi}_{i}^{\trans}\widehat{\bs{\lambda}}.
\end{align*}
By $\left\Vert \widetilde{\bs{\lambda}}\right\Vert \leq\left\Vert \widehat{\bs{\lambda}}\right\Vert =O_{P}(n^{-1/2})$,
(\ref{eq:maximal of Xi}) and Assumption \ref{assu:conditions on D(v)},
we have 
\[
\sup_{1\leq i\leq n}\left|\rho^{\prime\prime\prime}(\widetilde{\bs{\lambda}}^{\trans}\bs{\Xi}_{i})-\rho^{\prime\prime\prime}(0)\right|\leq C_{\rho}\sup_{1\leq i\leq n}\left|\widetilde{\bs{\lambda}}^{\trans}\bs{\Xi}_{i}\right|\leq C_{\rho}\left\Vert \widehat{\bs{\lambda}}\right\Vert \sup_{1\leq i\leq n}\left\Vert \bs{\Xi}_{i}\right\Vert =o_{P}(1).
\]
As a result, 
\begin{align*}
 & \left\Vert \frac{1}{n}\sum_{i=1}^{n}\rho^{\prime\prime\prime}(\widetilde{\bs{\lambda}}^{\trans}\bs{\Xi}_{i})\widehat{\bs{\lambda}}^{\trans}\bs{\Xi}_{i}\bs{\Xi}_{i}\bs{\Xi}_{i}^{\trans}\widehat{\bs{\lambda}}-\frac{1}{n}\sum_{i=1}^{n}\rho^{\prime\prime\prime}(0)\widehat{\bs{\lambda}}^{\trans}\bs{\Xi}_{i}\bs{\Xi}_{i}\bs{\Xi}_{i}^{\trans}\widehat{\bs{\lambda}}\right\Vert \\
\leq & \frac{1}{n}\sum_{i=1}^{n}\left|\rho^{\prime\prime\prime}(\widetilde{\bs{\lambda}}^{\trans}\bs{\Xi}_{i})-\rho^{\prime\prime\prime}(0)\right|\left\Vert \bs{\Xi}_{i}\right\Vert ^{3}\left\Vert \widehat{\bs{\lambda}}\right\Vert ^{2}\leq o_{P}(n^{-1})\cdot\frac{1}{n}\sum_{i=1}^{n}\left\Vert \bs{\Xi}_{i}\right\Vert ^{3}=o_{P}(n^{-1}),
\end{align*}
where the last step uses (\ref{eq:emprical 4th moment bounded}) and
the fact that 
\begin{align*}
\frac{1}{n}\sum_{i=1}^{n}\left\Vert \bs{\Xi}_{i}\right\Vert ^{3} & \leq\left\{ \frac{1}{n}\sum_{i=1}^{n}\left\Vert \bs{\Xi}_{i}\right\Vert ^{4}\right\} ^{3/4}\leq\left\{ \frac{1}{n}\sum_{i=1}^{n}\left\{ \sum_{k=1}^{K}\left\Vert \bs{\Xi}_{i,[k]}\right\Vert ^{2}\right\} ^{2}\right\} ^{3/4}\\
 & \leq\left\{ K\sum_{k=1}^{K}\frac{1}{n}\sum_{i=1}^{n}\left\Vert \bs{\Xi}_{i,[k]}\right\Vert ^{4}\right\} ^{3/4}=O_{P}(1).
\end{align*}
Then 
\begin{align}
\widehat{\bs{\lambda}} & =\rho^{\prime\prime}(0)^{-1}\widehat{\bs{\lambda}}_{lin}-\left\{ \rho^{\prime\prime}(0)\frac{1}{n}\sum_{i=1}^{n}\bs{\Xi}_{i}\bs{\Xi}_{i}^{\trans}\right\} ^{-1}\frac{1}{2}\frac{1}{n}\sum_{i=1}^{n}\rho^{\prime\prime\prime}(0)\widehat{\bs{\lambda}}^{\trans}\bs{\Xi}_{i}\bs{\Xi}_{i}\bs{\Xi}_{i}^{\trans}\widehat{\bs{\lambda}}+o_{P}(n^{-1}),\label{eq:lambda hat expression 1}
\end{align}
where we let 
\[
\widehat{\bs{\lambda}}_{lin}:=-\left\{ \frac{1}{n}\sum_{i=1}^{n}\bs{\Xi}_{i}\bs{\Xi}_{i}^{\trans}\right\} ^{-1}\frac{1}{n}\sum_{i=1}^{n}\bs{\Xi}_{i}.
\]
By (\ref{eq:minimal eigen value of sample var}), $\frac{1}{n}\sum_{i=1}^{n}\left\Vert \bs{\Xi}_{i}\right\Vert ^{3}=O_{P}(1)$
and $\left\Vert \widehat{\bs{\lambda}}\right\Vert =O_{P}(n^{-1/2})$,
we have 
\[
\left\{ \rho^{\prime\prime}(0)\frac{1}{n}\sum_{i=1}^{n}\bs{\Xi}_{i}\bs{\Xi}_{i}^{\trans}\right\} ^{-1}\frac{1}{2}\frac{1}{n}\sum_{i=1}^{n}\rho^{\prime\prime\prime}(0)\widehat{\bs{\lambda}}^{\trans}\bs{\Xi}_{i}\bs{\Xi}_{i}\bs{\Xi}_{i}^{\trans}\widehat{\bs{\lambda}}=O_{P}(n^{-1})
\]
and thus $\widehat{\bs{\lambda}}=\rho^{\prime\prime}(0)^{-1}\widehat{\bs{\lambda}}_{lin}+O_{P}(n^{-1})$.
Plugging this into (\ref{eq:lambda hat expression 1}) and using $\frac{1}{n}\sum_{i=1}^{n}\left\Vert \bs{\Xi}_{i}\right\Vert ^{3}=O_{P}(1)$
and $\left\Vert \widehat{\bs{\lambda}}\right\Vert =O_{P}(n^{-1/2})$
we have 
\begin{equation}
\widehat{\bs{\lambda}}=\rho^{\prime\prime}(0)^{-1}\widehat{\bs{\lambda}}_{lin}-\left\{ \rho^{\prime\prime}(0)\frac{1}{n}\sum_{i=1}^{n}\bs{\Xi}_{i}\bs{\Xi}_{i}^{\trans}\right\} ^{-1}\frac{1}{2n}\sum_{i=1}^{n}\frac{\rho^{\prime\prime\prime}(0)}{\rho^{\prime\prime}(0)^{3}}\widehat{\bs{\lambda}}_{lin}^{\trans}\bs{\Xi}_{i}\bs{\Xi}_{i}\bs{\Xi}_{i}^{\trans}\widehat{\bs{\lambda}}_{lin}+o_{P}(n^{-1}).\label{eq:lambda hat expression final}
\end{equation}

By (\ref{eq:modification of tau_hat1})-(\ref{eq:modification of tau_hat2})
we have 
\begin{align}
\widehat{\tau}_{\mathrm{cal}}-\tau & =\frac{1}{n}\sum_{i=1}^{n}\sum_{k=1}^{K}\left\{ \frac{A_{i}}{\pi_{n[k]}}-\frac{1-A_{i}}{1-\pi_{n[k]}}\right\} \1(B_{i}=k)\cdot Y_{i}-\tau\nonumber \\
 & \qquad+\frac{1}{n}\sum_{i=1}^{n}(\widehat{w}_{i}-1)\sum_{k=1}^{K}\left\{ \frac{A_{i}}{\pi_{n[k]}}\left(Y_{i}-\overline{Y}_{1[k]}\right)-\frac{1-A_{i}}{1-\pi_{n[k]}}\left(Y_{i}-\overline{Y}_{0[k]}\right)\right\} \1(B_{i}=k)\nonumber \\
 & =\frac{1}{n}\sum_{i=1}^{n}\sum_{k=1}^{K}\left\{ \frac{A_{i}}{\pi_{n[k]}}-\frac{1-A_{i}}{1-\pi_{n[k]}}\right\} \1(B_{i}=k)\cdot Y_{i}-\tau+\frac{1}{n}\sum_{i=1}^{n}(\rho^{\prime}(\widehat{\bs{\lambda}}^{\trans}\bs{\Xi}_{i})-1)\widetilde{Y}_{i}\nonumber \\
 & =\frac{1}{n}\sum_{i=1}^{n}\widetilde{Y}_{i}^{*}+\frac{1}{n}\sum_{i=1}^{n}(\rho^{\prime}(\widehat{\bs{\lambda}}^{\trans}\bs{\Xi}_{i})-1)\widetilde{Y}_{i}\nonumber \\
 & \qquad+\sum_{k=1}^{K}\left\{ \e\left[Y_{i}(1)\mid B_{i}=k\right]-\e\left[Y_{i}(0)\mid B_{i}=k\right]\right\} \times\frac{1}{n}\sum_{i=1}^{n}\left(\1(B_{i}=k)-p_{[k]}\right).\label{eq:decomposition of tau-tau*}
\end{align}

\textbf{Asymptotic expansion for $\frac{1}{n}\sum_{i=1}^{n}\widetilde{Y}_{i}^{*}+\frac{1}{n}\sum_{i=1}^{n}(\rho^{\prime}(\widehat{\bs{\lambda}}^{\trans}\bs{\Xi}_{i})-1)\widetilde{Y}_{i}$.}
We first deal with $\frac{1}{n}\sum_{i=1}^{n}(\rho^{\prime}(\widehat{\bs{\lambda}}^{\trans}\bs{\Xi}_{i})-1)\widetilde{Y}_{i}$.
By Taylor's expansion we have 
\begin{align*}
\frac{1}{n}\sum_{i=1}^{n}(\rho^{\prime}(\widehat{\bs{\lambda}}^{\trans}\bs{\Xi}_{i})-1)\widetilde{Y}_{i} & =\frac{1}{n}\sum_{i=1}^{n}\left\{ \rho^{\prime\prime}(0)\widehat{\bs{\lambda}}^{\trans}\bs{\Xi}_{i}+\frac{1}{2}\rho^{\prime\prime\prime}(\widetilde{\bs{\lambda}}^{\trans}\bs{\Xi}_{i})\widehat{\bs{\lambda}}^{\trans}\bs{\Xi}_{i}\bs{\Xi}_{i}^{\trans}\widehat{\bs{\lambda}}\right\} \widetilde{Y}_{i},
\end{align*}
where $\widetilde{\bs{\lambda}}$ satisfies $\left\Vert \widetilde{\bs{\lambda}}\right\Vert \leq\left\Vert \widehat{\bs{\lambda}}\right\Vert $.
By $\left\Vert \widetilde{\bs{\lambda}}\right\Vert \leq\left\Vert \widehat{\bs{\lambda}}\right\Vert =O_{P}(n^{-1/2})$,
(\ref{eq:maximal of Xi}) and Assumption \ref{assu:conditions on D(v)},
we have 
\[
\sup_{1\leq i\leq n}\left|\rho^{\prime\prime\prime}(\widetilde{\bs{\lambda}}^{\trans}\bs{\Xi}_{i})-\rho^{\prime\prime\prime}(0)\right|\leq\sup_{1\leq i\leq n}\left|\widetilde{\bs{\lambda}}^{\trans}\bs{\Xi}_{i}\right|\leq\left\Vert \widehat{\bs{\lambda}}\right\Vert \sup_{1\leq i\leq n}\left\Vert \bs{\Xi}_{i}\right\Vert =o_{P}(1).
\]
By $\frac{1}{n}\sum_{i=1}^{n}\left\Vert \bs{\Xi}_{i}\right\Vert ^{2}\left|\widetilde{Y}_{i}\right|=O_{P}(1)$\footnote{This follows from 
\[
\frac{1}{n}\sum_{i=1}^{n}\left\Vert \bs{\Xi}_{i}\right\Vert ^{2}\left|\widetilde{Y}_{i}\right|\leq\sqrt{\frac{1}{n}\sum_{i=1}^{n}\left\Vert \bs{\Xi}_{i}\right\Vert ^{4}}\left\{ \frac{1}{n}\sum_{i=1}^{n}\left|\widetilde{Y}_{i}\right|^{4}\right\} ^{1/4}=O_{P}(1),
\]
where the last step is due to (\ref{eq:emprical 4th moment bounded})
and (\ref{eq:sample average of Y^4}).} and $\widehat{\bs{\lambda}}=\widehat{\bs{\lambda}}_{lin}+O_{P}(n^{-1})$,
we have 
\[
\frac{1}{n}\sum_{i=1}^{n}\frac{1}{2}\rho^{\prime\prime\prime}(\widetilde{\bs{\lambda}}^{\trans}\bs{\Xi}_{i})\widehat{\bs{\lambda}}^{\trans}\bs{\Xi}_{i}\bs{\Xi}_{i}^{\trans}\widehat{\bs{\lambda}}\widetilde{Y}_{i}=\frac{1}{n}\sum_{i=1}^{n}\frac{1}{2}\frac{\rho^{\prime\prime\prime}(0)}{\rho^{\prime\prime}(0)^{2}}\widehat{\bs{\lambda}}_{lin}^{\trans}\bs{\Xi}_{i}\bs{\Xi}_{i}^{\trans}\widehat{\bs{\lambda}}_{lin}\widetilde{Y}_{i}+o_{P}(n^{-1}).
\]
As a result, we have 
\begin{align}
 & \frac{1}{n}\sum_{i=1}^{n}(\rho^{\prime}(\widehat{\bs{\lambda}}^{\trans}\bs{\Xi}_{i})-1)\widetilde{Y}_{i}\nonumber \\
= & \frac{1}{n}\sum_{i=1}^{n}\left\{ \rho^{\prime\prime}(0)\widehat{\bs{\lambda}}^{\trans}\bs{\Xi}_{i}+\frac{1}{2}\frac{\rho^{\prime\prime\prime}(0)}{\rho^{\prime\prime}(0)^{2}}\widehat{\bs{\lambda}}_{lin}^{\trans}\bs{\Xi}_{i}\bs{\Xi}_{i}^{\trans}\widehat{\bs{\lambda}}_{lin}\right\} \left\{ \bs{\Xi}_{i}^{\trans}\bs{\beta}_{\mathcal{C}_{n}}+\epsilon_{i}\right\} +o_{P}(n^{-1})\nonumber \\
= & \widehat{\bs{\lambda}}_{lin}^{\trans}\frac{1}{n}\sum_{i=1}^{n}\bs{\Xi}_{i}\left\{ \bs{\Xi}_{i}^{\trans}\bs{\beta}_{\mathcal{C}_{n}}+\epsilon_{i}\right\} \nonumber \\
 & -\frac{1}{2n}\sum_{i=1}^{n}\frac{\rho^{\prime\prime\prime}(0)}{\rho^{\prime\prime}(0)^{2}}\widehat{\bs{\lambda}}_{lin}^{\trans}\bs{\Xi}_{i}\bs{\Xi}_{i}^{\trans}\bs{\Xi}_{i}^{\trans}\widehat{\bs{\lambda}}_{lin}\left\{ \frac{1}{n}\sum_{i=1}^{n}\bs{\Xi}_{i}\bs{\Xi}_{i}^{\trans}\right\} ^{-1}\frac{1}{n}\sum_{i=1}^{n}\bs{\Xi}_{i}\left\{ \bs{\Xi}_{i}^{\trans}\bs{\beta}_{\mathcal{C}_{n}}+\epsilon_{i}\right\} \nonumber \\
 & +\rho^{\prime\prime}(0)\frac{1}{n}\sum_{i=1}^{n}\bs{\Xi}_{i}\widetilde{Y}_{i}\times o_{P}(n^{-1})+\frac{1}{2n}\sum_{i=1}^{n}\frac{\rho^{\prime\prime\prime}(0)}{\rho^{\prime\prime}(0)^{2}}\widehat{\bs{\lambda}}_{lin}^{\trans}\bs{\Xi}_{i}\bs{\Xi}_{i}^{\trans}\widehat{\bs{\lambda}}_{lin}\left\{ \bs{\Xi}_{i}^{\trans}\bs{\beta}_{\mathcal{C}_{n}}+\epsilon_{i}\right\} +o_{P}(n^{-1})\nonumber \\
= & -\frac{1}{n}\sum_{i=1}^{n}\bs{\beta}_{\mathcal{C}_{n}}^{\trans}\bs{\Xi}_{i}-\frac{1}{2n}\sum_{i=1}^{n}\frac{\rho^{\prime\prime\prime}(0)}{\rho^{\prime\prime}(0)^{2}}\left\{ \widehat{\bs{\lambda}}_{lin}^{\trans}\bs{\Xi}_{i}\right\} ^{2}\bs{\Xi}_{i}^{\trans}\bs{\beta}_{\mathcal{C}_{n}}+\frac{1}{2n}\sum_{i=1}^{n}\frac{\rho^{\prime\prime\prime}(0)}{\rho^{\prime\prime}(0)^{2}}\left\{ \widehat{\bs{\lambda}}_{lin}^{\trans}\bs{\Xi}_{i}\right\} ^{2}\bs{\Xi}_{i}^{\trans}\bs{\beta}_{\mathcal{C}_{n}}\nonumber \\
 & +\widehat{\bs{\lambda}}_{lin}^{\trans}\frac{1}{n}\sum_{i=1}^{n}\bs{\Xi}_{i}\epsilon_{i}+\frac{1}{2n}\sum_{i=1}^{n}\frac{\rho^{\prime\prime\prime}(0)}{\rho^{\prime\prime}(0)^{2}}\widehat{\bs{\lambda}}_{lin}^{\trans}\bs{\Xi}_{i}\bs{\Xi}_{i}^{\trans}\widehat{\bs{\lambda}}_{lin}\epsilon_{i}\nonumber \\
 & -\frac{1}{2n}\sum_{i=1}^{n}\frac{\rho^{\prime\prime\prime}(0)}{\rho^{\prime\prime}(0)^{2}}\widehat{\bs{\lambda}}_{lin}^{\trans}\bs{\Xi}_{i}\bs{\Xi}_{i}^{\trans}\bs{\Xi}_{i}^{\trans}\widehat{\bs{\lambda}}_{lin}\left\{ \frac{1}{n}\sum_{i=1}^{n}\bs{\Xi}_{i}\bs{\Xi}_{i}^{\trans}\right\} ^{-1}\frac{1}{n}\sum_{i=1}^{n}\bs{\Xi}_{i}\epsilon_{i}+o_{P}(n^{-1})\nonumber \\
= & -\frac{1}{n}\sum_{i=1}^{n}\bs{\beta}_{\mathcal{C}_{n}}^{\trans}\bs{\Xi}_{i}+\widehat{\bs{\lambda}}_{lin}^{\trans}\frac{1}{n}\sum_{i=1}^{n}\bs{\Xi}_{i}\epsilon_{i}+\frac{1}{2n}\sum_{i=1}^{n}\frac{\rho^{\prime\prime\prime}(0)}{\rho^{\prime\prime}(0)^{2}}\widehat{\bs{\lambda}}_{lin}^{\trans}\bs{\Xi}_{i}\bs{\Xi}_{i}^{\trans}\widehat{\bs{\lambda}}_{lin}\epsilon_{i}+o_{P}(n^{-1}),\label{eq:(=00005Crho^=00007B=00005Cprime=00007D(=00005Cwidehat=00007B=00005Cbs=00007B=00005Clambda=00007D=00007D^=00007B=00005Ctrans=00007D=00005Cbs=00007B=00005CXi=00007D_=00007Bi=00007D)-1)=00005Cwidetilde=00007BY=00007D_=00007Bi=00007D}
\end{align}
where the second step follows from $\widehat{\bs{\lambda}}=\widehat{\bs{\lambda}}_{lin}+O_{P}(n^{-1})$,
the third step follows from $\frac{1}{n}\sum_{i=1}^{n}\bs{\Xi}_{i}\widetilde{Y}_{i}=O_{P}(1)$\footnote{This is because 
\[
\left\Vert \frac{1}{n}\sum_{i=1}^{n}\bs{\Xi}_{i}\widetilde{Y}_{i}\right\Vert \leq\left\{ \frac{1}{n}\sum_{i=1}^{n}\left\Vert \bs{\Xi}_{i}\right\Vert ^{4}\right\} ^{1/4}\left\{ \frac{1}{n}\sum_{i=1}^{n}\left|\widetilde{Y}_{i}\right|^{4}\right\} ^{1/4}=O_{P}(1),
\]
where the last step is due to (\ref{eq:emprical 4th moment bounded})
and (\ref{eq:sample average of Y^4}).} and the last step follows from $\frac{1}{n}\sum_{i=1}^{n}\bs{\Xi}_{i}\epsilon_{i}=o_{P}(1)$
(see (\ref{eq:xi epsilon =00003D o(1)}) and $\left\Vert \widehat{\bs{\lambda}}_{lin}\right\Vert =\left\Vert \widehat{\bs{\lambda}}\right\Vert +O_{P}(n^{-1})=O_{P}(n^{-1/2})$.
From (\ref{eq:sample ave of xi,k}), we have 
\[
\frac{1}{n}\sum_{i=1}^{n}\bs{\Xi}_{i}=\frac{1}{n}\sum_{i=1}^{n}\bs{\Xi}_{i}^{*}+\frac{1}{n}\sum_{i=1}^{n}\begin{pmatrix}\left\{ A_{i}-\pi_{n[1]}\right\} \1(B_{i}=1)\left\{ \bs{\xi}_{n}(\bs X_{i})-\bs{\xi}_{n}^{*}(\bs X_{i})\right\} \\
\left\{ A_{i}-\pi_{n[2]}\right\} \1(B_{i}=2)\left\{ \bs{\xi}_{n}(\bs X_{i})-\bs{\xi}_{n}^{*}(\bs X_{i})\right\} \\
\vdots\\
\left\{ A_{i}-\pi_{n[K]}\right\} \1(B_{i}=K)\left\{ \bs{\xi}_{n}(\bs X_{i})-\bs{\xi}_{n}^{*}(\bs X_{i})\right\} 
\end{pmatrix}.
\]
By Assumption \ref{assu:conditions on =00005Cxi-general D(v)}, we
have 
\begin{align*}
 & \frac{1}{n}\sum_{i=1}^{n}\left\{ A_{i}-\pi_{n[k]}\right\} \1(B_{i}=k)\left\{ \bs{\xi}_{n}(\bs X_{i})-\bs{\xi}_{n}^{*}(\bs X_{i})\right\} \\
= & \frac{n_{[k]}}{n}\pi_{n[k]}(1-\pi_{n[k]})\times\\
 & \left\{ \frac{1}{n_{1[k]}}\sum_{i=1}^{n}A_{i}\1(B_{i}=k)\left\{ \bs{\xi}_{n}(\bs X_{i})-\bs{\xi}_{n}^{*}(\bs X_{i})\right\} -\frac{1}{n_{0[k]}}\sum_{i=1}^{n}\left(1-A_{i}\right)\1(B_{i}=k)\left\{ \bs{\xi}_{n}(\bs X_{i})-\bs{\xi}_{n}^{*}(\bs X_{i})\right\} \right\} \\
= & o_{P}(n^{-1}).
\end{align*}
Thus, we have 
\begin{equation}
\frac{1}{n}\sum_{i=1}^{n}\bs{\Xi}_{i}=\frac{1}{n}\sum_{i=1}^{n}\bs{\Xi}_{i}^{*}+o_{P}(n^{-1}).\label{eq:sample ave of Xi to Xi*}
\end{equation}
Combining this with (\ref{eq:(=00005Crho^=00007B=00005Cprime=00007D(=00005Cwidehat=00007B=00005Cbs=00007B=00005Clambda=00007D=00007D^=00007B=00005Ctrans=00007D=00005Cbs=00007B=00005CXi=00007D_=00007Bi=00007D)-1)=00005Cwidetilde=00007BY=00007D_=00007Bi=00007D})
gives that 
\begin{align}
 & \frac{1}{n}\sum_{i=1}^{n}\widetilde{Y}_{i}^{*}+\frac{1}{n}\sum_{i=1}^{n}(\rho^{\prime}(\widehat{\bs{\lambda}}^{\trans}\bs{\Xi}_{i})-1)\widetilde{Y}_{i}\nonumber \\
= & \frac{1}{n}\sum_{i=1}^{n}\widetilde{Y}_{i}^{*}-\frac{1}{n}\sum_{i=1}^{n}\bs{\beta}_{\mathcal{C}_{n}}^{\trans}\bs{\Xi}_{i}^{*}+\widehat{\bs{\lambda}}_{lin}^{\trans}\frac{1}{n}\sum_{i=1}^{n}\bs{\Xi}_{i}\epsilon_{i}+\frac{1}{2n}\sum_{i=1}^{n}\frac{\rho^{\prime\prime\prime}(0)}{\rho^{\prime\prime}(0)^{2}}\widehat{\bs{\lambda}}_{lin}^{\trans}\bs{\Xi}_{i}\bs{\Xi}_{i}^{\trans}\widehat{\bs{\lambda}}_{lin}\epsilon_{i}+o_{P}(n^{-1}).\label{eq:R1+R2 diverging D(v)}
\end{align}

Now, we analyze the bias term $\widehat{\bs{\lambda}}_{lin}^{\trans}\frac{1}{n}\sum_{i=1}^{n}\bs{\Xi}_{i}\epsilon_{i}+\frac{1}{2n}\sum_{i=1}^{n}\frac{\rho^{\prime\prime\prime}(0)}{\rho^{\prime\prime}(0)^{2}}\widehat{\bs{\lambda}}_{lin}^{\trans}\bs{\Xi}_{i}\bs{\Xi}_{i}^{\trans}\widehat{\bs{\lambda}}_{lin}\epsilon_{i}$.

\textbf{Analysis of $\widehat{\bs{\lambda}}_{lin}^{\trans}\frac{1}{n}\sum_{i=1}^{n}\bs{\Xi}_{i}\epsilon_{i}$.}
By the definition of $\widehat{\bs{\lambda}}_{lin}$ we have 
\begin{equation}
\widehat{\bs{\lambda}}_{lin}^{\trans}\frac{1}{n}\sum_{i=1}^{n}\bs{\Xi}_{i}\epsilon_{i}=-\sum_{k=1}^{K}\frac{1}{n}\sum_{i=1}^{n}\bs{\Xi}_{i,[k]}^{\trans}\left\{ \frac{1}{n}\sum_{i=1}^{n}\bs{\Xi}_{i,[k]}\bs{\Xi}_{i,[k]}^{\trans}\right\} ^{-1}\frac{1}{n}\sum_{i=1}^{n}\bs{\Xi}_{i,[k]}\epsilon_{i,[k]}.\label{eq:first bias}
\end{equation}
We decompose $\frac{1}{n}\sum_{i=1}^{n}\bs{\Xi}_{i,[k]}\epsilon_{i,[k]}$
as 
\[
\frac{1}{n}\sum_{i=1}^{n}\bs{\Xi}_{i,[k]}\epsilon_{i,[k]}=\frac{1}{n}\sum_{i=1}^{n}\bs{\Xi}_{i,[k]}\widetilde{Y}_{i,[k]}-\frac{1}{n}\sum_{i=1}^{n}\bs{\Xi}_{i,[k]}\bs{\Xi}_{i,[k]}^{\trans}\bs{\beta}_{[k],\mathcal{C}_{n}}.
\]
We calculate $\frac{1}{n}\sum_{i=1}^{n}\bs{\Xi}_{i,[k]}\bs{\Xi}_{i,[k]}^{\trans}\bs{\beta}_{[k],\mathcal{C}_{n}}$
and $\frac{1}{n}\sum_{i=1}^{n}\bs{\Xi}_{i,[k]}\widetilde{Y}_{i,[k]}$
respectively.

From (\ref{eq:analyze empirical var}) we have {\footnotesize
\begin{align*}
 & \frac{1}{n}\sum_{i=1}^{n}\bs{\Xi}_{i,[k]}\bs{\Xi}_{i,[k]}^{\trans}\\
= & \frac{1}{n}\sum_{i=1}^{n}A_{i}\1(B_{i}=k)\left\{ \bs{\xi}_{n}^{*}(\bs X_{i})-\overline{\bs{\xi}}_{n[k]}^{*}\right\} \left\{ \bs{\xi}_{n}^{*}(\bs X_{i})-\overline{\bs{\xi}}_{n[k]}^{*}\right\} ^{\trans}-\\
 & \frac{2}{n}\sum_{i=1}^{n}\pi_{n[k]}A_{i}\1(B_{i}=k)\left\{ \bs{\xi}_{n}^{*}(\bs X_{i})-\overline{\bs{\xi}}_{n[k]}^{*}\right\} \left\{ \bs{\xi}_{n}^{*}(\bs X_{i})-\overline{\bs{\xi}}_{n[k]}^{*}\right\} ^{\trans}+\\
 & \frac{1}{n}\sum_{i=1}^{n}\pi_{n[k]}^{2}\1(B_{i}=k)\left\{ \bs{\xi}_{n}^{*}(\bs X_{i})-\overline{\bs{\xi}}_{n[k]}^{*}\right\} \left\{ \bs{\xi}_{n}^{*}(\bs X_{i})-\overline{\bs{\xi}}_{n[k]}^{*}\right\} ^{\trans}+\\
 & \underbrace{\frac{1}{n}\sum_{i=1}^{n}\left(A_{i}-\pi_{n[k]}\right)^{2}\1(B_{i}=k)\left[\left(\bs{\xi}_{n}(\bs X_{i})-\overline{\bs{\xi}}_{n[k]}\right)\left(\bs{\xi}_{n}(\bs X_{i})-\overline{\bs{\xi}}_{n[k]}\right)^{\trans}-\left(\bs{\xi}_{n}^{*}(\bs X_{i})-\overline{\bs{\xi}}_{n[k]}^{*}\right)\left(\bs{\xi}_{n}^{*}(\bs X_{i})-\overline{\bs{\xi}}_{n[k]}^{*}\right)^{\trans}\right]}_{R_{5}},
\end{align*}
}where $\overline{\bs{\xi}}_{n[k]}^{*}:=\frac{1}{n_{[k]}}\sum_{i=1}^{n}\1(B_{i}=k)\bs{\xi}_{n}^{*}(\bs X_{i})$
is the stratum-specific sample mean for $\bs{\xi}_{n}^{*}(\bs X_{i})$.
For every $1\leq k\leq K$ and $\bs{\alpha}\in\R^{d}$, by the Cauchy--Schwarz
inequality we have 
\begin{align*}
\bs{\alpha}^{\trans}R_{5}\bs{\alpha} & =\frac{1}{n}\sum_{i=1}^{n}\left(A_{i}-\pi_{n[k]}\right)^{2}\1(B_{i}=k)\left[\left(\bs{\alpha}^{\trans}\bs{\xi}_{n}(\bs X_{i})-\bs{\alpha}^{\trans}\overline{\bs{\xi}}_{n[k]}\right)^{2}-\left(\bs{\alpha}^{\trans}\bs{\xi}_{n}^{*}(\bs X_{i})-\bs{\alpha}^{\trans}\overline{\bs{\xi}}_{n[k]}^{*}\right)^{2}\right]\\
 & \leq\sqrt{\frac{1}{n}\sum_{i=1}^{n}\1(B_{i}=k)\left\{ \bs{\alpha}^{\trans}\bs{\xi}_{n}(\bs X_{i})-\bs{\alpha}^{\trans}\overline{\bs{\xi}}_{n[k]}-\left(\bs{\alpha}^{\trans}\bs{\xi}_{n}^{*}(\bs X_{i})-\bs{\alpha}^{\trans}\overline{\bs{\xi}}_{n[k]}^{*}\right)\right\} ^{2}}\times\\
 & \qquad\sqrt{\frac{1}{n}\sum_{i=1}^{n}\1(B_{i}=k)\left\{ \bs{\alpha}^{\trans}\bs{\xi}_{n}(\bs X_{i})-\bs{\alpha}^{\trans}\overline{\bs{\xi}}_{n[k]}+\left(\bs{\alpha}^{\trans}\bs{\xi}_{n}^{*}(\bs X_{i})-\bs{\alpha}^{\trans}\overline{\bs{\xi}}_{n[k]}^{*}\right)\right\} ^{2}}\\
 & \leq\left\Vert \bs{\alpha}\right\Vert ^{2}\sqrt{\underbrace{\frac{1}{n}\sum_{i=1}^{n}\1(B_{i}=k)\left\Vert \bs{\xi}_{n}(\bs X_{i})-\overline{\bs{\xi}}_{n[k]}-\left(\bs{\xi}_{n}^{*}(\bs X_{i})-\overline{\bs{\xi}}_{n[k]}^{*}\right)\right\Vert ^{2}}_{R_{6}}}\times\\
 & \qquad\sqrt{\underbrace{\frac{1}{n}\sum_{i=1}^{n}\1(B_{i}=k)\left\Vert \bs{\xi}_{n}(\bs X_{i})-\overline{\bs{\xi}}_{n[k]}+\left(\bs{\xi}_{n}^{*}(\bs X_{i})-\overline{\bs{\xi}}_{n[k]}^{*}\right)\right\Vert ^{2}}_{R_{7}}},
\end{align*}
which leads to 
\[
\left\Vert R_{5}\right\Vert \leq\sqrt{R_{6}}\times\sqrt{R_{7}}.
\]
We control $R_{6}$ and $R_{7}$ separately. By Assumption \ref{assu:conditions on =00005Cxi-general D(v)},
we have 
\begin{align}
\left\Vert \overline{\bs{\xi}}_{n[k]}-\overline{\bs{\xi}}_{n[k]}^{*}\right\Vert  & =\left\Vert \frac{1}{n_{[k]}}\sum_{i=1}^{n}\1(B_{i}=k)\left\{ \bs{\xi}_{n}(\bs X_{i})-\bs{\xi}_{n}^{*}(\bs X_{i})\right\} \right\Vert \nonumber \\
 & \leq\sqrt{\frac{1}{n_{[k]}}\sum_{i=1}^{n}\1(B_{i}=k)\left\Vert \bs{\xi}_{n}(\bs X_{i})-\bs{\xi}_{n}^{*}(\bs X_{i})\right\Vert ^{2}}=O_{P}(\Delta_{n})\label{eq:sample group mean-pop group mean-2}
\end{align}
for all $k=1,\ldots K$. Then it follows from Assumption \ref{assu:conditions on =00005Cxi-general D(v)},
weak law of large numbers and (\ref{eq:sample group mean-pop group mean-2})
that 
\begin{align*}
R_{6} & \leq2\times\frac{1}{n}\sum_{i=1}^{n}\1(B_{i}=k)\left\Vert \bs{\xi}_{n}(\bs X_{i})-\bs{\xi}_{n}^{*}(\bs X_{i})\right\Vert ^{2}+2\times\frac{1}{n}\sum_{i=1}^{n}\1(B_{i}=k)\left\Vert \overline{\bs{\xi}}_{n[k]}-\overline{\bs{\xi}}_{n[k]}^{*}\right\Vert ^{2}=O_{P}(\Delta_{n}^{2}).
\end{align*}
By Lemma \ref{lem:LLN for triangular array} and Assumption \ref{assu:conditions on =00005Cxi-general D(v)},
we have 
\[
\frac{1}{n}\sum_{i=1}^{n}\left\{ \1(B_{i}=k)\left\Vert \bs{\xi}_{n}^{*}(\bs X_{i})\right\Vert ^{2}-\e\left[\1(B_{i}=k)\left\Vert \bs{\xi}_{n}^{*}(\bs X_{i})\right\Vert ^{2}\right]\right\} =o_{P}(1)
\]
and 
\begin{align}
 & \overline{\bs{\xi}}_{n[k]}^{*}-\e\left[\bs{\xi}_{n}^{*}(\bs X_{i})\mid B_{i}=k\right]\nonumber \\
= & \frac{n}{n_{[k]}}\frac{1}{n}\sum_{i=1}^{n}\left\{ \1(B_{i}=k)\bs{\xi}_{n}^{*}(\bs X_{i})-\e\left[\1(B_{i}=k)\bs{\xi}_{n}^{*}(\bs X_{i})\right]\right\} +o_{P}(1)\nonumber \\
= & o_{P}(1).\label{eq:xin* mean - true mean}
\end{align}
As a result, we have
\begin{align*}
R_{7} & \leq2\times R_{6}+8\times\frac{1}{n}\sum_{i=1}^{n}\1(B_{i}=k)\left\Vert \bs{\xi}_{n}^{*}(\bs X_{i})-\overline{\bs{\xi}}_{n[k]}^{*}\right\Vert ^{2}\\
 & \leq O_{P}(\Delta_{n}^{2})+\frac{16}{n}\sum_{i=1}^{n}\1(B_{i}=k)\left\Vert \bs{\xi}_{n}^{*}(\bs X_{i})\right\Vert ^{2}+\frac{16}{n}\sum_{i=1}^{n}\1(B_{i}=k)\left\Vert \overline{\bs{\xi}}_{n[k]}^{*}\right\Vert ^{2}\\
 & =o_{P}(1)+\left\{ O(1)+o_{P}(1)\right\} +O_{P}(1)\left\{ O(1)+o_{P}(1)\right\} ^{2}=O_{P}(1).
\end{align*}
 Now, we can obtain that $\left\Vert R_{5}\right\Vert \leq\sqrt{R_{6}}\times\sqrt{R_{7}}=O_{P}(\Delta_{n})$
and thus
\begin{align}
 & \frac{1}{n}\sum_{i=1}^{n}\bs{\Xi}_{i,[k]}\bs{\Xi}_{i,[k]}^{\trans}\nonumber \\
= & \frac{1}{n}\sum_{i=1}^{n}\left(A_{i}-\pi_{n[k]}\right)^{2}\1(B_{i}=k)\left\{ \bs{\xi}_{n}^{*}(\bs X_{i})-\overline{\bs{\xi}}_{n[k]}^{*}\right\} \left\{ \bs{\xi}_{n}^{*}(\bs X_{i})-\overline{\bs{\xi}}_{n[k]}^{*}\right\} ^{\trans}+O_{P}(\Delta_{n})\nonumber \\
= & \frac{1}{n}\sum_{i=1}^{n}\left(A_{i}-\pi_{n[k]}\right)^{2}\1(B_{i}=k)\left\{ \widetilde{\bs{\xi}}_{n}^{*}(\bs X_{i})-\overline{\widetilde{\bs{\xi}}}_{n[k]}^{*}\right\} \left\{ \widetilde{\bs{\xi}}_{n}^{*}(\bs X_{i})-\overline{\widetilde{\bs{\xi}}}_{n[k]}^{*}\right\} ^{\trans}+O_{P}(\Delta_{n})\nonumber \\
= & \frac{1}{n}\sum_{i=1}^{n}\left(A_{i}-\pi_{n[k]}\right)^{2}\1(B_{i}=k)\widetilde{\bs{\xi}}_{n}^{*}(\bs X_{i})\widetilde{\bs{\xi}}_{n}^{*}(\bs X_{i})^{\trans}\nonumber \\
 & -2\overline{\widetilde{\bs{\xi}}}_{n[k]}^{*}\frac{1}{n}\sum_{i=1}^{n}\left(A_{i}-\pi_{n[k]}\right)^{2}\1(B_{i}=k)\widetilde{\bs{\xi}}_{n}^{*}(\bs X_{i})\nonumber \\
 & +\overline{\widetilde{\bs{\xi}}}_{n[k]}^{*}\overline{\widetilde{\bs{\xi}}}_{n[k]}^{*\trans}\frac{1}{n}\sum_{i=1}^{n}\left(A_{i}-\pi_{n[k]}\right)^{2}\1(B_{i}=k)+O_{P}(\Delta_{n}),\label{eq:second decom of empirical var}
\end{align}
where $\overline{\widetilde{\bs{\xi}}}_{n[k]}^{*}:=\frac{1}{n_{[k]}}\sum_{i=1}^{n}\1(B_{i}=k)\widetilde{\bs{\xi}}_{n}^{*}(\bs X_{i})$.
Note that $\e\left[\frac{n_{[k]}}{n}\overline{\widetilde{\bs{\xi}}}_{n[k]}^{*}\right]=0$
and 
\[
\var\left(\frac{n_{[k]}}{n}\overline{\widetilde{\bs{\xi}}}_{n[k]}^{*}\right)=\var\left(\frac{1}{n}\sum_{i=1}^{n}\1(B_{i}=k)\widetilde{\bs{\xi}}_{n}^{*}(\bs X_{i})\right)\leq\frac{1}{n}\e\left[\1(B_{i}=k)\left\Vert \widetilde{\bs{\xi}}_{n}^{*}(\bs X_{i})\right\Vert ^{2}\right]=O(1/n).
\]
By Markov's inequality, we can conclude that 
\begin{equation}
\overline{\widetilde{\bs{\xi}}}_{n[k]}^{*}=O_{P}(n^{-1/2}).\label{eq:sample group - true mean}
\end{equation}
Recall that $\mathcal{C}_{n}$ denotes the $\sigma$-algebra generated
by $(A^{(n)},B^{(n)})$. We have 
\[
\e_{\mathcal{C}_{n}}\left[\left(A_{i}-\pi_{n[k]}\right)^{2}\1(B_{i}=k)\widetilde{\bs{\xi}}_{n}^{*}(\bs X_{i})\right]=\left(A_{i}-\pi_{n[k]}\right)^{2}\1(B_{i}=k)\e\left[\widetilde{\bs{\xi}}_{n}^{*}(\bs X_{i})\mid B_{i}=k\right]=0
\]
and 
\[
\var_{\mathcal{C}_{n}}\left(\left(A_{i}-\pi_{n[k]}\right)^{2}\1(B_{i}=k)\widetilde{\bs{\xi}}_{n}^{*}(\bs X_{i})\right)\leq\e\left[\left\Vert \widetilde{\bs{\xi}}_{n}^{*}(\bs X_{i})\right\Vert ^{2}\mid B_{i}=k\right]=O_{P}(1).
\]
Then it follows from Lemma \ref{lem:LLN for triangular array} that
\[
\frac{1}{n}\sum_{i=1}^{n}\left(A_{i}-\pi_{n[k]}\right)^{2}\1(B_{i}=k)\widetilde{\bs{\xi}}_{n}^{*}(\bs X_{i})=O_{P}(n^{-1/2}).
\]
This, combined with (\ref{eq:second decom of empirical var}) and
$\overline{\widetilde{\bs{\xi}}}_{n[k]}^{*}=O_{P}(n^{-1/2})$ yields
that
\begin{align}
\frac{1}{n}\sum_{i=1}^{n}\bs{\Xi}_{i,[k]}\bs{\Xi}_{i,[k]}^{\trans} & =\frac{1}{n}\sum_{i=1}^{n}\left(A_{i}-\pi_{n[k]}\right)^{2}\1(B_{i}=k)\widetilde{\bs{\xi}}_{n}^{*}(\bs X_{i})\widetilde{\bs{\xi}}_{n}^{*}(\bs X_{i})^{\trans}+O_{P}(\Delta_{n})+O_{P}(n^{-1})\nonumber \\
 & =\frac{1}{n}\sum_{i=1}^{n}\bs{\Xi}_{i,[k]}^{*}\bs{\Xi}_{i,[k]}^{*\trans}+O_{P}(\Delta_{n})+O_{P}(n^{-1}).\label{eq:xixiT-xi*xi*T}
\end{align}
As a result, we have 
\begin{equation}
\frac{1}{n}\sum_{i=1}^{n}\bs{\Xi}_{i,[k]}\bs{\Xi}_{i,[k]}^{\trans}\bs{\beta}_{[k],\mathcal{C}_{n}}=\frac{1}{n}\sum_{i=1}^{n}\bs{\Xi}_{i,[k]}^{*}\bs{\Xi}_{i,[k]}^{*\trans}\bs{\beta}_{[k],\mathcal{C}_{n}}+O_{P}(\Delta_{n})+O_{P}(n^{-1}).\label{eq:=00005Cbs=00007B=00005CXi=00007D_=00007Bi,=00005Bk=00005D=00007D=00005Cbs=00007B=00005CXi=00007D_=00007Bi,=00005Bk=00005D=00007D^=00007B=00005Ctrans=00007D=00005Cbs=00007B=00005Cbeta=00007D^=00007B*=00007D}
\end{equation}

Now, we calculate $\frac{1}{n}\sum_{i=1}^{n}\bs{\Xi}_{i,[k]}\widetilde{Y}_{i,[k]}$.
From (\ref{eq:analyze empirical cov}) we have 
\begin{align}
 & \frac{1}{n}\sum_{i=1}^{n}\frac{1-\pi_{n[k]}}{\pi_{n[k]}}\1(B_{i}=k)A_{i}(\bs{\xi}_{n}(\bs X_{i})-\overline{\bs{\xi}}_{n[k]})(Y_{i}-\overline{Y}_{1[k]})\nonumber \\
= & \frac{1}{n}\sum_{i=1}^{n}\frac{1-\pi_{n[k]}}{\pi_{n[k]}}\1(B_{i}=k)A_{i}(\bs{\xi}_{n}^{*}(\bs X_{i})-\overline{\bs{\xi}}_{n[k]}^{*})(Y_{i}(1)-\overline{Y}_{1[k]})+R_{8}-R_{9}\nonumber \\
= & \frac{1}{n}\sum_{i=1}^{n}\frac{1-\pi_{n[k]}}{\pi_{n[k]}}\1(B_{i}=k)A_{i}(\bs{\xi}_{n}^{*}(\bs X_{i})-\overline{\bs{\xi}}_{n[k]}^{*})(Y_{i}(1)-\overline{Y}_{1[k]})+O_{P}(\Delta_{n})\nonumber \\
= & \frac{1}{n}\sum_{i=1}^{n}\frac{1-\pi_{n[k]}}{\pi_{n[k]}}\1(B_{i}=k)A_{i}(\widetilde{\bs{\xi}}_{n}^{*}(\bs X_{i})-\overline{\widetilde{\bs{\xi}}}_{n[k]}^{*})(\widetilde{Y}_{i}(1)-\overline{\widetilde{Y}}_{1[k]})+O_{P}(\Delta_{n})\nonumber \\
= & \frac{1}{n}\sum_{i=1}^{n}\frac{1-\pi_{n[k]}}{\pi_{n[k]}}\1(B_{i}=k)A_{i}\widetilde{\bs{\xi}}_{n}^{*}(\bs X_{i})\widetilde{Y}_{i}(1)-\overline{\widetilde{\bs{\xi}}}_{n[k]}^{*}\frac{1}{n}\sum_{i=1}^{n}\frac{1-\pi_{n[k]}}{\pi_{n[k]}}\1(B_{i}=k)A_{i}\widetilde{Y}_{i}(1)\nonumber \\
 & -\overline{\widetilde{Y}}_{1[k]}\frac{1}{n}\sum_{i=1}^{n}\frac{1-\pi_{n[k]}}{\pi_{n[k]}}\1(B_{i}=k)A_{i}\overline{\widetilde{\bs{\xi}}}_{n[k]}^{*}+\overline{\widetilde{\bs{\xi}}}_{n[k]}^{*}\overline{\widetilde{Y}}_{1[k]}\frac{1}{n}\sum_{i=1}^{n}\frac{1-\pi_{n[k]}}{\pi_{n[k]}}\1(B_{i}=k)A_{i}+O_{P}(\Delta_{n}),\label{eq:secon decomp of empirical cov}
\end{align}
where the second step follows from $\left|R_{8}\right|\leq O_{P}(1)\times\sqrt{R_{6}}=O_{P}(\Delta_{n})$
and $\left|R_{9}\right|\leq O_{P}(1)\times\sqrt{R_{6}}=O_{P}(\Delta_{n})$
and in the third step we define $\overline{\widetilde{Y}}_{1[k]}:=\frac{1}{n_{1[k]}}\sum_{i=1}^{n}A_{i}\1(B_{i}=k)\widetilde{Y}_{i}(a)$.
Note that $\e_{\mathcal{C}_{n}}\left[\frac{1-\pi_{n[k]}}{\pi_{n[k]}}\1(B_{i}=k)A_{i}\widetilde{Y}_{i}(1)\right]=0$
and 
\[
\var_{\mathcal{C}_{n}}\left(\frac{1-\pi_{n[k]}}{\pi_{n[k]}}\1(B_{i}=k)A_{i}\widetilde{Y}_{i}(1)\right)\leq\left\{ \frac{1-\pi_{n[k]}}{\pi_{n[k]}}\right\} ^{2}\e\left[Y_{i}(a)^{2}\mid B_{i}=k\right]=O_{P}(1).
\]
By Lemma \ref{lem:LLN for triangular array} we have 
\[
\frac{1}{n}\sum_{i=1}^{n}\frac{1-\pi_{n[k]}}{\pi_{n[k]}}\1(B_{i}=k)A_{i}\widetilde{Y}_{i}(1)=O_{P}(n^{-1/2}).
\]
Similarly, we also have 
\begin{equation}
\overline{\widetilde{Y}}_{1[k]}=\frac{1}{n_{1[k]}}\sum_{i=1}^{n}A_{i}\1(B_{i}=k)\widetilde{Y}_{i}(a)=O_{P}(n^{-1/2}).\label{eq:Ytildebar}
\end{equation}
The above two displays, combined with (\ref{eq:sample group - true mean})
and (\ref{eq:secon decomp of empirical cov}), give that 
\begin{align*}
 & \frac{1}{n}\sum_{i=1}^{n}\frac{1-\pi_{n[k]}}{\pi_{n[k]}}\1(B_{i}=k)A_{i}(\bs{\xi}_{n}(\bs X_{i})-\overline{\bs{\xi}}_{n[k]})(Y_{i}-\overline{Y}_{1[k]})\\
= & \frac{1}{n}\sum_{i=1}^{n}\frac{1-\pi_{n[k]}}{\pi_{n[k]}}\1(B_{i}=k)A_{i}\widetilde{\bs{\xi}}_{n}^{*}(\bs X_{i})\widetilde{Y}_{i}(1)+O_{P}(n^{-1})+O_{P}(\Delta_{n}).
\end{align*}
Similarly, we can also derive that
\begin{align*}
 & \frac{1}{n}\sum_{i=1}^{n}\frac{\pi_{n[k]}}{1-\pi_{n[k]}}\1(B_{i}=k)(1-A_{i})(\bs{\xi}_{n}(\bs X_{i})-\overline{\bs{\xi}}_{n[k]})(Y_{i}-\overline{Y}_{0[k]})\\
= & \frac{1}{n}\sum_{i=1}^{n}\frac{\pi_{n[k]}}{1-\pi_{n[k]}}\1(B_{i}=k)(1-A_{i})\widetilde{\bs{\xi}}_{n}^{*}(\bs X_{i})\widetilde{Y}_{i}(0)+O_{P}(n^{-1})+O_{P}(\Delta_{n}).
\end{align*}
Note that 
\begin{align*}
 & \frac{1}{n}\sum_{i=1}^{n}\bs{\Xi}_{i,[k]}\widetilde{Y}_{i,[k]}\\
= & \frac{1}{n}\sum_{i=1}^{n}\frac{1-\pi_{n[k]}}{\pi_{n[k]}}\1(B_{i}=k)A_{i}(\bs{\xi}_{n}(\bs X_{i})-\overline{\bs{\xi}}_{n[k]})(Y_{i}-\overline{Y}_{1[k]})+\\
 & \qquad+\frac{1}{n}\sum_{i=1}^{n}\frac{\pi_{n[k]}}{1-\pi_{n[k]}}\1(B_{i}=k)(1-A_{i})(\bs{\xi}_{n}(\bs X_{i})-\overline{\bs{\xi}}_{n[k]})(Y_{i}-\overline{Y}_{0[k]}),
\end{align*}
we have 
\begin{align*}
 & \frac{1}{n}\sum_{i=1}^{n}\bs{\Xi}_{i,[k]}\widetilde{Y}_{i,[k]}\\
= & \frac{1}{n}\sum_{i=1}^{n}\frac{1-\pi_{n[k]}}{\pi_{n[k]}}\1(B_{i}=k)A_{i}\widetilde{\bs{\xi}}_{n}^{*}(\bs X_{i})\widetilde{Y}_{i}(1)\\
 & \qquad+\frac{1}{n}\sum_{i=1}^{n}\frac{\pi_{n[k]}}{1-\pi_{n[k]}}\1(B_{i}=k)(1-A_{i})\widetilde{\bs{\xi}}_{n}^{*}(\bs X_{i})\widetilde{Y}_{i}(0)+O_{P}(n^{-1})+O_{P}(\Delta_{n})\\
= & \frac{1}{n}\sum_{i=1}^{n}\bs{\Xi}_{i,[k]}^{*}\widetilde{Y}_{i,[k]}^{*}+O_{P}(n^{-1})+O_{P}(\Delta_{n}).
\end{align*}

Combining this with (\ref{eq:=00005Cbs=00007B=00005CXi=00007D_=00007Bi,=00005Bk=00005D=00007D=00005Cbs=00007B=00005CXi=00007D_=00007Bi,=00005Bk=00005D=00007D^=00007B=00005Ctrans=00007D=00005Cbs=00007B=00005Cbeta=00007D^=00007B*=00007D}),
we have 
\begin{align*}
\frac{1}{n}\sum_{i=1}^{n}\bs{\Xi}_{i,[k]}\epsilon_{i,[k]} & =\frac{1}{n}\sum_{i=1}^{n}\bs{\Xi}_{i,[k]}^{*}\widetilde{Y}_{i,[k]}^{*}-\frac{1}{n}\sum_{i=1}^{n}\bs{\Xi}_{i,[k]}^{*}\bs{\Xi}_{i,[k]}^{*\trans}\bs{\beta}_{[k],\mathcal{C}_{n}}+O_{P}(\Delta_{n})+O_{P}(n^{-1})\\
 & =\frac{1}{n}\sum_{i=1}^{n}\bs{\Xi}_{i,[k]}^{*}\epsilon_{i,[k]}^{*}+O_{P}(\Delta_{n})+O_{P}(n^{-1}).
\end{align*}
Note that 
\begin{align*}
 & \e_{\mathcal{C}_{n}}\left[\left\Vert \bs{\Xi}_{i,[k]}^{*}\epsilon_{i,[k]}^{*}-\e_{\mathcal{C}_{n}}\left[\bs{\Xi}_{i,[k]}^{*}\epsilon_{i,[k]}^{*}\right]\right\Vert ^{2}\right]\leq\e_{\mathcal{C}_{n}}\left[\left\Vert \bs{\Xi}_{i,[k]}^{*}\epsilon_{i,[k]}^{*}\right\Vert ^{2}\right]\\
\leq & O(1)\times\left\{ \sqrt{\e\left[\left\Vert \widetilde{\bs{\xi}}_{n}^{*}\right\Vert ^{4}\mid B_{i}=k\right]}\sqrt{\max_{a\in\{0,1\}}\e\left[\left|Y_{i}(a)\right|^{4}\mid B_{i}=k\right]}+\e\left[\left\Vert \widetilde{\bs{\xi}}_{n}^{*}\right\Vert ^{4}\mid B_{i}=k\right]\right\} \\
= & O_{P}(1).
\end{align*}
By Lemma \ref{lem:LLN for triangular array} we have 
\[
\left\Vert \frac{1}{n}\sum_{i=1}^{n}\bs{\Xi}_{i,[k]}^{*}\epsilon_{i,[k]}^{*}-\frac{1}{n}\sum_{i=1}^{n}\e_{\mathcal{C}_{n}}\left[\bs{\Xi}_{i,[k]}^{*}\epsilon_{i,[k]}^{*}\right]\right\Vert =o_{P}(n^{-1/2}).
\]
Besides, from (\ref{eq:condi exp of Xiepsilon}) we have 
\begin{align*}
\frac{1}{n}\sum_{i=1}^{n}\e_{\mathcal{C}_{n}}\left[\bs{\Xi}_{i,[k]}^{*}\epsilon_{i,[k]}^{*}\right] & =0
\end{align*}
Therefore, we have $\frac{1}{n}\sum_{i=1}^{n}\bs{\Xi}_{i,[k]}^{*}\epsilon_{i,[k]}^{*}=o_{P}(n^{-1/2})$
and thus 
\begin{equation}
\frac{1}{n}\sum_{i=1}^{n}\bs{\Xi}_{i}\epsilon_{i}=\frac{1}{n}\sum_{i=1}^{n}\bs{\Xi}_{i,[k]}^{*}\epsilon_{i,[k]}^{*}+O_{P}(\Delta_{n})+O_{P}(n^{-1})=O_{P}(\Delta_{n})+o_{P}(n^{-1/2}).\label{eq:xi epsilon =00003D o(1)}
\end{equation}

Recall (\ref{eq:xixiT-xi*xi*T}), we have 
\[
\frac{1}{n}\sum_{i=1}^{n}\bs{\Xi}_{i,[k]}\bs{\Xi}_{i,[k]}^{\trans}=\frac{1}{n}\sum_{i=1}^{n}\bs{\Xi}_{i,[k]}^{*}\bs{\Xi}_{i,[k]}^{*\trans}+O_{P}(\Delta_{n})+O_{P}(n^{-1}).
\]
Note that 
\begin{align*}
 & \e_{\mathcal{C}_{n}}\left[\left\Vert \bs{\Xi}_{i,[k]}^{*}\bs{\Xi}_{i,[k]}^{*\trans}-\e_{\mathcal{C}_{n}}\left[\bs{\Xi}_{i,[k]}^{*}\bs{\Xi}_{i,[k]}^{*\trans}\right]\right\Vert _{F}^{2}\right]\leq2\e_{\mathcal{C}_{n}}\left[\left\Vert \bs{\Xi}_{i,[k]}^{*}\bs{\Xi}_{i,[k]}^{*\trans}\right\Vert _{F}^{2}\right]+2\left\Vert \e_{\mathcal{C}_{n}}\left[\bs{\Xi}_{i,[k]}^{*}\bs{\Xi}_{i,[k]}^{*\trans}\right]\right\Vert _{F}^{2}\\
 & \leq4\e_{\mathcal{C}_{n}}\left[\left\Vert \bs{\Xi}_{i,[k]}^{*}\right\Vert ^{4}\right]\leq4\e\left[\left\Vert \widetilde{\bs{\xi}}_{n}^{*}(\bs X_{i})\right\Vert ^{4}\mid B_{i}=k\right]=O_{P}(1),
\end{align*}
where the last step follows from Assumption \ref{assu:conditions on =00005Cxi-general D(v)}.
By Lemma \ref{lem:LLN for triangular array} and $\frac{1}{n}\sum_{i=1}^{n}\bs{\Xi}_{i,[k]}\bs{\Xi}_{i,[k]}^{\trans}=\frac{1}{n}\sum_{i=1}^{n}\bs{\Xi}_{i,[k]}^{*}\bs{\Xi}_{i,[k]}^{*\trans}+O_{P}(\Delta_{n})+O_{P}(n^{-1})$
we have 
\begin{equation}
\left\Vert \frac{1}{n}\sum_{i=1}^{n}\bs{\Xi}_{i,[k]}\bs{\Xi}_{i,[k]}^{\trans}-\mathbf{\Sigma}_{[k]}^{\mathcal{C}_{n}}\right\Vert =O_{P}(n^{-1/2})+O_{P}(\Delta_{n}),\label{eq:xixiT-condition xixiT}
\end{equation}
where we define $\mathbf{\Sigma}_{[k]}^{\mathcal{C}_{n}}:=\frac{1}{n}\sum_{i=1}^{n}\e_{\mathcal{C}_{n}}\left[\bs{\Xi}_{i,[k]}^{*}\bs{\Xi}_{i,[k]}^{*\trans}\right]$.

Now, it follows from (\ref{eq:first bias}), (\ref{eq:sample ave of Xi to Xi*}),
(\ref{eq:xi epsilon =00003D o(1)}) and (\ref{eq:xixiT-condition xixiT})
that 
\begin{align}
 & \widehat{\bs{\lambda}}_{lin}^{\trans}\frac{1}{n}\sum_{i=1}^{n}\bs{\Xi}_{i}\epsilon_{i}=-\frac{1}{n}\sum_{i=1}^{n}\bs{\Xi}_{i}^{\trans}\left\{ \frac{1}{n}\sum_{i=1}^{n}\bs{\Xi}_{i}\bs{\Xi}_{i}^{\trans}\right\} ^{-1}\frac{1}{n}\sum_{i=1}^{n}\bs{\Xi}_{i}\epsilon_{i}\nonumber \\
= & -\sum_{k=1}^{K}\left\{ \frac{1}{n}\sum_{i=1}^{n}\bs{\Xi}_{i,[k]}^{*\trans}+o_{P}(n^{-1})\right\} \left\{ \left(\mathbf{\Sigma}_{[k]}^{\mathcal{C}_{n}}\right)^{-1}+O_{P}(n^{-1/2})+O_{P}(\Delta_{n})\right\} \nonumber \\
 & \quad\times\left\{ \frac{1}{n}\sum_{i=1}^{n}\bs{\Xi}_{i,[k]}^{*}\epsilon_{i,[k]}^{*}+O_{P}(\Delta_{n})+O_{P}(n^{-1})\right\} \nonumber \\
= & \underbrace{-\sum_{k=1}^{K}\frac{1}{n}\sum_{i=1}^{n}\bs{\Xi}_{i,[k]}^{*\trans}\left(\mathbf{\Sigma}_{[k]}^{\mathcal{C}_{n}}\right)^{-1}\frac{1}{n}\sum_{i=1}^{n}\bs{\Xi}_{i,[k]}^{*}\epsilon_{i,[k]}^{*}}_{\text{bias}_{1}}+O_{P}(\Delta_{n}n^{-1/2})+o_{P}(n^{-1}),\label{eq:first bias final}
\end{align}
where the last step follows from $\frac{1}{n}\sum_{i=1}^{n}\bs{\Xi}_{i,[k]}^{*\trans}=O_{P}(n^{-1/2})$,
$\left\Vert \left\{ \frac{1}{n}\sum_{i=1}^{n}\bs{\Xi}_{i}\bs{\Xi}_{i}^{\trans}\right\} ^{-1}\right\Vert =O(1)$
and (\ref{eq:xi epsilon =00003D o(1)}).

\textbf{Analysis of $\frac{1}{2n}\sum_{i=1}^{n}\frac{\rho^{\prime\prime\prime}(0)}{\rho^{\prime\prime}(0)^{2}}\widehat{\bs{\lambda}}_{lin}^{\trans}\bs{\Xi}_{i}\bs{\Xi}_{i}^{\trans}\widehat{\bs{\lambda}}_{lin}\epsilon_{i}$.}
By the definition of $\widehat{\bs{\lambda}}_{lin}$ we have {\small
\begin{align}
 & \frac{1}{2n}\sum_{i=1}^{n}\frac{\rho^{\prime\prime\prime}(0)}{\rho^{\prime\prime}(0)^{2}}\widehat{\bs{\lambda}}_{lin}^{\trans}\bs{\Xi}_{i}\bs{\Xi}_{i}^{\trans}\widehat{\bs{\lambda}}_{lin}\epsilon_{i}\nonumber \\
= & \frac{\rho^{\prime\prime\prime}(0)}{2\rho^{\prime\prime}(0)^{2}}\frac{1}{n}\sum_{i=1}^{n}\bs{\Xi}_{i}^{\trans}\left(\frac{1}{n}\sum_{i=1}^{n}\bs{\Xi}_{i}\bs{\Xi}_{i}^{\trans}\right)^{-1}\frac{1}{n}\sum_{i=1}^{n}\bs{\Xi}_{i}\bs{\Xi}_{i}^{\trans}\epsilon_{i}\left(\frac{1}{n}\sum_{i=1}^{n}\bs{\Xi}_{i}\bs{\Xi}_{i}^{\trans}\right)^{-1}\frac{1}{n}\sum_{i=1}^{n}\bs{\Xi}_{i}\nonumber \\
= & \frac{\rho^{\prime\prime\prime}(0)}{2\rho^{\prime\prime}(0)^{2}}\sum_{k=1}^{K}\frac{1}{n}\sum_{i=1}^{n}\bs{\Xi}_{i,[k]}^{\trans}\left(\frac{1}{n}\sum_{i=1}^{n}\bs{\Xi}_{i,[k]}\bs{\Xi}_{i,[k]}^{\trans}\right)^{-1}\frac{1}{n}\sum_{i=1}^{n}\bs{\Xi}_{i,[k]}\bs{\Xi}_{i,[k]}^{\trans}\epsilon_{i,[k]}\left(\frac{1}{n}\sum_{i=1}^{n}\bs{\Xi}_{i,[k]}\bs{\Xi}_{i,[k]}^{\trans}\right)^{-1}\frac{1}{n}\sum_{i=1}^{n}\bs{\Xi}_{i,[k]}.\label{eq:decomposition of second bias}
\end{align}

}We deal with the term $\frac{1}{n}\sum_{i=1}^{n}\bs{\Xi}_{i,[k]}\bs{\Xi}_{i,[k]}^{\trans}\epsilon_{i,[k]}$.
Note that 
\[
\frac{1}{n}\sum_{i=1}^{n}\bs{\Xi}_{i,[k]}\bs{\Xi}_{i,[k]}^{\trans}\epsilon_{i,[k]}=\frac{1}{n}\sum_{i=1}^{n}\bs{\Xi}_{i,[k]}\bs{\Xi}_{i,[k]}^{\trans}\widetilde{Y}_{i,[k]}-\frac{1}{n}\sum_{i=1}^{n}\bs{\Xi}_{i,[k]}\bs{\Xi}_{i,[k]}^{\trans}\bs{\Xi}_{i,[k]}^{\trans}\bs{\beta}_{[k],\mathcal{C}_{n}}.
\]
We first analyze the term $\frac{1}{n}\sum_{i=1}^{n}\bs{\Xi}_{i,[k]}\bs{\Xi}_{i,[k]}^{\trans}\bs{\Xi}_{i,[k]}^{\trans}\bs{\beta}_{[k],\mathcal{C}_{n}}$.
By the Cauchy-Schwarz inequality, we have 
\begin{align*}
 & \sup_{\bs{\alpha}\in S^{d-1}}\left|\frac{1}{n}\sum_{i=1}^{n}\bs{\alpha}^{\trans}\bs{\Xi}_{i,[k]}\bs{\Xi}_{i,[k]}^{\trans}\bs{\alpha}\bs{\Xi}_{i,[k]}^{\trans}\bs{\beta}_{[k],\mathcal{C}_{n}}-\frac{1}{n}\sum_{i=1}^{n}\bs{\alpha}^{\trans}\bs{\Xi}_{i,[k]}^{*}\bs{\Xi}_{i,[k]}^{*\trans}\bs{\alpha}\bs{\Xi}_{i,[k]}^{*\trans}\bs{\beta}_{[k],\mathcal{C}_{n}}\right|\\
= & \sup_{\bs{\alpha}\in S^{d-1}}\left|\frac{1}{n}\sum_{i=1}^{n}\bs{\alpha}^{\trans}\left(\bs{\Xi}_{i,[k]}-\bs{\Xi}_{i,[k]}^{*}\right)\bs{\alpha}^{\trans}\left(\bs{\Xi}_{i,[k]}+\bs{\Xi}_{i,[k]}^{*}\right)\bs{\Xi}_{i,[k]}^{\trans}\bs{\beta}_{[k],\mathcal{C}_{n}}\right|+\\
 & \quad\sup_{\bs{\alpha}\in S^{d-1}}\left|\frac{1}{n}\sum_{i=1}^{n}\left(\bs{\alpha}^{\trans}\bs{\Xi}_{i,[k]}^{*}\right)^{2}\left\{ \bs{\Xi}_{i,[k]}^{\trans}-\bs{\Xi}_{i,[k]}^{*\trans}\right\} \bs{\beta}_{[k],\mathcal{C}_{n}}\right|\\
\leq & \sup_{\bs{\alpha}\in S^{d-1}}\sqrt{\frac{1}{n}\sum_{i=1}^{n}\left\{ \bs{\alpha}^{\trans}\bs{\Xi}_{i,[k]}-\bs{\alpha}^{\trans}\bs{\Xi}_{i,[k]}^{*}\right\} ^{2}}\times\sqrt{\frac{1}{n}\sum_{i=1}^{n}\left\{ \bs{\alpha}^{\trans}\bs{\Xi}_{i,[k]}+\bs{\alpha}^{\trans}\bs{\Xi}_{i,[k]}^{*}\right\} ^{2}\left\{ \bs{\Xi}_{i,[k]}^{\trans}\bs{\beta}_{[k],\mathcal{C}_{n}}\right\} ^{2}}\\
 & \quad+\sup_{\bs{\alpha}\in S^{d-1}}\sqrt{\frac{1}{n}\sum_{i=1}^{n}\left\{ \bs{\Xi}_{i,[k]}^{\trans}\bs{\beta}_{[k],\mathcal{C}_{n}}-\bs{\Xi}_{i,[k]}^{*\trans}\bs{\beta}_{[k],\mathcal{C}_{n}}\right\} ^{2}}\times\sqrt{\frac{1}{n}\sum_{i=1}^{n}\left(\bs{\alpha}^{\trans}\bs{\Xi}_{i,[k]}^{*}\right)^{4}}\\
\leq & O_{P}(1)\times\sqrt{\frac{1}{n}\sum_{i=1}^{n}\left\Vert \bs{\Xi}_{i,[k]}-\bs{\Xi}_{i,[k]}^{*}\right\Vert ^{2}}\times\sqrt{\frac{1}{n}\sum_{i=1}^{n}\left\{ \left\Vert \bs{\Xi}_{i,[k]}\right\Vert ^{4}+\left\Vert \bs{\Xi}_{i,[k]}^{*}\right\Vert ^{2}\left\Vert \bs{\Xi}_{i,[k]}\right\Vert ^{2}\right\} }\\
 & \quad+O_{P}(1)\times\sqrt{\frac{1}{n}\sum_{i=1}^{n}\left\Vert \bs{\Xi}_{i,[k]}-\bs{\Xi}_{i,[k]}^{*}\right\Vert ^{2}}\times\sqrt{\frac{1}{n}\sum_{i=1}^{n}\left\Vert \bs{\Xi}_{i,[k]}^{*}\right\Vert ^{4}}.
\end{align*}
First, we have 
\begin{align}
 & \frac{1}{n}\sum_{i=1}^{n}\left\Vert \bs{\Xi}_{i,[k]}-\bs{\Xi}_{i,[k]}^{*}\right\Vert ^{2}=\frac{1}{n}\sum_{i=1}^{n}(A_{i}-\pi_{n[k]})^{2}\1(B_{i}=k)\left\Vert \bs{\xi}_{n}(\bs X_{i})-\overline{\bs{\xi}}_{n[k]}-\widetilde{\bs{\xi}}_{n}^{*}(\bs X_{i})\right\Vert ^{2}\nonumber \\
\leq & \frac{1}{n}\sum_{i=1}^{n}\1(B_{i}=k)\left\Vert \bs{\xi}_{n}(\bs X_{i})-\overline{\bs{\xi}}_{n[k]}-\widetilde{\bs{\xi}}_{n}^{*}(\bs X_{i})\right\Vert ^{2}\nonumber \\
\leq & \frac{2}{n}\sum_{i=1}^{n}\1(B_{i}=k)\left\Vert \bs{\xi}_{n}(\bs X_{i})-\overline{\bs{\xi}}_{n[k]}-\bs{\xi}_{n}^{*}(\bs X_{i})-\overline{\bs{\xi}}_{n[k]}^{*}\right\Vert ^{2}+2\left\Vert \overline{\bs{\xi}}_{n[k]}^{*}-\e\left[\bs{\xi}_{n}^{*}(\bs X_{i})\mid B_{i}=k\right]\right\Vert ^{2}\nonumber \\
\leq & 2R_{6}+o_{P}(1)=o_{P}(1),\label{eq:xi-xi^*}
\end{align}
where the last two inequalities follows from (\ref{eq:xin* mean - true mean})
and $R_{6}=O_{P}(\Delta_{n}^{2})=o_{P}(1)$. Second, we have 
\[
\frac{1}{n}\sum_{i=1}^{n}\left\Vert \bs{\Xi}_{i,[k]}^{*}\right\Vert ^{4}\leq\frac{1}{n}\sum_{i=1}^{n}\1(B_{i}=k)\left\Vert \widetilde{\bs{\xi}}_{n}^{*}(\bs X_{i})\right\Vert ^{4}=O_{P}(1)
\]
by Markov's inequality and Assumption \ref{assu:conditions on =00005Cxi-general D(v)}.
Third, we have 
\begin{align}
\frac{1}{n}\sum_{i=1}^{n}\left\Vert \bs{\Xi}_{i,[k]}\right\Vert ^{4} & \leq\frac{1}{n}\sum_{i=1}^{n}\1(B_{i}=k)\left\Vert \bs{\xi}_{n}(\bs X_{i})-\overline{\bs{\xi}}_{n[k]}\right\Vert ^{4}\nonumber \\
 & \leq\frac{8}{n}\sum_{i=1}^{n}\1(B_{i}=k)\left\Vert \bs{\xi}_{n}(\bs X_{i})\right\Vert ^{4}+\frac{8}{n}\sum_{i=1}^{n}\1(B_{i}=k)\left\Vert \overline{\bs{\xi}}_{n[k]}\right\Vert ^{4}\nonumber \\
 & \leq\frac{8}{n}\sum_{i=1}^{n}\left\Vert \bs{\xi}_{n}(\bs X_{i})\right\Vert ^{4}+8\left|\frac{1}{n_{[k]}}\sum_{i=1}^{n}\1(B_{i}=k)\left\Vert \bs{\xi}_{n}(\bs X_{i})\right\Vert \right|^{4}\nonumber \\
 & \leq\frac{8}{n}\sum_{i=1}^{n}\left\Vert \bs{\xi}_{n}(\bs X_{i})\right\Vert ^{4}+\frac{8}{n_{[k]}}\sum_{i=1}^{n}\1(B_{i}=k)\left\Vert \bs{\xi}_{n}(\bs X_{i})\right\Vert ^{4}=O_{P}(1),\label{eq:emprical 4th moment bounded}
\end{align}
where the last inequality follows from the Power-Mean inequality and
the last equality follows from Assumption \ref{assu:conditions on =00005Cxi-general D(v)}.
Then, it follows from the Cauchy-Schwarz inequality that 
\[
\frac{1}{n}\sum_{i=1}^{n}\left\Vert \bs{\Xi}_{i,[k]}^{*}\right\Vert ^{2}\left\Vert \bs{\Xi}_{i,[k]}\right\Vert ^{2}\leq\sqrt{\frac{1}{n}\sum_{i=1}^{n}\left\Vert \bs{\Xi}_{i,[k]}^{*}\right\Vert ^{4}\frac{1}{n}\sum_{i=1}^{n}\left\Vert \bs{\Xi}_{i,[k]}\right\Vert ^{4}}=O_{P}(1).
\]
Combining these three results, we have 
\begin{align}
 & \left\Vert \frac{1}{n}\sum_{i=1}^{n}\bs{\Xi}_{i,[k]}\bs{\Xi}_{i,[k]}^{\trans}\bs{\Xi}_{i,[k]}^{\trans}\bs{\beta}_{[k],\mathcal{C}_{n}}-\frac{1}{n}\sum_{i=1}^{n}\bs{\Xi}_{i,[k]}^{*}\bs{\Xi}_{i,[k]}^{*\trans}\bs{\Xi}_{i,[k]}^{*\trans}\bs{\beta}_{[k],\mathcal{C}_{n}}\right\Vert \nonumber \\
= & \sup_{\bs{\alpha}\in S^{d-1}}\left|\frac{1}{n}\sum_{i=1}^{n}\bs{\alpha}^{\trans}\bs{\Xi}_{i,[k]}\bs{\Xi}_{i,[k]}^{\trans}\bs{\alpha}\bs{\Xi}_{i,[k]}^{\trans}\bs{\beta}_{[k],\mathcal{C}_{n}}-\frac{1}{n}\sum_{i=1}^{n}\bs{\alpha}^{\trans}\bs{\Xi}_{i,[k]}^{*}\bs{\Xi}_{i,[k]}^{*\trans}\bs{\alpha}\bs{\Xi}_{i,[k]}^{*\trans}\bs{\beta}_{[k],\mathcal{C}_{n}}\right|\nonumber \\
= & o_{P}(1).\label{eq:xixixi-xi*xi*xi*}
\end{align}

Next, we analyze the term $\frac{1}{n}\sum_{i=1}^{n}\bs{\Xi}_{i,[k]}\bs{\Xi}_{i,[k]}^{\trans}\widetilde{Y}_{i,[k]}$.
By the Cauchy-Schwarz inequality, we have 
\begin{align*}
 & \sup_{\bs{\alpha}\in S^{d-1}}\left|\frac{1}{n}\sum_{i=1}^{n}\bs{\alpha}^{\trans}\bs{\Xi}_{i,[k]}\bs{\Xi}_{i,[k]}^{\trans}\bs{\alpha}\widetilde{Y}_{i,[k]}-\frac{1}{n}\sum_{i=1}^{n}\bs{\alpha}^{\trans}\bs{\Xi}_{i,[k]}^{*}\bs{\Xi}_{i,[k]}^{*\trans}\bs{\alpha}\widetilde{Y}_{i,[k]}^{*}\right|\\
= & \sup_{\bs{\alpha}\in S^{d-1}}\left|\frac{1}{n}\sum_{i=1}^{n}\bs{\alpha}^{\trans}\left(\bs{\Xi}_{i,[k]}-\bs{\Xi}_{i,[k]}^{*}\right)\bs{\alpha}^{\trans}\left(\bs{\Xi}_{i,[k]}+\bs{\Xi}_{i,[k]}^{*}\right)\widetilde{Y}_{i,[k]}\right|\\
 & \quad+\sup_{\bs{\alpha}\in S^{d-1}}\left|\frac{1}{n}\sum_{i=1}^{n}\left(\bs{\alpha}^{\trans}\bs{\Xi}_{i,[k]}^{*}\right)^{2}\left(\widetilde{Y}_{i,[k]}-\widetilde{Y}_{i,[k]}^{*}\right)\right|\\
\leq & \sup_{\bs{\alpha}\in S^{d-1}}\sqrt{\frac{1}{n}\sum_{i=1}^{n}\left\{ \bs{\alpha}^{\trans}\bs{\Xi}_{i,[k]}-\bs{\alpha}^{\trans}\bs{\Xi}_{i,[k]}^{*}\right\} ^{2}}\times\sqrt{\frac{1}{n}\sum_{i=1}^{n}\left\{ \bs{\alpha}^{\trans}\bs{\Xi}_{i,[k]}+\bs{\alpha}^{\trans}\bs{\Xi}_{i,[k]}^{*}\right\} ^{2}\left|\widetilde{Y}_{i,[k]}\right|^{2}}\\
 & \quad+\sup_{\bs{\alpha}\in S^{d-1}}\sqrt{\frac{1}{n}\sum_{i=1}^{n}\left\{ \widetilde{Y}_{i,[k]}-\widetilde{Y}_{i,[k]}^{*}\right\} ^{2}}\times\sqrt{\frac{1}{n}\sum_{i=1}^{n}\left(\bs{\alpha}^{\trans}\bs{\Xi}_{i,[k]}^{*}\right)^{4}}\\
\leq & O(1)\times\sqrt{\frac{1}{n}\sum_{i=1}^{n}\left\Vert \bs{\Xi}_{i,[k]}-\bs{\Xi}_{i,[k]}^{*}\right\Vert ^{2}}\times\sqrt{\frac{1}{n}\sum_{i=1}^{n}\left\{ \left\Vert \bs{\Xi}_{i,[k]}\right\Vert ^{2}\left|\widetilde{Y}_{i,[k]}\right|^{2}+\left\Vert \bs{\Xi}_{i,[k]}^{*}\right\Vert ^{2}\left|\widetilde{Y}_{i,[k]}\right|^{2}\right\} }\\
 & \quad+O(1)\times\sqrt{\frac{1}{n}\sum_{i=1}^{n}\left|\widetilde{Y}_{i,[k]}-\widetilde{Y}_{i,[k]}^{*}\right|^{2}}\times\sqrt{\frac{1}{n}\sum_{i=1}^{n}\left\Vert \bs{\Xi}_{i,[k]}^{*}\right\Vert ^{4}}.
\end{align*}
First, we have 
\begin{align*}
 & \frac{1}{n}\sum_{i=1}^{n}\left|\widetilde{Y}_{i,[k]}-\widetilde{Y}_{i,[k]}^{*}\right|^{2}\\
= & \frac{1}{n}\sum_{i=1}^{n}\left\{ \frac{A_{i}}{\pi_{n[k]}}\1(B_{i}=k)(Y_{i}-\overline{Y}_{1[k]}-\widetilde{Y}_{i}(1))-\frac{1-A_{i}}{1-\pi_{n[k]}}\1(B_{i}=k)(Y_{i}-\overline{Y}_{0[k]}-\widetilde{Y}_{i}(0))\right\} ^{2}\\
\leq & \frac{1}{\pi_{n[k]}^{2}}\frac{2}{n}\sum_{i=1}^{n}\left\{ \overline{Y}_{1[k]}-\e\left[Y_{i}(1)\mid B_{i}=k\right]\right\} ^{2}+\frac{1}{(1-\pi_{n[k]})^{2}}\frac{2}{n}\sum_{i=1}^{n}\left\{ \overline{Y}_{0[k]}-\e\left[Y_{i}(0)\mid B_{i}=k\right]\right\} ^{2}\\
= & \frac{2}{\pi_{n[k]}^{2}}\overline{\widetilde{Y}}_{1[k]}^{2}+\frac{2}{(1-\pi_{n[k]})^{2}}\overline{\widetilde{Y}}_{0[k]}^{2}=O_{P}(n^{-1/2}),
\end{align*}
where the last step follows from (\ref{eq:Ytildebar}). Second, we
have 
\begin{align}
\frac{1}{n}\sum_{i=1}^{n}\left|\widetilde{Y}_{i,[k]}\right|^{4} & \leq\frac{8}{\pi_{n[k]}^{4}}\frac{1}{n}\sum_{i=1}^{n}(Y_{i}(1)-\overline{Y}_{1[k]})^{4}+\frac{8}{(1-\pi_{n[k]})^{4}}\frac{1}{n}\sum_{i=1}^{n}(Y_{i}(0)-\overline{Y}_{1[k]})^{4}\nonumber \\
 & \leq\frac{64}{\pi_{n[k]}^{4}}\frac{1}{n}\sum_{i=1}^{n}Y_{i}(1)^{4}+\frac{64}{\pi_{n[k]}^{4}}\overline{Y}_{1[k]}^{4}+\frac{64}{(1-\pi_{n[k]})^{4}}\frac{1}{n}\sum_{i=1}^{n}Y_{i}(0)^{4}+\frac{64}{(1-\pi_{n[k]})^{4}}\overline{Y}_{0[k]}^{4}\nonumber \\
 & =O_{P}(1),\label{eq:sample average of Y^4}
\end{align}
where the last step follows from $\max_{a\in\{0,1\}}\e\left[\left|Y_{i}(a)\right|^{4}\right]<\infty$,
$\overline{Y}_{1[k]}=O_{P}(1)$ and $\overline{Y}_{0[k]}=O_{P}(1)$.
Recall that 
\[
\frac{1}{n}\sum_{i=1}^{n}\left\Vert \bs{\Xi}_{i,[k]}^{*}\right\Vert ^{4}=O_{P}(1)\text{ and }\frac{1}{n}\sum_{i=1}^{n}\left\Vert \bs{\Xi}_{i,[k]}\right\Vert ^{4}=O_{P}(1),
\]
by the Cauchy-Schwarz inequality we have 
\begin{align*}
 & \sqrt{\frac{1}{n}\sum_{i=1}^{n}\left\{ \left\Vert \bs{\Xi}_{i,[k]}\right\Vert ^{2}\left|\widetilde{Y}_{i,[k]}\right|^{2}+\left\Vert \bs{\Xi}_{i,[k]}^{*}\right\Vert ^{2}\left|\widetilde{Y}_{i,[k]}\right|^{2}\right\} }\\
\leq & \sqrt{\sqrt{\frac{1}{n}\sum_{i=1}^{n}\left\Vert \bs{\Xi}_{i,[k]}\right\Vert ^{4}\frac{1}{n}\sum_{i=1}^{n}\left|\widetilde{Y}_{i,[k]}\right|^{4}}+\sqrt{\frac{1}{n}\sum_{i=1}^{n}\left\Vert \bs{\Xi}_{i,[k]}^{*}\right\Vert ^{4}\frac{1}{n}\sum_{i=1}^{n}\left|\widetilde{Y}_{i,[k]}\right|^{4}}}=O_{P}(1).
\end{align*}
Combining these results with (\ref{eq:xi-xi^*}), we can obtain that
\begin{align*}
 & \left\Vert \frac{1}{n}\sum_{i=1}^{n}\bs{\Xi}_{i,[k]}\bs{\Xi}_{i,[k]}^{\trans}\widetilde{Y}_{i,[k]}-\frac{1}{n}\sum_{i=1}^{n}\bs{\Xi}_{i,[k]}^{*}\bs{\Xi}_{i,[k]}^{*\trans}\widetilde{Y}_{i,[k]}^{*}\right\Vert \\
= & \sup_{\bs{\alpha}\in S^{d-1}}\left|\frac{1}{n}\sum_{i=1}^{n}\bs{\alpha}^{\trans}\bs{\Xi}_{i,[k]}\bs{\Xi}_{i,[k]}^{\trans}\bs{\alpha}\widetilde{Y}_{i,[k]}-\frac{1}{n}\sum_{i=1}^{n}\bs{\alpha}^{\trans}\bs{\Xi}_{i,[k]}^{*}\bs{\Xi}_{i,[k]}^{*\trans}\bs{\alpha}\widetilde{Y}_{i,[k]}^{*}\right|=o_{P}(1).
\end{align*}
Recall (\ref{eq:xixixi-xi*xi*xi*}), then we have 
\begin{equation}
\left\Vert \frac{1}{n}\sum_{i=1}^{n}\bs{\Xi}_{i,[k]}\bs{\Xi}_{i,[k]}^{\trans}\epsilon_{i,[k]}-\frac{1}{n}\sum_{i=1}^{n}\bs{\Xi}_{i,[k]}^{*}\bs{\Xi}_{i,[k]}^{*\trans}\epsilon_{i,[k]}^{*}\right\Vert =o_{P}(1).\label{eq:xixi=00005Cep-xi*/xi*=00005Cep*}
\end{equation}
Note that 
\begin{align*}
 & \e_{\mathcal{C}_{n}}\left[\left\Vert \bs{\Xi}_{i,[k]}^{*}\bs{\Xi}_{i,[k]}^{*\trans}\epsilon_{i,[k]}^{*}-\e_{\mathcal{C}_{n}}\left[\bs{\Xi}_{i,[k]}^{*}\bs{\Xi}_{i,[k]}^{*\trans}\epsilon_{i,[k]}^{*}\right]\right\Vert _{F}^{4/3}\right]\\
\leq & 2^{4/3-1}\e_{\mathcal{C}_{n}}\left[\left\Vert \bs{\Xi}_{i,[k]}^{*}\bs{\Xi}_{i,[k]}^{*\trans}\epsilon_{i,[k]}^{*}\right\Vert _{F}^{4/3}+\left\Vert \e_{\mathcal{C}_{n}}\left[\bs{\Xi}_{i,[k]}^{*}\bs{\Xi}_{i,[k]}^{*\trans}\epsilon_{i,[k]}^{*}\right]\right\Vert _{F}^{4/3}\right]\\
\leq & 2\e_{\mathcal{C}_{n}}\left[\left\Vert \bs{\Xi}_{i,[k]}^{*}\bs{\Xi}_{i,[k]}^{*\trans}\epsilon_{i,[k]}^{*}\right\Vert _{F}^{4/3}\right]+2\e_{\mathcal{C}_{n}}\left[\left\Vert \bs{\Xi}_{i,[k]}^{*}\bs{\Xi}_{i,[k]}^{*\trans}\epsilon_{i,[k]}^{*}\right\Vert _{F}\right]^{4/3}\leq4\e_{\mathcal{C}_{n}}\left[\left\Vert \bs{\Xi}_{i,[k]}^{*}\bs{\Xi}_{i,[k]}^{*\trans}\epsilon_{i,[k]}^{*}\right\Vert _{F}^{4/3}\right]\\
\leq & 8\e_{\mathcal{C}_{n}}\left[\left\Vert \bs{\Xi}_{i,[k]}^{*}\bs{\Xi}_{i,[k]}^{*\trans}\widetilde{Y}_{i,[k]}^{*}\right\Vert _{F}^{4/3}\right]+8\e_{\mathcal{C}_{n}}\left[\left\Vert \bs{\Xi}_{i,[k]}^{*}\bs{\Xi}_{i,[k]}^{*\trans}\bs{\Xi}_{i,[k]}^{*\trans}\bs{\beta}_{[k],\mathcal{C}_{n}}\right\Vert _{F}^{4/3}\right]\\
\leq & \frac{16}{\pi_{n[k]}^{4/3}}\e\left[\left\Vert \widetilde{\bs{\xi}}_{n}^{*}(\bs X_{i})\widetilde{\bs{\xi}}_{n}^{*}(\bs X_{i})^{\trans}\widetilde{Y}_{i}(1)\right\Vert _{F}^{4/3}\mid B_{i}=k\right]\\
 & +\frac{16}{(1-\pi_{n[k]})^{4/3}}\e\left[\left\Vert \widetilde{\bs{\xi}}_{n}^{*}(\bs X_{i})\widetilde{\bs{\xi}}_{n}^{*}(\bs X_{i})^{\trans}\widetilde{Y}_{i}(0)\right\Vert _{F}^{4/3}\mid B_{i}=k\right]\\
 & +8\e\left[\left\Vert \widetilde{\bs{\xi}}_{n}^{*}(\bs X_{i})\right\Vert ^{4}\mid B_{i}=k\right]\left\Vert \bs{\beta}_{[k],\mathcal{C}_{n}}\right\Vert ^{4/3}
\end{align*}
and
\begin{align*}
 & \e\left[\left\Vert \widetilde{\bs{\xi}}_{n}^{*}(\bs X_{i})\widetilde{\bs{\xi}}_{n}^{*}(\bs X_{i})^{\trans}\widetilde{Y}_{i}(a)\right\Vert _{F}^{4/3}\mid B_{i}=k\right]\\
= & \e\left[\tr\left\{ \widetilde{\bs{\xi}}_{n}^{*}(\bs X_{i})\widetilde{\bs{\xi}}_{n}^{*}(\bs X_{i})^{\trans}\widetilde{\bs{\xi}}_{n}^{*}(\bs X_{i})\widetilde{\bs{\xi}}_{n}^{*}(\bs X_{i})^{\trans}\widetilde{Y}_{i}(a)^{2}\right\} ^{2/3}\mid B_{i}=k\right]\\
= & \e\left[\left\Vert \widetilde{\bs{\xi}}_{n}^{*}(\bs X_{i})\right\Vert ^{8/3}\left|\widetilde{Y}_{i}(a)\right|^{4/3}\mid B_{i}=k\right]\\
\leq & \e\left[\left\Vert \widetilde{\bs{\xi}}_{n}^{*}(\bs X_{i})\right\Vert ^{(8/3)\times(3/2)}\mid B_{i}=k\right]^{2/3}\e\left[\left|\widetilde{Y}_{i}(a)\right|^{(4/3)\times3}\mid B_{i}=k\right]^{1/3}\\
< & \infty,
\end{align*}
where the last two steps follow from Hölder's inequality, Assumption
\ref{assu:conditions on =00005Cxi-general D(v)} and $\e\left[\left|\widetilde{Y}_{i}(a)\right|^{4}\right]<\infty$.
We can conclude that 
\[
\e_{\mathcal{C}_{n}}\left[\left\Vert \bs{\Xi}_{i,[k]}^{*}\bs{\Xi}_{i,[k]}^{*\trans}\epsilon_{i,[k]}^{*}-\e_{\mathcal{C}_{n}}\left[\bs{\Xi}_{i,[k]}^{*}\bs{\Xi}_{i,[k]}^{*\trans}\epsilon_{i,[k]}^{*}\right]\right\Vert _{F}^{4/3}\right]=O_{P}(1).
\]
By Lemma \ref{lem:LLN for triangular array}, we have 
\begin{align*}
 & \left\Vert \frac{1}{n}\sum_{i=1}^{n}\bs{\Xi}_{i,[k]}^{*}\bs{\Xi}_{i,[k]}^{*\trans}\epsilon_{i,[k]}^{*}-\frac{1}{n}\sum_{i=1}^{n}\e_{\mathcal{C}_{n}}\left[\bs{\Xi}_{i,[k]}^{*}\bs{\Xi}_{i,[k]}^{*\trans}\epsilon_{i,[k]}^{*}\right]\right\Vert \\
\leq & \left\Vert \frac{1}{n}\sum_{i=1}^{n}\bs{\Xi}_{i,[k]}^{*}\bs{\Xi}_{i,[k]}^{*\trans}\epsilon_{i,[k]}^{*}-\frac{1}{n}\sum_{i=1}^{n}\e_{\mathcal{C}_{n}}\left[\bs{\Xi}_{i,[k]}^{*}\bs{\Xi}_{i,[k]}^{*\trans}\epsilon_{i,[k]}^{*}\right]\right\Vert _{F}=o_{P}(1).
\end{align*}
This, combined with (\ref{eq:xixi=00005Cep-xi*/xi*=00005Cep*}) gives
that 
\begin{align*}
\frac{1}{n}\sum_{i=1}^{n}\bs{\Xi}_{i,[k]}\bs{\Xi}_{i,[k]}^{\trans}\epsilon_{i,[k]} & =\frac{1}{n}\sum_{i=1}^{n}\e_{\mathcal{C}_{n}}\left[\bs{\Xi}_{i,[k]}^{*}\bs{\Xi}_{i,[k]}^{*\trans}\epsilon_{i,[k]}^{*}\right]+o_{P}(1)\\
 & =\mathbf{\Sigma}_{\bs{\Xi}\bs{\Xi}\epsilon[k]}^{\mathcal{C}_{n}}+o_{P}(1).
\end{align*}

Now, we recall (\ref{eq:decomposition of second bias}). It follows
from (\ref{eq:sample ave of Xi to Xi*}), (\ref{eq:xixiT-condition xixiT})
and $\frac{1}{n}\sum_{i=1}^{n}\bs{\Xi}_{i,[k]}^{*\trans}=O_{P}(n^{-1/2})$
that
\begin{align}
 & \frac{1}{2n}\sum_{i=1}^{n}\frac{\rho^{\prime\prime\prime}(0)}{\rho^{\prime\prime}(0)^{2}}\widehat{\bs{\lambda}}_{lin}^{\trans}\bs{\Xi}_{i}\bs{\Xi}_{i}^{\trans}\widehat{\bs{\lambda}}_{lin}\epsilon_{i}\nonumber \\
= & \frac{\rho^{\prime\prime\prime}(0)}{2\rho^{\prime\prime}(0)^{2}}\sum_{k=1}^{K}\left\{ \frac{1}{n}\sum_{i=1}^{n}\bs{\Xi}_{i,[k]}^{*\trans}+o_{P}(n^{-1})\right\} \left(\mathbf{\Sigma}_{[k]}^{\mathcal{C}_{n}}+o_{P}(1)\right)^{-1}\nonumber \\
 & \qquad\times\left\{ \mathbf{\Sigma}_{\bs{\Xi}\bs{\Xi}\epsilon[k]}^{\mathcal{C}_{n}}+o_{P}(1)\right\} \left(\mathbf{\Sigma}_{[k]}^{\mathcal{C}_{n}}+o_{P}(1)\right)^{-1}\left\{ \frac{1}{n}\sum_{i=1}^{n}\bs{\Xi}_{i,[k]}^{*}+o_{P}(n^{-1})\right\} \nonumber \\
= & \underbrace{\frac{\rho^{\prime\prime\prime}(0)}{2\rho^{\prime\prime}(0)^{2}}\sum_{k=1}^{K}\frac{1}{n}\sum_{i=1}^{n}\bs{\Xi}_{i,[k]}^{*\trans}\left(\mathbf{\Sigma}_{[k]}^{\mathcal{C}_{n}}\right)^{-1}\mathbf{\Sigma}_{\bs{\Xi}\bs{\Xi}\epsilon[k]}^{\mathcal{C}_{n}}\left(\mathbf{\Sigma}_{[k]}^{\mathcal{C}_{n}}\right)^{-1}\frac{1}{n}\sum_{i=1}^{n}\bs{\Xi}_{i,[k]}^{*}}_{\text{bias}_{2}}+o_{P}(n^{-1}).\label{eq:second bias final}
\end{align}

Combining (\ref{eq:first bias final}) and (\ref{eq:second bias final}),
we have 
\begin{equation}
\widehat{\bs{\lambda}}_{lin}^{\trans}\frac{1}{n}\sum_{i=1}^{n}\bs{\Xi}_{i}\epsilon_{i}+\frac{1}{2n}\sum_{i=1}^{n}\frac{\rho^{\prime\prime\prime}(0)}{\rho^{\prime\prime}(0)^{2}}\widehat{\bs{\lambda}}_{lin}^{\trans}\bs{\Xi}_{i}\bs{\Xi}_{i}^{\trans}\widehat{\bs{\lambda}}_{lin}\epsilon_{i}=\text{bias}_{1}+\text{bias}_{2}+O_{P}(\Delta_{n}n^{-1/2})+o_{P}(n^{-1}).\label{eq:second order bias decomposition}
\end{equation}
Note that $(\bs{\Xi}_{i,[k]}^{*},\epsilon_{i,[k]}^{*})$, $i=1,\ldots,n$,
are independent conditional on $\mathcal{C}_{n}$, we have 
\begin{align*}
 & \e_{\mathcal{C}_{n}}\left[\text{bias}_{1}\right]=-\sum_{k=1}^{K}\frac{1}{n^{2}}\e_{\mathcal{C}_{n}}\left[\sum_{1\leq i,j\leq n}\bs{\Xi}_{i,[k]}^{*\trans}\left(\mathbf{\Sigma}_{[k]}^{\mathcal{C}_{n}}\right)^{-1}\bs{\Xi}_{j,[k]}^{*}\epsilon_{j,[k]}^{*}\right]\\
= & -\sum_{k=1}^{K}\frac{1}{n^{2}}\sum_{i=1}^{n}\e_{\mathcal{C}_{n}}\left[\bs{\Xi}_{i,[k]}^{*\trans}\left(\mathbf{\Sigma}_{[k]}^{\mathcal{C}_{n}}\right)^{-1}\bs{\Xi}_{i,[k]}^{*}\epsilon_{i,[k]}^{*}\right]\\
= & -\sum_{k=1}^{K}\frac{1}{n^{2}}\sum_{i=1}^{n}\e_{\mathcal{C}_{n}}\left[\tr\left\{ \bs{\Xi}_{i,[k]}^{*\trans}\left(\mathbf{\Sigma}_{[k]}^{\mathcal{C}_{n}}\right)^{-1}\bs{\Xi}_{i,[k]}^{*}\epsilon_{i,[k]}^{*}\right\} \right]\\
= & -\sum_{k=1}^{K}\frac{1}{n^{2}}\sum_{i=1}^{n}\e_{\mathcal{C}_{n}}\left[\tr\left\{ \left(\mathbf{\Sigma}_{[k]}^{\mathcal{C}_{n}}\right)^{-1}\bs{\Xi}_{i,[k]}^{*}\bs{\Xi}_{i,[k]}^{*\trans}\epsilon_{i,[k]}^{*}\right\} \right]\\
= & -\sum_{k=1}^{K}\frac{1}{n^{2}}\sum_{i=1}^{n}\tr\left\{ \e_{\mathcal{C}_{n}}\left[\left(\mathbf{\Sigma}_{[k]}^{\mathcal{C}_{n}}\right)^{-1}\bs{\Xi}_{i,[k]}^{*}\bs{\Xi}_{i,[k]}^{*\trans}\epsilon_{i,[k]}^{*}\right]\right\} \\
= & -\sum_{k=1}^{K}\frac{1}{n^{2}}\sum_{i=1}^{n}\tr\left\{ \left(\mathbf{\Sigma}_{[k]}^{\mathcal{C}_{n}}\right)^{-1}\e_{\mathcal{C}_{n}}\left[\bs{\Xi}_{i,[k]}^{*}\bs{\Xi}_{i,[k]}^{*\trans}\epsilon_{i,[k]}^{*}\right]\right\} \\
= & -\sum_{k=1}^{K}\frac{1}{n}\tr\left\{ \frac{1}{n}\sum_{i=1}^{n}\left(\mathbf{\Sigma}_{[k]}^{\mathcal{C}_{n}}\right)^{-1}\e_{\mathcal{C}_{n}}\left[\bs{\Xi}_{i,[k]}^{*}\bs{\Xi}_{i,[k]}^{*\trans}\epsilon_{i,[k]}^{*}\right]\right\} \\
= & -\sum_{k=1}^{K}\frac{1}{n}\tr\left\{ \left(\mathbf{\Sigma}_{[k]}^{\mathcal{C}_{n}}\right)^{-1}\frac{1}{n}\sum_{i=1}^{n}\e_{\mathcal{C}_{n}}\left[\bs{\Xi}_{i,[k]}^{*}\bs{\Xi}_{i,[k]}^{*\trans}\epsilon_{i,[k]}^{*}\right]\right\} \\
= & -\frac{1}{n}\sum_{k=1}^{K}\tr\left\{ \left(\mathbf{\Sigma}_{[k]}^{\mathcal{C}_{n}}\right)^{-1}\mathbf{\Sigma}_{\bs{\Xi}\bs{\Xi}\epsilon[k]}^{\mathcal{C}_{n}}\right\} 
\end{align*}
and 
\begin{align*}
\e_{\mathcal{C}_{n}}\left[\text{bias}_{2}\right] & =\e_{\mathcal{C}_{n}}\left[\frac{\rho^{\prime\prime\prime}(0)}{2\rho^{\prime\prime}(0)^{2}}\sum_{k=1}^{K}\frac{1}{n}\sum_{i=1}^{n}\bs{\Xi}_{i,[k]}^{*\trans}\left(\mathbf{\Sigma}_{[k]}^{\mathcal{C}_{n}}\right)^{-1}\frac{1}{n}\sum_{i=1}^{n}\bs{\Xi}_{i,[k]}^{*}\bs{\Xi}_{i,[k]}^{*\trans}\epsilon_{i,[k]}^{*}\left(\mathbf{\Sigma}_{[k]}^{\mathcal{C}_{n}}\right)^{-1}\frac{1}{n}\sum_{i=1}^{n}\bs{\Xi}_{i,[k]}^{*}\right]\\
 & =\frac{\rho^{\prime\prime\prime}(0)}{2\rho^{\prime\prime}(0)^{2}}\sum_{k=1}^{K}\frac{1}{n^{2}}\e_{\mathcal{C}_{n}}\left[\sum_{1\leq i,j\leq n}\bs{\Xi}_{i,[k]}^{*\trans}\left(\mathbf{\Sigma}_{[k]}^{\mathcal{C}_{n}}\right)^{-1}\mathbf{\Sigma}_{\bs{\Xi}\bs{\Xi}\epsilon[k]}^{\mathcal{C}_{n}}\left(\mathbf{\Sigma}_{[k]}^{\mathcal{C}_{n}}\right)^{-1}\bs{\Xi}_{j,[k]}^{*}\right]\\
 & =\frac{\rho^{\prime\prime\prime}(0)}{2\rho^{\prime\prime}(0)^{2}}\sum_{k=1}^{K}\frac{1}{n^{2}}\sum_{i=1}^{n}\e_{\mathcal{C}_{n}}\left[\tr\left\{ \bs{\Xi}_{i,[k]}^{*\trans}\left(\mathbf{\Sigma}_{[k]}^{\mathcal{C}_{n}}\right)^{-1}\mathbf{\Sigma}_{\bs{\Xi}\bs{\Xi}\epsilon[k]}^{\mathcal{C}_{n}}\left(\mathbf{\Sigma}_{[k]}^{\mathcal{C}_{n}}\right)^{-1}\bs{\Xi}_{i,[k]}^{*}\right\} \right]\\
 & =\frac{\rho^{\prime\prime\prime}(0)}{2\rho^{\prime\prime}(0)^{2}}\sum_{k=1}^{K}\frac{1}{n^{2}}\sum_{i=1}^{n}\e_{\mathcal{C}_{n}}\left[\tr\left\{ \left(\mathbf{\Sigma}_{[k]}^{\mathcal{C}_{n}}\right)^{-1}\mathbf{\Sigma}_{\bs{\Xi}\bs{\Xi}\epsilon[k]}^{\mathcal{C}_{n}}\left(\mathbf{\Sigma}_{[k]}^{\mathcal{C}_{n}}\right)^{-1}\bs{\Xi}_{i,[k]}^{*}\bs{\Xi}_{i,[k]}^{*\trans}\right\} \right]\\
 & =\frac{\rho^{\prime\prime\prime}(0)}{2\rho^{\prime\prime}(0)^{2}}\sum_{k=1}^{K}\frac{1}{n^{2}}\sum_{i=1}^{n}\tr\left\{ \left(\mathbf{\Sigma}_{[k]}^{\mathcal{C}_{n}}\right)^{-1}\mathbf{\Sigma}_{\bs{\Xi}\bs{\Xi}\epsilon[k]}^{\mathcal{C}_{n}}\left(\mathbf{\Sigma}_{[k]}^{\mathcal{C}_{n}}\right)^{-1}\e_{\mathcal{C}_{n}}\left[\bs{\Xi}_{i,[k]}^{*}\bs{\Xi}_{i,[k]}^{*\trans}\right]\right\} \\
 & =\frac{\rho^{\prime\prime\prime}(0)}{2\rho^{\prime\prime}(0)^{2}}\sum_{k=1}^{K}\frac{1}{n}\tr\left\{ \left(\mathbf{\Sigma}_{[k]}^{\mathcal{C}_{n}}\right)^{-1}\mathbf{\Sigma}_{\bs{\Xi}\bs{\Xi}\epsilon[k]}^{\mathcal{C}_{n}}\left(\mathbf{\Sigma}_{[k]}^{\mathcal{C}_{n}}\right)^{-1}\frac{1}{n}\sum_{i=1}^{n}\e_{\mathcal{C}_{n}}\left[\bs{\Xi}_{i,[k]}^{*}\bs{\Xi}_{i,[k]}^{*\trans}\right]\right\} \\
 & =\frac{\rho^{\prime\prime\prime}(0)}{2\rho^{\prime\prime}(0)^{2}}\frac{1}{n}\sum_{k=1}^{K}\tr\left\{ \left(\mathbf{\Sigma}_{[k]}^{\mathcal{C}_{n}}\right)^{-1}\mathbf{\Sigma}_{\bs{\Xi}\bs{\Xi}\epsilon[k]}^{\mathcal{C}_{n}}\right\} .
\end{align*}
As a result, we have 
\[
\e\left[\text{bias}_{1}+\text{bias}_{2}\right]=\e\left[\e_{\mathcal{C}_{n}}\left[\text{bias}_{1}+\text{bias}_{2}\right]\right]=\left\{ \frac{\rho^{\prime\prime\prime}(0)}{2\rho^{\prime\prime}(0)^{2}}-1\right\} \frac{1}{n}\sum_{k=1}^{K}\tr\left\{ \e\left[\left(\mathbf{\Sigma}_{[k]}^{\mathcal{C}_{n}}\right)^{-1}\mathbf{\Sigma}_{\bs{\Xi}\bs{\Xi}\epsilon[k]}^{\mathcal{C}_{n}}\right]\right\} .
\]

Let 
\begin{align*}
\psi_{1,i} & :=\sum_{k=1}^{K}\left\{ \left(\frac{A_{i}}{\pi_{n[k]}}-\frac{1-A_{i}}{1-\pi_{n[k]}}\right)\1(B_{i}=k)\cdot Y_{i}-\tau-\bs{\beta}_{[k],\mathcal{C}_{n}}^{\trans}\bs{\Xi}_{i,[k]}^{*}\right\} ,
\end{align*}
\[
\phi_{1,i}:=\sum_{k=1}^{K}\left\{ \e\left[Y_{i}(1)\mid B_{i}=k\right]-\e\left[Y_{i}(0)\mid B_{i}=k\right]\right\} \{\1(B_{i}=k)-p_{[k]}\}+\sum_{k=1}^{K}\left(\widetilde{Y}_{i,[k]}^{*}-\bs{\beta}_{[k],\mathcal{C}_{n}}^{\trans}\bs{\Xi}_{i}^{*}\right)
\]
and 
\[
\psi_{2}:=n\left\{ \text{bias}_{1}+\text{bias}_{2}\right\} ,
\]
then it follows from (\ref{eq:decomposition of tau-tau*}), (\ref{eq:R1+R2 diverging D(v)})
and (\ref{eq:second order bias decomposition}) that 
\begin{align*}
\widehat{\tau}_{\mathrm{cal}}-\tau & =\frac{1}{n}\sum_{i=1}^{n}\sum_{k=1}^{K}\left\{ \e\left[Y_{i}(1)\mid B_{i}=k\right]-\e\left[Y_{i}(0)\mid B_{i}=k\right]\right\} \{\1(B_{i}=k)-p_{[k]}\}\\
 & \qquad+\frac{1}{n}\sum_{i=1}^{n}\sum_{k=1}^{K}\left(\widetilde{Y}_{i,[k]}^{*}-\bs{\beta}_{[k],\mathcal{C}_{n}}^{\trans}\bs{\Xi}_{i,[k]}^{*}\right)\\
 & \qquad+\text{bias}_{1}+\text{bias}_{2}+O_{P}(\Delta_{n}n^{-1/2})+o_{P}(n^{-1})\\
 & =\frac{1}{n}\sum_{i=1}^{n}\phi_{1,i}+\frac{1}{n}\psi_{2}+O_{P}(\Delta_{n}n^{-1/2})+o_{P}(n^{-1}).\\
 & =\frac{1}{n}\sum_{i=1}^{n}\psi_{1,i}+\frac{1}{n}\psi_{2}+O_{P}(\Delta_{n}n^{-1/2})+o_{P}(n^{-1})+\frac{1}{n}\sum_{i=1}^{n}\phi_{1,i}-\frac{1}{n}\sum_{i=1}^{n}\psi_{1,i}
\end{align*}
Now, it remains to show that 
\[
\frac{1}{n}\sum_{i=1}^{n}\phi_{1,i}=\frac{1}{n}\sum_{i=1}^{n}\psi_{1,i}
\]
and 
\[
\e\left[\frac{1}{n}\sum_{i=1}^{n}\psi_{1,i}\right]=0.
\]
First, by $\pi_{n[k]}=\frac{1}{n_{[k]}}\sum_{i=1}^{n}\1(B_{i}=k)A_{i}$
we have 
\begin{align*}
 & \frac{1}{n}\sum_{i=1}^{n}\psi_{1,i}\\
= & \sum_{k=1}^{K}\frac{1}{n}\sum_{i=1}^{n}\frac{A_{i}}{\pi_{n[k]}}\1(B_{i}=k)\cdot\left\{ Y_{i}(1)-\e\left[Y_{i}(1)\mid B_{i}=k\right]+\e\left[Y_{i}(1)\mid B_{i}=k\right]\right\} \\
 & \quad-\sum_{k=1}^{K}\frac{1}{n}\sum_{i=1}^{n}\frac{1-A_{i}}{1-\pi_{n[k]}}\1(B_{i}=k)\cdot\left\{ Y_{i}(1)-\e\left[Y_{i}(0)\mid B_{i}=k\right]+\e\left[Y_{i}(0)\mid B_{i}=k\right]\right\} \\
 & \quad-\sum_{k=1}^{K}p_{[k]}\left\{ \e\left[Y_{i}(1)\mid B_{i}=k\right]-\e\left[Y_{i}(0)\mid B_{i}=k\right]\right\} -\sum_{k=1}^{K}\frac{1}{n}\sum_{i=1}^{n}\bs{\beta}_{[k],\mathcal{C}_{n}}^{\trans}\bs{\Xi}_{i,[k]}^{*}\\
= & \sum_{k=1}^{K}\frac{1}{n}\sum_{i=1}^{n}\frac{A_{i}}{\pi_{n[k]}}\1(B_{i}=k)\widetilde{Y}_{i}(1)-\sum_{k=1}^{K}\frac{1}{n}\sum_{i=1}^{n}\frac{1-A_{i}}{1-\pi_{n[k]}}\1(B_{i}=k)\widetilde{Y}_{i}(0)\\
 & \quad+\sum_{k=1}^{K}\frac{n_{[k]}}{n}\left\{ \e\left[Y_{i}(1)\mid B_{i}=k\right]-\e\left[Y_{i}(0)\mid B_{i}=k\right]\right\} \\
 & \quad-\sum_{k=1}^{K}p_{[k]}\left\{ \e\left[Y_{i}(1)\mid B_{i}=k\right]-\e\left[Y_{i}(0)\mid B_{i}=k\right]\right\} -\sum_{k=1}^{K}\frac{1}{n}\sum_{i=1}^{n}\bs{\beta}_{[k],\mathcal{C}_{n}}^{\trans}\bs{\Xi}_{i,[k]}^{*}\\
= & \frac{1}{n}\sum_{i=1}^{n}\phi_{1,i}.
\end{align*}
Second, it follows from 
\[
\e\left[\frac{1}{n}\sum_{i=1}^{n}\sum_{k=1}^{K}\left\{ \e\left[Y_{i}(1)\mid B_{i}=k\right]-\e\left[Y_{i}(0)\mid B_{i}=k\right]\right\} \{\1(B_{i}=k)-p_{[k]}\}\right]=0
\]
and 
\[
\e\left[\frac{1}{n}\sum_{i=1}^{n}\sum_{k=1}^{K}\left(\widetilde{Y}_{i,[k]}^{*}-\bs{\beta}_{[k]}^{*\trans}\bs{\Xi}_{i,[k]}^{*}\right)\right]=\sum_{k=1}^{K}\frac{1}{n}\sum_{i=1}^{n}\e\left[\e_{\mathcal{C}_{n}}\left[\left(\widetilde{Y}_{i,[k]}^{*}-\bs{\beta}_{[k],\mathcal{C}_{n}}^{\trans}\bs{\Xi}_{i,[k]}^{*}\right)\right]\right]=0
\]
that 
\[
\e\left[\frac{1}{n}\sum_{i=1}^{n}\phi_{1,i}\right]=0.
\]
As a result, $\e\left[\frac{1}{n}\sum_{i=1}^{n}\psi_{1,i}\right]=\e\left[\frac{1}{n}\sum_{i=1}^{n}\phi_{1,i}\right]=0$.
The proof is completed.$\hfill\qedsymbol$

\section{\label{sec:Additional-simulation-results}Additional simulation results}

We consider the following model, which imposes the heterogeneity of
the conditional mean functions $g_{0}(\bs X)$ and $g_{1}(\bs X)$
across different strata.
\begin{table}[!tb]
\caption{\label{tab:Model4}The comparison of the performance of different
estimators under Model 4.}

\resizebox{\textwidth}{!}{%
\begin{threeparttable}
\begin{centering}
\begin{tabular}{clrrrrrrrrrrrr}
\toprule 
\multirow{2}{*}{$n$} & \multirow{2}{*}{Estimator} & \multicolumn{4}{c}{Simple Rand.} & \multicolumn{4}{c}{Stratified Block Rand.} & \multicolumn{4}{c}{Minimization}\tabularnewline
\cmidrule{3-14}
 &  & Bias & SD & SE & CP & Bias & SD & SE & CP & Bias & SD & SE & CP\tabularnewline
\midrule 
\multirow{11}{*}{500} & \texttt{cal\_rf} & 0.08 & 4.67 & 5.13 & 0.957 & 0.80 & 5.07 & 5.08 & 0.940 & 0.24 & 5.14 & 5.11 & 0.937\tabularnewline
 & \texttt{cal\_nn} & 0.18 & 4.96 & 5.55 & 0.977 & 0.58 & 5.37 & 5.49 & 0.953 & 0.08 & 5.50 & 5.51 & 0.937\tabularnewline
 & \texttt{cal\_rfnn} & 0.13 & 4.65 & 5.14 & 0.967 & 0.97 & 5.02 & 5.09 & 0.933 & 0.40 & 5.11 & 5.12 & 0.930\tabularnewline
 & \texttt{cal\_rflin} & 0.08 & 4.67 & 5.11 & 0.960 & 0.73 & 5.02 & 5.05 & 0.943 & 0.15 & 5.14 & 5.08 & 0.933\tabularnewline
 & \texttt{cal\_rf\_g} & 0.09 & 4.64 & 5.14 & 0.953 & 0.82 & 5.10 & 5.08 & 0.937 & 0.21 & 5.16 & 5.11 & 0.947\tabularnewline
 & \texttt{cal\_nn\_g} & 0.18 & 5.03 & 5.46 & 0.970 & 0.72 & 5.32 & 5.41 & 0.963 & 0.15 & 5.34 & 5.43 & 0.940\tabularnewline
 & \texttt{cal\_lin\_EL} & 0.37 & 4.72 & 5.29 & 0.967 & 0.41 & 5.14 & 5.24 & 0.953 & 0.07 & 5.43 & 5.27 & 0.933\tabularnewline
 & \texttt{aipw\_rf} & 0.42 & 4.82 & 5.15 & 0.950 & 0.35 & 5.07 & 5.10 & 0.947 & 0.25 & 5.17 & 5.13 & 0.937\tabularnewline
 & \texttt{aipw\_nn} & 0.47 & 6.27 & 7.02 & 0.967 & 0.36 & 6.80 & 6.96 & 0.950 & 0.59 & 6.90 & 6.92 & 0.953\tabularnewline
 & \texttt{aipw\_lin} & 0.34 & 5.08 & 5.49 & 0.973 & 0.57 & 5.29 & 5.42 & 0.950 & 0.18 & 5.68 & 5.46 & 0.933\tabularnewline
 & \texttt{sdim} & 0.33 & 5.58 & 5.84 & 0.960 & 0.31 & 5.79 & 5.79 & 0.963 & 0.15 & 6.00 & 5.81 & 0.927\tabularnewline
\midrule 
\multirow{11}{*}{1000} & \texttt{cal\_rf} & 0.00 & 3.40 & 3.56 & 0.970 & 0.19 & 3.59 & 3.56 & 0.940 & 0.28 & 3.58 & 3.55 & 0.953\tabularnewline
 & \texttt{cal\_nn} & 0.05 & 3.45 & 3.68 & 0.973 & 0.11 & 3.76 & 3.69 & 0.933 & 0.19 & 3.74 & 3.68 & 0.947\tabularnewline
 & \texttt{cal\_rfnn} & 0.15 & 3.40 & 3.55 & 0.963 & 0.35 & 3.59 & 3.55 & 0.933 & 0.42 & 3.57 & 3.54 & 0.943\tabularnewline
 & \texttt{cal\_rflin} & 0.00 & 3.39 & 3.54 & 0.963 & 0.18 & 3.61 & 3.54 & 0.937 & 0.27 & 3.58 & 3.54 & 0.953\tabularnewline
 & \texttt{cal\_rf\_g} & 0.00 & 3.42 & 3.56 & 0.960 & 0.18 & 3.59 & 3.56 & 0.943 & 0.30 & 3.59 & 3.55 & 0.947\tabularnewline
 & \texttt{cal\_nn\_g} & 0.09 & 3.40 & 3.65 & 0.970 & 0.12 & 3.71 & 3.66 & 0.933 & 0.25 & 3.71 & 3.65 & 0.950\tabularnewline
 & \texttt{cal\_lin\_EL} & 0.23 & 3.43 & 3.71 & 0.980 & 0.10 & 3.71 & 3.71 & 0.940 & 0.01 & 3.69 & 3.70 & 0.957\tabularnewline
 & \texttt{aipw\_rf} & 0.23 & 3.40 & 3.57 & 0.970 & 0.09 & 3.65 & 3.57 & 0.933 & 0.05 & 3.59 & 3.56 & 0.947\tabularnewline
 & \texttt{aipw\_nn} & 0.28 & 3.83 & 4.02 & 0.963 & 0.13 & 4.07 & 4.04 & 0.943 & 0.05 & 4.16 & 4.04 & 0.957\tabularnewline
 & \texttt{aipw\_lin} & 0.23 & 3.45 & 3.75 & 0.977 & 0.15 & 3.76 & 3.75 & 0.937 & 0.05 & 3.73 & 3.73 & 0.957\tabularnewline
 & \texttt{sdim} & 0.30 & 3.89 & 4.11 & 0.980 & 0.15 & 4.29 & 4.11 & 0.930 & 0.01 & 4.11 & 4.10 & 0.937\tabularnewline
\midrule 
\multirow{11}{*}{2000} & \texttt{cal\_rf} & 0.37 & 2.57 & 2.49 & 0.933 & 0.35 & 2.61 & 2.48 & 0.940 & 0.20 & 2.44 & 2.50 & 0.953\tabularnewline
 & \texttt{cal\_nn} & 0.37 & 2.56 & 2.51 & 0.933 & 0.36 & 2.66 & 2.50 & 0.933 & 0.21 & 2.49 & 2.52 & 0.953\tabularnewline
 & \texttt{cal\_rfnn} & 0.46 & 2.55 & 2.47 & 0.930 & 0.45 & 2.62 & 2.47 & 0.933 & 0.12 & 2.42 & 2.49 & 0.957\tabularnewline
 & \texttt{cal\_rflin} & 0.37 & 2.57 & 2.48 & 0.920 & 0.35 & 2.61 & 2.48 & 0.940 & 0.23 & 2.44 & 2.49 & 0.950\tabularnewline
 & \texttt{cal\_rf\_g} & 0.39 & 2.57 & 2.49 & 0.933 & 0.37 & 2.62 & 2.48 & 0.940 & 0.17 & 2.43 & 2.50 & 0.953\tabularnewline
 & \texttt{cal\_nn\_g} & 0.37 & 2.57 & 2.51 & 0.933 & 0.37 & 2.66 & 2.50 & 0.930 & 0.22 & 2.50 & 2.52 & 0.957\tabularnewline
 & \texttt{cal\_lin\_EL} & 0.29 & 2.64 & 2.61 & 0.920 & 0.14 & 2.71 & 2.60 & 0.947 & 0.35 & 2.62 & 2.62 & 0.953\tabularnewline
 & \texttt{aipw\_rf} & 0.26 & 2.58 & 2.49 & 0.940 & 0.19 & 2.61 & 2.49 & 0.937 & 0.33 & 2.48 & 2.50 & 0.950\tabularnewline
 & \texttt{aipw\_nn} & 0.29 & 2.57 & 2.56 & 0.940 & 0.21 & 2.65 & 2.56 & 0.947 & 0.35 & 2.58 & 2.57 & 0.957\tabularnewline
 & \texttt{aipw\_lin} & 0.32 & 2.63 & 2.61 & 0.930 & 0.09 & 2.73 & 2.61 & 0.943 & 0.38 & 2.62 & 2.63 & 0.960\tabularnewline
 & \texttt{sdim} & 0.33 & 2.86 & 2.90 & 0.937 & 0.10 & 3.07 & 2.90 & 0.940 & 0.36 & 2.94 & 2.91 & 0.950\tabularnewline
\bottomrule
\end{tabular}
\par\end{centering}
\begin{tablenotes}[flushleft]
      \footnotesize 
      \item \textit{Abbreviations:} Rand., Randomization; SD, standard deviation, SE: standard error; CP, coverage probability. 
    \end{tablenotes}
  \end{threeparttable}
}
\end{table}

\textbf{Model 4.} In this model, we set
\begin{align*}
g_{0}(\bs X_{i}) & =\mu_{0}+\left(\beta_{01}X_{i1}+\beta_{02}X_{i2}\right)S_{i}+\beta_{03}\log(X_{i1}+1)\1(S_{i}=1)\\
g_{1}(\bs X_{i}) & =\mu_{1}+\left(\beta_{11}X_{i1}+\beta_{12}X_{i2}\right)S_{i}+\beta_{13}\exp(X_{i2})\1(S_{i}=-1),
\end{align*}
with $\mu_{0}=5$, $\mu_{1}=5$, $(\beta_{01},\beta_{02},\beta_{03})=(20,30,50)$,
and $(\beta_{11},\beta_{12},\beta_{13})=(20,30,65)$. Additionally,
the variables are specified as follows: $\epsilon_{0,i}\sim N(0,1)$,
$\epsilon_{1,i}\sim N(0,9)$, $X_{i1}\sim\text{Beta}(3,4)$, $X_{i2}\sim\text{Uniform}(-2,2)$
with these two variables being independent of each other. The additional
covariates $X_{i3},\dots,X_{ip}$ are first generated as in Model
1. Then we randomly select $\left\lfloor p/3\right\rfloor $ covariates
from the additional covariates and multiply them by either $X_{i1}$
or $X_{i2}$ with equal probability to form the final additional covariates.
The randomization variable $S_{i}$ takes values in $\{1,-1\}$ with
equal probability, and is independent of $X_{ij}$ for $j=1,\dots,p$.
The simulation results for Model 4 are presented in Table~\ref{tab:Model4}.
When sample size is large ($n=2000$), the performances of the adjusted
estimators are nearly identical, with \texttt{cal\_rf\_g} performing
slightly better and \texttt{aipw\_nn} performing slightly worse. However,
with smaller sample sizes ($n=500,1000$), two key observations emerge:
(i) the calibration estimators outperform the AIPW estimators, and
(ii) the random forest-based estimators perform better than the neural
network-based estimators, with the linear regression-based estimators
performing intermediate to the random forest and neural network-based
estimators.

\putbib
\end{bibunit}


@misc{gu2024incorporatingexternaldataanalyzing,
  title  = {Incorporating External Data for Analyzing Randomized Clinical Trials: A Transfer Learning Approach},
  author = {Gu, Yujia and Liu, Hanzhong and Ma, Wei},
  year   = {2024},
  note   = {arXiv preprint arXiv:2409.04126}
}

@article{jiang2025Adjustments,
  title = {Adjustments with Many Regressors under Covariate-Adaptive Randomizations},
  author = {Jiang, Liang and Li, Liyao and Miao, Ke and Zhang, Yichong},
  year = {2025},
  journal = {Journal of Econometrics},
  volume = {249},
  pages = {105991}
}

@article{ma2026Integrating,
title = {Supplementary materials for ``{{Integrating}} Heterogeneous Information in Randomized Experiments: {{A}} Unified Calibration Framework''},
author={Ma, Wei and Wu, Zeqi and Zhang, Zheng},
year = {2026}

}

@misc{FDA2019RareDiseases,
  title        = {Rare Diseases: Natural History Studies for Drug Development: Draft Guidance for Industry},
  author       = {{FDA}},
  year         = {2019},
  howpublished = {U.S. Department of Health and Human Services},
  url          = {https://www.fda.gov/media/122425/download}
}

@article{chi2022Asymptotic,
  title   = {Asymptotic Properties of High-Dimensional Random Forests},
  author  = {Chi, Chien-Ming and Vossler, Patrick and Fan, Yingying and Lv, Jinchi},
  year    = {2022},
  journal = {The Annals of Statistics},
  volume  = {50},
  number  = {6},
  pages   = {3415--3438}
}

@misc{gu2025assumptionleancovariateadjustmentcovariate,
  title  = {Assumption-lean Covariate Adjustment under Covariate Adaptive Randomization when $p = o(n)$},
  author = {Gu, Yujia and Liu, Lin and Ma, Wei},
  year   = {2025},
  note   = {arXiv preprint arXiv:2512.20046}
}

@article{lu2025Debiased,
  title   = {Debiased Regression Adjustment in Completely Randomized Experiments with Moderately High-Dimensional Covariates},
  author  = {Lu, Xin and Yang, Fan and Wang, Yuhao},
  year    = {2025},
  journal = {The Annals of Statistics},
  volume  = {53},
  number  = {4},
  pages   = {1535--1558}
}

@article{gu2023RegressionBased,
  title   = {Regression-based Multiple Treatment Effect Estimation under Covariate-Adaptive Randomization},
  author  = {Gu, Yujia and Liu, Hanzhong and Ma, Wei},
  year    = {2023},
  journal = {Biometrics},
  volume  = {79},
  number  = {4},
  pages   = {2869--2880}
}

@article{callegaro2023Historical,
  title   = {Historical Controls in Clinical Trials: A Note on Linking {{Pocock}}'s Model with the Robust Mixture Priors},
  author  = {Callegaro, Andrea and Galwey, Nicholas and Abellan, Juan J.},
  year    = {2023},
  journal = {Biostatistics},
  volume  = {24},
  number  = {2},
  pages   = {443--448}
}

@book{imbens2015Causal,
  title     = {Causal Inference for Statistics, Social, and Biomedical Sciences: An Introduction},
  author    = {Imbens, Guido W. and Rubin, Donald B.},
  year      = {2015},
  publisher = {Cambridge University Press},
  address   = {New York}
}

@article{ye2023Better,
  title   = {Toward Better Practice of Covariate Adjustment in Analyzing Randomized Clinical Trials},
  author  = {Ye, Ting and Shao, Jun and Yi, Yanyao and Zhao, Qingyuan},
  year    = {2023},
  journal = {Journal of the American Statistical Association},
  volume  = {118},
  number  = {544},
  pages   = {2370--2382}
}

@article{hobbs2011Hierarchical,
  title   = {Hierarchical Commensurate and Power Prior Models for Adaptive Incorporation of Historical Information in Clinical Trials},
  author  = {Hobbs, Brian P. and Carlin, Bradley P. and Mandrekar, Sumithra J. and Sargent, Daniel J.},
  year    = {2011},
  journal = {Biometrics},
  volume  = {67},
  number  = {3},
  pages   = {1047--1056}
}

@article{ibrahim2015Power,
  title   = {The Power Prior: Theory and Applications},
  author  = {Ibrahim, Joseph G. and Chen, Ming-Hui and Gwon, Yeongjin and Chen, Fang},
  year    = {2015},
  journal = {Statistics in Medicine},
  volume  = {34},
  number  = {28},
  pages   = {3724--3749}
}

@article{lei2021Regression,
  title   = {Regression Adjustment in Completely Randomized Experiments with a Diverging Number of Covariates},
  author  = {Lei, Lihua and Ding, Peng},
  year    = {2021},
  journal = {Biometrika},
  volume  = {108},
  number  = {4},
  pages   = {815--828}
}

@article{qin2007EmpiricalLikelihoodBased,
  title   = {Empirical-Likelihood-Based Inference in Missing Response Problems and Its Application in Observational Studies},
  author  = {Qin, Jing and Zhang, Biao},
  year    = {2007},
  journal = {Journal of the Royal Statistical Society Series B: Statistical Methodology},
  volume  = {69},
  number  = {1},
  pages   = {101--122}
}

@article{cohen2024Noharm,
  title   = {No-Harm Calibration for Generalized {{Oaxaca}}--{{Blinder}} Estimators},
  author  = {Cohen, P. L. and Fogarty, C. B.},
  year    = {2024},
  journal = {Biometrika},
  volume  = {111},
  number  = {1},
  pages   = {331--338}
}

@article{chan2016Globally,
  title   = {Globally Efficient Non-Parametric Inference of Average Treatment Effects by Empirical Balancing Calibration Weighting},
  author  = {Chan, Kwun Chuen Gary and Yam, Sheung Chi Phillip and Zhang, Zheng},
  year    = {2016},
  journal = {Journal of the Royal Statistical Society Series B: Statistical Methodology},
  volume  = {78},
  number  = {3},
  pages   = {673--700}
}

@article{deville1992Calibration,
  title   = {Calibration Estimators in Survey Sampling},
  author  = {Deville, Jean-Claude and S{\"a}rndal, Carl-Erik},
  year    = {1992},
  journal = {Journal of the American Statistical Association},
  volume  = {87},
  number  = {418},
  pages   = {376--382}
}

@article{kwon2025Debiased,
  title   = {Debiased Calibration Estimation Using Generalized Entropy in Survey Sampling},
  author  = {Kwon, Yonghyun and Kim, Jae Kwang and Qiu, Yumou},
  year    = {2025},
  journal = {Journal of the American Statistical Association},
  volume  = {0},
  number  = {0},
  pages   = {1--12},
  note    = {Just Accepted}
}

@article{tsiatis2008Covariate,
  title   = {Covariate Adjustment for Two-Sample Treatment Comparisons in Randomized Clinical Trials: A Principled yet Flexible Approach},
  author  = {Tsiatis, Anastasios A. and Davidian, Marie and Zhang, Min and Lu, Xiaomin},
  year    = {2008},
  journal = {Statistics in Medicine},
  volume  = {27},
  number  = {23},
  pages   = {4658--4677}
}

@article{zhang2008Improving,
  title   = {Improving Efficiency of Inferences in Randomized Clinical Trials Using Auxiliary Covariates},
  author  = {Zhang, Min and Tsiatis, Anastasios A. and Davidian, Marie},
  year    = {2008},
  journal = {Biometrics},
  volume  = {64},
  number  = {3},
  pages   = {707--715}
}

@article{lin2013Agnostic,
  title   = {Agnostic Notes on Regression Adjustments to Experimental Data: Reexamining {{Freedman}}'s Critique},
  author  = {Lin, Winston},
  year    = {2013},
  journal = {The Annals of Applied Statistics},
  volume  = {7},
  number  = {1},
  pages   = {295--318}
}

@article{qin1994Empirical,
  title   = {Empirical Likelihood and General Estimating Equations},
  author  = {Qin, Jing and Lawless, Jerry},
  year    = {1994},
  journal = {The Annals of Statistics},
  volume  = {22},
  number  = {1},
  pages   = {300--325}
}

@article{tibshirani1996Regression,
  title   = {Regression Shrinkage and Selection via the {{Lasso}}},
  author  = {Tibshirani, Robert},
  year    = {1996},
  journal = {Journal of the Royal Statistical Society Series B: Statistical Methodology},
  volume  = {58},
  number  = {1},
  pages   = {267--288}
}

@article{bai2024Covariate,
  title   = {Covariate Adjustment in Experiments with Matched Pairs},
  author  = {Bai, Yuehao and Jiang, Liang and Romano, Joseph P. and Shaikh, Azeem M. and Zhang, Yichong},
  year    = {2024},
  journal = {Journal of Econometrics},
  volume  = {241},
  number  = {1},
  pages   = {105740}
}

@article{kitamura1997Informationtheoretic,
  title   = {An Information-Theoretic Alternative to Generalized Method of Moments Estimation},
  author  = {Kitamura, Yuichi and Stutzer, Michael},
  year    = {1997},
  journal = {Econometrica},
  volume  = {65},
  number  = {4},
  pages   = {861--874}
}

@article{dupas2018Bankinga,
  title   = {Banking the Unbanked? Evidence from Three Countries},
  author  = {Dupas, Pascaline and Karlan, Dean and Robinson, Jonathan and Ubfal, Diego},
  year    = {2018},
  journal = {American Economic Journal: Applied Economics},
  volume  = {10},
  number  = {2},
  pages   = {257--297}
}

@article{ye2022Inference,
  title   = {Inference on the Average Treatment Effect under Minimization and Other Covariate-Adaptive Randomization Methods},
  author  = {Ye, Ting and Yi, Yanyao and Shao, Jun},
  year    = {2022},
  journal = {Biometrika},
  volume  = {109},
  number  = {1},
  pages   = {33--47}
}

@article{pocock1975Sequential,
  title   = {Sequential Treatment Assignment with Balancing for Prognostic Factors in the Controlled Clinical Trial},
  author  = {Pocock, Stuart J. and Simon, Richard},
  year    = {1975},
  journal = {Biometrics},
  volume  = {31},
  number  = {1},
  pages   = {103--115}
}

@misc{xin2024inferencecovariateadaptiverandomizationstrata,
  title  = {Inference under Covariate-Adaptive Randomization with Many Strata},
  author = {Xin, Jiahui and Liu, Hanzhong and Ma, Wei},
  year   = {2024},
  note   = {arXiv preprint arXiv:2405.18856}
}

@article{tan2014Secondorder,
  title   = {Second-Order Asymptotic Theory for Calibration Estimators in Sampling and Missing-Data Problems},
  author  = {Tan, Zhiqiang},
  year    = {2014},
  journal = {Journal of Multivariate Analysis},
  volume  = {131},
  pages   = {240--253}
}

@article{robins1994Estimation,
  title   = {Estimation of Regression Coefficients When Some Regressors Are Not Always Observed},
  author  = {Robins, James M. and Rotnitzky, Andrea and Zhao, Lue Ping},
  year    = {1994},
  journal = {Journal of the American Statistical Association},
  volume  = {89},
  number  = {427},
  pages   = {846--866}
}

@article{zelen1974Randomization,
  title   = {The Randomization and Stratification of Patients to Clinical Trials},
  author  = {Zelen, Marvin},
  year    = {1974},
  journal = {Journal of Chronic Diseases},
  volume  = {27},
  number  = {7},
  pages   = {365--375}
}

@article{newey2004Higher,
  title   = {Higher Order Properties of {{GMM}} and Generalized Empirical Likelihood Estimators},
  author  = {Newey, Whitney K. and Smith, Richard J.},
  year    = {2004},
  journal = {Econometrica},
  volume  = {72},
  number  = {1},
  pages   = {219--255}
}

@article{taves1974Minimization,
  title   = {Minimization: A New Method of Assigning Patients to Treatment and Control Groups},
  author  = {Taves, Donald R.},
  year    = {1974},
  journal = {Clinical Pharmacology and Therapeutics},
  volume  = {15},
  number  = {5},
  pages   = {443--453}
}

@article{efron1971Forcing,
  title   = {Forcing a Sequential Experiment to Be Balanced},
  author  = {Efron, Bradley},
  year    = {1971},
  journal = {Biometrika},
  volume  = {58},
  number  = {3},
  pages   = {403--417}
}

@book{boyd2004Convex,
  title     = {Convex Optimization},
  author    = {Boyd, Stephen P. and Vandenberghe, Lieven},
  year      = {2004},
  publisher = {Cambridge University Press},
  address   = {Cambridge, UK; New York}
}

@article{jiao2023Deep,
  title   = {Deep Nonparametric Regression on Approximate Manifolds: Nonasymptotic Error Bounds with Polynomial Prefactors},
  author  = {Jiao, Yuling and Shen, Guohao and Lin, Yuanyuan and Huang, Jian},
  year    = {2023},
  journal = {The Annals of Statistics},
  volume  = {51},
  number  = {2},
  pages   = {691--716}
}

@article{bannick2025General,
  title   = {A General Form of Covariate Adjustment in Clinical Trials under Covariate-Adaptive Randomization},
  author  = {Bannick, Marlena S. and Shao, Jun and Liu, Jingyi and Du, Yu and Yi, Yanyao and Ye, Ting},
  year    = {2025},
  journal = {Biometrika},
  volume  = {112},
  number  = {3},
  pages   = {asaf029}
}

@article{bugni2018Inference,
  title   = {Inference Under Covariate-Adaptive Randomization},
  author  = {Bugni, Federico A. and Canay, Ivan A. and Shaikh, Azeem M.},
  year    = {2018},
  journal = {Journal of the American Statistical Association},
  volume  = {113},
  number  = {524},
  pages   = {1784--1796}
}

@article{bugni2019Inference,
  title   = {Inference under Covariate-Adaptive Randomization with Multiple Treatments},
  author  = {Bugni, Federico A. and Canay, Ivan A. and Shaikh, Azeem M.},
  year    = {2019},
  journal = {Quantitative Economics},
  volume  = {10},
  number  = {4},
  pages   = {1747--1785}
}

@book{davidson2021Stochastic,
  title     = {Stochastic Limit Theory: An Introduction for Econometricians},
  author    = {Davidson, James},
  year      = {2021},
  edition   = {2nd},
  publisher = {Oxford University Press},
  address   = {Oxford}
}

@article{jiang2023Regressionadjusted,
  title   = {Regression-Adjusted Estimation of Quantile Treatment Effects under Covariate-Adaptive Randomizations},
  author  = {Jiang, Liang and Phillips, Peter C. B. and Tao, Yubo and Zhang, Yichong},
  year    = {2023},
  journal = {Journal of Econometrics},
  volume  = {234},
  number  = {2},
  pages   = {758--776}
}

@book{billingsley1968convergence,
  title     = {Convergence of Probability Measures},
  author    = {Billingsley, Patrick},
  year      = {1968},
  publisher = {Wiley},
  address   = {New York}
}

@article{bulinski2017Conditional,
  title   = {Conditional Central Limit Theorem},
  author  = {Bulinski, A. V.},
  year    = {2017},
  journal = {Theory of Probability \& Its Applications},
  volume  = {61},
  number  = {4},
  pages   = {613--631}
}

@article{liu2023Lassoadjusted,
  title   = {{{Lasso}}-Adjusted Treatment Effect Estimation under Covariate-Adaptive Randomization},
  author  = {Liu, Hanzhong and Tu, Fuyi and Ma, Wei},
  year    = {2023},
  journal = {Biometrika},
  volume  = {110},
  number  = {2},
  pages   = {431--447}
}

@book{dennis2009matrix,
  title     = {Matrix Mathematics: Theory, Facts, and Formulas},
  author    = {Bernstein, Dennis S.},
  year      = {2009},
  edition   = {2nd},
  publisher = {Princeton University Press},
  address   = {Princeton, NJ}
}

@article{burkholder1988sharp,
  title   = {Sharp Inequalities for Martingales and Stochastic Integrals},
  author  = {Burkholder, Donald L.},
  year    = {1988},
  journal = {Ast{\'e}risque},
  volume  = {157},
  number  = {158},
  pages   = {75--94}
}

@article{tu2024Unified,
  title   = {A Unified Framework for Covariate Adjustment Under Stratified Randomisation},
  author  = {Tu, Fuyi and Ma, Wei and Liu, Hanzhong},
  year    = {2024},
  journal = {Stat},
  volume  = {13},
  number  = {4},
  pages   = {e70016}
}

@misc{rafi2023Efficient,
  title  = {Efficient Semiparametric Estimation of Average Treatment Effects Under Covariate Adaptive Randomization},
  author = {Rafi, Ahnaf},
  year   = {2023},
  note   = {arXiv preprint arXiv:2305.08340}
}

@article{ma2022Regression,
  title   = {Regression Analysis for Covariate-Adaptive Randomization: A Robust and Efficient Inference Perspective},
  author  = {Ma, Wei and Tu, Fuyi and Liu, Hanzhong},
  year    = {2022},
  journal = {Statistics in Medicine},
  volume  = {41},
  number  = {29},
  pages   = {5645--5661}
}

@article{stewart1977Perturbation,
  title   = {On the Perturbation of Pseudo-Inverses, Projections and Linear Least Squares Problems},
  author  = {Stewart, G. W.},
  year    = {1977},
  journal = {SIAM Review},
  volume  = {19},
  number  = {4},
  pages   = {634--661}
}

@article{tropp2015Introduction,
  title   = {An Introduction to Matrix Concentration Inequalities},
  author  = {Tropp, Joel A.},
  year    = {2015},
  journal = {Foundations and Trends{\textregistered} in Machine Learning},
  volume  = {8},
  number  = {1-2},
  pages   = {1--230}
}

@book{vershynin2018HighDimensional,
  title     = {High-Dimensional Probability: An Introduction with Applications in Data Science},
  author    = {Vershynin, Roman},
  year      = {2018},
  publisher = {Cambridge University Press},
  address   = {Cambridge}
}

@article{tseng1991Relaxation,
  title   = {Relaxation Methods for Problems with Strictly Convex Costs and Linear Constraints},
  author  = {Tseng, Paul and Bertsekas, Dimitri P.},
  year    = {1991},
  journal = {Mathematics of Operations Research},
  volume  = {16},
  number  = {3},
  pages   = {462--481}
}

@book{vaart1998Asymptotic,
  title     = {Asymptotic Statistics},
  author    = {van der Vaart, Aad W.},
  year      = {1998},
  publisher = {Cambridge University Press},
  address   = {Cambridge}
}

@article{breiman2001random,
  title   = {Random Forests},
  author  = {Breiman, Leo},
  year    = {2001},
  journal = {Machine Learning},
  volume  = {45},
  number  = {1},
  pages   = {5--32}
}

@article{lecun2015deep,
  title   = {Deep Learning},
  author  = {LeCun, Yann and Bengio, Yoshua and Hinton, Geoffrey},
  year    = {2015},
  journal = {Nature},
  volume  = {521},
  number  = {7553},
  pages   = {436--444}
}

@article{wager2018estimation,
  title   = {Estimation and Inference of Heterogeneous Treatment Effects Using Random Forests},
  author  = {Wager, Stefan and Athey, Susan},
  year    = {2018},
  journal = {Journal of the American Statistical Association},
  volume  = {113},
  number  = {523},
  pages   = {1228--1242}
}

@article{farrell2021Deepa,
  title   = {Deep Neural Networks for Estimation and Inference},
  author  = {Farrell, Max H. and Liang, Tengyuan and Misra, Sanjog},
  year    = {2021},
  journal = {Econometrica},
  volume  = {89},
  number  = {1},
  pages   = {181--213}
}

@article{chernozhukov2018Double,
  title   = {Double/Debiased Machine Learning for Treatment and Structural Parameters},
  author  = {Chernozhukov, Victor and Chetverikov, Denis and Demirer, Mert and Duflo, Esther and Hansen, Christian and Newey, Whitney and Robins, James},
  year    = {2018},
  journal = {The Econometrics Journal},
  volume  = {21},
  number  = {1},
  pages   = {C1--C68}
}

@article{bai2022Optimality,
  title   = {Optimality of Matched-Pair Designs in Randomized Controlled Trials},
  author  = {Bai, Yuehao},
  year    = {2022},
  journal = {American Economic Review},
  volume  = {112},
  number  = {12},
  pages   = {3911--3940}
}
\end{document}